\documentclass[11pt]{aart}

\usepackage{titlesec}
\titleformat{\section}[block]{\filcenter\normalfont\bfseries\large}{\thesection.}{.5em}{}\titlespacing*{\section}{0pt}{2\baselineskip}{1\baselineskip}
\titleformat{\subsection}[runin]{\normalfont\bfseries}{\thesubsection.}{.4em}{}[.]\titlespacing{\subsection}{0pt}{2ex plus .1ex minus .2ex}{.8em}
\titleformat{\subsubsection}[runin]{\normalfont\itshape}{\thesubsubsection.}{.3em}{}[.]\titlespacing{\subsubsection}{0pt}{1ex plus .1ex minus .2ex}{.5em}
\titleformat{\paragraph}[runin]{\normalfont\itshape}{\theparagraph.}{.3em}{}[.]\titlespacing{\paragraph}{0pt}{1ex plus .1ex minus .2ex}{.5em}

\usepackage[T1]{fontenc}
\usepackage[utf8]{inputenc}

\usepackage{mlmodern}

\usepackage[dvipsnames,table,xcdraw]{xcolor}
\usepackage[english]{babel}
\usepackage[letterpaper, hmargin=1in, top=1in, bottom=1.2in, footskip=0.6in]{geometry}
\usepackage{enumerate}  
\usepackage{amsmath}
\usepackage{dsfont}
\usepackage{graphicx}
\usepackage{color}
\usepackage{amsthm}
\usepackage{amssymb}
\usepackage{microtype}

\usepackage{bbm}

\usepackage{csquotes}

\usepackage{mathtools}

\definecolor{vdarkred}{rgb}{0.4,0,0.15}
\definecolor{vdarkblue}{rgb}{0,0.15,0.4}
\usepackage[pdftex, colorlinks, linkcolor=vdarkblue,citecolor=vdarkred]{hyperref}

\usepackage[capitalise,nameinlink,noabbrev]{cleveref}

\usepackage{enumitem} 

\DeclareMathOperator*{\esssup}{ess\,sup}

\renewcommand{\le}{\leqslant}
\renewcommand{\leq}{\leqslant}

\renewcommand{\geq}{\geqslant}

\newcommand{\dd}{\mathrm d}
\newcommand{\ee}{\mathrm e}
\newcommand{\ii}{\mathrm i}

\let\a=\alpha          

        \let\k=\kappa     
                          \let\r=\rho
 \let\t=\tau         \let\ph=\varphi

\let\P=\Pi

\renewcommand{\r}{\mathrm} 
\renewcommand{\cal}{\mathcal} 
 
\newcommand{\fra}{\mathfrak} 
\newcommand{\wh}{\widehat}

\renewcommand{\P}{\mathbb{P}}
\newcommand{\E}{\mathbb{E}}
\newcommand{\R}{\mathbb{R}}
\newcommand{\C}{\mathbb{C}}
\newcommand{\N}{\mathbb{N}}

\def\bR{\mathbb{R}}

\def\cF{\mathcal{F}}

\def\eps{\varepsilon}
\def\ph{\varphi}

\def\bN{\mathbb{N}}  
 
\def\bR{\mathbb{R}} 
 
\def\bE{\mathbb{E}}

\renewcommand{\d}[1]{\dd #1}

\newcommand{\WU}{{\mathbf{W}}}
\newcommand{\PU}{{\mathfrak{P}}}

\newcommand\restr[2]{{
  \left.\kern-\nulldelimiterspace 
  #1 
  \vphantom{\big|} 
  \right|_{#2} 
  }}

\DeclareMathOperator{\tr}{tr}
\DeclareMathOperator{\Tr}{Tr}

\DeclareMathOperator{\re}{Re}

\newcommand{\col}{\vcentcolon}
\newcommand{\wick}[1]{{\col\!#1\!\col}}
\newcommand*{\deq}{\mathrel{\vcenter{\baselineskip0.65ex \lineskiplimit0pt \hbox{.}\hbox{.}}}=}
\newcommand*{\eqd}{=\mathrel{\vcenter{\baselineskip0.65ex \lineskiplimit0pt \hbox{.}\hbox{.}}}}

\DeclareMathAlphabet{\mathbbold}{U}{bbold}{m}{n}

\renewcommand{\epsilon}{\varepsilon}

\theoremstyle{plain} 
\newtheorem{theorem}{Theorem}[section]
\newtheorem*{theorem*}{Theorem}
\newtheorem{lemma}[theorem]{Lemma}
\newtheorem*{lemma*}{Lemma}
\newtheorem{cor}[theorem]{Corollary}
\newtheorem{prop}[theorem]{Proposition}

\newtheorem*{conjecture*}{Conjecture}

\theoremstyle{definition} 
\newtheorem{definition}[theorem]{Definition}
\newtheorem*{definition*}{Definition}
\newtheorem{example}[theorem]{Example}
\newtheorem*{example*}{Example}
\newtheorem{remark}[theorem]{Remark}
\newtheorem*{remark*}{Remark}
\newtheorem{assumption}[theorem]{Assumption}
\newtheorem*{assumption*}{Assumption}

\numberwithin{equation}{section}
\numberwithin{equation}{section}
\usepackage{etoolbox}

\usepackage{ragged2e}

\newcommand{\floor}[1] {\lfloor #1 \rfloor}

\newcommand{\pb}[1]{\bigl(#1\bigr)}

\newcommand{\pbb}[1]{\biggl(#1\biggr)}

\newcommand{\qb}[1]{\bigl[#1\bigr]}
\newcommand{\qB}[1]{\Bigl[#1\Bigr]}
\newcommand{\qbb}[1]{\biggl[#1\biggr]}

\newcommand{\hB}[1]{\Bigl\{#1\Bigr\}}

\newcommand{\abs}[1]{\lvert #1 \rvert}

\newcommand{\absB}[1]{\Bigl\lvert #1 \Bigr\rvert}
\newcommand{\absbb}[1]{\biggl\lvert #1 \biggr\rvert}

\newcommand{\norm}[1]{\lVert #1 \rVert}

\newcommand{\normbb}[1]{\biggl\lVert #1 \biggr\rVert}

\newcommand{\scalar}[2]{\langle#1 \mspace{2mu}, #2\rangle}

\newcommand{\assP}{\hyperlink{P}{\normalfont \textbf{(P)}}}
\newcommand{\assD}{\hyperlink{D}{\normalfont \textbf{(D)}}}

\title{The Euclidean \(\phi^4_2\) theory as a limit of an inhomogeneous Bose gas}
\author{Cristina Caraci \and
	Antti Knowles \and  Alessio Ranallo \and  Pedro Torres Giesteira}

\date{\today}

\begin{document}
	
	\maketitle

\begin{abstract}

We prove that the grand canonical Gibbs state of an interacting two-dimensional quantum Bose gas confined by a trapping potential converges to the complex Euclidean field theory with local quartic self-interaction, when the density of the gas becomes large and the range of the interaction becomes small. We obtain convergence of the relative partition function and convergence in $L^1 \cap L^\infty$ of the renormalised reduced density matrices. The field theory is supported on distributions of negative regularity, which requires a renormalisation by divergent mass and energy counterterms. Unlike previous results in the homogeneous setting of the torus without a trapping potential, the counterterms are not given by a finite collection of scalars but by diverging counterterm functions. This leads to significant new mathematical challenges. For our proof, we also derive quantitative bounds on the Green function of Schrödinger operators and of its gradient, which might be of independent interest.
\end{abstract}
\section{Introduction}

A Euclidean field theory of a scalar complex field on a domain $\Lambda \subset \bR^d$ is specified by a formal probability measure on a space of fields\footnote{Rigorously, the space of fields is the space of complex-valued distributions over $\Lambda$.} $\phi\col \Lambda \to \C$ of the form
\begin{equation} \label{EFT}
\rho(\dd \phi) = \frac1c e^{-S(\phi)} \, \r D\phi\,,
\end{equation}
where $\r D \phi = \prod_{x \in \Lambda} \d \phi(x)$ is the formal uniform measure on the space of fields, and $S$ is the action.
One of the simplest field theories with nontrivial interaction is the \emph{Euclidean $\phi^4_d$ theory}, whose action is given by
\begin{equation}\label{eq:action}
S(\phi) \deq \int_{\Lambda} \d x\, \bar \phi(x) \pbb{- \frac{\Delta}{2} - \theta +U(x)} \phi(x) + \frac{\lambda}{2} \int_{\Lambda} \d x\, |\phi(x)|^4\,, 
\end{equation}
where $\theta \in \R$ is a constant, $\lambda > 0$ a coupling constant, $\Delta$ is the Laplacian on $\Lambda$, and $U \col \Lambda \to \R$ is an external potential.

Euclidean field theories originally arose in high-energy physics in $d = 4$ space-time dimensions \cite{schwinger1958euclidean, nakano1959quantum}. Subsequently, Euclidean field theories have proven of great importance in statistical mechanics in $d \leq 3$ dimensions. The works \cite{symanzik1966euclidean, symanzik1969euclidean} recognised the analogy between Euclidean field theories and classical statistical mechanics, which was followed by a purely probabilistic formulation of Euclidean field theories in \cite{nelson1973free, nelson1973probability}. The rigorous study of field theories of the form \eqref{EFT} has been a major topic in mathematical physics since the late sixties; see e.g.\ \cite{glimm2012quantum, Simon74, Hairer2016} for reviews. The basic philosophy in all of these works is to include the quadratic part of \eqref{eq:action} in a free, Gaussian, measure
\begin{equation*}
\P(\dd \phi) \deq \frac{1}{c_0} \ee^{-\int_{\Lambda} \d x\, \bar \phi(x) (- \frac{\Delta}{2} - \theta +U(x)) \phi(x)}\, \r D \phi\,,
\end{equation*}
which can be easily made sense of. The quartic part of \eqref{eq:action},
\begin{equation*}
V(\phi) \deq \frac{\lambda}{2} \int_{\Lambda} \d x\, |\phi(x)|^4\,,
\end{equation*}
is regarded as an interaction that allows one to write \eqref{EFT} in the form
\begin{equation} \label{}
\rho(\dd \phi) = \frac{1}{\zeta} \, \ee^{-V(\phi)} \, \P(\dd \phi)\,, \qquad \zeta \deq \frac{c}{c_0}\,.
\end{equation}

The fundamental difficulty in such an undertaking arises from ultraviolet divergencies: in dimensions greater than one, the field $\phi$ is $\P$-almost surely a distribution of negative regularity, and hence the interaction $V(\phi)$ is ill-defined. As a consequence, to make rigorous sense of $\rho$, the interaction has to be renormalised by subtracting suitably chosen infinite counterterms. 

In this paper we identify the Euclidean field theory of the form \eqref{EFT} with action \eqref{eq:action} in $d = 2$ dimensions as a high density limit of an interacting quantum Bose gas.
We recall that a quantum system of $n$ spinless non-relativistic bosons in $\Lambda$ is described by the Hamiltonian 
\[
\mathbb{H}_n \deq \sum_{i= 1}^n \biggl(-\frac{\Delta_i}{2m} +\mathcal{U}(x_i)\biggr)   +\frac g2 \sum_{i,j=1}^n v(x_i-x_j)
\]
acting on the subspace $L^2(\Lambda)^{\otimes_s n}$ of $L^2(\Lambda^n)$ consisting of symmetric functions with respect to permutations. In the Hamiltonian, $\Delta_i$ is the Laplacian acting in the variable $x_i$, $g$ is a coupling constant, $\mathcal U$ is a confining external potential, and $v$ is a repulsive two-body interaction potential. We consider the system in the grand canonical ensemble at positive temperature, defined by the density matrix
\begin{equation}
    \label{Gibbsstate}
\frac{1}{Z} \bigoplus_{n \in \bN} \ee^{-\beta(\mathbb{H}_n - \theta n)}
\end{equation}
acting on the Fock space $\cF =\bigoplus_{n \in \bN} L^2(\Lambda)^{\otimes_s n}$, where $\beta <\infty$ is the inverse temperature, $\theta$ is the chemical potential, and $Z$ is a normalisation factor. 

For the limiting regime that we are considering, we introduce two parameters $\nu, \eps >0$, where $\nu = \frac \beta m = \sqrt{\beta g}$, and the interaction potential $v$ is taken to be an approximate delta function of range $\eps$. 
We show that as $\nu, \epsilon \to 0$ there exists a suitable renormalisation of the chemical potential $\theta \equiv \mathfrak h^\eps_\nu$ such
that the reduced density matrices of the quantum state \eqref{Gibbsstate} converge to the correlation functions
of the field theory \eqref{EFT}, \eqref{eq:action}.

Previously, such a connection between \eqref{EFT} and \eqref{Gibbsstate} was obtained for $d = 1$ in \cite{lewin2015derivation, lewin2018gibbs, frohlich2017gibbs}, where the short-range singularities of the field do not appear and hence no renormalisation is required. The first result of this kind in dimension larger than one was \cite{FKSS_24}, where it was established in $d = 2$ dimensions. This result was recently extended to $d = 3$ dimensions in \cite{nam2025phi}. Both of these results were established in the \emph{homogeneous}, or translation invariant, setting, where $\Lambda$ is the $d$-dimensional unit torus and $U = \cal U = 0$.

The goal of this paper is to establish the connection between \eqref{EFT} and \eqref{Gibbsstate} in the \emph{inhomogeneous} setting, where the assumption of translation invariance is dropped. Doing so is of some importance for the physical relevance of such a result. Indeed, the assumption of translation invariance does not hold in a typical experimental setup, where the Bose gas is either confined by an external trapping potential or constrained to a bounded domain with Dirichlet boundary conditions.

Forgoing translation invariance leads to significant new mathematical challenges as compared to homogeneous case. The most fundamental one is that the infinite counterterms subtracted from $V(\phi)$ no longer consist of a finite collection of scalar quantities, but instead of diverging functions with nontrivial spatial dependence. It turns out that a key cancellation between such counterterms, which can be trivially seen to be exact in the homogeneous setting, fails in the inhomogeneous setting. In fact, we show that generically such a cancellation \emph{cannot} hold as soon as the system is inhomogeneous; see \eqref{eq:eqAmirali} below. As a consequence, in the inhomogeneous setting one has to handle a new family of diverging \emph{counterterm functions}, on both the level of the field theory \eqref{EFT} and the quantum many-body theory \eqref{Gibbsstate}. We refer to Section \ref{sec:strategy} below for more details on the strategy of the proof.

As an additional ingredient of our proof, we derive precise quantitative bounds on the Green function of general Schrödinger operators on $\R^d$, of its regularised version, and of its gradient. We believe that these results might be of independent interest.

We conclude this section with a summary of some related previous work. In dimensions $d \leq 3$, the \emph{mean-field limit} was investigated in \cite{lewin2018classical, LNR3, FKSS_2020, frohlich2017gibbs, frohlich2019microscopic,sohinger2019microscopic}, where the parameter $\epsilon$ was fixed as $\nu \to 0$. The resulting limiting field theory differs from $\phi^4_d$ in that the interaction term $V(\phi)$ is nonlocal, given by a convolution with a bounded two-body interaction potential $v$. This nonlocal interaction term is considerably less singular than the local one of $\phi^4_d$ theory. The stronger singularity of $V(\phi)$ requires additional renormalisation as compared to the nonlocal potential. The fundamental complications mentioned above, arising from going from the homogeneous to the inhomogeneous setting, only appear in this additional renormalisation. Accordingly, the mean-field limit was treated also in the inhomogeneous setting in \cite{FKSS_2020, LNR3} with minor adaptations of the proof in the homogeneous setting.

\subsection*{Acknowledgements} We are very grateful to Amirali Hannani for insightful discussions and suggestions, as well as a proof of non-solvability of the equation \eqref{eq:eqAmirali} given in Appendix \ref{appendix:counter example} below. We also thank Phan Thành Nam and Nicolas Rougerie for helpful discussions. The authors  acknowledge funding from the European Research Council (ERC) and the Swiss State Secretariat for Education, Research and Innovation (SERI) through the consolidator grant ProbQuant, as well as funding from the Swiss National Science Foundation through the NCCR SwissMAP grant.

\section{Setup and main results}

In this section we give the precise setup and state our main results.

\subsection{The one-particle Hamiltonian}
On the one-particle space \(\mathcal{H}\deq L^2(\mathbb{R}^d, \mathbb{C})\) we define the one-particle Hamiltonian
\begin{equation}\label{def:one body hamiltonian}
    h\deq \kappa -\frac{\Delta}{2}  + U\,,
\end{equation}
where \(\kappa>0\), $\Delta$ is the Laplacian on $\R^d$, and \(U\) is an external potential satisfying the following assumption.

\begin{assumption} \label{assumption:P and D for U}
    Let $d \geq 1$ and \(\theta  > 0\) . The function \(U\col \mathbb{R}^d \to (0, \infty) \) satisfies one of the two following conditions. 
    \begin{itemize}
        \item[\hypertarget{P}{\textbf{(P)}}] $U 
    \in L^\infty_{\operatorname{loc}}(\mathbb R^d)$ and there exist constants \( C,c>0\) such that, for almost all \(x \in {\mathbb{R}^d}\), 
\begin{equation} \label{U_bounds}
c\bigl(1 + \abs{x}^\theta\bigr)\leq U(x) \leq  C\bigl(1 + \abs{x}^\theta\bigr)\,.
\end{equation}

        \item[\hypertarget{D}{\textbf{(D)}}] \(U \in W^{1, \infty}_{\operatorname{loc}}(\mathbb{R}^d)\) and there exist a radial non-decreasing function \(g\col \mathbb{R}^d \to [1, \infty)\) and constants \(C,c>0\), \(0<\gamma< 1\)  such that, for almost all \(x \in \mathbb{R}^d\), 
        \[U(x)\geq cg(x)\geq c\bigl(1 + \abs{x}^\theta\bigr)\,,\]
        \[\abs{\nabla U (x)} \leq C g(\gamma x)^{3/2} \,.\]
    \end{itemize}
\end{assumption}

\begin{example}
The following examples illustrate Assumption \ref{assumption:P and D for U}.
\begin{enumerate}[label=(\roman*)]
\item
The step potential $U(x) \deq 1 + \floor{\abs{x}}^\theta$ satisfies Assumption \ref{assumption:P and D for U} \assP{}, where $\floor{\cdot}$ denotes the integer part.
\item
For any $C > 0$, the rapidly growing potential $U(x) = \exp\pb{(1 + \abs{x}^2)^C}$ satisfies Assumption \ref{assumption:P and D for U} \assD{}.
\end{enumerate}
\end{example}

\begin{remark} \label{rmk:thetha_s2}
Let $d = 2,3$ and \(s\in (d/2,2)\). Throughout this paper we shall frequently require that
\begin{equation} \label{eq:tracehs}
\tr{h^{-s}} < \infty\,,
\end{equation}
which, by the Lieb-Thierring type inequality (see e.g.\ \cite[Theorem~1]{liebthirring}), holds whenever $\int \d x \, \bigl(\kappa + U(x)\bigr)^{d/2-s}<\infty$. For $U(x) \geq c (1 + \abs{x}^\theta)$ as in Assumption \ref{assumption:P and D for U}, this condition holds whenever $\theta>\frac{d}{s-d/2}$.
\end{remark}
\subsection{Classical field theory}
In this subsection we define the field theory and its correlation functions.
The classical free field $\phi$ is by definition the complex-valued Gaussian field with mean zero and covariance $G \deq h^{-1}$. Explicitly, the free field may be constructed as follows.
We use the spectral decomposition\footnote{Note that under the condition $U(x) \geq c \abs{x}^\theta$ from Assumption \ref{assumption:P and D for U}, $h$ has a compact resolvent.} \(h=\sum_{k =0}^\infty \lambda_k u_k u_k^* \), with eigenvalues $\lambda_k > 0$ and normalised eigenfunctions $u_k \in \cal H$. Let $X = (X_k)_{k \in \N}$ be a family of independent standard complex Gaussian random variables\footnote{The random variable \(X\) is standard complex Gaussian random variable if \(\mathbb{E}X = 0\), \(\mathbb{E} X^2 = 0 \) and \(\mathbb{E}\abs{X}^2 = 1 \).}, whose law and associated expectation are denoted by $\P$ and $\E$, respectively. The \emph{classical free field} is then given by
\begin{equation*}
\phi \deq \sum_{k \in \N} \frac{X_k}{\sqrt{\lambda_k}} \, u_k\,.
\end{equation*}

In order to define the interacting theory, we need to regularise the field \(\phi\), which is well known to be almost surely a distribution of negative regularity. To that end, we choose a nonnegative rapidly decaying function \(\vartheta\col[0, \infty)\to [0, \infty)\) satisfying \(\vartheta(0)=1\), and for \(N>0\) define the regularised field
\begin{equation}\label{eq:vartheta}
    \phi_N \deq \sum_{k=0}^{\infty} X_k \sqrt{
\frac{\vartheta( \lambda_k/N)}{\lambda_k}}u_k\,.
\end{equation}

The \emph{Green function} of the free field \(\phi\) is the integral kernel of $G = h^{-1}$,
\begin{equation}\label{eq:defG}
	G(x,y)=\mathbb{E}\bigl[\phi(x)\bar{\phi}(y)\bigr]=\frac{1}{h}(x,y)\,.\end{equation}
Here, and throughout the following, we use the notation $A(x,y)$ for the kernel of an operator $A \col \mathcal H \to \mathcal H$. Similarly, we define \(G_N\) as the Green function of regularised field \(\phi_N\) 
 \begin{equation}
     \label{eq:defGN}
     G_N(x,y) = \mathbb E \bigl[\phi_N(x)\overline\phi_N(y)\bigr] = \frac{\vartheta(h / N)}{h}(x,y)\,.
 \end{equation}
 
Next, we suppose that $d = 2$ and define the regularised interaction 
\begin{equation*}\label{eq:def:regularised interaction V}
    V_{N}\deq\frac{1}{2}\int \d x\, \wick{\abs{\phi_N(x)}^4}\,,
\end{equation*}
where \(\wick{\cdot}\) denotes Wick ordering (see e.g.\ \cite[Appendix A]{FKSS_24}) with respect to the measure \(\mathbb{P}\). Explicitly,  
\begin{equation*}
     \wick{\abs{\phi_N(x)}^4} = \abs{\phi_N(x)}^4 - 4G_N(x,x)\abs{\phi_N(x)}^2+ 2 (G_N(x,x))^2\,.
\end{equation*}
An application of Wick's theorem combined with the decay estimate for $G_N$ from Proposition \ref{prop:GN} shows that, provided the lower bound of \eqref{U_bounds} holds with $\theta > 2$, the regularised interaction $V_N$ is uniformly bounded in $L^2(\P)$. A similar argument shows that \(V_{N}\) converges in \(L^2(\mathbb{P)}\). We denote its limit by $V$, which can be suggestively written as
\begin{equation}
\label{eq:defV}
V = \frac{1}{2}\int \d x \,\wick{\abs{\phi(x)}^4}\,.
\end{equation}
The interacting complex $\phi^4_2$ field theory is given by the probability measure 
\begin{equation}\label{eq:interacting field theory}
    \frac{1}{\zeta}\ee^{-V} \dd{ \mathbb{P}}, \qquad \zeta\deq\mathbb{E}\bigl[\ee^{-V}\bigr]\,.
\end{equation}
\begin{remark}
The existence of the measure \eqref{eq:interacting field theory} is nontrivial. It amounts to the integrability condition $\zeta < \infty$. For the homogeneous $\phi^4_2$ theory on the torus, this was first established in the landmark work of Nelson \cite{nelson1973free, nelson1973probability}. Nelson's argument can be extended to the inhomogeneous case provided that the lower bound of \eqref{U_bounds} holds for large enough $\theta$. In fact, in Section \ref{sec:field} we prove a more general result, which establishes the uniform integrability for a family of nonlocal field theories, under the condition $\theta > 10$. See also Remark \ref{rem:theta_optimal} below.
\end{remark}

The interacting field theory \eqref{eq:interacting field theory} is characterised by its correlation functions. For \(p\in \mathbb{N}\) and \(\mathbf{x, \tilde{x}}\in \mathbb{R}^{dp} \), we define the \emph{\(p\)-point correlation function} as 
\begin{equation}\label{eq:corrfunction}
    (\gamma_p)_{\mathbf{x, \tilde{x}}} \deq \frac{1}{\zeta}\mathbb{E}\Bigl[\bar{\phi}(\tilde{x}_1)\cdots\bar{\phi}(\tilde{x}_p){\phi}({x}_1)\cdots{\phi}({x}_p) \, \ee^{-V} \Bigr]\,,
\end{equation}
which is the $2p$-th moment of the field $\phi$ under the probability measure \eqref{eq:interacting field theory}. This measure is sub-Gaussian, and is hence determined by its moments $(\gamma_p)_{p \in \N^*}$. (Note that any moment containing a different number of $\bar \phi$s and $\phi$s vanishes by invariance of the measure \eqref{eq:interacting field theory} under the gauge transformation $\phi \mapsto \alpha \phi$, where $\abs{\alpha} = 1$.)

As explained in \cite[Section 1.5]{FKSS_2020}, the correlation function $\gamma_p$ is divergent on the diagonal, even for the free field. Hence, for instance, it cannot be used to analyse the distribution of the mass density $\abs{\phi(x)}^2$. As in \cite[Section 1.5]{FKSS_2020}, we remedy this issue by introducing the \emph{Wick-ordered $p$-point correlation function}
\begin{equation}\label{eq:wickcorrfunction}
    (\widehat \gamma_p)_{\mathbf{x, \tilde{x}}} \deq \frac{1}{\zeta}\mathbb{E}\Bigl[\wick{\bar{\phi}(\tilde{x}_1)\cdots\bar{\phi}(\tilde{x}_p){\phi}({x}_1)\cdots{\phi}({x}_p)} \,\ee^{-V} \Bigr]\,,
\end{equation}
which has a regular behaviour on the diagonal. The Wick-ordered correlation function \eqref{eq:wickcorrfunction} can be expressed explicitly in terms of the correlation functions \eqref{eq:corrfunction} and the correlation functions of the free field; see \eqref{gamma_p_link} below.

\subsection{Quantum many-body system}
In this subsection we define the quantum many-body system and its reduced density matrices. For $n \in \N$, we denote by $\cal P_n$ the orthogonal projection onto the symmetric subspace of $\cal H^{\otimes n}$; explicitly, for $\Psi_n \in \cal H^{\otimes n}$,
\begin{equation*}
\cal P_n \Psi_n(x_1, \dots, x_n) \deq \frac{1}{n!} \sum_{\pi \in S_n} \Psi_n(x_{\pi(1)}, \dots, x_{\pi(n)})\,,
\end{equation*}
where $S_n$ is the group of permutations on $\{1, \dots, n\}$. For $n \in \N^*$, we define the $n$-particle space as $\cal H_n \deq \cal P_n \cal H^{\otimes n}$. We define Fock space as the Hilbert space $\cal F \equiv \cal F (\cal H) \deq \bigoplus_{n \in \N} \cal H_n$. We denote by $\Tr_{\cal F}(X)$ the trace of an operator $X$ acting on $\cal F$. For $f \in \cal H$ we define the bosonic annihilation and creation operators $a(f)$ and $a^*(f)$ on $\cal F$ through their action on a dense set of vectors $\Psi = (\Psi_n)_{n \in \N} \in \mathcal{F}$ as
\begin{align*}
\pb{a(f) \Psi}_n(x_1, \dots, x_n) &= \sqrt{n+1} \int \dd x \, \bar f(x) \, \Psi_{n+1} (x,x_1, \dots, x_n)\,,
\\
\pb{a^*(f) \Psi}_n(x_1, \dots, x_n) &= \frac{1}{\sqrt{n}} \sum_{i = 1}^n f(x_i) \Psi_{n - 1}(x_1, \dots, x_{i - 1}, x_{i+1}, \dots, x_n)
\,.
\end{align*}
We regard $a$ and $a^*$ as operator-valued distributions and use the customary notations
\[
a(f) = \int \d x\, \bar{f}(x)a(x), \qquad \qquad a^*(f) = \int \d x\, f(x)a^*(x)\,.
\]

The interaction potential of the Bose gas satisfies the following assumption.
\begin{assumption}\label{assumptions: non local interaction}
    Let \(v : \mathbb{R}^d \to \mathbb{R}\) be an even, smooth compactly supported function of positive type\footnote{That is, the Fourier transform of $v$ is a positive measure.}, whose integral is equal to one. 
\end{assumption}
For \(\eps>0\) we define the rescaled interaction potential
\begin{equation}\label{def: rescaled interaction potential}
    v^\eps(x)\deq\frac{1}{\eps^d}v\biggl(\frac{x}{\eps}\biggr)\, . 
\end{equation}
For  \(\eps, \nu >0\), we define the interacting quantum Hamiltonian  $H \equiv H_\nu^\eps$
\begin{multline}\label{eq:Hwrenormalization}
    H \deq \nu \int \d x \,\d {\tilde{x}}\, a^*(x)\Bigl(\kappa-\frac{\Delta}{2} + \mathcal{U} \Bigr)(x, \tilde{x})a(\tilde{x}) \\+ \frac{\nu^2}{2}  \int \d{x} \,  \d{\tilde{x}} \,a^*(x)a(x)  \, v^\varepsilon(x-\tilde{x})\, a^*(\tilde{x})a(\tilde{x}) 
    - \nu\mathfrak{g}^\eps_\nu \int \d x\, a^*(x)a(x) + \mathfrak{h}^\eps_\nu\,,
\end{multline}
a self-adjoint operator on $\cal F$. Here, $\mathfrak{g}^\eps_\nu$ and $\mathfrak{h}^\eps_\nu$ are real renormalisation parameters chosen in \eqref{eq:choice of chemical, energy shift phi42} below, so that the density matrix \eqref{Gibbsstate} can be expressed as 
\[
\frac{1}{Z}\bigoplus_{n \in \bN} \ee^{-\beta(\mathbb{H}_n - \theta n)}=\frac{\ee^{-H}}{Z}\,, \quad \quad Z \deq \Tr_{\cF}(\ee^{-H})\,,
\]
where $Z$ is the grand canonical partition function. The parameters $\mathfrak{g}^\eps_\nu$ and $\mathfrak{h}^\eps_\nu$ describe a renormalisation of the chemical potential and of the energy, respectively.

For \(\nu>0\), we define the free Hamiltonian\footnote{Note that $H^0$ has external potential $U$ (recall \eqref{def:one body hamiltonian}) while $H$ has external potential $\cal U$.}
\begin{equation}\label{def: free Hamiltonian acting on the fock space}
    H^0 \deq\nu \int \d x \, \d {\tilde{x}}\, h(x,\tilde{x}) \,a^*(x) a(\tilde{x})\,.
\end{equation}
The free grand canonical partition function and the relative partition functions are respectively defined as
\begin{equation}\label{eq:mf quantum relatve partition function}
Z^0 \deq \Tr_{\cF}(\ee^{-H^0}) \,, \qquad    \mathcal{Z} \deq \frac Z{Z^0}\,.
\end{equation}
In order to determine the renormalisation parameters $\mathfrak{g}_\nu^\eps$ and $\mathfrak{h}_\nu^\eps$, we introduce the the rescaled mean particle density in the free quantum state,
\begin{equation}\label{def quantum green function diagonal}
    \varrho_\nu(x)\equiv\varrho_\nu^U({x})\deq \nu \Tr_{\mathcal{F}}\biggl({a^*(x)a(x)\frac{\ee^{-H^0}}{Z^0}}\biggr)\,.
\end{equation}
Then we set 
\begin{align}\label{eq:choice of chemical, energy shift phi42}
    \mathfrak{g}^\eps_\nu \deq \tau^{\eps,0} + \varrho_\nu^{0}\,,\quad
    \mathfrak{h}^\eps_\nu \deq \frac{1}{2}\int \d{x} \,\d{\tilde{x}}\,\varrho_\nu(x)v^\eps(x- \tilde{x})\varrho_\nu(\tilde{x}) + \int \d x \,\tau^\eps(x) \varrho_\nu(x) - E^\eps\,,
\end{align}
where
\begin{equation}
    \label{def:taueps_Eeps}
    \tau^\varepsilon (x) \deq\int \d{\tilde{x}}\, v^\varepsilon (x-\tilde{x})G(x,\tilde{x})\,, \quad \quad E^\varepsilon \deq\frac{1}{2} \int \d x \,\d {\tilde x }\, v^\varepsilon(x-\tilde x ) \bigl(G(x, \tilde{x})\bigr)^2\,,
\end{equation}
and we defined the constants \(\tau^{\eps,0} \deq \restr{\tau^\eps}{U=0} \) and \(\varrho_\nu^0 \deq \restr{\varrho_\nu}{U=0}\) which do not depend on $x$ by homogeneity if $U = 0$.

By the the choice \eqref{def quantum green function diagonal}, \eqref{eq:choice of chemical, energy shift phi42}, and \eqref{def:taueps_Eeps}, one finds that the Hamiltonian \eqref{eq:Hwrenormalization} coincides with the \emph{Wick-ordered interacting Hamiltonian} 
\begin{multline}\label{eq:WickHamiltonian}
    H =  H^0 + \frac{\nu^2}{2} \int \d x\, \d {\tilde{x}}\, \bigl( a^*(x) a(x) - \varrho_\nu(x)/\nu\bigr)\, v(x-\tilde x)\, \bigl(a^*(\tilde x ) a( \tilde x)- \varrho_\nu(\tilde{x})/\nu\bigr)\\- \nu \int \d x\, \tau^\eps(x) (a^*(x) a(x)- \varrho_{\nu}(x)) - E^{\eps}\,,
\end{multline}
provided that \(U \equiv U_\nu^\eps>0\) satisfies the equation
\begin{equation}\label{counterterm problem}
    \mathcal{U}(x) = U(x)- (\tau^\eps (x)- \tau^{\eps,0}) - v^\eps*(\varrho_\nu-\varrho^0_\nu)(x)\,. 
\end{equation}
The equation \eqref{counterterm problem} relates the \emph{bare potential} \(\mathcal{U}\) with the \emph{dressed, or renormalised, potential} \(U\). Through the dependence of $\tau^\epsilon$ and $\varrho_\nu$ on $U$, it expresses the bare potential $\cal U$ in terms of the dressed potential $U$. Solving $U$ in terms of $\cal U$ requires the solution of the nonlinear integral equation \eqref{counterterm problem}, known as the \emph{counterterm problem}; it is discussed in Theorem \ref{thm:Solution to the counterterm problem} below.

Next, we define the \emph{\(p\)-particle reduced density matrix} as
\begin{equation}\label{eq:Gamma_p}(\Gamma_p)_{\mathbf{x, \tilde{x}}}\deq \Tr_{\mathcal{F}}\biggl(a^*(\tilde{x}_1) \cdots a^*(\tilde{x}_p) a(x_1) \cdots a(x_p) \frac{\mathrm{e}^{-H}}{Z}\biggr)\,.
\end{equation}
As for the correlation function \eqref{eq:corrfunction} and its Wick-ordered version \eqref{eq:wickcorrfunction}, we would like to replace \eqref{eq:Gamma_p} with its Wick-ordered version. To that end, we regard the expressions \eqref{eq:corrfunction} and \eqref{eq:wickcorrfunction} as integral kernels of operators acting on $\cal H_p$, and observe that  (see \cite[Lemma A.4]{FKSS_2020})
\begin{equation}\label{gamma_p_link}
    \widehat{\gamma}_p=\sum_{k=0}^p\binom{p}{k}^2(-1)^{p-k} \mathcal{P}_p\bigl(\gamma_k \otimes \gamma_{p-k}^0\bigr) \mathcal{P}_p\,,
\end{equation}
where \(\gamma^0_p\) is the correlation function of the free field.
By analogy with \eqref{gamma_p_link}, we define the \emph{Wick-ordered reduced density matrix} as 
\begin{equation}\label{eq:Wick ordered reduced density matrices}
    \widehat{\Gamma}_p\deq\sum_{k=0}^p\binom{p}{k}^2(-1)^{p-k} \mathcal{P}_p\bigl(\Gamma_k \otimes \Gamma_{p-k}^0\bigr) \mathcal{P}_p\,,
\end{equation}
where $\Gamma_{m}^0$ is the $m$-particle reduced density matrix of the free grand canonical density matrix $\ee^{-H^0}/Z^0$.  

\subsection{Results}
The following is our main result.

\begin{theorem}\label{thm:main_result}
    Let $d =2$ and $\theta >10$. Let  \( U\) satisfy Assumption \ref{assumption:P and D for U}, and $H$ be the Hamiltonian \eqref{eq:WickHamiltonian}.
    Suppose that Assumption \ref{assumptions: non local interaction} holds and \(\eps\equiv \eps(\nu)\) satisfies
    \begin{equation}\label{eq:assumption on d=2 size of eps}
    \eps \geq\exp \Bigl(-\bigl(\log \nu^{-1}\bigr)^{\frac{1-c}{2}}\Bigr)
    \end{equation}
    for some \(c>0\).
 Then as $\epsilon, \nu \to 0$ we have the convergence of the partition function
\begin{equation} \label{eq:conv_part_function}
\cal Z \to \zeta
\end{equation}
and of the Wick-ordered correlation functions
\begin{equation} \label{eq:decaygammanu}
\nu^p \, \wh \Gamma_p \xrightarrow{L^1 \cap L^\infty} \wh \gamma_p
\end{equation}
for all $p \in \N$.
\end{theorem}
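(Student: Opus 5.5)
The plan follows the strategy of \cite{FKSS_24} for the homogeneous case, adapted to the inhomogeneous setting. The argument goes through an intermediate \emph{nonlocal classical field theory} and compares both the quantum system and the local $\phi^4_2$ theory to it. Concretely, I introduce the classical interaction
\[
V^\eps \deq \frac12 \int \d x\, \d\tilde x\, \wick{\abs{\phi(x)}^2}\, v^\eps(x-\tilde x)\, \wick{\abs{\phi(\tilde x)}^2} - \int \d x\, \tau^\eps(x)\wick{\abs{\phi(x)}^2} - E^\eps\,.
\]
This is the classical analogue of the Wick-ordered interaction in \eqref{eq:WickHamiltonian}, with the same counterterm functions $\tau^\eps$ and constant $E^\eps$. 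By Wick's theorem, $V^\eps$ coincides with $\frac12 \int \wick{\abs{\phi}^2}\, v^\eps * \wick{\abs{\phi}^2}$ Wick-ordered as a quartic polynomial, since the $\tau^\eps$ and $E^\eps$ terms remove exactly the self-contractions through $v^\eps$. The proof then has two parts.

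\textbf{(a) Classical part.} Show $V^\eps \to V$ in $L^2(\P)$ as $\eps \to 0$, together with uniform integrability $\sup_\eps \E[\ee^{-pV^\eps}] < \infty$ for some $p>1$. The $L^2$ convergence follows from Wick's theorem once we have decay and local regularity of $G$ and $G_N$ (Proposition \ref{prop:GN} and the Green function bounds). For uniform integrability I use Nelson's argument in the nonlocal setting, which is the general result proved in Section \ref{sec:field} under $\theta>10$. The point is that $V^\eps$ is bounded below by $-C\int (\tau^\eps - \tau^\eps_N)^2$-type quantities plus a fluctuation controlled in $L^q$ with constants $\exp(C\log^2 N)$-ish. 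Here the inhomogeneous counterterm functions enter. The lower bound on the positive-type quartic term must absorb the difference between $\tau^\eps(x)$ and the diagonal $v^\eps * G_N$ counterterm, and in the inhomogeneous case this difference is a diverging function that does not cancel. Together these give $\E[\ee^{-V^\eps}]\to\zeta$ and convergence of the Wick-ordered moments.

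\textbf{(b) Quantum-to-classical part.} Compare $\cal Z$ and $\nu^p\wh\Gamma_p$ with the corresponding nonlocal classical quantities. I would use the functional integral (Hubbard--Stratonovich) representation in terms of a complex time-dependent field $\phi_\tau$, $\tau\in[0,\nu]$, whose covariance is the free quantum Green function. This is followed by a Duhamel/Taylor expansion of $\ee^{-H}$ with a remainder, as in \cite{FKSS_24}. The ingredients are:
\begin{itemize}
\item quantitative convergence of the quantum Green function $\nu\,(\ee^{\nu h}-1)^{-1}$ to $G$, in norms weighted by $v^\eps$;
\item control of the rescaled density $\varrho_\nu$, whose divergence is $\sim\log\nu^{-1}$, uniformly in $x$;
\item bounds with errors of order $\ee^{C(\log\eps^{-1})^2}\,\nu^{c}$.
\end{itemize}
These errors are where condition \eqref{eq:assumption on d=2 size of eps} comes from. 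Taking $\eps\to0$ and $\nu\to0$ jointly then yields \eqref{eq:conv_part_function} and \eqref{eq:decaygammanu}. The $L^\infty$ part of the density-matrix convergence uses the Wick ordering to remove diagonal singularities, and the $L^1$ part uses the trap decay via $\tr h^{-s}<\infty$ from Remark \ref{rmk:thetha_s2}.

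\textbf{Main obstacle.} I expect the hardest step to be the uniform bounds in part (b) in the presence of the counterterm functions $\tau^\eps(x)$ and $v^\eps*\varrho_\nu$. These are not constants, so they do not commute away. Bounding them requires pointwise estimates on the regularised Green function and its gradient near the diagonal, uniformly in $x$ with weights growing like $U$. These are the Schrödinger Green function estimates announced in the introduction. I would first try to write $G = G^{U=0\text{-local}} + R$, freezing the potential at $x$ in a neighbourhood, and bound $R$ and $\nabla R$ using assumption \assP{} or \assD{}. This reduces the diverging parts to the homogeneous constants plus bounded, controlled remainders.
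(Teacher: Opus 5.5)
There is a genuine gap in part (b), at exactly the point where inhomogeneity matters. The functional-integral comparison of \cite{FKSS_24} linearises the quartic term by Hubbard--Stratonovich. That representation uses a real auxiliary field $\sigma$ with covariance built from $\delta_{\eta,\nu}$ and $v^\eps$, together with Feynman--Kac loops; it does not use a complex field with the free quantum Green function as covariance. After linearisation, the quadratic part is governed by $h-\tau^\eps$: classically the weight $\ee^{-\langle\phi,(h-\tau^\eps)\phi\rangle}$, quantum-mechanically a Fock trace with one-body operator $h-\tau^\eps-\ii\sigma$. Since $\tau^\eps\to\infty$ like $\log\eps^{-1}$, this operator is not positive for small $\eps$. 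The Gaussian integral and the Fock trace then diverge, and the interchange of the $\xi$- and $\phi$-integrations is illegitimate.

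In the homogeneous case one removes the problem by shifting the auxiliary field by a constant $\alpha$ solving $v^\eps*\alpha=\tau^\eps$. Here \eqref{eq:eqAmirali} generically has no solution, because its right-hand side is rougher than any convolution with $v^\eps$. A constant shift by $\tau^{\eps,0}$ is not available on $\R^2$ either: it produces $\frac12\int\alpha\, v^\eps*\alpha=\infty$ and an undefined $\langle\xi,\alpha\rangle$. Your idea of freezing $U$ to split off the homogeneous constant only bounds the counterterm functions, which is the easy part ($\|\tau^\eps\|_{L^\infty}\lesssim\chi(\eps)$). It does not make the representation exist.

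The missing idea is the paper's, in four steps:
\begin{enumerate}
\item Choose $\alpha^\eps\in L^2\cap L^\infty$ supported in a large $\eps$-dependent ball so that $U+\alpha^\eps_\eta-\tau^\eps\geq U/2$.
\item Absorb the mismatch into an $\eps$-dependent potential $U^\eps_\eta$ by a change of Gaussian measure, which produces the constant $S^\eps_\eta$ of \eqref{eq:S_eta}. Then re-Wick-order with respect to $(h^{U^\eps_\eta})^{-1}$, which produces the function $T^\eps_\eta$.
\item Redo this on the quantum side, giving $S^\eps_{\nu,\eta}$ in \eqref{eq:def_S_nueta} and $T^\eps_{\nu,\eta}=\varrho^U_\nu-\varrho^{U^\eps_\eta}_\nu$ in \eqref{eq:def_Ewick_nueta}.
\item Prove Riemann-sum bounds of order $\chi(\eps)^2\nu^{1/2-s/4}$ on $\abs{S^\eps_{\nu,\eta}-S^\eps_\eta}$ and on $\int\mu_{v^\eps_\eta}(\d\xi)\,\abs{\langle\xi,T^\eps_\eta-T^\eps_{\nu,\eta}\rangle}$, in addition to the $F_2$ versus $f_2$ comparison.
\end{enumerate}
Without this mechanism, part (b) has no well-defined object to expand.

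Part (a) also contains a false identity. Your $V^\eps$ is the paper's $W^\eps$ from \eqref{eq:Weps}, and it does not coincide with the fully Wick-ordered quartic $V^\eps$ of \eqref{eq:Veps}. Expanding $\wick{\abs{\phi(x)}^2}\,\wick{\abs{\phi(\tilde x)}^2}$, the constant $E^\eps$ cancels the double contraction exactly. However, $\int\tau^\eps\,\wick{\abs{\phi}^2}$ does not cancel the single contractions, leaving
\[
W^\eps-V^\eps=\int \d x\,\d{\tilde x}\, v^\eps(x-\tilde x)\,G(x,\tilde x)\,\bigl(\wick{\bar\phi(x)\phi(\tilde x)}-\wick{\abs{\phi(x)}^2}\bigr)\,,
\]
which is nonzero at fixed $\eps$. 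It vanishes in $L^2(\P)$ only as $\eps\to0$; this is Lemma~\ref{lm:Veps-Weps}, and it needs the gradient bound of Proposition~\ref{lm:gradG}.

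This distinction matters for integrability. Nelson's argument works for the positive-type, fully Wick-ordered $V^\eps$, via the lower bound $-CN^\gamma\log N$ of Lemma~\ref{lm:boundVN}. A direct lower bound on $W^\eps$ runs into precisely the non-cancelling $\tau^\eps$ you mention. So uniform integrability of $\ee^{-W^\eps}$ must be transferred from $V^\eps$ using $\norm{V^\eps-W^\eps}_{L^2}\to0$ and hypercontractivity.
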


The topology of convergence in \eqref{eq:decaygammanu} is strong enough to address most questions of physical interest, regarding both small-scale (ultraviolet) behaviour as well as large-scale (infrared) behaviour. For instance, it allows for an analysis of arbitrary correlation functions of the renormalised particle densities; see Corollary \ref{cor:density} below.

Combining Theorem \ref{thm:main_result} with the behaviour of the Green function from Proposition \ref{prop:GN} below, arguing as in \cite[Theorem 1.3]{FKSS_2020}, we obtain the following result. It gives convergence of the non-renormalised reduced density matrices, with optimal range of $L^r$ exponents.
\begin{cor}
    Under the assumptions of Theorem \ref{thm:main_result}, for every \(p \in \mathbb{N}\) and \(1\leq r<\infty\),
    \begin{equation*}
\nu^p \, \Gamma_p \xrightarrow{L^r} \gamma_p
    \end{equation*}
    \end{cor}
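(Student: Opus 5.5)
The plan is to invert the Wick-ordering relations \eqref{gamma_p_link} and \eqref{eq:Wick ordered reduced density matrices}, use Theorem \ref{thm:main_result} for the Wick-ordered parts, and control the free parts by a direct argument. Inverting these binomial relations (as in \cite[Lemma A.4]{FKSS_2020}) gives
\[
\gamma_p=\sum_{k=0}^p\binom{p}{k}^2\mathcal P_p\bigl(\widehat\gamma_k\otimes\gamma^0_{p-k}\bigr)\mathcal P_p\,,\qquad
\nu^p\Gamma_p=\sum_{k=0}^p\binom{p}{k}^2\mathcal P_p\bigl(\nu^k\widehat\Gamma_k\otimes\nu^{p-k}\Gamma^0_{p-k}\bigr)\mathcal P_p\,.
\]
It therefore suffices to show two things. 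First, $\nu^k\widehat\Gamma_k\to\widehat\gamma_k$ in $L^1\cap L^\infty$ on $\mathbb R^{2k}\times\mathbb R^{2k}$, which is exactly \eqref{eq:decaygammanu}. Second, $\nu^m\Gamma^0_m\to\gamma^0_m$ in $L^r$ for every $1\le r<\infty$, with uniform $L^r$ bounds.

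Once both hold, each tensor product $\nu^k\widehat\Gamma_k\otimes\nu^{p-k}\Gamma^0_{p-k}$ converges in $L^r$. The reason is that the $L^r$ norm of a tensor product of kernels in disjoint variables factorises, and $L^1\cap L^\infty\subset L^r$. Symmetrisation by $\mathcal P_p$ is a finite average of variable permutations, so it is bounded on $L^r$. The $k=p$ term needs no free factor, and the $k=0$ term is the purely free one.

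For the free part, Wick's theorem for the quasi-free state $\ee^{-H^0}/Z^0$ writes $\nu^m\Gamma^0_m$ as a sum over permutations $\pi\in S_m$ of products $\prod_i G_\nu(x_i,\tilde x_{\pi(i)})$. Here $G_\nu\deq\nu(\ee^{\nu h}-1)^{-1}$ is the quantum Green function. Similarly, $\gamma^0_m$ is the same sum with $G=h^{-1}$. Each product factorises over the cycles of $\pi$ in the variables $(x_i,\tilde x_{\pi(i)})$, so its $L^r$ norm equals $\|G_\nu\|_{L^r(\mathbb R^2\times\mathbb R^2)}^m$. Hence it is enough to show $G_\nu\to G$ in $L^r(\mathbb R^4)$ for $1\le r<\infty$.

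I expect this two-point estimate to be the main obstacle, and I would take it from Proposition \ref{prop:GN}, applied with $\vartheta_\nu$ playing the role of the cutoff (i.e.\ $G_\nu=\vartheta(\nu h)/h$ with $\vartheta(t)=t/(\ee^t-1)$). That proposition should give logarithmic singularity on the diagonal and decay away from it:
\[
|G_\nu(x,y)|+|G(x,y)|\lesssim \bigl(1+\log_+|x-y|^{-1}\bigr)\,\ee^{-c|x-y|}\,w(x,y)\,,
\]
with a weight $w$ decaying in $|x|+|y|$ due to $U\ge c(1+|x|^\theta)$. The exact form of these bounds is my assumption about what the proposition supplies. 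The logarithmic singularity is in $L^r_{\mathrm{loc}}$ for every $r<\infty$ but not for $r=\infty$, which explains the restriction $r<\infty$. The bound is uniform in $\nu$ and integrable at infinity. Since pointwise convergence $G_\nu\to G$ off the diagonal holds by spectral calculus (or by Proposition \ref{prop:GN}), dominated convergence gives $G_\nu\to G$ in $L^r$. Combined with the factorisation above, this yields the corollary.
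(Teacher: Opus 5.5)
Your proposal is correct and is essentially the paper's argument: the paper simply invokes \cite[Theorem 1.3]{FKSS_2020}. That argument inverts the Wick-ordering relations, applies Theorem~\ref{thm:main_result} to the Wick-ordered pieces, and uses Green-function bounds to get $\nu/(\ee^{\nu h}-1)\to h^{-1}$ in $L^r(\R^2\times\R^2)$ for the free pieces.

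One correction concerns the $\nu$-uniform bound for the quantum Green function. It comes from Proposition~\ref{prop:bounds_Gnu}, since Proposition~\ref{prop:GN} only treats the cutoff $\ee^{-h/N}$. That bound also has no integrable multiplicative weight in $\abs{x}$ near the diagonal, where the kernel is of order one for $\abs{x-y}\sim\tilde g(x)^{-1/2}$. Instead, the decay comes from the length scale $\tilde g(x)^{-1/2}$ in the exponent. This gives $\int\d y\,F(x,y)^r\lesssim\tilde g(x)^{-1}$, and hence an $L^r(\R^4)$ dominating function because $\theta>2$.
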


Another consequence of Theorem \ref{thm:main_result} is the convergence in law of the of the renormalised particle densities, similarly to \cite[Theorem 1.4]{FKSS_2020}. To state the result, we define the \emph{quantum particle density} and the \emph{classical mass density}, respectively
\[
\fra N(x) \deq a^*(x)a(x) \,, \qquad \fra n(x) \deq |\phi(x)|^2\,.
\]
Their renormalised versions are defined by\footnote{For $d = 2$, the mass density $\fra n(x)$ is ill-defined, but its renormalized version $\wick{\fra n(x)}$ is well-defined; the latter is constructed as the $L^2(\P)$-limit of its regularisation, like \eqref{eq:defV}.}
\[
\wick{ \fra N(x) } = \fra N(x) - \Tr_{\mathcal{F}}\biggl( \fra N(x) \frac{\ee^{-H^0}}{Z^0}\biggr)\,, \qquad
\wick{\fra n(x)} = \fra n(x) - \mathbb{E}[ \fra n(x)]\,.
\]

\begin{cor} \label{cor:density}
     Under the assumptions of Theorem \ref{thm:main_result}, for every $p \in \mathbb N^*$ and $1\leq r <\infty $, as $\epsilon, \nu \to 0$,
    \[
    \Tr_{\mathcal F}\biggl(\nu^p \, \wick{\mathfrak{N}(x_1)}\cdots \wick{\mathfrak{N}(x_p)} \, \frac{\ee^{-H}}{Z}\biggr) \xrightarrow{L^r} \frac{1}{\zeta} \E \qB{\wick{|\phi(x_1)|^2}\cdots \wick{|\phi(x_p)|^2} \, \ee^{-V(\phi)}}\,.
    \]
 The convergence is also locally uniform for distinct arguments \(x_1,\dots, x_p\).
\end{cor}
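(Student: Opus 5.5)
We deduce Corollary \ref{cor:density} from Theorem \ref{thm:main_result}, following the argument of \cite[Theorem 1.4]{FKSS_2020}. The first step is to express the quantum correlation of renormalised densities in terms of Wick-ordered reduced density matrices. Using the canonical commutation relations we normal-order the product $\wick{\fra N(x_1)}\cdots\wick{\fra N(x_p)}$. Each factor $a^*(x_i)a(x_i)$ is shifted by its free expectation $\varrho_\nu(x_i)/\nu$, and commuting an annihilation operator $a(x_i)$ past a creation operator $a^*(x_j)$ produces $\delta(x_i-x_j)$. Expanding in the subset of factors that are shifted gives, pointwise for distinct arguments,
\[
\Tr_{\mathcal F}\Bigl(\nu^p\,\wick{\fra N(x_1)}\cdots\wick{\fra N(x_p)}\,\tfrac{\ee^{-H}}{Z}\Bigr)=\sum_{\Pi}\nu^{p-|\Pi|}\, \bigl(\text{diagonal of } \nu^{|\Pi|}\wh\Gamma_{|\Pi|}\bigr)\text{-type terms}.
\]
By the inverse of the relation \eqref{eq:Wick ordered reduced density matrices}, this is a finite combination of diagonal values $(\nu^k\wh\Gamma_k)_{\mathbf y,\mathbf y}$ multiplied by free two-point kernels $\nu\Gamma_1^0(x_i,x_j)$ with $i\neq j$. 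The same combinatorics applied to \eqref{gamma_p_link} writes $\frac1\zeta\E[\prod_i\wick{|\phi(x_i)|^2}\ee^{-V}]$ as the identical combination of $\wh\gamma_k$ and $G(x_i,x_j)$. The delta-function terms are absent for distinct arguments, so they do not affect the $L^r$ statement (they are null sets).

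The second step passes to the limit term by term. Theorem \ref{thm:main_result} gives $\nu^k\wh\Gamma_k\to\wh\gamma_k$ in $L^1\cap L^\infty$. We need this on the diagonal, which requires a little more than $L^\infty$ of the kernel. We would use the argument in \cite[Section 1.5 and Theorem 1.4]{FKSS_2020}: the Wick-ordered kernels are continuous, their convergence holds in a topology controlling the diagonal (uniform convergence of continuous kernels), and the restriction to the diagonal is then continuous. For the free factors, Proposition \ref{prop:GN} together with its quantum analogue gives $\nu\Gamma^0_1=\nu\,\frac{1}{\ee^{\nu h}-1}\to G$. This convergence is locally uniform off the diagonal, with the bound $|G(x,y)|\lesssim 1+|\log|x-y||$ and rapid decay in $|x|+|y|$ from the confining potential. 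Since logarithmic singularities lie in every $L^r_{\mathrm{loc}}$ with $r<\infty$, products of such factors with bounded decaying kernels converge in $L^r(\R^{2p})$ by dominated convergence and Hölder's inequality. Local uniformity for distinct arguments follows because every factor converges locally uniformly away from coincidences.

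We expect the main obstacle to be the control of cyclic products $\prod G(x_{i},x_{i+1})$ in $L^r$ near coinciding points, uniformly in $\nu$. We would handle these with a uniform-in-$\nu$ bound $\nu\Gamma_1^0(x,y)\lesssim (1+|\log|x-y||)\,w(x)w(y)$, where $w$ decays. The bound would come from writing $\nu\Gamma_1^0=\sum_{n\ge1}\nu \ee^{-n\nu h}$ and applying heat kernel bounds for Schrödinger operators with $U\ge c(1+|x|^\theta)$ (Feynman–Kac domination by the free heat kernel times decay). Integrating over $x$ in a cycle then gives a finite $L^r$ norm for each $r<\infty$, and dominated convergence concludes.
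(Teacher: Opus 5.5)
Your route is the one the paper intends. It gives no separate proof and simply invokes \cite[Theorem 1.4]{FKSS_2020}, which is exactly your expansion of $\prod_i\wick{\fra N(x_i)}$, via the inverse of \eqref{eq:Wick ordered reduced density matrices}, into a sum over partial matchings of products of $\mathcal G_\nu(x_i,x_j)=\nu\Gamma^0_1(x_i,x_j)$, $i\neq j$, times kernel values of $\nu^k\wh\Gamma_k$, mirrored classically with $G$ and $\wh\gamma_k$.

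However, the step you single out as the main obstacle rests on a false bound. There is no $\nu$-uniform estimate $\nu\Gamma^0_1(x,y)\lesssim(1+\abs{\log\abs{x-y}})\,w(x)w(y)$ with a decaying, integrable $w$, and $G$ does not decay in $\abs{x}+\abs{y}$ near the diagonal. Near $x$ the potential has size $M\asymp 1+\abs{x}^\theta$, so $G$ behaves like a two-dimensional Green function of mass $\sqrt M$. Restrict the Feynman--Kac integral to $t\in[\ell^2/2,\ell^2]$ with $\ell\deq M^{-1/2}$, and to paths staying in $B_1(x)$. This gives $G(x,y)\gtrsim1$ for $\abs{x-y}=\ell$, uniformly in $x$, and the same holds for $\mathcal G_\nu$ once $\nu\ll\ell^2$. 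At such points your right-hand side is $\asymp w(x)w(y)\log\abs{x}$. For radially non-increasing $w$ this forces $w(x)\gtrsim(\log\abs{x})^{-1/2}$, which lies in no $L^q(\R^2)$ with $q<\infty$. So the cycle integrals cannot be controlled this way.

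The integrability comes instead from the shrinking width $\tilde g(x)^{-1/2}$ of the region where $G$ is of order one. Set $b(s)\deq[(\log s^{-1})_+\vee1]\,\ee^{-cs}$, which is non-increasing. Proposition \ref{prop:bounds_Gnu}(i) gives the $\nu$-uniform bound $\mathcal G_\nu(x,y)\lesssim b\pb{\abs{x-y}\sqrt{\tilde g(x)\vee\tilde g(y)}}$, hence $\int\dd y\,\mathcal G_\nu(x,y)^q\lesssim_q\tilde g(x)^{-1}$ for all $q<\infty$, symmetrically in $x,y$. For a cycle $x_1\to\cdots\to x_m\to x_1$, integrate out $x_m$ by Cauchy--Schwarz, then $x_{m-1},\dots,x_2$ one by one, and finally $x_1$ using $\int\dd x\,\tilde g(x)^{-1}<\infty$ (since $\theta>2$). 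Variables on chains ending in a $\wh\Gamma_k$ factor get their decay from $\Upsilon_{\theta-c}$.

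Three smaller points.
\begin{itemize}
\item The $\wh\Gamma_k$ values are in general off-diagonal: for $p=2$ the term $(\nu\wh\Gamma_1)_{x_1,x_2}\,\mathcal G_\nu(x_2,x_1)$ appears.
\item Both the pointwise evaluation and the domination must come from the weighted uniform convergence \eqref{eq:corrfunctionGammawUpsilon}, together with Proposition \ref{prop:corrfunctionwUpsilon} and Lemma \ref{lm:boundWickgammax}. These give $\abs{\nu^k\wh\Gamma_k}\lesssim\Upsilon_{\theta-c}$ uniformly for small $\nu,\eps$. The $L^1\cap L^\infty$ statement of Theorem \ref{thm:main_result} is not enough, and Lemma \ref{lemma:bound on decay of wick ordered correlation functions} carries the divergent factor $\ee^{C\chi(\eps)^2}$.
\item The convergence $\mathcal G_\nu\to G$ uniformly on $\{\abs{x-y}\geq\delta\}$ is not contained in Propositions \ref{prop:GN} or \ref{prop:bounds_Gnu}, which concern $G_N-G$ and upper bounds only. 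It needs a short Riemann-sum argument as in Section \ref{subsec:Riemann sum estimates}.
\end{itemize}
With these replacements your dominated-convergence argument goes through.
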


\begin{remark} \label{rem:theta_optimal}
Our results hold under the assumption $\theta > 10$. This restriction is technical and can be improved, for instance under stronger assumptions on the smoothness of $U$; see Remark \ref{rem:theta_condition} below.
\end{remark}

\begin{remark}
The lower bound on the interaction length from \eqref{eq:assumption on d=2 size of eps} is the same as in the homogeneous result from \cite{FKSS_24}. In the homogeneous setting, it was recently improved in \cite{nam2025phi, jougla2025phi} to $\epsilon \geq \nu^c$ for a sufficiently small constant $c > 0$, at the expense of a weaker topology of convergence in \eqref{eq:decaygammanu} (weak operator convergence against test functions with finite Fourier support.)
\end{remark}

Finally, we discuss the \emph{counterterm problem} \eqref{counterterm problem}, which relates the bare potential $\cal U$ with the dressed potential $U$, and hence the Hamiltonian \eqref{eq:Hwrenormalization} with its Wick-ordered formulation \eqref{eq:WickHamiltonian}. We make the following assumption on the bare potential $\mathcal U$.

\begin{assumption}\label{assumption:mathcalU_cp}
Fix $\theta > 2$. There exist a radially increasing function \(g\col\bR^d \to [1, \infty)\) and constants \(C,c,b >0\), \( \gamma \in (0,1)\) satisfying 
     \begin{equation}\label{eq:boundOnMaxg}
           c \abs{x}^\theta\leq g(x)\leq  C\ee^{c\abs{x}^b}\,
        \end{equation} and
        \begin{equation}\label{cp:boundU}
           cg(x)\leq \mathcal{U}(x) \leq C g^{3/2}( \gamma x ) \,   
        \end{equation}
    for almost every $x \in \bR^d$. 
\end{assumption}

\begin{remark}
If \(\mathcal{U}\) satisfies Assumption \ref{assumption:P and D for U} \assD{} then \eqref{eq:boundOnMaxg} implies \eqref{cp:boundU}. Indeed, in that case
    \[\mathcal{U}(x) \leq C + \bigg|{\int_0^1 \d{\lambda} \,\nabla \mathcal{U}(\lambda x)\cdot  x}\bigg|\leq C + C g^{3/2}(\gamma x) \cdot \abs{x} \leq Cg^{3/2}(\tilde \gamma x)\quad  \text{a.e.}  \]
    for any \(\tilde \gamma \in (\gamma,1)\) and some constant \(C\). 
\end{remark}

Our main result on the counterterm problem, Theorem \ref{thm:Solution to the counterterm problem} below, is the solvability of \eqref{counterterm problem} in $U$ for a given $\cal U$. It is a generalisation of the analogous result in \cite[Theorem 5.2 and Remark 5.3]{frohlich2017gibbs} and \cite[Proposition 1.15]{FKSS_2020}. It expands on these works in four main directions. First, it goes beyond the mean-field regime, where $\epsilon$ is fixed, to short-range interactions, where $\epsilon \to 0$. Second, we have to deal with the additional term $\tau^\eps (x)- \tau^{\eps,0}$ in \eqref{counterterm problem}, which arises from the stronger renormalisation for local interactions. Third, we control the gradient of the dressed potential, uniformly in $\nu$ and $\epsilon$. Fourth, we relax the condition of subexponential growth for $\cal U$ in the \cite{frohlich2017gibbs, FKSS_2020} to the weaker condition \eqref{eq:boundOnMaxg}. The former three are important for our application in Theorem \ref{thm:main_result_counterterm} below.

For the solvability of \eqref{counterterm problem} in $U$, we require $\kappa$ to be large enough, which is necessary as the following remark explains.

\begin{remark}
    By the Feynman-Kac formula (Lemma \ref{Feynman-Kac formula} below),  the functions \(\tau^{\eps,0}-\tau^\eps\) and \(\varrho^0_\nu-\varrho_\nu\) appearing on the right-hand side of \eqref{counterterm problem} are positive whenever \(v \geq 0\), and they are decreasing with \(\kappa>0\). For a given $\cal U > 0$, since $U$ has to be also positive, we therefore require the above positive terms to be small enough, which imposes a lower bound on $\kappa$. (Note that, since these two terms are bounded while $U$ and $\cal U$ grow at infinity, this restriction arises from the behaviour of $U$ and $\cal U$ near the origin.)
\end{remark}

\begin{theorem}\label{thm:Solution to the counterterm problem}
Let $d=2$. Let \(\mathcal{U}\) and \(g\) satisfy Assumption \ref{assumption:mathcalU_cp} and $v$ satisfy  Assumption \ref{assumptions: non local interaction}. 
Then, for \(\kappa>0\) large enough, the following holds.
\begin{enumerate}[label=(\roman*)]
    \item The counterterm problem \eqref{counterterm problem} has a unique solution \(U_{\nu}^ \eps\) for every \(\eps, \nu>0\). 
    \item Suppose that \(\epsilon \equiv \eps(\nu)>0\) satisfies  \begin{equation}\label{eq:rate of convergence of eps counterterm problem}
        \eps(\nu)>\nu^{\beta}
    \end{equation}
    with \(0\leq\beta< \frac{2-s}{6}\) and \(s\) satisfying \eqref{eq:tracehs}. There exists a unique positive function \(U\) satisfying the limiting counterterm problem \begin{equation}\label{eq:limiting counterterm equation}
       \mathcal{U}(x)=U(x)-2\biggl(\frac{1}{\kappa-\Delta / 2+U}-\frac{1}{\kappa-\Delta / 2}\biggr)(x,x)\,.
    \end{equation}
It satisfies
    \[ \lim_{\eps,\nu \to 0} \normbb{\frac{U_{\nu }^{\eps}-U}{\mathcal{U}}}_{L^\infty} = 0 \,.  \]
    \item There exists a constant \(C>0\) such that
    \[\mathcal{U}/C < U_{\nu}^\eps< C \mathcal{U}\,.\]
    \item If \(\mathcal{U}\) satisfies \assD{} in Assumption \ref{assumption:P and D for U}, there exists a constant \(C>0\) such that
    \[\abs {\nabla U_{\nu}^\eps(x)} \vee \abs {\nabla U(x)} \leq C  g(\gamma x)^{3/2}\,,\]
    and \(U^{\eps}_\nu, U\) satisfy \assD{} for all \(\nu, \eps>0\).
\end{enumerate}
\end{theorem}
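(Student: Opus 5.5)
We rewrite \eqref{counterterm problem} as a fixed-point equation $U = \Phi(U)$ with
\[
\Phi(U)(x) \deq \mathcal U(x) + \bigl(\tau^\eps(x)-\tau^{\eps,0}\bigr) + v^\eps*(\varrho_\nu-\varrho^0_\nu)(x)\,,
\]
where $\tau^\eps$ and $\varrho_\nu$ are computed with the dressed potential $U$, and solve it by monotonicity and contraction. The basic tool is the Feynman--Kac representation (Lemma \ref{Feynman-Kac formula}). Writing $G=\int_0^\infty \ee^{-t h}\,\dd t$ and $\varrho_\nu(x)=\nu\sum_{n\ge1}\ee^{-n\nu h}(x,x)$, both $\tau^\eps-\tau^{\eps,0}$ and $\varrho_\nu-\varrho^0_\nu$ are integrals over Brownian bridges of $\ee^{-\kappa t}\bigl(\ee^{-\int_0^t U(\omega)}-1\bigr)$. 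This yields the following properties.

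\begin{enumerate}[label=(\alph*)]
\item The correction terms are bounded by $C/\kappa$ uniformly in $\eps,\nu$. For $\tau^\eps$ this uses $v^\eps$ against the difference $G-G^0$, which is locally bounded in $d=2$ since the logarithmic singularity cancels. For $\varrho_\nu$ it uses the $\nu$-Riemann sum of $\int_0^\infty \ee^{-\kappa t}\,t^{-1}(1-\ee^{-tU})\,\dd t$ type bounds.
\item The map $U\mapsto\Phi(U)$ is monotone decreasing in $U$.
\item $\Phi$ is Lipschitz in a weighted norm $\|(U_1-U_2)/\mathcal U\|_{L^\infty}$ with constant $o(1)$ as $\kappa\to\infty$. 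This follows from $|\ee^{-a}-\ee^{-b}|\le |a-b|\,\ee^{-\min(a,b)}$, which gives the bound $\int \ee^{-\kappa t}\,\E\bigl[\int_0^t |U_1-U_2|(\omega_s)\,\dd s\;\ee^{-\int_0^t \min(U_1,U_2)}\bigr]$. The weight $\mathcal U$ is controlled using \eqref{cp:boundU}: Brownian excursions of duration $t$ from $x$ reach points $y$ with $\mathcal U(y)\le C g^{3/2}(\gamma y)$, and the Gaussian decay of the bridge beats the growth $g\le C\ee^{c|x|^b}$ as long as $U\gtrsim g$ damps long times.
\end{enumerate}

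For (i) and (iii), Banach's fixed-point theorem on the set $\{\mathcal U/C\le U\le C\mathcal U\}$ gives existence and uniqueness. The set is invariant under $\Phi$ for large $\kappa$ by (a), since $\mathcal U\ge c$ absorbs an additive $O(1/\kappa)$ error.

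For (ii), we define the limiting map $\Phi_0(U)=\mathcal U+2\bigl((\kappa-\Delta/2+U)^{-1}-(\kappa-\Delta/2)^{-1}\bigr)(x,x)$. The same contraction argument gives a unique fixed point, since $\Phi_0$ is the formal limit of the two terms: each converges to $(G-G^0)(x,x)$. We then bound $\|(\Phi(U)-\Phi_0(U))/\mathcal U\|_\infty\to0$ uniformly over $U$ in the invariant set. The $\varrho_\nu$ term is a Riemann-sum error of order $\nu^{(2-s)}$-type, obtained via the trace bound \eqref{eq:tracehs}. The $\tau^\eps$ term requires continuity of $G-G^0$ at scale $\eps$; here a Hölder bound on $G-G^0$ near the diagonal is needed. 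Converting this into convergence of fixed points uses $\|U^\eps_\nu-U\|\le (1-L)^{-1}\|\Phi(U)-\Phi_0(U)\|$, and the condition \eqref{eq:rate of convergence of eps counterterm problem} enters when balancing the $\eps$-smearing of $\varrho_\nu$ against $\nu$.

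For (iv), we differentiate the fixed-point equation: $\nabla U = \nabla\mathcal U + \nabla(\tau^\eps-\tau^{\eps,0}) + v^\eps*\nabla\varrho_\nu$. The correction gradients are bounded using the gradient bounds on the Green function and heat kernel announced in the introduction (Proposition \ref{prop:GN} and its gradient analogue), or alternatively by translating $x$ inside Feynman--Kac and using the Lipschitz bound on $U$ itself. This gives $|\nabla U|\le Cg(\gamma x)^{3/2}+C\sup|\nabla U|$-weighted terms times $o(1)$, and the latter are absorbed for large $\kappa$ after setting up the contraction in a $W^{1,\infty}$-weighted space. Property \assD{} then follows from (iii).

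The main obstacle is (iv) together with uniform control of the $\tau^\eps$ term. Both need quantitative regularity of $G_U-G_0$ on the diagonal at scale $\eps$ for rough, rapidly growing $U$, uniformly over the invariant set.
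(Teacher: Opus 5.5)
Your architecture (Banach's fixed-point theorem for $\Phi$ in the $\mathcal U$-weighted sup norm with $\kappa$ large, upgraded to $\|f\|_{\mathcal U}+\|\nabla f\|_{\tilde g^{3/2}}$ for (iv)) is the paper's. However, the invariance step rests on (a), and (a) is false.

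The corrections are not uniformly bounded, let alone by $C/\kappa$:
\begin{itemize}
\item For fixed $\eps$ and $|x|\to\infty$, the kernel $G^U(x,\cdot)$ lives on scale $U(x)^{-1/2}\ll\eps$. Hence $\tau^{\eps,U}(x)\to0$, and $\sup_x|\tau^{\eps,U}(x)-\tau^{\eps,0}|\geq\tau^{\eps,0}\asymp\log\eps^{-1}$.
\item Likewise $\varrho^U_\nu(x)\to0$ while $\varrho^0_\nu\asymp\log\nu^{-1}$.
\item Even the limiting correction $(G^U-G^0)(x,x)$ is of order $\log(1+U(x)/\kappa)$ for slowly varying $U$. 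Your own formula $\int_0^\infty e^{-\kappa t}t^{-1}(1-e^{-tU})\,\dd t=\log(1+U/\kappa)$ shows this.
\end{itemize}
What is true, and what the whole proof hinges on, is a bound relative to the weight, uniform in $\eps,\nu$:
\[
\bigl|(\tau^{\eps,U}-\tau^{\eps,0})(x)\bigr|+\bigl|v^\eps*(\varrho^U_\nu-\varrho^0_\nu)(x)\bigr|\lesssim\kappa^{-1/2}\,\|U\|_{\mathcal U}\,g(x)\,.
\]
This is the content of Lemmas \ref{lemma:bounds_countertermproblem_tau}(i) and \ref{lemma:bounds_countertermproblem_varrho}(i). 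Combined with $g\lesssim\mathcal U$, it gives invariance of $\{\mathcal U/C\leq U\leq C\mathcal U\}$.

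Your mechanism in (c) cannot supply this bound. Invariance is necessarily a comparison with the free case $U_2=0$, since $\tau^{\eps,0}$ and $\varrho^0_\nu$ are the $U=0$ quantities; equivalently, it needs smallness of $\|\Phi(\mathcal U)-\mathcal U\|_{\mathcal U}$. With $U_2=0$ you get $e^{-\int_0^t\min(U_1,U_2)}=1$, so no damping survives. The resulting majorant $\int G^0(x,z)\,\mathcal U(z)\,G^0(z,x)\,\dd z$ is then infinite when $\mathcal U$ grows like $e^{c|x|^b}$ with $b>1$, which \eqref{eq:boundOnMaxg}--\eqref{cp:boundU} permit.

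What is needed is the one-sided expansion $G^U-G^0=-G^U\,U\,G^0$, which keeps the damping of $U$ in one factor:
\begin{itemize}
\item For $|z-x|>1$, the decay of $G^U(x,z)$ in $|x-z|\sqrt{\kappa+\tilde g(x)\vee\tilde g(z)}$ (Proposition \ref{prop:GN}) beats $\mathcal U(z)\leq Cg^{3/2}(\gamma z)$.
\item For $|z-x|\leq1$, one pays $\esssup_{|z-x|\leq1}\mathcal U(z)\lesssim g^{3/2}((\gamma+c)x)$ against $\int_{|z-x|\leq1}G^U(x,z)G^0(z,x)\,\dd z\lesssim(\kappa+g)^{-1}$ up to logarithms. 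This gives $g^{3/2}/(\kappa+g)\lesssim\kappa^{-1/2}g$.
\end{itemize}
This is exactly where the possible excess of $\mathcal U$ over $g$ by a factor $g^{1/2}$ is absorbed; an additive $O(1/\kappa)$ claim sidesteps the real difficulty. Separately, (b) is unjustified, since $v$ is only of positive type, but you never use it.

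For (ii), your route genuinely differs from the paper's Cauchy-sequence argument. Solving $U=\Phi_0(U)$ directly and using $\|U^\eps_\nu-U\|_{\mathcal U}\leq(1-L)^{-1}\|\Phi(U)-\Phi_0(U)\|_{\mathcal U}$ is cleaner, and it yields existence and uniqueness for \eqref{eq:limiting counterterm equation} outright. It still needs weighted estimates, though. The Riemann-sum error $(\varrho^U_\nu-\varrho^0_\nu)-(G^U-G^0)(x,x)$ involves $\kappa-\Delta/2$, which has no trace, and it does not decay in $x$. So \eqref{eq:tracehs} alone does not control it; it must be bounded in $\|\cdot\|_{\mathcal U}$ after the expansion above.
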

We remark that a similar statement holds in dimension \(d=3\), with a similar proof; we omit the details. Combining Theorem \ref{thm:Solution to the counterterm problem} with Theorem \ref{thm:main_result} easily yields the following result.

\begin{theorem}  \label{thm:main_result_counterterm}
    Let $d =2$ and $\theta >10$. Let the bare potential \(\mathcal{U}\) satisfy Assumption \ref{assumption:P and D for U} and \ref{assumption:mathcalU_cp}. Let $v$ satisfy Assumption \ref{assumptions: non local interaction} and \(\eps>0\) satisfy \eqref{eq:assumption on d=2 size of eps}. Let \(\kappa>0\) be large enough. Let $H$ be the Hamiltonian \eqref{eq:Hwrenormalization} with renormalisation parameters $\mathfrak{g}^\eps_\nu$ and $\mathfrak{h}^\eps_\nu$ defined in \eqref{eq:choice of chemical, energy shift phi42}.
Then there exists a dressed potential $U$ which is the unique solution of \eqref{eq:limiting counterterm equation}, such that \eqref{eq:conv_part_function} and \eqref{eq:decaygammanu} hold.
\end{theorem}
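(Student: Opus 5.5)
The plan is to combine Theorem \ref{thm:Solution to the counterterm problem} with Theorem \ref{thm:main_result} by a triangle inequality. The only real work is that the dressed potential $U^\eps_\nu$ depends on $(\nu,\eps)$, while the limiting objects $\zeta,\wh\gamma_p$ should be built from the limiting potential $U$.

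\emph{Step 1: set up the dressed potentials.} First check that \eqref{eq:assumption on d=2 size of eps} implies \eqref{eq:rate of convergence of eps counterterm problem} for every fixed $\beta>0$ once $\nu$ is small enough. Indeed,
\[
\log \eps^{-1} \leq (\log\nu^{-1})^{(1-c)/2} = o(\beta\log\nu^{-1}).
\]
Fix $s\in(1,2)$ with $\theta>\frac{2}{s-1}$, so that \eqref{eq:tracehs} holds by Remark \ref{rmk:thetha_s2}; this is possible since $\theta>10$. Then fix some $\beta\in(0,\frac{2-s}{6})$.

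For $\kappa$ large, Theorem \ref{thm:Solution to the counterterm problem} gives the following:
\begin{itemize}
\item the unique solution $U^\eps_\nu$ of \eqref{counterterm problem};
\item the limit $U$ solving \eqref{eq:limiting counterterm equation};
\item the comparison $\mathcal U/C<U^\eps_\nu,U<C\mathcal U$;
\item in case \assD{}, gradient bounds uniform in $(\nu,\eps)$.
\end{itemize}
Since $\mathcal U$ satisfies Assumption \ref{assumption:P and D for U}, the comparison shows that each $U^\eps_\nu$ and $U$ satisfies \assP{} or \assD{} with the same $\theta>10$ and with constants independent of $\nu,\eps$. By construction of \eqref{eq:choice of chemical, energy shift phi42}, the Hamiltonian \eqref{eq:Hwrenormalization} with bare potential $\mathcal U$ equals the Wick-ordered Hamiltonian \eqref{eq:WickHamiltonian} with dressed potential $U^\eps_\nu$.

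\emph{Step 2: uniformity of Theorem \ref{thm:main_result}.} I would revisit the proof of Theorem \ref{thm:main_result} and observe that every estimate depends on $U$ only through the constants in Assumption \ref{assumption:P and D for U}. These enter via the Green function bounds (Proposition \ref{prop:GN} and the gradient bounds), the trace bound \eqref{eq:tracehs}, and the Nelson-type integrability of Section \ref{sec:field}. Hence the convergence $\mathcal Z-\zeta[U^\eps_\nu]\to0$ and $\nu^p\wh\Gamma_p-\wh\gamma_p[U^\eps_\nu]\to 0$ in $L^1\cap L^\infty$ holds uniformly over this class. Here $\zeta[W]$ and $\wh\gamma_p[W]$ denote the field theory objects built from $h=\kappa-\Delta/2+W$.

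\emph{Step 3: continuity of the field theory in the potential.} It remains to show $\zeta[U^\eps_\nu]\to\zeta[U]$ and $\wh\gamma_p[U^\eps_\nu]\to\wh\gamma_p[U]$, using $\norm{(U^\eps_\nu-U)/\mathcal U}_{L^\infty}\to0$. Write $W=U^\eps_\nu$ and use the resolvent identity
\[
G_W-G_U=G_W(U-W)G_U.
\]
Since $\abs{U-W}\le \delta\,\mathcal U \le C\delta\, U$ with $\delta\to0$, one gets
\[
\norm{h_W^{1/2}(G_W-G_U)h_U^{1/2}}\le C\delta.
\]
Combined with \eqref{eq:tracehs} and the Green function decay, this should give that the differences of the free covariances, of their diagonals $G_N(x,x)$, and of the regularised quantities go to zero in the norms used to control the Wick-ordered quartic interaction. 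Then I would couple the two Gaussian fields on a common probability space, for instance as $h_W^{-1/2}\xi$ and $h_U^{-1/2}\xi$ with the same white noise $\xi$, and show $V[W]\to V[U]$ in $L^2$. Uniform exponential integrability of $\ee^{-V}$ over the class from Step 1, given by Section \ref{sec:field}, then yields convergence of $\zeta$ and of the correlation functions. Convergence of $\wh\gamma_p$ in $L^1\cap L^\infty$ would follow from the same expansion \eqref{gamma_p_link} together with kernel bounds on $G_W-G_U$.

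I expect Step 3 to be the main obstacle, in particular getting $L^\infty$ and $L^1$ control of the kernel differences and of the diverging Wick counterterms $G_N(x,x)$ uniformly in $N$. Step 2 is mostly bookkeeping, though it must be checked carefully.
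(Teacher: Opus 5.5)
Your plan follows the paper's route. The paper's entire proof is the remark that Theorem \ref{thm:Solution to the counterterm problem} combined with Theorem \ref{thm:main_result} ``easily yields'' the result. Parts (iii)--(iv) of Theorem \ref{thm:Solution to the counterterm problem} supply the $(\nu,\eps)$-uniform \assP{}/\assD{} control of $U^\eps_\nu$ that your Step 2 needs, and part (ii) supplies the convergence $U^\eps_\nu\to U$ behind your Step 3. Your Steps 1--2 make explicit what the paper leaves implicit. So does the partition-function half of Step 3 (white-noise coupling, $V[U^\eps_\nu]\to V[U]$ in $L^2$, uniform Nelson bounds, Vitali), and it is a sensible way to do so.

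The step that would fail as written is the last sentence of Step 3. You cannot get $\wh\gamma_p[U^\eps_\nu]\to\wh\gamma_p[U]$ in $L^\infty$ term by term from \eqref{gamma_p_link}. In $d=2$ each term $\mathcal P_p(\gamma_k\otimes\gamma^0_{p-k})\mathcal P_p$ is logarithmically divergent on the diagonals $x_i=\tilde x_j$; only the alternating sum is bounded. Even if $G_W-G_U$ is bounded and small, a difference such as $(\gamma_1[W]-\gamma_1[U])(x_1,\tilde x_1)\,G_W(x_2,\tilde x_2)+\gamma_1[U](x_1,\tilde x_1)\,(G_W-G_U)(x_2,\tilde x_2)$ is unbounded near $x_2=\tilde x_2$ (resp.\ $x_1=\tilde x_1$). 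Moreover, kernel bounds on $G_W-G_U$ give no control of $\gamma_k[W]-\gamma_k[U]$, and controlling that difference is essentially the statement to be proved.

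What you need is a representation with the Wick cancellations built in, namely Proposition \ref{prop:DifferentialRepgammap}, which writes $\wh\gamma_p$ as an expectation of derivatives of $\ee^{-V}$. Under your coupling these become derivatives with respect to the common white noise. You would then show that $\mathcal L^{W}_{N,\mathbf y}V[W]-\mathcal L^{U}_{N,\mathbf y}V[U]\to0$ in $L^2$, locally uniformly in $\mathbf y$. The bound $\abs{\wh\gamma_p}\lesssim\Upsilon_\theta$ of Lemma \ref{lm:boundWickgammax} is, like the rest of Section \ref{sec:field}, uniform over your class of potentials. Combining it with local uniform convergence gives convergence in $L^1\cap L^\infty$, exactly as in the proof of \eqref{eq:diff gamma-gammaW with decay}.
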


In this result, the bare potential $\cal U$ is the external potential of the quantum theory, while the dressed potential $U$ is the external potential of the limiting field theory.

\section{Strategy of the proof} \label{sec:strategy}

The rest of the paper is devoted to the proofs of Theorems \ref{thm:main_result} and \ref{thm:Solution to the counterterm problem}. In this section we describe the general strategy of the proof. Throughout the paper we use $C,c$ to denote generic positive constants, which may change from one line to another, and may depend on fixed parameters. We write $a \lesssim b$ to mean $a \leq C b$. If $C$ depends on a parameter $\alpha$, we write $a \lesssim_\alpha b$. To emphasise the dependence of a quantity $X$ on the potential $U$ or on the interaction $V$, we include these parameters as a superscript and write $X^U$ or $X^V$, respectively. For a positive-definite operator $A$, we denote by $\mu_{A}$ the centred Gaussian measure with covariance $A$.

Our general strategy is the two-step approach introduced in \cite{FKSS_24}. As in that paper, we introduce two nonlocal interactions approximating $V$ from \eqref{eq:defV}:
\begin{align}\label{eq:Veps}
	V^\eps &\deq \frac12 \int \, \d x\,\d {\tilde x}\, v^\eps(x-\tilde x) \, \wick{|\phi(x)|^2 |\phi(\tilde x)|^2}\,,\\ 
	\label{eq:Weps} W^\eps &\deq \frac12    \int \d x\,\d{\tilde x} \, \wick{|\phi(x)|^2} \,v^\eps(x-\tilde x) \,\wick{|\phi(\tilde x)|^2} - \int \d x \, \tau^\eps(x) \, \wick{|\phi(x)|^2} - E^\eps\,,
\end{align}
which are rigorously constructed by regularisation, as in  \eqref{eq:defV}. 

The first step consists in estimating the difference between the relative partition function $\mathcal{Z}$ and the classical partition function $\zeta^{W^\varepsilon}$, as well as between the Wick-ordered reduced density matrices $\nu^p \widehat{\Gamma}_p$ and the Wick-ordered correlation functions $\widehat{\gamma}_p^{W^\varepsilon}$.
The second step establishes the convergence of $\zeta^{W^\varepsilon}$ and $\widehat{\gamma}_p^{W^\varepsilon}$ to $\zeta^{V}$ and $\widehat{\gamma}_p^{V}$, respectively, as $\varepsilon \to 0$, via the intermediate potential $V^\varepsilon$.

These two steps are the subject of the two following propositions, which are the counterparts of \cite[Proposition 3.1 and 3.2]{FKSS_24}, respectively.

\begin{prop}\label{prop:convergence quantum many body system and Weps}
    Let $d=2,3$, and let \(\eps \equiv\eps (\nu)\geq 0\) satisfy
    \begin{equation}\label{eq:epsbound}
    \eps(\nu) \geq 
    \begin{cases}
        \exp \Bigl(-\bigl(\log \nu^{-1}\bigr)^{\frac{1-c}{2}}\Bigr) \quad &{\rm if }\, d=2 \\
         (\log \nu^{-1})^{-\frac{1-c}{2}} \quad &{\rm if }\, d=3
    \end{cases}
    \end{equation}
        for some $c \in (0,1)$.
    Then, under the assumptions of Theorem \ref{thm:main_result}, 
    \begin{equation*}
        \lim_{\eps,\nu \to 0  } \abs{\mathcal{Z}-\zeta^{W^\eps}}=  0\,.
    \end{equation*}
     Moreover, for any \(p \in \mathbb{N}\),  we have
\begin{equation}\label{eq:convergence of relative partition function with decay}
\lim_{\eps,\nu \to 0 }  \big\|\nu^p\widehat{\Gamma}_p-\widehat{\gamma}_p^{W^\eps}\big\|_{L^1\cap L^\infty}=0 \,.
\end{equation}

\end{prop}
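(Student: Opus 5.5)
This follows the first step of the two-step strategy of \cite{FKSS_24}: compare the quantum state with the nonlocal classical field theory $W^\eps$ at fixed (but $\nu$-dependent) $\eps$. The tool is the functional integral representation of the quantum Gibbs state. Using the Feynman--Kac formula (Lemma \ref{Feynman-Kac formula}) together with a Hubbard--Stratonovich transformation, we write $\cal Z$ and $\nu^p\wh\Gamma_p$ as expectations over a complex Gaussian field $\phi_\nu$ on $[0,\nu]\times\R^d$. This field has quantum covariance $\nu(\ee^{\nu h}-1)^{-1}$ in the time-averaged sense, and it carries an interaction $W^\eps_\nu$ obtained by time-averaging the Wick-ordered quartic term of \eqref{eq:WickHamiltonian} and its $\tau^\eps$ counterterm. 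The Wick ordering with respect to $\varrho_\nu$, and the choice of $\tau^\eps,E^\eps$ in \eqref{def:taueps_Eeps}, are exactly what make $W^\eps_\nu$ the quantum analogue of $W^\eps$ in \eqref{eq:Weps}. In particular, the counterterm is the function $\tau^\eps(x)$, not a scalar.

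The proof then has three steps.

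\emph{Step 1.} We prove $\ee^{-W^\eps_\nu}\to\ee^{-W^\eps}$ in $L^1$ jointly in the coupling of the Gaussian measures. The quantum covariance $\nu/(\ee^{\nu h}-1)$ converges to $G=h^{-1}$. Quantitative bounds on the difference of the kernels, uniform in $x,\tilde x$ and with decay from the trapping potential, come from the Green function estimates (Proposition \ref{prop:GN}, plus analogous bounds for the quantum kernel via Feynman--Kac). These show that the Wick-ordered terms $\int v^\eps\wick{|\phi|^2}\wick{|\phi|^2}$ and $\int\tau^\eps\wick{|\phi|^2}$ converge in $L^2$, with an error of the form $C_\eps\nu^{c}$.

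\emph{Step 2.} We need uniform integrability of $\ee^{-W^\eps_\nu}$ and $\ee^{-W^\eps}$. Positivity of $v$ gives $W\geq -\int\tau^\eps\wick{|\phi|^2}-E^\eps-(\text{counterterm mismatch})$. In the homogeneous case the quartic positivity absorbs the quadratic counterterm exactly. Here it does not, because $\tau^\eps$ depends on $x$; this is the failure of the cancellation \eqref{eq:eqAmirali}. We would therefore split $\tau^\eps=\tau^{\eps,0}+(\tau^\eps-\tau^{\eps,0})$. The first piece is handled by completing the square, as in \cite{FKSS_24}. The second piece is a bounded function whose size we estimate using the gradient bounds on the Green function. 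We then bound $\E[\ee^{t\int f\wick{|\phi|^2}}]$ by Gaussian computations, which requires $\tr h^{-s}<\infty$ (Remark \ref{rmk:thetha_s2}) and is where the condition on $\theta$ enters. The resulting bounds are of the form $\ee^{C(\log\eps^{-1})^{2}}$ in $d=2$.

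\emph{Step 3.} Combining Steps 1 and 2 via H\"older gives $|\cal Z-\zeta^{W^\eps}|\le \ee^{C(\log \eps^{-1})^2}\nu^{c}$. Condition \eqref{eq:epsbound} is exactly what forces this to $0$. For the correlation functions we use the Wick-ordered field representation of $\nu^p\wh\Gamma_p$ from \cite{FKSS_2020}. Convergence in $L^\infty$ follows from the same argument, applied pointwise with uniform moment bounds. Convergence in $L^1$ uses trace-class bounds with decay of the kernels, obtained from $\tr h^{-s}<\infty$.

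The main obstacle is Step 2, namely uniform-in-$\nu$ exponential integrability in the presence of the spatially varying counterterm $\tau^\eps(x)$. We will try to control $\tau^\eps-\tau^{\eps,0}$, which is only mildly divergent (logarithmically, through local smoothness of $U$, using Assumption \ref{assumption:P and D for U} \assD{}), by a small multiple of the positive quartic term plus a Gaussian exponential moment.
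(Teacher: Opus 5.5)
\textbf{There is a genuine gap at the first step.} The representation you start from does not exist. The interacting quantum Gibbs state cannot be written as a Gaussian probability measure tilted by $\ee^{-W^\eps_\nu}$, because $H^0$ and the quartic term do not commute. The only free ``field on $[0,\nu)\times\R^d$'' that a Trotter or coherent-state expansion produces has kernel $(\partial_\tau+h)^{-1}$ with periodic boundary conditions. This kernel is not Hermitian, let alone positive, so it is not the covariance of any Gaussian measure: the formal path integral carries the oscillatory term $\int\bar\phi\,\partial_\tau\phi$. Only the \emph{free} reduced density matrices $\nu^p\Gamma^0_p$ are moments of a Gaussian field with covariance $\nu/(\ee^{\nu h}-1)$. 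The interacting ones are not moments of any tilt of it. Your description is also internally inconsistent: a Hubbard--Stratonovich transformation removes the quartic interaction, so it cannot output a complex field carrying $W^\eps_\nu$. Consequently Step 1 ($L^1$ convergence of $\ee^{-W^\eps_\nu}$ under a coupling) and Step 2 (uniform integrability of $\ee^{-W^\eps_\nu}$) have nothing to act on.

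What the paper does instead, following \cite{FKSS_2020}, is integrate over the real auxiliary field $\sigma$ with covariance $C_\eta$ and compute the bosonic trace explicitly by Feynman--Kac (Proposition \ref{Prop:regularisation of phi42 quantum rel part funct}):
\[
\mathcal Z=\lim_{\eta\to0}\ee^{E^\eps+\frac12\int\alpha^\eps\alpha^\eps_\eta+S^\eps_{\nu,\eta}}\int\mu_{C_\eta}(\dd\sigma)\,\ee^{-\ii\langle\sigma,\alpha^\eps+T^\eps_{\nu,\eta}\rangle_\nu}\,\ee^{F_2(\sigma)}\,.
\]
The classical formula is the same with $f_2(\xi)$, $\xi=\langle\sigma\rangle$. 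Since $\re F_2\le0$ and $\re f_2\le0$, both integrands have modulus at most one, so no integrability argument is needed at all. Lemma \ref{lemma:diff of relative partition function} then reduces the comparison to three quantities:
\begin{itemize}
\item $\|F_2(\sigma)-f_2(\langle\sigma\rangle)\|_{L^2(\mu_{C_\eta})}$;
\item $|S^\eps_{\nu,\eta}-S^\eps_\eta|$;
\item $\int\mu_{v^\eps_\eta}(\dd\xi)\,|\langle\xi,T^\eps_\eta-T^\eps_{\nu,\eta}\rangle|$.
\end{itemize}
These are controlled by Riemann-sum and Brownian-loop estimates (Lemmas \ref{lemma:Riemann sum estimates} and \ref{lemma:bound on diff F2 and f2}), with prefactor $\ee^{C\chi(\eps)^2}$. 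Your final bookkeeping, and your reading of the role of \eqref{eq:epsbound}, are correct.

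\textbf{Step 2 fails even on its own terms.} Completing the square with the constant $\tau^{\eps,0}$ produces the constant $\frac12(\tau^{\eps,0})^2\int_{\R^d}\dd x=\infty$. This is harmless on the torus but fatal on $\R^d$. The remainder $\tau^\eps-\tau^{\eps,0}$ does not decay either. Since $\tau^\eps(x)\to0$ as $\abs{x}\to\infty$, it tends to $-\tau^{\eps,0}$, so it is neither small nor integrable, and its local smoothness is beside the point.

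The paper's fix is to shift by a compactly supported $\alpha^\eps$ (Definition \ref{assumption:existence of alpha}) and absorb $\alpha^\eps_\eta-\tau^\eps$ into the one-body operator. The growth of $U$ outside the support then guarantees $U+\alpha^\eps_\eta-\tau^\eps\ge U/2$. The price is re-Wick-ordering with respect to the new covariance, which generates $T^\eps_\eta$, $T^\eps_{\nu,\eta}$, $S^\eps_\eta$ and $S^\eps_{\nu,\eta}$.

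\textbf{Correlation functions.} The decay bound of Lemma \ref{lemma:bound on decay of wick ordered correlation functions} also carries the factor $\ee^{C\chi(\eps)^2}$. To get $L^1$ convergence you must therefore split at $R=\nu^{-\alpha}$. Use \eqref{eq:limchisquare} on the far region and the local rate of Proposition \ref{prop:p point correlation functions convergence} on the near region.
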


\begin{prop} \label{prop:convergence classical field theories}
    Let $d=2$. Under the assumptions of Theorem \ref{thm:main_result}, we have 
    \begin{equation}\label{eq:Wepspartitionfunc}\lim_{\eps \to 0} \zeta^{W^\eps} \to \zeta\,.\end{equation}
     Moreover, for any \(p \in \mathbb{N}\),  
\begin{equation}\label{eq:diff gamma-gammaW with decay}
\lim_{\eps\to 0 }  \big\|\widehat{\gamma}_p-\widehat{\gamma}_p^{W^\eps}\big\|_{L^1\cap L^\infty}=0\,.
\end{equation}
\end{prop}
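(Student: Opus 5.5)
The plan is to route the comparison through the intermediate interaction \(V^\eps\) from \eqref{eq:Veps}, and to reduce everything to \(L^p(\P)\) convergence of the interactions together with uniform exponential integrability. Concretely, I would show that
\[
\lim_{\eps\to0}\norm{W^\eps - V^\eps}_{L^2(\P)} = 0, \qquad \lim_{\eps\to0}\norm{V^\eps - V}_{L^2(\P)}=0,
\]
and that \(\sup_{\eps}\E[\ee^{-q W^\eps}] <\infty\) for some \(q>1\). The partition function statement \eqref{eq:Wepspartitionfunc} then follows from \(\abs{\ee^{-a}-\ee^{-b}}\le \abs{a-b}(\ee^{-a}+\ee^{-b})\), Hölder, and the exponential bounds (including \(\E[\ee^{-qV}]<\infty\) from the Nelson-type argument of Section \ref{sec:field}).

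\emph{Step 1: \(V^\eps \to V\).} Expand \(\wick{\abs{\phi(x)}^2\abs{\phi(\tilde x)}^2}\) using Wick's theorem. The \(L^2(\P)\) norm of \(V^\eps-V\) becomes a finite sum of integrals of products of \(G\) against \(v^\eps(x-\tilde x)-\delta(x-\tilde x)\). Since \(G(x,y)\) has only a logarithmic singularity on the diagonal in \(d=2\) and decays off-diagonal and at infinity (Proposition \ref{prop:GN}, with \(\theta>10\) giving integrability in \(x\)), these integrals vanish by dominated convergence.

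\emph{Step 2: \(W^\eps - V^\eps\).} Expanding \(\wick{\abs{\phi(x)}^2}\wick{\abs{\phi(\tilde x)}^2}\) gives
\[
\wick{\abs{\phi(x)}^2\abs{\phi(\tilde x)}^2} + 2\re\pb{G(x,\tilde x)\,\wick{\bar\phi(x)\phi(\tilde x)}}+\abs{G(x,\tilde x)}^2 .
\]
The constant term is cancelled by \(E^\eps\). In the homogeneous case the middle term matches \(\int \tau^\eps \wick{\abs\phi^2}\) exactly, but here it does not. What remains is
\[
\int \d x\,\d\tilde x\, v^\eps(x-\tilde x)\, G(x,\tilde x)\pb{\wick{\bar\phi(x)\phi(\tilde x)} - \wick{\abs{\phi(x)}^2}}
\]
(symmetrised). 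Its \(L^2(\P)\) norm involves \(G(x,\tilde x)\) times differences \(G(x,y)-G(\tilde x,y)\) with \(\abs{x-\tilde x}\le C\eps\). Using the gradient bounds on \(G\) (where Assumption \ref{assumption:P and D for U} and the regularity of \(U\) enter), this difference is \(O(\eps^{\alpha})\) up to logarithms, away from a small diagonal region whose contribution is also small. Hence this term tends to \(0\).

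\emph{Step 3: uniform integrability, the main obstacle.} The bound \(\sup_\eps \E[\ee^{-qW^\eps}]<\infty\) is needed because \(W^\eps\) is not positive, and the counterterm function \(\tau^\eps\) diverges like \(\log\eps^{-1}\). I would run Nelson's argument: split \(\phi=\phi_N+(\phi-\phi_N)\). For the regularised part, use \(\frac12\int v^\eps(\wick{\abs{\phi_N}^2}*)\wick{\abs{\phi_N}^2}\ge0\) (positive type of \(v\)) to get a lower bound of order \(-C(\log N)^2\) on the \(\phi_N\)-part, including the \(\tau^\eps\)-correction via Step 2's estimate at the regularised level. Then control the tail probability \(\P(W^\eps-W^\eps_N<-(\log N)^2)\) by hypercontractivity, using \(\norm{W^\eps-W^\eps_N}_{L^2}\lesssim N^{-c}\) uniformly in \(\eps\). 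The polynomial weights from the trap \(U\) are handled through \(\tr h^{-s}<\infty\) and the decay of \(G_N\); this is where \(\theta>10\) is used. I expect this uniform-in-\(\eps\) control of the inhomogeneous \(\tau^\eps\) term to be the hardest part, and I would appeal to the general result of Section \ref{sec:field} for nonlocal theories if available.

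\emph{Step 4: correlation functions.} For \eqref{eq:diff gamma-gammaW with decay}, write \(\wh\gamma_p^{W^\eps}-\wh\gamma_p\) as \(\E[\wick{\cdots}(\ee^{-W^\eps}/\zeta^{W^\eps}-\ee^{-V}/\zeta)]\). Pointwise \(L^\infty\) bounds cannot be taken directly, since the fields are distributions. I would instead follow \cite{FKSS_24,FKSS_2020}: expand in terms of the free-field correlations and use Hölder with the exponential bounds and \(L^2\) convergence from Steps 1–3 against smeared Wick monomials. Convergence in \(L^1\cap L^\infty\) is then upgraded using the explicit Green function decay from Proposition \ref{prop:GN}, via a duality/Wick expansion argument as in those references.
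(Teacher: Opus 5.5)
Your Steps 1 and 2 are essentially Lemmas \ref{lm:Veps-V} and \ref{lm:Veps-Weps}, but Step 3 does not work as described. Nelson's argument needs an almost sure lower bound on the regularised interaction that is uniform in $\eps$, and $W^\eps_N$ (carrying the counterterms $\tau^\eps$, $E^\eps$ of $W^\eps$) has none. The term $-\int\tau^\eps\,\wick{\abs{\phi_N}^2}$ contains $-\int\tau^\eps\abs{\phi_N}^2$ with $\tau^\eps\asymp\log\eps^{-1}$. Scaling $\phi_N=\lambda u$ for a fixed nice $u$ and optimising in $\lambda$ shows that, already at fixed $N$, $\inf W^\eps_N\to-\infty$ as $\eps\to0$, at least like $-c(\log\eps^{-1})^2$. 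Positivity of $v$ cannot absorb this term with an $\eps$-independent constant. Completing the square would require $\tau^\eps=v^\eps*\alpha$, which is exactly what fails in the inhomogeneous case (Appendix \ref{appendix:counter example}), and even then it would cost $\frac12\langle\alpha,v^\eps\alpha\rangle$, which diverges as $\eps\to0$. Nor can Step 2 help here: it is an $L^2(\mathbb P)$ estimate, and such an estimate cannot yield an almost sure lower bound. Note also that with the trap the achievable lower bound (for $V^\eps_N$) is $-CN^{\gamma}\log N$ with $\gamma=s-1$, not $-(\log N)^2$, because $\tr G_N$ diverges polynomially in $N$. Balancing this against the rate $N^{-1/2+1/\theta+}$ of Lemma \ref{lm:VMVN} is what forces $\theta>10$.

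The missing idea is to run Nelson's argument on $V^\eps$ and then transfer the bound to $W^\eps$. In $V^\eps_N$ every quadratic counterterm comes from Wick ordering and has coefficients built from $G_N$, bounded by $C\log N$ independently of $\eps$. Completing the square with a large constant and using that $v$ is of positive type therefore gives $V^\eps_N\ge-CN^{\gamma}\log N$ (Lemma \ref{lm:boundVN}), hence Proposition \ref{prop:nelson} uniformly in $\eps$. Since $W^\eps-V^\eps$ lies in the second Wick chaos, hypercontractivity gives $\mathbb E\bigl[\ee^{q\abs{W^\eps-V^\eps}}\bigr]\le\sum_{k\ge0}\bigl(Cq\norm{W^\eps-V^\eps}_{L^2}\bigr)^k$. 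This is bounded once your Step 2 makes $\norm{W^\eps-V^\eps}_{L^2}$ small, and H\"older then yields $\sup_{\eps\le\eps_0}\mathbb E[\ee^{-qW^\eps}]<\infty$. This is the route of \cite[Lemma 4.6]{FKSS_24} invoked in the paper, and your Steps 1--2 are exactly its inputs. Your remark that the middle term matches $\int\tau^\eps\wick{\abs{\phi}^2}$ exactly in the homogeneous case is also inaccurate: $W^\eps\neq V^\eps$ on the torus as well, and homogeneity only matters in the Hubbard--Stratonovich step.

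Finally, Step 4 omits the ingredient that $L^1$ convergence on unbounded space requires, namely an $\eps$-uniform integrable envelope. The paper proves $\abs{(\wh\gamma^{W^\eps}_p)_{\mathbf x,\tilde{\mathbf x}}}\lesssim\Upsilon_\theta(\mathbf x,\tilde{\mathbf x})$ (Lemma \ref{lm:boundWickgammax}) from the Gaussian integration-by-parts representation of Proposition \ref{prop:DifferentialRepgammap} and the decay of $G$. It then combines this with locally uniform convergence (Proposition \ref{prop:convgammaVgammaWeps}) by splitting at $\abs{\mathbf x}+\abs{\tilde{\mathbf x}}=R$.
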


Going to an unbounded domain and giving up translation invariance results in major differences in the proofs as compared to \cite{FKSS_24}, some resulting from fundamentally new difficulties which simply do not appear in the homogeneous setting.  In the rest of this section, we sketch some of these differences.

The first step, corresponding to Proposition \ref{prop:convergence quantum many body system and Weps}, relies on a functional integral representation for the quantum and classical problems, derived in \cite{FKSS_2020} and also used in \cite{FKSS_24}. To explain this representation, we focus on the partition function $\zeta^{W^\eps}$ which can be formally written as
\begin{equation*}
\zeta^{W^\eps} = \frac{1}{\int \r D \phi \, \ee^{- \scalar{\phi}{h \phi}}} \int \r D \phi \, \ee^{- \scalar{\phi}{h \phi}  - \frac{1}{2} \int \dd x \, \dd \tilde x \, :\abs{\phi(x)}^2: \, v^\epsilon(x - \tilde x) \, :\abs{\phi(\tilde x)}^2: + \int \d x \, \tau^\eps(x) \, :|\phi(x)|^2: + E^\epsilon}\,.
\end{equation*}
We rewrite the interaction term $- \frac{1}{2} \int \dd x \, \dd \tilde x \, \wick{\abs{\phi(x)}^2} \, v^\epsilon(x - \tilde x) \, \wick{\abs{\phi(\tilde x)}^2}$ using a Hubbard-Stratonovich transformation
\begin{equation} \label{HS}
\int \mu_{v^\epsilon}(\d{\xi)}\,\ee^{\ii \langle f, \xi\rangle} = \ee^{-\frac{1}{2} \langle f, v^\epsilon f\rangle}\,,
\end{equation}
where $\mu_{v^\epsilon}$ is a real centred Gaussian measure with covariance $\int \mu_{v^\epsilon}(\dd \xi) \, \xi(x) \, \xi(y) = v^\epsilon(x - y)$. Choosing $f(x) = \wick{\abs{\phi(x)}^2}$ and interchanging the order of integration, we obtain
\begin{equation*}
\zeta^{W^\eps} = \frac{1}{\int \r D \phi \, \ee^{- \scalar{\phi}{h \phi}}} \int \mu_{v^\epsilon}(\dd \xi) \int \r D \phi \, \ee^{- \scalar{\phi}{h \phi} + \ii \int \dd x \, \xi(x)\, :\abs{\phi(x)}^2: + \int \d x \, \tau^\eps(x) \, :|\phi(x)|^2: + E^\epsilon}\,.
\end{equation*}
The advantage of this representation is that the integration over $\phi$ has become Gaussian, at the expense of introducing the auxiliary Gaussian field $\xi$. However, the integral over $\phi$ on the right-hand side is badly divergent owing to the rapid quadratic growth of the positive term $\int \d x \, \tau^\eps(x) \, :|\phi(x)|^2:$, which overwhelms the negative term $- \scalar{\phi}{h \phi}$ for small $\epsilon$ (it is easy to see that $\lim_{\epsilon \to 0} \tau^\epsilon(x) = + \infty$ for all $x \in \R^d$). Thus, the interchange of the order of integration over $\xi$ and $\phi$ is not allowed owing to strong oscillations in the integral over $\phi$.

To remedy this issue, we perform the Hubbard-Stratonovich transformation \eqref{HS} with the shifted phase $f(x) = \wick{\abs{\phi(x)}^2} - \alpha(x)$ for some suitably chosen real function $\alpha$, and obtain
\begin{align*}
\zeta^{W^\eps} &= \frac{\ee^{E^\epsilon + \frac{1}{2} \int \dd x \, \dd \tilde x \, \alpha(x) v^\epsilon(x - \tilde x) \alpha(\tilde x)}}{\int \r D \phi \, \ee^{- \scalar{\phi}{h \phi}}} \int \mu_{v^\epsilon}(\dd \xi)
\\
&\qquad \times
\int \r D \phi \,
\exp\pbb{- \scalar{\phi}{h \phi} + \ii \int \dd x \, \xi(x) \, (\wick{\abs{\phi(x)}^2} - \alpha(x)) - \int \dd x \, q^\epsilon(x) \, \wick{\abs{\phi(x)}^2}}
\\
&=\ee^{E^\epsilon + \frac{1}{2} \int \dd x \, \dd \tilde x \, \alpha(x) v^\epsilon(x - \tilde x) \alpha(\tilde x)}
\int \mu_{v^\epsilon}(\dd \xi)
\, \E \qB{\ee^{- \scalar{\phi}{h \phi} + \ii \int \dd x \, \xi(x) \, (:\abs{\phi(x)}^2: - \alpha(x)) - \int \dd x \, q^\epsilon(x) \, :\abs{\phi(x)}^2:}}
\end{align*}
where we abbreviated
\begin{equation*}
q^\epsilon(x) \deq \int \dd \tilde x \, v^\epsilon (x - \tilde x) \alpha(\tilde x) - \tau^\epsilon(x)\,.
\end{equation*}
The term involving $q^\epsilon$ is potentially dangerous, since, as explained above, it can become very large and hence overwhelm the quadratic decay provided by the term $- \scalar{\phi}{h \phi}$. Its vanishing is equivalent to $v^\eps * \alpha^\eps = \tau^\eps$, i.e.
\begin{equation} \label{eq:eqAmirali}
\int \dd \tilde x \, v^\epsilon(x - \tilde x) \, \alpha(\tilde x) = \int \dd \tilde x \, v^\epsilon(x - \tilde x) \, G(x,\tilde x)\,.
\end{equation}
In the \emph{homogeneous case}, where $G(x,\tilde x) = G(x-\tilde x)$, solving \eqref{eq:eqAmirali} is trivial since $\alpha$ can simply be chosen as a constant.

On the other hand, whenever $G$ is not translation invariant, \eqref{eq:eqAmirali} generically has no solution for any $\epsilon > 0$; the basic heuristic is that its right-hand side is much rougher than its left-hand side (which is smooth by smoothness of $v^\epsilon$). As an illustration of this phenomenon, in Appendix \ref{appendix:counter example} we show the non-existence of solutions to \eqref{eq:eqAmirali} for any discontinuous locally bounded potentials \(U\) and any positive interaction \(v\) satisfying Assumption \ref{assumptions: non local interaction}.

Our basic strategy in the inhomogeneous setting, where $q^\epsilon$ cannot be chosen to vanish, is to choose the function $\alpha$ in a suitable way so that the integral over $\phi$ remains well-defined, and to absorb the function $q^\epsilon$ into the external potential $U$ in the Hamiltonian $h$. This latter step results in a modified, $\epsilon$-dependent, external potential, which results in a modified $\epsilon$-dependent covariance for the classical field theory. As a consequence, we have to re-Wick-order all diverging quantities after the change of covariance.

Let us explain a bit more carefully how this strategy is implemented. In fact, owing to the growth of the potential $U$ inside $h$ in its argument $x$, it is sufficient to construct $\alpha$ such that 
\[\int \dd \tilde x \, v^\epsilon (x - \tilde x) \, \alpha^\epsilon(\tilde x) - \tau^\epsilon(x) \geq 0\]
holds only in a ball of sufficiently large radius (depending on $\epsilon$). Since $\tau^\epsilon$ decays in $x$ for any fixed $\epsilon$ by the decay of $G$, this can be done by choosing $\alpha \equiv \alpha^\epsilon$ supported in a large enough ball.

Denoting $U^\epsilon \deq U + q^\epsilon$, we therefore obtain, for every $\epsilon > 0$, a modified one-particle Hamiltonian
\[h^{U^\eps}\deq \kappa- \frac{\Delta}{2}+U^\eps \] satisfying the conditions in Remark \ref{rmk:thetha_s2}. This results in an $\epsilon$-dependent tilting of the Gaussian measure $\P$, according to the elementary formula
\begin{equation*}
           \ee^{-\langle z,B z\rangle}\mu_{A^{-1}}(\d z) =\mu_{(A+B)^{-1}}(\d z)\int \ee^{-\langle y,B y\rangle}\mu_{A^{-1}}(\d y) 
\end{equation*}
for a finite-dimensional Gaussian measure \(\mu_{A^{-1}}\) with covariance \(A\) (change of covariance and correction by multiplicative factor). This tilting is rigorously performed by truncation of $\P$ to restrict it to finite-dimensional measures.

Having tilted the free Gaussian measure $\P$ with covariance $h^{-1}$, we have to adapt the Wick-ordering of the mass \({\abs{\phi(x)}}^2\) with respect to the new measure with covariance \((h^{U^\eps})^{-1}\). This yields 
\begin{multline}\label{eq:formal rewritting of rel part funct classical}
    \zeta^{W^\eps}=\cal C \ee^{E^\epsilon + \frac{1}{2} \int \dd x \, \dd \tilde x \, \alpha^\epsilon(x) v^\epsilon(x - \tilde x) \alpha^\epsilon(\tilde x)} 
    \\ 
   \times
    \E^{U^\epsilon}  \qB{\ee^{ - \frac12  \int \d x\,\d{\tilde x} \, \bigl(|\phi(x)|^2-G^{U^\eps}(x,x)- \alpha^\eps(x)- {T}^\eps(x)\bigr) v^\eps(x-\tilde x) \bigl(|\phi(\tilde x)|^2-G^{U^\eps}(\tilde x,\tilde x)- \alpha^\eps(\tilde x)- {T}^\eps(\tilde x)\bigr)}}
\end{multline}
with
\begin{equation}\label{eq:changeOfMeasureConstant}
    \cal C \deq  \E \qB{\ee^{-\int \d{x}\,\bigl(\alpha^\eps(x)- \tau^\eps(x)\bigr){\bigl(\abs{\phi(x)}^2}-G^U(x,x)\bigr)}}
\end{equation}
and 
\begin{equation}\label{eq:diffGreenFunctions}
    T^\eps(x)=G^U(x,x)-G^{U^\eps}(x,x)\,.
\end{equation}

As in  \cite{FKSS_2020}, the derivation above is achieved rigorously by introducing a finite-dimensional regularisation of the field \(\phi\), and adding a spatial cutoff to the interaction.  To that end, we choose an even, compactly supported mollifier \(\varphi\) that is positive definite and satisfies \(\varphi(0)=1\). Moreover, we choose the regularisation function $\vartheta$ in \eqref{eq:vartheta} as
 \(\vartheta(x)=\mathbbold{1}_{\abs{x}\leq 1}\).  
 The regularised, finite dimensional version of the interaction \(W^\eps\) is defined as 
\begin{equation}\label{eq:def:regularised version of the field Weps}
     W^\eps_{\eta,N} \deq \int \d x \,\d {\tilde x}\, \wick{\abs{\phi_N({x})}^2} \,\varphi(\eta x)v^\eps (x-\tilde x)\varphi(\eta\tilde{x})\, \wick{\abs{\phi_N(\tilde{x})}^2} -\int \d{x}\, \tau^\eps(x) \wick{\abs{\phi_N({x})}^2}- E^{\eps}
\end{equation}
for any \(N>0\) and \(\eta \geq 0\).

These steps require a careful treatment of the convergence of the finite dimensional counterparts of \eqref{eq:changeOfMeasureConstant} and \eqref{eq:diffGreenFunctions} to the infinite dimensional case. More specifically, we perform the Hubbard-Stratonovich transformation described above in the regularised, finite-dimensional version of \(W^\eps\), and thus obtain a functional integral representation where the external potential
\begin{equation}
    \label{def:Uepseta}U^\eps_\eta \deq U - \alpha^\eps_\eta + \tau^\eps 
\end{equation}
depends on \(\eps, \eta>0\), and we set
\begin{equation} \label{def:rhoepseta}
        \alpha^\eps_\eta (x)\deq\int \d {\tilde{x}}\, \alpha^{\eps}(\tilde{x})\, \varphi(\eta x) v^\eps(x-\tilde x) \varphi(\eta \tilde{x})\,. 
    \end{equation}
This concludes the overview of the functional integral representation for $\zeta^{W^\eps}$ in the inhomogeneous setting.
    
In order to establish the connection to the quantum many-body system and its associated functional integral representations, we have to derive and control a representation analogous to the one described above in the quantum setting. This requires a suitable adaptation of the space-time functional integral from \cite{FKSS_2020}, in a form that we can relate to \eqref{eq:formal rewritting of rel part funct classical}. We show that the quantum terms arising from this algebraic transformation converge to their classical counterparts in \eqref{eq:formal rewritting of rel part funct classical}. For this purpose, we need quantitative estimates on the difference of these quantities as \(\nu, \eps\to0\). These estimates are generalisations of the results of \cite{FKSS_24} and provide stronger bounds compared to previous works.

The second step, in analogy to \cite{FKSS_24}, consists in showing integrability of $\ee^{-V^\eps}$, uniformly in $\eps$, to deduce integrability of $\ee^{-W^\eps}$ using hypercontractivity. Moreover, we use Gaussian integration by parts, to
derive a representation of the correlation functions in terms of expectations of derivatives of the
interaction potential. 
The main challenge in this step consists in strong bounds on the decay and regularity of the (truncated) Green function of the operator $h = \kappa - \Delta/2 +U$.

\subsubsection*{Outline of the paper}
We conclude this section with an outline of the paper. In Section \ref{sec:field} we show Proposition \ref{prop:convergence classical field theories}, while in Section \ref{sec:funcintrep} we develope the functional integral representation described above. In Section \ref{sec:quantumtoclassic} we prove Proposition \ref{prop:convergence quantum many body system and Weps}. In Section \ref{app:Greenfunction} we show properties of the Green function which are crucial to show Proposition \ref{prop:convergence classical field theories}. Finally, in Section \ref{Section:Counterterm problem} analyse the counterterm problem and prove Theorem \ref{thm:Solution to the counterterm problem}.

\section{Proof of Proposition \ref{prop:convergence classical field theories}} \label{sec:field}

This section is dedicated to the proof of Proposition \ref{prop:convergence classical field theories}; it is the analogue of \cite[Section 4]{FKSS_24} in the inhomogeneous setting. We derive first $L^2$-estimates for the differences $V^\varepsilon -V$ and $V^\varepsilon -W^\varepsilon$. Then we show integrability of $\ee^{-V^\varepsilon}$, which also implies the integrability of $\ee^{-W^\varepsilon}$.
Thus we show the convergence of the partition functions \eqref{eq:Wepspartitionfunc} and of the correlation functions \eqref{eq:diff gamma-gammaW with decay}.

Throughout this section we assume that $d = 2$. We abbreviate $L^2 \equiv L^2(\P)$. Moreover, \(g\) is a radially increasing lower bound (up to a multiplicative constant) of the potential \(U\). In the case where \(U\) satisfies Assumption \ref{assumption:P and D for U} \assP{}, \(g(x)= 1+\abs{x}^\theta\). For fixed \(0<\gamma<1 \), we also define 
\begin{equation}\label{eq:gtilde}
\tilde g(x)\deq g(\gamma x)\,.
\end{equation}

\subsection{Partition function}

We start by deriving $L^2$-estimates for the differences  $V^\varepsilon -W^\varepsilon$ and $V^\varepsilon -V$.

\begin{lemma}
	\label{lm:Veps-Weps}
	Consider $V^\eps$ and $W^\eps$ as in  \eqref{eq:Veps} and \eqref{eq:Weps}. Let $\theta>2$. Then 
	\[ \|V^\eps -W^\eps\|_{L^2} \to 0 \quad \quad \text{as } \eps \to 0\,. \]
	
\end{lemma}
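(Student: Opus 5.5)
Compute $V^\eps - W^\eps$ explicitly via Wick's theorem. The starting point is the product formula for Wick powers with respect to $\P$:
\[
\wick{|\phi(x)|^2}\,\wick{|\phi(\tilde x)|^2} = \wick{|\phi(x)|^2|\phi(\tilde x)|^2} + G(x,\tilde x)\,\wick{\bar\phi(x)\phi(\tilde x)} + G(\tilde x,x)\,\wick{\bar\phi(\tilde x)\phi(x)} + |G(x,\tilde x)|^2 .
\]
This identity holds rigorously at the level of the regularised fields $\phi_N$ with $G_N$, and then passes to the limit $N\to\infty$ in $L^2(\P)$. Integrating against $\frac12 v^\eps(x-\tilde x)$ and subtracting the counterterms in \eqref{eq:Weps}, the constant term cancels exactly against $E^\eps$, since $G$ is real and symmetric. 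We obtain
\[
W^\eps - V^\eps = \int \d x\,\d\tilde x\, v^\eps(x-\tilde x)G(x,\tilde x)\,\bigl(\wick{\bar\phi(x)\phi(\tilde x)} - \wick{|\phi(x)|^2}\bigr)\,,
\]
using $\tau^\eps(x) = \int \d\tilde x\, v^\eps(x-\tilde x)G(x,\tilde x)$ and the symmetry $x\leftrightarrow\tilde x$. The difference is therefore a second Wick chaos element, and its $L^2$ norm is computed exactly by Wick's theorem. With $F^\eps(x,\tilde x) \deq v^\eps(x-\tilde x)G(x,\tilde x)$, we get
\[
\|W^\eps - V^\eps\|_{L^2}^2 = \int \d x\,\d\tilde x\,\d y\,\d\tilde y\, F^\eps(x,\tilde x)F^\eps(y,\tilde y)\bigl[G(y,x) - G(y,\tilde x)\bigr]\bigl[G(\tilde y,\tilde x)-G(y,\tilde x)\bigr]\ (\text{symmetrised})\,.
\]
More precisely, expanding $\E[\wick{\bar\phi(x)\phi(\tilde x)}\,\wick{\phi(y)\bar\phi(\tilde y)}] = G(\tilde x,y)G(\tilde y,x)$ gives four terms, which combine into products of differences of the form $G(\tilde x,\cdot) - G(x,\cdot)$ with $|x-\tilde x|\lesssim\eps$.

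The key estimate is that these differences are small when the two arguments are close. I would use the decay and gradient bounds on the Green function from Section~\ref{app:Greenfunction} (Proposition~\ref{prop:GN} and the gradient estimate). These give a logarithmic singularity $|G(x,y)| \lesssim (1+|\log|x-y||)\, \ee^{-c\,\tilde g\text{-dependent}\,|x-y|}$ and $|\nabla G(x,y)| \lesssim |x-y|^{-1}$ times decay. Interpolating between the two yields $|G(x,z)-G(\tilde x,z)| \lesssim |x-\tilde x|^{\delta}\bigl(|x-z|^{-\delta}+|\tilde x-z|^{-\delta}\bigr)\times$ decay. Meanwhile $|F^\eps(x,\tilde x)| \lesssim v^\eps(x-\tilde x)(1+|\log\eps|)$, up to decay in $x$ supplied by the confining potential. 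The confinement makes all integrals over space converge, which is where $\theta>2$ enters. Combining, I expect the bound
\[
\|W^\eps-V^\eps\|_{L^2}^2 \lesssim \eps^{2\delta}|\log\eps|^2 \to 0\,.
\]
The remaining integrals of $|x-z|^{-\delta}$ against logarithmic kernels are locally integrable in $d=2$.

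The main obstacle is the regularity bound on $G$ needed to control $G(x,z)-G(\tilde x,z)$ uniformly in the inhomogeneous setting, together with integrable decay in $x$. Under \assP{}, $U$ is only $L^\infty_{\mathrm{loc}}$, so the gradient bound must come from elliptic regularity. The approach I would try first is to write $G = G^0 - G^0 U G$ with $G^0 = (\kappa-\Delta/2)^{-1}$ and use the explicit Bessel kernel for $G^0$. Weighted Combes–Thomas estimates then give decay of $G$ in $x$, and a Hölder bound of order $\delta<1$ suffices for this argument.
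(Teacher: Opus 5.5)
Your identity for $W^\eps-V^\eps$ and the second-chaos computation of its $L^2$ norm coincide with the paper's, and your argument is correct. The difference is in how you organise the four-fold integral.

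\textbf{The paper's route.} It introduces a radius $R=R(\eps)$. On $\{|x|,|y|\le R\}$ it applies the mean value theorem with \eqref{eq:gradG} and the logarithmic bound \eqref{eq:boundlog}. It uses no decay there and pays a volume factor, giving $R^2\eps^{2\alpha'}$. On the complement it does not use differences at all. It bounds the two Wick terms separately via the decay in \eqref{eq:boundG}, getting $R^{-c}(\log\eps^{-1})^2$. Finally it optimises over $R$.

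\textbf{Your route.} You keep the position-dependent decay inside the H\"older bound for the differences and integrate globally. Integrating out $x$ first gives $\int\d x\,|x-y|^{-2\delta}\ee^{-c|x-y|\sqrt{\tilde g(y)}}\lesssim\tilde g(y)^{\delta-1}$. This is integrable in $y$ as soon as $\theta(1-\delta)>2$. That makes transparent where $\theta>2$ enters, and yields a rate $\eps^{2\delta}|\log\eps|^{C}$ for every $\delta<1-2/\theta$ with no cutoff.

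\textbf{What you pay for it.} You need decay of the difference quotients, not just a local gradient bound. That decay must come from the difference factors, not from $F^\eps$. The only uniform information on $F^\eps$ is $\int\d{\tilde x}\,v^\eps(x-\tilde x)G(x,\tilde x)=\tau^\eps(x)\lesssim|\log\eps|$, with no decay in $x$ that is uniform in $\eps$. Proposition~\ref{lm:gradG} supplies the needed decay. If you instead derive the gradient bound yourself from $G=G^0-G^0UG$, you only get $|x-y|^{-1}$ with $\sqrt\kappa$-decay, because the fast decay comes from cancellation between the two terms. In that case write $|D|\le|D|^{1-\delta}|D|^{\delta}$ and use \eqref{eq:boundG} for the first factor, which restores the decay.

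\textbf{Two minor slips.}
\begin{enumerate}
\item For a complex Gaussian only $\phi$--$\bar\phi$ pairings survive, so $\E[\wick{\bar\phi(x)\phi(\tilde x)}\,\wick{\phi(y)\bar\phi(\tilde y)}]=G(y,x)G(\tilde x,\tilde y)$, not what you wrote. Your final symmetrised formula is still right, by the $x\leftrightarrow\tilde x$ symmetry of $F^\eps$.
\item Neither $|F^\eps|\lesssim v^\eps|\log\eps|$ nor your H\"older bound holds pointwise, because $G$ is log-singular on the diagonal. Use the integrated bound for $F^\eps$. In the region $|x-z|\lesssim|x-\tilde x|$, carry a logarithm or lower the exponent slightly, as the paper does with $\alpha'<\alpha$. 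This costs only powers of $|\log\eps|$.
\end{enumerate}
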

\begin{proof} The proof is analogous to the one of \cite{FKSS_24}. The main novelty involves controlling the large-$x$ regime, which is done using the Green function estimates from Section \ref{app:Greenfunction}.
	
	A straightforward computation shows that 
	\[\begin{split}
		W^\eps -V^\eps & =\int \, \d{x}\,\d{\tilde x}\, v^\eps(x-\tilde x) G(x,\tilde x) \bigl( \wick{\overline{\phi}(x) \phi(\tilde x)} - \wick{\abs{\phi(x)}^2}\bigr)\,\\
		& =\, (W^\eps - V^\eps)_R + (W^\eps - V^\eps)_{R^c}\,,
	\end{split}
	\]
	where we split the difference $W^\eps -V^\eps$ in two regions: inside and outside of a ball of radius $R =R(\eps)$. We analyse the two terms separately, using different arguments. To estimate the term inside the ball, we can employ a-priori bounds on the gradient of Green's function, whereas outside the ball, we rely on its exponential decay.
    Let us consider first the region $|x| \leq R$. Then using Wick ordering formula and since $G(x;y) =  \int \d{\mu} \, \overline \phi (y) \phi(x)$ we have
	\[\begin{split}
		\|(V^\eps -W^\eps)_R\|^2_{L^2} & =\int_{|x|,|y|\leq R} \d{x}\,\d{\tilde x}\, \d{y}\,\d{\tilde y}\, v^\eps(x-\tilde x) v^\eps(y-\tilde y) G(x,\tilde x) G(y, \tilde y) \\
		&\hspace{3cm}\times\bigl(G(\tilde y, x) - G(y,x)\bigr) \bigl(G(\tilde x, y)- G(x,y)\bigr)\,.\\
	\end{split}\]
	Hence, we use the mean value theorem to rewrite the difference $G(\tilde y,x) - G(y,x)$ and $G(\tilde x, y)- G(x,y)$ as 
	\begin{equation}\label{eq:diffG}
	\big|G(x,y) - G(x,z)\big| 
    \lesssim \int_0^1 \d{\lambda}\, |y-z| |\nabla_y G(x, y + (z-y)\lambda)|\,.
	\end{equation}
	Together with the bound \eqref{eq:gradG}, in Appendix \ref{app:Greenfunction}, we have
	\begin{equation}\label{eq:VepsWeps}\begin{split}
		&\|(V^\eps -W^\eps)_R\|^2_{L^2}\\
         &\lesssim \int_{|x|,|y|\leq R}\, \d{x}\,\d{\tilde x}\, \d{y}\,\d{\tilde y}\, v^\eps(x-\tilde x) v^\eps(y-\tilde y) G(x,\tilde x)G(y, \tilde y)\\ 
		&\hspace{4cm}\times\int_0^1 \d \lambda\,   \frac{|\tilde y-y|}{|\tilde y + (y - \tilde y)\lambda -x|}\int_0^1 \d s\, \frac{|\tilde x -x|}{|\tilde x +(x-\tilde x)s-y|} \\
		&\lesssim \int_{|x|,|y|\leq R}\, \d{x}\,\d{y}\, \d{h}\,\d{k}\, v(h) v(k) G(x,\eps h +x)G(y, \eps k+y) \\ 
		&\hspace{5cm}\times  \Big|\log\frac{|x-y +\eps k|}{|x-y|}\Big| \Big|\log\frac{|y-x +\eps h|}{|y-x|}\Big|
	\end{split}\end{equation}
	where in the last line, after integrating,  we used the change of variable $h = (x-\tilde x)/\eps$ and $k = (y-\tilde y)/\eps$. To bound the term in the last line on the right-hand side in   \eqref{eq:VepsWeps} we use on one side $G(x,y)|x-y| \leq C_\alpha |x-y|^\alpha \lesssim_\alpha |x-y|^\alpha $ for $\alpha \in (0,1)$, which is a consequence of Proposition \ref{prop:GN} and on the other the bound (as in \cite{FKSS_24}) 
	\begin{equation}\label{eq:boundlog}
	\Big|\log\frac{|x+y|}{|x|}\Big| \lesssim \begin{cases}
		|\log|x+y|| + | \log|x|| \quad &\text{if } |x|\leq 2|y|\\
		\frac{|y|}{|x|}\quad\quad &\text{if } |x|>2|y\,|.
	\end{cases}
	\end{equation}
    Thus for $0<\alpha' <\alpha <1$ we have 
	\begin{equation}
		\begin{split}
			&\int_{|x|,|y|\leq R}\! \d{x}\,\d{y}\, \d{h}\,\d{k}\, v(h) v(k) G(x,\eps h +x)G(y, \eps k+y)\Big|\log\frac{|x-y +\eps k|}{|x-y|}\Big| \Big|\log\frac{|y-x +\eps h|}{|y-x|}\Big|\\ 
            & \lesssim \int_{|x|,|y|\leq R}\! \d{x}\,\d{y}\, \d{h}\,\d{k}\, v(h) v(k) G(x,\eps h +x)G(y, \eps k+y)\eps^\alpha |h|^\alpha \eps^\alpha |k|^\alpha \mathbbm{1}_{|x-y| \leq 2 \eps \min(|k|,|h|)}  \\
            & \quad + \int_{|x|,|y|\leq R}\! \d{x}\,\d{y}\, \d{h}\,\d{k}\, v(h) v(k) G(x,\eps h +x)G(y, \eps k+y)2 \frac{\eps^\alpha |h|^\alpha \eps|k|}{|x-y|}\mathbbm{1}_{2\eps|h|\geq|x-y|>2\eps |k|} \\
            & \quad + \int_{|x|,|y|\leq R}\! \d{x}\,\d{y}\, \d{h}\,\d{k}\, v(h) v(k) G(x,\eps h +x)G(y, \eps k+y)\frac{\eps|k|\eps |h|}{|y-x|^2}\mathbbm{1}_{|x-y|>2\eps \max(|h|,|k|)}\\
            &\lesssim R^2\eps^{2\alpha'}\,,
		\end{split}
	\end{equation}
     where in the last step we used 
    \[\begin{split}
    &\int_{|x|,|y|\leq R}\! \d{x}\,\d{y}\, G(x,\eps h +x)G(y, \eps k+y)\frac{\eps|k|\eps |h|}{|y-x|^2}\mathbbm{1}_{|x-y|>2\eps \max(|h|,|k|)}\\  &\lesssim\int_{|x|,|y|\leq R}\! \d{x}\,\d{y}\, \frac{(\eps|k|\eps |h|)^{\tilde\alpha}}{|y-x|^{2\alpha}}\mathbbm{1}_{|x-y|>2\eps \max(|h|,|k|)}\lesssim (\eps|k|\eps |h|)^{\alpha'} R^2\,.
    \end{split}
    \]
	Next, we consider the region outside the ball, i.e.\ $|x|>R$. Moreover, we consider the two terms appearing in $(W^\eps -V^\eps)_{R^c}$ proportional to $\wick{\overline{\phi}(x) \phi(\tilde x)}$ and  $\wick{|\phi(x)|^2}$ separately. 
	
	On one side,  performing a similar change of variables as above and using the bound \eqref{eq:boundG} we have
	\begin{equation}\label{eq:WepsVepsRc}
	\begin{split}
		& \Big\|\int_{|x|>R}  \d{x}\,\d{\tilde x}\, v^\eps(x-\tilde x) G(x,\tilde x)  \wick{|\phi(x)|^2} \Big\|_{L^2}\\
		&\quad \lesssim \int_{|x|,|y|>R}\, \d{x} \,\d{y}\,\d{h}\,\d{k}\, v(h) v(k) G(x, x+\eps h) G(y, y+\eps k)  G(y,x)^2\\
		&\quad \lesssim \int_{\substack{|x|,|y|>R}}\d{x} \,\d{y}\,\d{h}\,\d{k}\, v(h) v(k) (\log(\eps|h|)^{-1}\vee 1)(\log(\eps|k|)^{-1} \vee 1)  \\&\hspace{6cm}\times\bigl((\log|x-y|^{-1})^2 \vee 1\bigr) \ee^{-c(|x-y|)\sqrt{\tilde g(x) +\tilde g(y)}}\\  
		&\quad \lesssim \int_{\substack{|x|,|y|>R\\
				|x-y|\leq 1}}\, \d{x}\, \d{y}\, \Bigl[\Bigl(\log\frac{1}{\eps}\Bigr)^{2}\vee 1\Bigr] \bigl[(\log|x-y|^{-1})^2 \vee 1\bigr] \ee^{-c|x-y|\sqrt{\tilde g(x) +\tilde g(y)}} \\
		&\quad \quad + \int_{\substack{|x|,|y|>R\\
				|x-y|> 1}}\, \d{x} \,\d{y}\, \Bigl[\Bigl(\log\frac{1}{\eps}\Bigr)^{2} \vee 1\Bigr] \bigl[(\log|x-y|^{-1})^2 \vee 1\bigr] \ee^{-c|x-y|\sqrt{\tilde g(x) +\tilde g(y)}}\,.
	\end{split}
	\end{equation}
	Now, for the region where $|x-y|>1$ the integral is controlled by 
	\[
    \begin{split}
    &\int_{\substack{|x|,|y|>R\\
				|x-y|> 1}}\, \d{x}\, \d{y}\, \Bigl[\Bigl(\log\frac{1}{\eps}\Bigr)^{2} \vee 1\Bigr] \bigl[(\log|x-y|^{-1})^2 \vee 1\bigr] \ee^{-c|x-y|\sqrt{\tilde g(x) +\tilde g(y)}} \\
	&\quad \lesssim \int_{\substack{|x|,|y|>R\\
			|x-y|> 1}}\d{x}\,\d{y}\, \Bigl[\Bigl(\log\frac{1}{\eps}\Bigr)^{2} \vee 1\Bigr]  \ee^{-c(\sqrt{\tilde g(x)}+\sqrt{\tilde g(y)})} \lesssim \Bigl[\Bigl(\log\frac{1}{\eps}\Bigr)^{2} \vee 1\Bigr] R^{-10}\,.
	\end{split}\]
	As for $|x-y| \leq 1$, we have 
	\[
	\begin{split}
		&\int_{\substack{|x|,|y|>R\\
				|x-y|\leq 1}}\, \d{x}\, \d{y}\, \Bigl[\Bigl(\log\frac{1}{\eps}\Bigr)^{2} \vee 1\Bigr] \bigl[(\log|x-y|^{-1})^2 \vee 1\bigr] \ee^{-c|x-y|\sqrt{\tilde g(x) }} \\
		& \quad \lesssim \int_{\substack{|x|>R\\
		}}\, \d{x}\, \frac{1}{\tilde g(x)}\int_{|z|\leq \sqrt{\tilde g(x)}} \d{z} \,\ee^{-c|z| } \Bigl[\Bigl(\log\Bigl(\frac{\sqrt{\tilde g(x)}}{|z|}\Bigr) \Bigr)^2 \vee 1\Bigr]  \Bigl[\Bigl(\log\frac{1}{\eps}\Bigr)^{2} \vee 1\Bigr] \\
		& \quad \lesssim  \int_{\substack{|x|>R\\
		}} \d{x}\,\frac{1}{\tilde g(x)} (\log (\tilde g(x))^2 \Bigl[\Bigl(\log\frac{1}{\eps}\Bigr)^{2} \vee 1\Bigr] \lesssim R^{-c}\Bigl[\Bigl(\log\frac{1}{\eps}\Bigr)^{2} \vee 1\Bigr] 
	\end{split}\] 
	where the last step is justified by Assumption \ref{assumption:P and D for U} on $U$ and thus on $g$.
	
	On the other side, if we consider the term proportional to $ \wick{\overline{\phi}(x) \phi(\tilde x)}$ and using the fact that $v$ has compact support,  we have 
	\[
	\begin{split}
		&\Big\|\int_{|x|>R} \d{x}\,\d{\tilde x}\, v^\eps(x-\tilde x) G(x,\tilde x)  \, \wick{\overline{\phi}(x) \phi(\tilde x)}\Big\|_{L^2} \\
		& \quad \lesssim\int_{\substack{|x|,|y| >R-1\\ |\tilde x|, |\tilde y| >R-1}} \d{x}\, \d{y}\,\d{\tilde x}\,\d{\tilde y}\, v^\eps(x-\tilde x) v^\eps(y-\tilde y) G(x,\tilde x) G(y, \tilde y) G(x,y)G(\tilde x,\tilde y)\\
		&\quad \lesssim\int_{\substack{|x|,|y| >R-1\\ |\tilde x|, |\tilde y| >R-1}} \d{x} \, \d{y}\, \d{\tilde x}\,\d{\tilde y}\, v^\eps(x-\tilde x) v^\eps(y-\tilde y) G(x,\tilde x) G(y, \tilde y)\bigl( G(x,y)^2 + G(\tilde x,\tilde y)^2\bigr) \\
	\end{split}\]
	which can be bounded analogously as in \eqref{eq:WepsVepsRc}. Putting together the bounds for $ (W^\eps - V^\eps)_R,  (W^\eps - V^\eps)_{R^c}$, and optimizing over $R= R(\eps)$, we can conclude for some $\tilde c, c>0$
	\[
	\|V^\eps -W^\eps\|^2_{L^2}  \lesssim R^{\tilde c} \eps^c + R^{-c} \Bigl(\log\frac{1}{\eps}\Bigr)^{2} \lesssim \eps^{c/2} \to 0 \quad \text{as } \quad \eps \to 0\,.
    \qedhere
    \]
	
\end{proof}

Now we compare the difference $V^\eps -V$ in $L^2$-norm. The proof is on the same line as the one above. We have the following lemma.

\begin{lemma}\label{lm:Veps-V}
	Let $\theta >2$. Let $V$ and $V^\eps$ defined as in \eqref{eq:defV}, \eqref{eq:Veps}. Then we have
	\[
	\|V^\eps- V\|_{L^2} \to 0  \quad \quad \text{as } \eps\to 0\,.
	\]
\end{lemma}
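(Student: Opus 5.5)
We compute $\|V^\eps - V\|_{L^2}^2$ explicitly by Wick's theorem, working first with the regularised fields $\phi_N$ and passing to $N\to\infty$ at the end, exactly as in the construction of $V$. Since $\wick{|\phi(x)|^4}$ and $\wick{|\phi(x)|^2|\phi(\tilde x)|^2}$ are fourth Wick chaoses, all pairings are full contractions between the two factors. This gives
\[
\|V^\eps-V\|_{L^2}^2 = c\int \d x\,\d{\tilde x}\,\d y\,\d{\tilde y}\, v^\eps(x-\tilde x)v^\eps(y-\tilde y)\, F(x,\tilde x,y,\tilde y)\,,
\]
where $v^\eps$ is replaced by $\delta$ in the appropriate slots. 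Here $F$ is a sum of terms of the form $G(x,y)G(\tilde x,\tilde y)\overline{G(x,\tilde y)G(\tilde x,y)}$ and permutations. Writing $V=\lim_\eps$-type expressions with $\tilde x=x$, the quantity becomes a second difference
\[
\int v^\eps v^\eps \bigl[F(x,\tilde x,y,\tilde y)-F(x,x,y,\tilde y)-F(x,\tilde x,y,y)+F(x,x,y,y)\bigr]\,,
\]
using $\int v^\eps=1$.

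As in the proof of Lemma \ref{lm:Veps-Weps}, we split into the regions $|x|,|y|\le R$ and its complement, with $R=R(\eps)$ to be optimised.

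\textbf{Inside the ball.} We treat each difference of products of Green functions by telescoping, so that every term contains at least one factor $G(\tilde x,\cdot)-G(x,\cdot)$ (and, for the second difference, one factor with $\tilde y$). Each such factor is bounded via \eqref{eq:diffG} and the gradient bound \eqref{eq:gradG}. This produces factors $\bigl|\log\frac{|x-y+\eps h|}{|x-y|}\bigr|$ after the change of variables $h=(x-\tilde x)/\eps$, $k=(y-\tilde y)/\eps$. The remaining $G$ factors are controlled by the logarithmic bound of Proposition \ref{prop:GN}, i.e.\ $G(x,y)\lesssim \log|x-y|^{-1}\vee 1$. We then use \eqref{eq:boundlog} and split according to $|x-y|$ relative to $\eps|h|,\eps|k|$, exactly as in the three-region computation above. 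Logarithmic singularities are integrable in $d=2$ against $|x-y|^{-2\alpha}$ with $\alpha<1$, so this yields a bound $R^{C}\eps^{c}$. Note that even a single difference factor with the bound $|\log|$-type suffices, since the products of $G$'s contribute only powers of $\log$.

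\textbf{Outside the ball.} Here we do not exploit cancellation. We bound $V^\eps$ and $V$ restricted to $|x|>R$ separately in $L^2$. Using the exponential off-diagonal decay \eqref{eq:boundG}, $G(x,y)\lesssim (\log|x-y|^{-1}\vee1)\,\ee^{-c|x-y|\sqrt{\tilde g(x)+\tilde g(y)}}$, the integral of $G(x,y)^4$ (or the corresponding mixed products) over $|x|,|y|>R$ is bounded as in \eqref{eq:WepsVepsRc} by $\int_{|x|>R}\tilde g(x)^{-1}(\log\tilde g(x))^4\,\d x\lesssim R^{-c}$, since $\theta>2$. The $v^\eps$-smearing costs no $\log(1/\eps)$ here, because the four $G$'s connect the two points and $v^\eps$ has unit mass. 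Choosing $R=\eps^{-c'}$ with $c'$ small then gives $\|V^\eps-V\|_{L^2}^2\lesssim \eps^{c''}\to0$.

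\textbf{Main obstacle.} The delicate step is the inner region, specifically organising the second difference so that it has enough cancellation near the diagonal $x\approx y$. There, $G(x,y)^2$-type factors are log-divergent, so one must check that placing the gradient on the difference still leaves an integrable singularity $|x-y|^{-1}\log^3|x-y|^{-1}$ in two dimensions.

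If needed, we will use the Hölder form $|G(x,y)-G(\tilde x,y)|\lesssim_\alpha \eps^\alpha|h|^\alpha|x-y|^{-\alpha}$ for $|x-y|>2\eps|h|$, and the cruder log bound otherwise. This mirrors \eqref{eq:boundlog} and keeps all singularities integrable.
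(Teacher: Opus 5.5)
Your proposal is correct and follows essentially the same route as the paper. You use the Wick expansion rewritten as a second difference via $\int v^\eps=1$, split at a radius $R(\eps)$, control a single difference factor inside the ball by the mean value theorem together with \eqref{eq:gradG} and \eqref{eq:boundlog}, and use the decay in \eqref{eq:boundG} outside, which gives $R^{C}\eps^{c}+R^{-c}$ with $R=\eps^{-c'}$.
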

\begin{proof}
	Proceeding similarly as in the proof of Lemma \ref{lm:Veps-Weps} we consider $V^\eps -V$ inside and outside a ball  of radius $R$. Clearly we have,
	\begin{equation}\label{eq:Veps-V}\begin{split}
	V^\eps -V & =\frac12\int \,\d{x} \, \d{\tilde x} \,\bigl(v^\eps(x-\tilde x) -\delta(x-\tilde x)\bigr) \wick{|\phi(x)|^2|\phi(\tilde x)|^2}\\
     &\eqd  (V^\eps -V)_{R} + (V^\eps -V)_{R^c}\,.
	\end{split}\end{equation}
	The term $(V^\eps -V)_R$ can be estimated as $(W^\eps -V^\eps)_R$ (compare also with \cite[Lemma 4.2]{FKSS_24}), and in particular using the Wick formula (cf.\ \cite[Appendix A]{FKSS_24}) we have, 
    \[\begin{split}
		\|(V-V^\eps)_R\|_{L^2}^2 & = \frac12 \int_{|x|,|y| <R} \d{x}\,\d{\tilde x}\, \d{y}\, \d{\tilde y}\, \bigl(v^\eps(x-\tilde x) -\delta(x-\tilde x)\bigr)\bigl(v^\eps(y-\tilde y) -\delta(y-\tilde y)\bigr)\\
		&\hspace{3cm}\times G(y,x)G(\tilde y, \tilde x)\bigl[G(x,y)G(\tilde x,\tilde y) + G(x,\tilde y) G(\tilde x,y)\bigr]\,.\\
	\end{split}
	\]
    We analyse only the first term on the right-hand side; the second one can be bounded similarly. Using the change of variables $\tilde x - x= h$ and $\tilde y -y =k$, and that $\int v =1$ we have 
    \[
    \begin{split}
        & \frac12 \int_{|x|,|y| <R} \d{x}\,\d{y}\,G(y,x)^2\int \d{h} \,\d{k}\,v^\eps(h)v^\eps(k)\bigl[ G(y+k, x+h)^2 - G(y,x+h)^2 \\
        & \hspace{8cm}- G(y+k, x)^2 + G(y,x)^2\bigr]\,.
    \end{split}
    \]
    We can bound the sum of the first two terms on the right-hand side using the bound \eqref{eq:diffG} and \eqref{eq:boundlog} 
    \[
    \begin{split}
        &\frac12 \int_{|x|,|y| <R} \d{x}\,\d{y}\,G(y,x)^2\int \d{h}\, \d{k}\,v^\eps(h)v^\eps(k)\bigl( G(y+k, x+h) + G(y,x+h)\bigr)\\
        &\hspace{7cm}\times\big| G(y+k, x+h) - G(y,x+h)\big|\\
        & \lesssim\int_{|x|,|y| <R} \d{x}\,\d{y}\,G(y,x)^2\int \d{h} \,\d{k}\,v^\eps(h)v^\eps(k)\bigl( G(y+k, x+h) + G(y,x+h)\bigr)\\
        &\hspace{3cm} \times\Bigl[ \bigl(\big|\log|y+k-x-h|\big| + \big|\log|y-x-h|\big|\bigr)\mathbbm{1}_{|y-x-h| \leq 2|k|}\\
        &\hspace{9cm}+ \frac{|k|}{|y-x-h|}\mathbbm{1}_{|y-x-h| \geq 2|k|} \Bigr]\\
        &\lesssim \eps R^{C}\,.
    \end{split}
    \]

    As for the second term on the right-hand side of \eqref{eq:Veps-V}
	\[\begin{split}
		\|(V-V^\eps)_{R^c}\|_{L^2}^2 \lesssim &  \int_{\substack{|x|,|y|>R-1\\ |\tilde x|,|\tilde y| >R-1}} \d{x}\,\d{\tilde x}\, \d{y} \,\d{\tilde y}\, \bigl(v^\eps(x-\tilde x) -\delta(x-\tilde x)\bigr)\bigl(v^\eps(y-\tilde y) -\delta(y-\tilde y)\bigr)\\
		&\hspace{4cm}\times\bigl[ G(y,x)^4 + G(\tilde y, \tilde x)^4 +  G(x,\tilde y)^4 + G(\tilde x,y)^4\bigr]\\ 
        \lesssim &  \int_{\substack{|x|,|y|>R-1\\ |\tilde x|,|\tilde y| >R-1}} \d{x} \,\d{\tilde x} \,\d{y}\, \d{\tilde y}\,v^\eps(x-\tilde x)v^\eps(y-\tilde y) G(y,x)^4\\
        \lesssim & \int_{\substack{|x|,|y|>R-1}} \d{x}\, \d{y}\, G(y,x)^4 \lesssim R^{-c}\,,
	\end{split}
	\]
where in the last line we handled the term as the equivalent bound in \eqref{eq:WepsVepsRc} in Lemma \ref{lm:Veps-Weps}. Hence, we can conclude analogously as Lemma \ref{lm:Veps-V}. 
\end{proof}

We can now show the integrability of $\ee^{-V^\eps}$, uniformly in $\eps$. Analogously as in \cite[Sec 4.2]{FKSS_24}, the following result holds, which is an adaptation of Nelson's argument \cite{nelson1973free}. In the following, we use the notation $(\cdot)^\pm$ to denote $(\cdot)^{\pm\delta}$ for any constant $\delta>0$. 
\begin{prop}\label{prop:nelson}
	Let $\theta >10$. Then there is a constant $c>0$ such that for all $\eps >0$, $t\geq 1$ and $\gamma <\frac14 - \frac1{2\theta}$ we have 
	\[
	\mathbb{P}(\ee^{-V^\eps}>t) \lesssim \exp\hB{-c (\log t)^{(-\frac{1}{2\gamma}+)(\frac12-\frac1\theta)}}.\]
	The same bound holds for $V^\eps$ replaced by $V$.
\end{prop}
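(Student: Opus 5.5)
We follow Nelson's argument, as adapted in \cite[Section 4.2]{FKSS_24}, and split $V^\eps$ into a regularised part bounded from below plus a small fluctuation. Fix $N\geq1$ and let $\phi_N$ be the regularised field from \eqref{eq:vartheta}. Define $V^\eps_N$ by replacing $\phi$ with $\phi_N$ in \eqref{eq:Veps}, Wick ordered with respect to $G_N$.

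\textbf{Step 1: a deterministic lower bound.} Using
\[
\wick{\abs{\phi_N(x)}^2\abs{\phi_N(\tilde x)}^2}=\bigl(\abs{\phi_N(x)}^2-G_N(x,x)\bigr)\bigl(\abs{\phi_N(\tilde x)}^2-G_N(\tilde x,\tilde x)\bigr)-\abs{G_N(x,\tilde x)}^2-2\re\bigl(G_N(x,\tilde x)\bar\phi_N(x)\phi_N(\tilde x)\bigr)+\dots
\]
and positivity of $v$ (positive type), we complete the square. This gives
\[
V^\eps_N\geq -C\int \d x\, \bigl(G_N(x,x)\bigr)^2 w(x)
\]
up to terms controlled by Cauchy--Schwarz. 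In contrast to the torus, $G_N(x,x)$ is not a constant: it behaves like $\log N$ only in the region where $U(x)\lesssim N$. For $\abs{x}$ large, Proposition \ref{prop:GN} gives decay of $G_N(x,x)$, roughly like $\tilde g(x)^{-1}\log$. We therefore aim for
\[
V^\eps_N\geq -C\bigl(\log N\bigr)^2 N^{2/\theta}
\]
uniformly in $\eps$. The factor $N^{2/\theta}$ is the area of the region $\{U\lesssim N\}$. The cross terms $\int \d x\,\d\tilde x\,v^\eps G_N\bar\phi_N\phi_N$ are absorbed into the positive square via $2ab\leq a^2+b^2$.

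\textbf{Step 2: the remainder.} We bound $\norm{V^\eps-V^\eps_N}_{L^2}\lesssim N^{-\gamma'}$ uniformly in $\eps$. Expanding by Wick's theorem expresses this norm through differences $G-G_N$, exactly as in Lemmas \ref{lm:Veps-Weps} and \ref{lm:Veps-V}. The region $\abs{x}>R$ is handled by the exponential decay of $G$ (bound \eqref{eq:boundG}). The inner region uses $\norm{G-G_N}$ in suitable $L^p$ norms, estimated through $\tr h^{-s}$ from Remark \ref{rmk:thetha_s2}. Balancing the spatial cutoff against $N$ produces the exponent $\gamma<\frac14-\frac1{2\theta}$. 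Since $V^\eps-V^\eps_N$ lies in a Wiener chaos of order at most $4$, hypercontractivity upgrades this to
\[
\P\bigl(\abs{V^\eps-V^\eps_N}>1\bigr)\leq \exp\bigl(-cN^{\gamma/2}\bigr).
\]

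\textbf{Step 3: conclusion.} Given $t$, choose $N$ so that $C(\log N)^2N^{2/\theta}=\log t-1$, i.e.\ $N\approx(\log t)^{\theta/2-}$. Then $\{\ee^{-V^\eps}>t\}\subset\{\abs{V^\eps-V^\eps_N}>1\}$. Its probability is at most $\exp(-cN^{\gamma/2})$, and tracking the exponents, with $\gamma$ entering through the hypercontractive scaling, gives the stated rate $(\log t)^{(-\frac1{2\gamma}+)(\frac12-\frac1\theta)}$. The statement for $V$ follows by the same argument with $v^\eps$ replaced by $\delta$, or by letting $\eps\to0$ using Lemma \ref{lm:Veps-V} and Fatou.

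\textbf{Main obstacle.} The hard part is Step 2 in the inhomogeneous setting: obtaining a power-law bound on $\norm{V^\eps-V^\eps_N}_{L^2}$ that is uniform in $\eps$. This requires quantitative estimates on $G-G_N$ near the diagonal together with spatial decay, and this is exactly where the condition $\theta>10$ is needed. I would first try interpolating between the trace bound $\tr h^{-s}<\infty$ and the pointwise Green function bounds of Section \ref{app:Greenfunction}.
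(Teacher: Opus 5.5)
Your architecture is the paper's: Nelson's argument, with an almost-sure lower bound on $V^\eps_N$, hypercontractive control of $V^\eps-V^\eps_N$, and a choice $N=N(t)$. Your Step~1 is an acceptable variant of Lemma~\ref{lm:boundVN}. Bounding the counterterm by $\norm{v}_{L^1}\int G_N(x,x)^2\,\dd x\lesssim N^{2/\theta}(\log N)^2$ amounts, up to logarithms, to the endpoint $\gamma\downarrow 2/\theta$ of the paper's bound $-CN^{\gamma}\log N$ with $\gamma=s-1$. The paper obtains its bound from $\sup_x G_N(x,x)\lesssim\log N$ and $\int G_N(x,x)\,\dd x=\tr(h^{-1}\ee^{-h/N})\lesssim N^{s-1}\tr h^{-s}$. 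Note that Proposition~\ref{prop:GN} gives no decay on the diagonal, only $\log N$. The decay of $G_N(x,x)$ for $\tilde g(x)\gg N$ must be read off from Lemma~\ref{lem:boundonexp2}, using $\vartheta(x)=\ee^{-x}$ so that $G_N=\int_{1/N}^\infty\ee^{-th}\,\dd t$.

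The genuine gap is Step~2, which is the heart of the proof and which you leave as a plan. You need an explicit rate, and the missing ingredient is the pointwise bound \eqref{eq:GN-G},
$\abs{G_M-G_N}(x,y)\lesssim q_N(x,y)=[\log(N\abs{x-y}^2)^{-1}]_+\wedge\frac{\ee^{-N\abs{x-y}^2}}{N\abs{x-y}^2}$,
which localises $G_M-G_N$ to $\abs{x-y}\lesssim N^{-1/2}$. Write $V^\eps_M-V^\eps_N$ as the sum of the terms $V^\eps_{N,M}(a,\tilde a,b,\tilde b)$, each containing at least one factor $\psi_{N,M}$. Wick's theorem and Young's inequality then reduce everything to integrals of products of $G_N$ and $q_N$; Young removes $v^\eps$ at the cost of $\norm{v}_{L^1}$, which is where uniformity in $\eps$ comes from. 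By \eqref{eq:boundG}, the region $\abs{x-y}\leq N^{-1/2}$, $\tilde g(x)\leq N$ contributes $N^{-1}\log N$ times the area $N^{2/\theta}$ to $\norm{V^\eps_M-V^\eps_N}_{L^2}^2$, and all other regions are smaller. This is Lemma~\ref{lm:VMVN}: $\norm{V^\eps_M-V^\eps_N}_{L^2}\lesssim N^{-\frac12+\frac1\theta+}$ uniformly in $\eps$ and $M$, for every $\theta>2$.

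Your attribution of the constraints is also off: neither $\theta>10$ nor $\gamma<\frac14-\frac1{2\theta}$ enters Step~2. In the statement, $\gamma$ is the growth exponent of the lower bound $V^\eps_N\geq-CN^\gamma\log N$ (your Step~1 corresponds to $\gamma=2/\theta$), not an $L^2$ decay rate. Hypercontractivity gives $\norm{V^\eps-V^\eps_N}_{L^p}\lesssim p^2N^{-\frac12+\frac1\theta+}$. Chebyshev optimised over $p$ then gives $\P(\abs{V^\eps-V^\eps_N}>1)\leq\exp(-cN^{\frac12(\frac12-\frac1\theta)-})$. With $N\approx(\log t)^{\frac1\gamma-}$ this becomes $\exp(-c(\log t)^{(\frac1{2\gamma}-)(\frac12-\frac1\theta)})$; the exponent in the displayed statement should read this way, without the minus sign. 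The restriction $\gamma<\frac14-\frac1{2\theta}$ is exactly the requirement that this exponent exceed $1$, which is what makes $\ee^{-V^\eps}$ integrable uniformly in $\eps$. The condition $\theta>10$ is its compatibility with $\gamma>2/\theta$, which comes from $s>1+\frac2\theta$ in Remark~\ref{rmk:thetha_s2}. In your version, $\frac\theta4(\frac12-\frac1\theta)>1$ holds iff $\theta>10$.

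Your Step~3 instead uses $\gamma$ for the $L^2$ rate. Then $\exp(-cN^{\gamma/2})$ with $N\approx(\log t)^{\theta/2}$ yields exponent $\theta\gamma/4$, and with $\gamma<\frac14-\frac1{2\theta}$ that does not reproduce the claimed rate. Once the correct rate $\frac12-\frac1\theta$ from Step~2 is inserted, your Steps~1 and~3 do give the statement at $\gamma=2/\theta$.
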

The structure of the proof of Proposition \ref{prop:nelson} follows the proof in \cite[Prop. 4.3]{FKSS_24}. However, there are a few key points that need to be changed in our setting. First, we need to redefine the truncated Green function $G_N$ and the truncated field $\phi_N$ with respect to the definition used in \cite{FKSS_24}. Indeed, since the system is non-translational invariant, we cannot use properties of the Fourier coefficients as on the torus. To this end, we consider the definition of $\phi_N$ as in  \eqref{eq:vartheta}, where now we choose  the truncation to be $\vartheta(x)\deq \ee^{-x} $.
Hence, the truncated field \eqref{eq:vartheta} and Green function \eqref{eq:defGN} are respectively given by
\begin{equation}\label{eq:phiGdelta}
	\phi_N(x) = \sum_{k=0}^{\infty} \frac{X_k}{\sqrt{\lambda_k}} 
	\ee^{-\frac{ \lambda_k}{2N}}u_k\;, \quad\quad G_N = \sum_{k=0}^{\infty} 
	\frac{\ee^{-\frac{\lambda_k}{N}}}{\lambda_k}u_k u_k^*\,.   \end{equation}
Then, we define the truncated version of $V^\eps$, as in \eqref{eq:Veps}, converging to $V^\eps$ as $N\to \infty$
\begin{equation}\label{eq:defVepsN}
	V^\eps_N \deq \frac12 \int \, \d x\,\d \tilde x\, v^\eps(x-\tilde x) \wick{|\phi_N(x)|^2 |\phi_N(\tilde x)|^2}\,.
\end{equation}
 Let $(Y_k)_{k\in \bN}$ be a family of i.i.d standard complex Gaussian variables, independent of the family $(X_k)_{k \in \bN}$. For $0< N\leq M \leq \infty$ we define the field 
\[
\psi_{N,M} \deq\sum_{k=0}^{\infty} \frac{Y_k}{\sqrt{\lambda_k}} 
\Bigl(\ee^{-\frac{\lambda_k}M} -\ee^{-\frac{\lambda_k}N}\Bigr)^{1/2}u_k,\] 
by construction $\phi_N$ and $\psi_{N,M}$ are independent. For $M<\infty$, they are almost surely continuous on $\mathbb{R}^2$\footnote{As \(U \in L^\infty_{\text{loc}}(\mathbb{R}^d) \subset L^1_{\text{loc}}(\mathbb{R}^d)\), we can conclude from \cite[10.1 Remark (3)]{lieb2001analysis} that the eigenfunctions of \eqref{def:one body hamiltonian} are \(\alpha\)-Hölder continuous for any \(\alpha<1\) by elliptic regularity of \(h-\lambda \) for any \(\lambda\in \mathbb{R}\). } . From \eqref{eq:phiGdelta} we deduce
\[
\begin{split}
	\mathbb E [\psi_{N,M}(x)\overline\psi_{N,M}(y)] & =  G_M(x,y) -G_N(x,y)\,.
\end{split}\]
Since $\phi_N$ and $\psi_{N,M}$ are independent, for $N \leq M$ we can decompose the fields into low and high frequencies\footnote{Here $\overset{\dd}{=}$ stands for equality in law.} 
\[
\phi_N + \psi_{N,M} \overset{\dd}{=} \phi_M 
\]
so that for $M =\infty$ 
\[
\phi_N + \psi_{N,\infty} \overset{\dd}{=} \phi\,.
\]
By the definition of $\phi_M$ we have, for any $N\leq M$,
\[
V_M^\eps \overset{\dd}{=} \frac12 \int \d{x}\d{\tilde x}\, v^\eps(x-\tilde x) \wick{|\phi_N(x) + \psi_{N,M}(x)|^2 |\phi_N(\tilde x) + \psi_{N,M}(\tilde x)|^2}\,.
\]
The following lemma is the first ingredient to show Proposition \ref{prop:nelson}.
\begin{lemma}\label{lm:boundVN}
	Let \(s \in (1,2)\) satisfy \eqref{eq:tracehs}.  There exists a constant $C>0$ depending on $v$ such that almost surely 
	\[
	V_N^\eps \geq - CN^\gamma\log N\] 
	for all $\eps>0$, and $\gamma=s-1$. 
\end{lemma}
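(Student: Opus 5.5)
We expand the Wick product and complete a square, as in \cite[Lemma 4.4]{FKSS_24}. Writing $\rho_N(x) \deq \abs{\phi_N(x)}^2$ and $g_N(x) \deq G_N(x,x)$, Wick's rule for the product of two Wick-ordered masses gives
\[
\wick{\abs{\phi_N(x)}^2\abs{\phi_N(\tilde x)}^2} = \wick{\abs{\phi_N(x)}^2}\,\wick{\abs{\phi_N(\tilde x)}^2} - \abs{G_N(x,\tilde x)}^2 - 2\re\bigl(G_N(\tilde x,x)\wick{\bar\phi_N(\tilde x)\phi_N(x)}\bigr)\,,
\]
or, written with all Wick orderings expanded, in terms of $\rho_N$, $g_N$ and $\bar\phi_N(x)\phi_N(\tilde x)$. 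The first term integrated against $v^\eps$ is nonnegative, because $v$ is of positive type. The second term gives the constant $-\frac12\int v^\eps(x-\tilde x)\abs{G_N(x,\tilde x)}^2$. The cross term is quadratic in the field, and the plan is to bound it below by $-\delta\cdot(\text{positive quartic}) - C\cdot(\text{deterministic})$. Completing the square then gives a lower bound by a deterministic quantity, uniformly in $\eps$.

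\textbf{Step 1: the quadratic term.} The mass $\wick{\abs{\phi_N}^2} = \rho_N - g_N$ and the off-diagonal term $G_N(\tilde x,x)\bar\phi_N(\tilde x)\phi_N(x)$ are handled together. Using $\abs{\bar\phi_N(\tilde x)\phi_N(x)}\le \frac12(\rho_N(x)+\rho_N(\tilde x))$ and $\abs{G_N(x,\tilde x)}\le \sqrt{g_N(x)g_N(\tilde x)}$, the quadratic terms are bounded below by $-C\int \d x\, (v^\eps * \bar g_N)(x)\,\rho_N(x)$, with $\bar g_N \deq g_N$ up to constants.

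Alternatively, one can run the argument as in \cite{FKSS_24} with the full quadratic form: write $V^\eps_N = \frac12\langle \rho_N - a, v^\eps(\rho_N - a)\rangle + (\text{linear}) - (\text{const})$ for a suitable deterministic function $a \sim 2 v^\eps*g_N$. After completing the square, the remainder is at most $C\int\d x\, (v^\eps*g_N)(x)\, g_N(x)$ plus $\int v^\eps\abs{G_N}^2 \le \int\d x\, g_N(x)^2$, since $\int v^\eps = 1$ and $v^\eps\ge 0$ is not needed here: we only use $\abs{v^\eps}$ integrable with $\norm{v^\eps}_{L^1} = \norm{v}_{L^1}$.

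The key positivity point is the following. Because $v^\eps$ is of positive type, $\langle f, v^\eps f\rangle \ge 0$ for real $f$. The cross terms are controlled by Cauchy–Schwarz in the $v^\eps$ inner product, combined with the pointwise Cauchy–Schwarz $\abs{\langle f,v^\eps k\rangle}\le \langle f,v^\eps f\rangle^{1/2}\langle k,v^\eps k\rangle^{1/2}$. For the genuinely off-diagonal term, which is not of the form $\langle f,v^\eps k\rangle$, one expands $G_N(\tilde x,x)=\sum_k \lambda_k^{-1}\ee^{-\lambda_k/N}u_k(\tilde x)\bar u_k(x)$ and applies positivity to each term, which are of the form $\langle F_k, v^\eps F_k\rangle$.

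\textbf{Step 2: the deterministic bound.} All remainders reduce to $\int\d x\, g_N(x)^2$, or to a variant of it, up to constants depending on $v$. It therefore remains to show
\[
\int \d x\, G_N(x,x)^2 \lesssim N^{s-1}\log N\,.
\]
We use $G_N(x,x)\le \norm{G_N}_{L^\infty}$-type bounds. On the diagonal, $G_N(x,x) = \sum_k \lambda_k^{-1}\ee^{-\lambda_k/N}\abs{u_k(x)}^2 \lesssim \log N$ locally, by Proposition~\ref{prop:GN} (heat kernel bound $\ee^{-th}(x,x)\lesssim t^{-1}\ee^{-ct U}$). Then
\[
\int g_N^2 \le \sup_x g_N(x)\cdot \tr(G_N) \lesssim \log N\cdot \sum_k \lambda_k^{-1}\ee^{-\lambda_k/N}\,.
\]
Finally, $\lambda^{-1}\ee^{-\lambda/N} \le C\lambda^{-s}N^{s-1}$ for $s\ge1$, so the trace is bounded by $N^{s-1}\tr h^{-s}$, which is finite by \eqref{eq:tracehs}. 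This gives the bound with $\gamma = s-1$.

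The main obstacle is Step 1: making the off-diagonal cross term fit a positivity/Cauchy–Schwarz structure without translation invariance, since we can no longer diagonalise $v^\eps$ and $G_N$ simultaneously in Fourier space. If the eigenfunction expansion does not close cleanly, the fallback is the cruder pointwise bound, absorbing $\int v^\eps(x-\tilde x) g_N(\tilde x)\rho_N(x)$ into the positive quartic term via $ab\le \delta a^2+\delta^{-1}b^2$ in the $v^\eps$ form.
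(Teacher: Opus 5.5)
Your overall route is the paper's. You keep the $v^\eps$-form of the Wick-ordered masses, which is nonnegative by positive type, control the off-diagonal term with $|G_N(x,\tilde x)|\le\sqrt{g_N(x)g_N(\tilde x)}$, and complete the square. The deterministic remainder is then bounded by $\sup_x g_N(x)\cdot\tr G_N\lesssim \log N\cdot N^{s-1}\tr h^{-s}$; your Step 2 is exactly the paper's last step. Step 1 as written, however, needs two corrections.

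First, drop the eigenmode-positivity idea. With $\mu_k=\lambda_k^{-1}\ee^{-\lambda_k/N}$ and $F_k=u_k\bar\phi_N$ one does have $\int v^\eps(x-\tilde x)\,G_N(\tilde x,x)\bar\phi_N(\tilde x)\phi_N(x)=\sum_k\mu_k\langle F_k,v^\eps F_k\rangle\ge0$. But this quantity enters $V_N^\eps$ with a minus sign, so positivity bounds it from the wrong side and gives no lower bound.

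Second, the argument that closes is your fallback, which is precisely what the paper does. For it to work you must pair $g_N$ at one point with $\rho_N$ at the other:
\[
\bigl|G_N(x,\tilde x)\,\phi_N(x)\bar\phi_N(\tilde x)\bigr|\le\sqrt{g_N(x)\rho_N(\tilde x)}\,\sqrt{g_N(\tilde x)\rho_N(x)}\le\tfrac12\bigl(g_N(x)\rho_N(\tilde x)+g_N(\tilde x)\rho_N(x)\bigr).
\]
After integrating against the even $v^\eps$, the negative term is then exactly $\langle g_N,v^\eps\rho_N\rangle$. Your pairing $\tfrac12\sqrt{g_N(x)g_N(\tilde x)}\,(\rho_N(x)+\rho_N(\tilde x))$ instead produces $\langle\sqrt{g_N}\,\rho_N,v^\eps\sqrt{g_N}\rangle$. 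That is not of the form $\langle f,v^\eps\rho_N\rangle$, so it cannot be absorbed by Cauchy--Schwarz in the $v^\eps$-form without extra information on the variation of $g_N$ over scale $\eps$. With the correct pairing one gets
\[
V_N^\eps\ge\tfrac12\bigl\langle\rho_N-2g_N,v^\eps(\rho_N-2g_N)\bigr\rangle-\tfrac32\langle g_N,v^\eps g_N\rangle\ge-\tfrac32\langle g_N,v^\eps g_N\rangle .
\]
This is the paper's bound with $S=2$; note the shift is $2g_N$, not $2v^\eps*g_N$.

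Finally, this pointwise estimate of the cross term under $\int v^\eps$ uses $v\ge0$, not merely positive type. That holds for your argument and, tacitly, for the paper's as well.
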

\begin{proof}
	Let us denote $S= 1+ \int_{\mathbb R^2} \d x\,|v(x)|$. Using Wick ordered form of \eqref{eq:defVepsN} we find the lower bound
	\begin{equation}\begin{split} \label{eq:boundVNeps}
		V_N^\eps & = \frac12 \int \d x\,\d\tilde x\, v^\eps (x-\tilde x) \Bigl[|\phi_N(x)|^2 |\phi_N(\tilde x)|^2  -G_N(x,x)|\phi_N(\tilde x)|^2 - G_N(\tilde x,\tilde x)|\phi_N(x)|^2 \\
		&\hspace{3cm } - 2\text{Re} G_N(x,\tilde x) \phi_N(x)\overline\phi_N(\tilde x) +G_N(x,x)G_N(\tilde x,\tilde x) + G_N(x,\tilde x)^2\Bigr] \\
		& = \frac12 \int \d x \,\d\tilde x\, v^\eps (x-\tilde x) \Bigl[ \bigl( |\phi_N(x)|^2 - SG_N(x,x)\bigr)\bigl(|\phi_N(\tilde x)|^2 - SG_N(\tilde x,\tilde x)\bigr)\\
		&\hspace{3.5cm} + 2(S-1) G_N(x,x)|\phi_N(\tilde x)|^2 -2\text{Re} G_N(x,\tilde x) \phi_N(x)\overline\phi_N(\tilde x) \\
		& \hspace{6cm} -(S^2-1) G_N(x,x)G_N(\tilde x, \tilde x) + G_N(x,\tilde x)^2\Bigr]\\
		&\geq  \frac12 \int \d x \,\d\tilde x\, v^\eps (x-\tilde x) \Bigl[ 2(S-1) G_N(x,x)|\phi_N(\tilde x)|^2 -(S^2-1) G_N(x,x)G_N(\tilde x, \tilde x)\\ 
		& \hspace{6cm}-2 \sqrt{G_N(x,x)G_N(\tilde x, \tilde x)}|\phi_N(x)||\phi_N(\tilde x) |  \Bigr]\\
		& \geq (S-2)\int \d x \, \d{\tilde x}\, v^\eps (x-\tilde x)G_N(x,x)|\phi_N(\tilde x)|^2 \\
		&\hspace{4cm} - \frac{(S^2-1)}{2}\int \d x\,\d {\tilde x}\, v^\eps (x-\tilde x)G_N(x,x)G_N(\tilde x, \tilde x)\,.
	\end{split}\end{equation}
	To conclude, on one hand we notice that the first term on the right-hand side is non-negative, by definition of $S$. On the other hand, using that $\ee^{-x} \lesssim x^{-\gamma}$ for any $x>1$, we have  that 
    \[
    \tr(G_N) = \int\d{x}\,\int_{1/N}^\infty\d{t}\, \ee^{-th}(x,x)  = \tr(h^{-1}\ee^{-h/N})\lesssim N^\gamma \tr(h^{-1-\gamma})\,.
    \]
   Recalling the condition in Remark \ref{rmk:thetha_s2}, in order to have $\tr(h^{-1-\gamma}) <\infty $  we require  $\gamma =s-1 $.   Hence, using that the integral of $v$ is one, as well as  \eqref{eq:boundG}, 
    we can bound the second term on the right-hand side of \eqref{eq:boundVNeps} as 
	\[
	\begin{split}
		&\frac{(S^2-1)}{2}\int \d x \,\d\tilde x\, v^\eps (x-\tilde x)G_N(x,x)G_N(\tilde x, \tilde x) \\
        &\hspace{1cm}\leq  \frac{(S^2-1)}{2}\log N\int \d x \, G_N(x,x) \lesssim \frac{(S^2-1)}{2}N^{\gamma}\log N\,,  \\
	\end{split}
	\]
    which concludes the proof.       
\end{proof}
The last key lemma to show Proposition \ref{prop:nelson} is an estimate for the $L^2$-norm of the difference $V_N^\eps - V_M^\eps$, which is the equivalent of \cite[Lemma 4.5]{FKSS_24}. Also in this case, it is important to use sharper estimates on the decay of the Green function.
\begin{lemma}\label{lm:VMVN}
	 For any $0< N \leq M \leq \infty$, and $\theta > 2$ we have
	\begin{equation}\label{eq:normVM-VN}
	\|V_M^\eps -V_N^\eps\|_{L^2} \lesssim N^{-\frac12+\frac1\theta +}\,.
	\end{equation}
\end{lemma}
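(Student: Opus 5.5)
We first write $V^\eps_M - V^\eps_N$ as a sum of Wick-ordered polynomials and compute its $L^2$ norm exactly by Wick's theorem. Since $\phi_M \overset{\dd}{=} \phi_N + \psi_{N,M}$ with independent summands, and since Wick ordering with respect to the covariance $G_M = G_N + (G_M - G_N)$ factorises, we have
\[
\wick{|\phi_M(x)|^2|\phi_M(\tilde x)|^2} - \wick{|\phi_N(x)|^2|\phi_N(\tilde x)|^2}
\]
equal to a finite sum of terms in which at least one factor is $\psi_{N,M}$ or $\overline\psi_{N,M}$. Equivalently, and more cleanly, we compute
\[
\|V_M^\eps - V_N^\eps\|_{L^2}^2 = \E|V_M^\eps|^2 - 2\re \E[V_M^\eps \overline{V_N^\eps}] + \E|V_N^\eps|^2 .
\]
By Wick's theorem each term is
\[
\frac14\int \d x\,\d\tilde x\,\d y\,\d\tilde y\, v^\eps(x-\tilde x)v^\eps(y-\tilde y)\, \Phi(G_a,G_b),
\]
where $\Phi$ is the sum over complete pairings of products of four covariances between $\{x,\tilde x\}$ and $\{y,\tilde y\}$. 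Here $\E[\phi_M\overline{\phi_N}] = G_{\tilde N}$ with $\tilde N$ defined by $\ee^{-\lambda/\tilde N} = \ee^{-\lambda/(2M)-\lambda/(2N)}$. Telescoping gives a sum of terms, each containing at least one factor $D \deq G_{M'} - G_{N'}$ with $N' \geq N/2$ or so, times three Green functions $G_{K}$ with $K \le \infty$. Hence
\[
\|V_M^\eps - V_N^\eps\|_{L^2}^2 \lesssim \sup \int \d x\,\d\tilde x\,\d y\,\d\tilde y\, v^\eps(x-\tilde x)v^\eps(y-\tilde y)\,|D(x,y)|\,|G_{K_1}(\tilde x,\tilde y)|\,|G_{K_2}(x,y)|\,|G_{K_3}(\tilde x,\tilde y)|,
\]
together with the analogous cross-pairings $(x,\tilde y),(\tilde x,y)$.

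We bound this by Cauchy--Schwarz after the substitution $\tilde x = x+\eps h$, $\tilde y = y + \eps k$, using $\int |v| <\infty$. Uniformly in the shifts $h,k$, it suffices to bound
\[
\int \d x\,\d y\, |D(x,y)|^2\, G_K(x,y)^2 \quad\text{and}\quad \int \d x\,\d y\, G_{K}(x,y)^4
\]
and their shifted versions, where the latter is finite by the decay bound \eqref{eq:boundG}. The positivity of the kernels $\ee^{-th}(x,y)\geq 0$ (Feynman--Kac, Lemma \ref{Feynman-Kac formula}) lets us bound $|G_K|\le G$ pointwise. The key input is a bound on the high-frequency covariance
\[
D(x,y)=\int_{1/M'}^{1/N'}\d t\,\ee^{-th}(x,y),
\]
with $0\le D\le \int_0^{1/N}\ee^{-th}(x,y)\,\d t$. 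By heat kernel bounds (Gaussian upper bound via Feynman--Kac and $U\ge0$), this is at most $C\bigl(1+|\log(N|x-y|^2)|\bigr)\ee^{-cN|x-y|^2}$, so $D$ is essentially supported in $|x-y|\lesssim N^{-1/2}$.

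The main obstacle is obtaining the precise power $N^{-1/2+1/\theta+}$, which must exploit both the short range of $D$ and the trap. For this we use a trace estimate:
\[
\int\d x\,\d y\, D(x,y)^2 G(x,y)^2 \le \sup_{x}\Bigl(\int \d y\, G(x,y)^{2}\ldots\Bigr)
\]
is too crude. Instead we would combine $\int\d x\,D(x,x)\lesssim \tr(h^{-1}\mathbbm 1_{h\gtrsim N})\lesssim N^{-(1-\gamma')}\tr(h^{-\gamma'})$ type bounds with the exponential off-diagonal decay $\ee^{-c|x-y|\sqrt{\tilde g(x)}}$ of $G$. Splitting space into $|x|\le R$, with contribution $\lesssim R^2 N^{-1}(\log N)^C$ from the local $L^2$ mass of $D$ near the diagonal, and $|x|>R$, with contribution $\lesssim R^{-\theta+2}$ up to logs via \eqref{eq:boundG}, the squared norm is at most $R^2N^{-1+} + R^{2-\theta}$. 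Optimising at $R=N^{1/\theta}$ gives $N^{-1+2/\theta+}$, that is \eqref{eq:normVM-VN} after taking square roots. If the exterior region decays faster than this heuristic, the bound only improves. Finally, the case $M=\infty$ follows by letting $M\to\infty$, since all bounds are uniform in $M$ and in $\eps$.
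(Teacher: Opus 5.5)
The step that fails is the Cauchy--Schwarz reduction. After your telescoping, every contribution to $\|V^\eps_M-V^\eps_N\|_{L^2}^2$ is \emph{linear} in the high-frequency covariance $D$, for instance $\int v^\eps(x-\tilde x)v^\eps(y-\tilde y)\,|D(x,y)|\,G(x,y)\,G(\tilde x,\tilde y)^2$. You bound this by $(\int D^2G^2)^{1/2}(\int G^4)^{1/2}$. Your $R$-splitting (the ``local $L^2$ mass of $D$'') computes $\int D^2G^2\lesssim N^{-1+2/\theta+}$. Combining the two gives only $\|V^\eps_M-V^\eps_N\|_{L^2}^2\lesssim N^{-1/2+1/\theta+}$, i.e.\ $\|V^\eps_M-V^\eps_N\|_{L^2}\lesssim N^{-1/4+1/(2\theta)+}$. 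Your last sentence treats the bound on $\int D^2G^2$ as a bound on the squared norm, and that is where a factor two in the exponent is lost. The loss matters: this exponent fixes the admissible range of $\gamma$ in the proof of Proposition \ref{prop:nelson}, and hence the condition on $\theta$. Separately, drop the displayed trace bound, which is false. For $M=\infty$ one has $\int\d x\,D(x,x)=\tr\bigl(h^{-1}(1-\ee^{-h/N})\bigr)=\infty$, and $\tr h^{-\gamma'}=\infty$ for $\gamma'<1$ when $d=2$.

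The missing idea is to estimate the term linear in $D$ directly. This is what the paper does when it bounds $\int\d x\,\d y\,G_N(x,y)\,q_N(y,x)$ with $q_N\gtrsim|G_M-G_N|$.

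In $d=2$ the Green function is only logarithmically singular, $G(x,y)\lesssim 1+\log_+|x-y|^{-1}$. Your bound $D\lesssim(1+|\log(N|x-y|^2)|)\ee^{-cN|x-y|^2}$ confines $D$ to $|x-y|\lesssim N^{-1/2}$. Hence $\int\d y\,|D(x,y)|\,G(x,y)^3\lesssim N^{-1}(\log N)^3$ for every $x$, and integrating over $|x|\le N^{1/\theta}$ gives $N^{-1+2/\theta+}$. For $|x|>N^{1/\theta}$, use $|D|\le G$ and $\int\d y\,G(x,y)^4\lesssim\tilde g(x)^{-1}$, which again gives $N^{-1+2/\theta}$. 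This is the paper's estimate of (A) and (B), split at $|x-y|=N^{-1/2}$ and at $\tilde g(x)=N$.

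The $\eps$-shifted factors can be handled uniformly in $\eps$ by H\"older's inequality with respect to the measure $|D|\,|v^\eps|\,|v^\eps|$. This works because the $|v^\eps|$-average of $G(\tilde x,\tilde y)^3$ is $\lesssim(1+\log_+|x-y|^{-1})^3$.

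With this replacement your route goes through. You use the natural coupling through the $X_k$ (cross-covariance $G_{\tilde N}$) and a first-order telescoping of the full second moment. The paper instead splits $\phi_M\overset{\dd}{=}\phi_N+\psi_{N,M}$ independently into mixed Wick monomials. This is a harmless reorganisation: both produce terms with at least one factor dominated by $\int_0^{1/N}\ee^{-th}\,\d t$, and the analytic core is the same.
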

\begin{remark} \label{rem:theta_condition}
The bound in \eqref{eq:normVM-VN} is essentially optimal. If we compare with the analogous estimate for the homogeneous case from \cite[Lemma 4.5]{FKSS_24}, the same $L^2$-norm was controlled by the stronger bound $N^{-1+}$ by regularising the Green function of $\kappa-\Delta/2$ with a smooth mollifier. The reason for our weaker bound is twofold. First, we use the exponential regularisation function $\vartheta(x)=\ee^{-x}$ of the Green function, which is very convenient throughout our analysis, but leads to an optimal bound $N^{-1/2+}$ even in the homogeneous case on the torus. Second, if we regularise $G$ by convolution with a compactly supported smooth function, as in \cite{{FKSS_24}}, we can apply the second-order expansion from \cite[Lemma B.3]{FKSS_24}, which yields the stronger bound $N^{-1+}$. However, this expansion requires $G$ to be locally $C^2$, which is in general not true under Assumption \ref{assumption:P and D for U}. Thus, unless we make an additional smoothness assumption on the potential $U$, we believe that Lemma \ref{lm:VMVN} cannot be improved.

Assuming a smooth enough potential $U$ and regularizing $G$ by convolution with a smooth compactly supported function, we believe that our condition $\theta > 10$ can be improved to $\theta > 4$.
\end{remark}

\begin{proof}
	Following \cite[Section  4.2]{FKSS_24}, we have that the difference $V_M^\eps -V_N^\eps$, for $N \leq M$ can be rewritten as 
	\begin{equation}\label{eq:VN-VM}
		V_M^\eps -V_N^\eps = \sum_{a,\tilde a, b,\tilde b \in \{0,1\}} \mathbbm{1}_{a+\tilde a+b +\tilde b >0} V_{N,M}^\eps (a,\tilde a, b, \tilde b)
	\end{equation}
	where 
	\[\begin{split}
		V_{N,M}^\eps(a,\tilde a, b, \tilde b) &\deq\frac12 \int \d x\d\tilde x\, v^\eps(x-\tilde x)\vcentcolon\phi_N(x)^{1-a}\phi_N(\tilde x)^{1-\tilde a} \overline{\phi}_N(x)^{1-b} \overline\phi_N(\tilde x)^{1-\tilde b} \\
		&\hspace{5cm} \times \psi_{N,M}(x)^a\psi_{N,M}(\tilde x)^{\tilde a}\overline\psi_{N,M}(x)^b\overline\psi_{N,M}(\tilde x)^{\tilde b}\vcentcolon\,.\\
	\end{split}	\]
	Hence, by \eqref{eq:VN-VM} and Minkowski's inequality, it suffices to estimate
	\(
	 \mathbb E[|V_{N,M}^\eps(a,\tilde a, b, \tilde b)|^2] 
	\)
	for any fixed $a,\tilde a, b,\tilde b \in \{0,1\}$ satisfying $a+\tilde a+b+\tilde b >0$. Using Proposition \ref{prop:GN}, we have 
    \[\begin{split} |G_M(x,y) - G_N (x,y)| \lesssim q_N(x,y)\deq \Bigl[\log(N|x-y|^2)^{-1}\Bigr]_+  \wedge \frac{\ee^{-N|x-y|^2}}{N|x-y|^2}\,. 
    \end{split}\]
    Then using Wick's theorem and Young's inequality we obtain
\begin{equation}\label{eq:meanVNM}
\begin{split}
   \mathbb E\bigl[|V_{N,M}^\eps(a,\tilde a, b, \tilde b)|^2\bigr]  & = \int \d{x}\,\d{\tilde x}\,\d{y}\,\d{\tilde y} \,v^\eps(x-\tilde x)v^\eps(y-\tilde y)G_N(x,y)G_N(\tilde x,\tilde y)\\
   &\hspace{1.5cm} \times \Bigl[\bigl(G_M(\tilde y,x)-G_N(\tilde y,x)\bigr)\bigl(G_M(y,\tilde x)-G_N( y,\tilde x)\bigr) \\
   &\hspace{2cm}  + \bigl(G_M( y,x)-G_N( y,x)\bigr)\bigl(G_M(\tilde y,\tilde x)-G_N( \tilde y,\tilde x)\bigr)\Bigr] \\
   &\lesssim \int \d{x}\,\d{\tilde x}\,\d{y}\,\d{\tilde y}\, v^\eps(x-\tilde x)v^\eps(y-\tilde y)G_N(x,y)G_N(\tilde x,\tilde y)\\
   &\hspace{2.5cm} \times \Bigl[q_N(\tilde y,x)q_N(y,\tilde x) + q_N(y,x)q_N(\tilde y,\tilde x)\Bigr]\\ 
   & \lesssim \int \d{x}\,\d{\tilde x}\,\d{y}\,\d{\tilde y} \,|v^\eps(x-\tilde x)||v^\eps(y-\tilde y)|G_N(x,y)^2q_N(y,x)^2 \\ 
   &\quad + \int \d{x}\,\d{\tilde x}\,\d{y}\,\d{\tilde y} \,|v^\eps(x-\tilde x)||v^\eps(y-\tilde y)|G_N(x,y)^2q_N(\tilde y,x)^2 \\
   & \lesssim \int \d{x}\,\d{y}\, G_N(x,y)^2q_N(y,x)^2 \\
   &\quad+ \int \d{x}\,\d{y}\,\d{\tilde y}\,|v^\eps(y-\tilde y)| G_N(x,y)^2q_N(\tilde y,x)^2\,.
\end{split}
\end{equation}
Using the bound on $q_N$ and Proposition \ref{prop:GN}, the first term on the right-hand side of \eqref{eq:meanVNM} can be bounded as 
\[
\begin{split}
    &\int \d{x}\,\d{y} \, G_N(x,y)q_N(y,x) \\ 
    &\; \lesssim \int \d{x}\,\d{y}\, \Bigl[\Bigl(\Bigl[\log(|x-y|\sqrt{\tilde g(x)\vee \tilde g(y)})^{-1}\Bigr]_+ \wedge \log N\Bigr)\vee 1\Bigr] \ee^{-|x-y|\sqrt{\tilde g(x)\vee \tilde g(y)}}  \\
    & \hspace{8cm} \times \Bigl[\Bigl[\log(N|x-y|^2)^{-1}\Bigr]_+  \wedge \frac{\ee^{-N|x-y|^2}}{N|x-y|^2}\Bigr]\\
    & \; \lesssim \int_{|x-y|\leq \frac{1}{ \sqrt N}} \d{x}\,\d{y}\, \Bigl[\Bigl(\Bigl[-\log(|x-y|\sqrt{\tilde g(x)})\Bigr]_+ \wedge \log N\Bigr)\vee 1\Bigr] \ee^{-|x-y|\sqrt{\tilde g(x)}}\\
    &\hspace{9cm} \times   \Bigl[-\log(N|x-y|^2)\Bigr]_+ \\
    & \; + \int_{|x-y|> \frac{1}{
     \sqrt N}} \d{x}\,\d{y}\, \Bigl[\Bigl(\Bigl[-\log(|x-y|\sqrt{\tilde g(x)})\Bigr]_+ \wedge \log N\Bigr)\vee 1\Bigr]\ee^{-|x-y|\sqrt{\tilde g(x)}}  \frac{\ee^{-N|x-y|^2}}{N|x-y|^2} \\
     &\quad = \rm{(A)} +\rm{(B)}\,.
\end{split}
\]
For the term $\rm (A)$ we perform a change of variable setting $|x-y|\sqrt{\tilde g(x)} =|u|$, and we distinguish the cases when $\tilde g(x)/N <1$ and $\tilde g(x)/N\geq 1$. When $\tilde g(x)/N <1$ we have
\[\begin{split}
\rm{(A)} & = \int_{|u|\leq \sqrt{\frac{\tilde g(x)}{ N}}} \d{x}\,\d{y}\,\frac{1}{\tilde g(x)} \Bigl[\Bigl(\Bigl[-\log(|u|)\Bigr]_+ \wedge \log N\Bigr)\vee 1\Bigr] \ee^{-|u|}   \Bigl[-\log(\sqrt{Ng(x)^{-1}}|u|)\Bigr]_+ \\
&\lesssim \frac{\log N}N \int_{|x|\leq N^{\frac{1}{\theta}}} \d{x} \lesssim N^{-1+ \frac{2}{\theta} +}\,.
	\end{split}\]
On the other hand if $\tilde g(x)\geq N$, we can use the exponential decay and the fact that $\tilde g(x) \gtrsim  1 +|x|^\theta$ to have 
\[
{\rm (A)} \lesssim \int_{|x|\geq N^{\frac1\theta}} \d{x}\, \frac{1}{\tilde g(x)}  (\log N)^2 \lesssim  (\log N)^2\int_{ N^{\frac1\theta}}^\infty \d{r}\, r^{1-\theta} \lesssim N^{\frac{2}{\theta}-1 + }\,.
\]
As for the term $\rm (B)$, using $\tilde g(x) \geq 1$ and $|x-y| >1/\sqrt N$ we have
\[
\begin{split}
    {\rm (B) }&\lesssim \log N \int_{|x-y|> \frac{1}{
     \sqrt N}} \d{x}\d{y}\, \ee^{-|x-y|\sqrt{\tilde g(x)}} \frac{\ee^{-N|x-y|^2}}{N|x-y|^2} \\
    &\lesssim \log N \int \d{x}\, \int_{\frac{1}{ \sqrt N}}^\infty \d{r}\, r \ee^{-r\sqrt{\tilde g(x)}}\frac{\ee^{-r^2N}}{Nr^2}\\
    & \lesssim \log N \int_{\frac{1}{ \sqrt N}}^\infty \d{r}\, r \frac{\ee^{-r^2N}}{Nr^{2 +\frac{4}{\theta}}} \lesssim N^{-1+\frac{2}{\theta}+}\,,
\end{split}
\]
where in the third step we used again that $\tilde g(x) \gtrsim  1+|x|^\theta$.
The second term in the last line on the right-hand side of \eqref{eq:meanVNM} can be bounded in a similar way. 
\end{proof}

\begin{proof}[Proof of Proposition \ref{prop:nelson}]
A detailed proof of Proposition \ref{prop:nelson} can be found in \cite[Prop. 4.3]{FKSS_24}. For completeness we briefly explain the main changes. For any $N \geq 1$, and by Lemma \ref{lm:boundVN} we have
\[
\mathbb{P}\bigl(\ee^{-V^\eps}>t \bigr) = \mathbb{P}\bigl(V^\eps - V_N^\eps < -\log t -V^\eps_N\bigr) \leq \mathbb{P}\bigl(V^\eps - V_N^\eps < -(\log t - CN^\gamma \log N)\bigr)\,.
\]
   Choosing $N \geq 1$ so that 
   \[
   \log t - CN^\gamma \log N =1
   \]
   for $t$ large enough. Using the fact that $V^\eps_M$ is in the 4th polynomial chaos combined with hypercontractivity (see e.g.\ \cite[Appendix A]{FKSS_24}) and Lemma \ref{lm:VMVN}, we have for any $0< N \leq M <\infty$ and $p \in 2\bN$
   \[
   \|V_M^\eps -V_N^\eps\|_{L^p} \lesssim \frac{p^2}{N^{\frac12 -\frac1\theta-}}
   \]
   which holds also for $V^\eps_M$ replaced by $V^\eps$ by Lemma \ref{lm:Veps-V}. Hence, by Chebyshev's inequality, for any $p \in 2\bN$ 
   \[\begin{split}
   \mathbb{P}\bigl(\ee^{-V^\eps}>t \bigr) \leq \mathbb{E}\bigl[|V^\eps -V^\eps_N|^p] \leq \biggl(\frac{Cp^2}{N^{\frac12-\frac1\theta-}} \biggr)^p &\leq \Bigl(Cp^2(\log t)^{(-\frac{1}{\gamma}+)(\frac12-\frac1\theta)}\Bigr)^p \\
   & \leq \ee^{-c (\log t)^{(-\frac{1}{2\gamma}+)(\frac12-\frac1\theta)}}
   \end{split}\]
   where in the last step we optimised over $p \in 2\mathbb{N}$. For $\gamma < \frac14 - \frac1{2\theta}$ and $t$ large enough the statement of Proposition \ref{prop:nelson} holds.  From \eqref{eq:tracehs} this restriction on $\gamma$ implies an upper bound on $s< \frac{5}{4} -\frac{1}{2\theta} $. Together with the lower bound $s> 1 +\frac2\theta$ in Remark \ref{rmk:thetha_s2} we obtain the condition $\theta >10$.   
\end{proof}
Finally, the integrability of $\ee^{-V}$ follows from Proposition \ref{prop:nelson} and Lemma \ref{lm:Veps-V}. To conclude this subsection, we mention that the proof of the first statement in Proposition \ref{prop:convergence classical field theories}, namely the convergence of the partition functions $\zeta^{W^\eps} \to \zeta$, follows from the following result, whose proof can be found in \cite[Lemma 4.6]{FKSS_24}.
\begin{lemma}
    For any $1 \leq p <\infty$ we have $\|\ee^{-V} - \ee^{-W^\eps}\|_{L^p}\to 0$ as $\eps \to 0$. 
\end{lemma}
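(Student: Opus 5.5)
The idea is to split $\ee^{-V} - \ee^{-W^\eps}$ through the intermediate interaction $V^\eps$,
\[
\|\ee^{-V} - \ee^{-W^\eps}\|_{L^p} \leq \|\ee^{-V} - \ee^{-V^\eps}\|_{L^p} + \|\ee^{-V^\eps} - \ee^{-W^\eps}\|_{L^p}\,,
\]
and to treat each term by combining three ingredients: convergence in probability, which follows from the $L^2$ estimates of Lemmas \ref{lm:Veps-V} and \ref{lm:Veps-Weps}; the elementary inequality $\abs{\ee^{-a}-\ee^{-b}} \leq \abs{a-b}\,(\ee^{-a}+\ee^{-b})$; and uniform integrability of $\ee^{-V}$, $\ee^{-V^\eps}$, $\ee^{-W^\eps}$ in every $L^q$.

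For $V$ and $V^\eps$, uniform integrability comes from Proposition \ref{prop:nelson}. The tail bound $\P(\ee^{-V^\eps}>t) \lesssim \exp\{-c(\log t)^{\delta}\}$ holds with a fixed $\delta>0$ uniformly in $\eps$, and it decays faster than any power of $t$. Hence $\sup_\eps \E[\ee^{-qV^\eps}] + \E[\ee^{-qV}] < \infty$ for every $q<\infty$.

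By Hölder, the first term is bounded by
\[
\|V-V^\eps\|_{L^{2p}}\,\bigl(\|\ee^{-V}\|_{L^{2p}}+\|\ee^{-V^\eps}\|_{L^{2p}}\bigr)\,.
\]
Both $V$ and $V^\eps$ lie in the (inhomogeneous) Wick chaos of order at most four with respect to $\P$. Hypercontractivity therefore gives $\|V-V^\eps\|_{L^{2p}} \lesssim_p \|V-V^\eps\|_{L^2}$, and this tends to $0$ by Lemma \ref{lm:Veps-V}.

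The second term needs $\sup_\eps\|\ee^{-W^\eps}\|_{L^q}<\infty$, which I expect to be the main obstacle, since Proposition \ref{prop:nelson} is stated only for $V^\eps$. Write $W^\eps = V^\eps + D^\eps$ with $D^\eps \deq W^\eps - V^\eps$. This difference is a polynomial of degree two in the field (up to constants), so it belongs to the chaos of order at most two. Lemma \ref{lm:Veps-Weps} gives $\|D^\eps\|_{L^2}\to 0$, and hypercontractivity upgrades this to $\|D^\eps\|_{L^r} \leq (r-1)\|D^\eps\|_{L^2}$ for all $r \geq 2$. A second-chaos variable with small $L^2$ norm has exponential moments: summing the Taylor series yields $\E[\ee^{\lambda|D^\eps|}] \leq C$ as long as $\lambda\|D^\eps\|_{L^2}$ is small, and since $\|D^\eps\|_{L^2}\to 0$ this holds for any fixed $\lambda$ once $\eps$ is small enough. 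Then Hölder gives
\[
\E[\ee^{-qW^\eps}] \leq \E[\ee^{-2qV^\eps}]^{1/2}\,\E[\ee^{2q|D^\eps|}]^{1/2}\,,
\]
which is uniformly bounded for $\eps$ small. I would check that the moment growth from hypercontractivity, of order $r^{r}$ for the $r$-th moment of a second-chaos variable, indeed sums to a finite exponential moment, which it does for small $\lambda\|D^\eps\|_{L^2}$.

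With this in hand, the second term is bounded by
\[
\|D^\eps\|_{L^{2p}}\bigl(\|\ee^{-V^\eps}\|_{L^{2p}}+\|\ee^{-W^\eps}\|_{L^{2p}}\bigr) \to 0\,.
\]
Combining the two terms concludes the proof.
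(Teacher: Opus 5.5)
Your proof is correct and is essentially the argument the paper relies on (it defers to \cite[Lemma 4.6]{FKSS_24}). You split through $V^\eps$, take uniform $L^q$ bounds on $\ee^{-V}$ and $\ee^{-V^\eps}$ from Proposition \ref{prop:nelson}, and transfer integrability to $\ee^{-W^\eps}$ via hypercontractivity applied to the second-chaos difference $W^\eps-V^\eps$, whose $L^2$ norm vanishes by Lemma \ref{lm:Veps-Weps}.

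One small correction: a tail $\exp\{-c(\log t)^{\delta}\}$ decays faster than every power of $t$ only if $\delta>1$, not merely $\delta>0$. It is precisely the restriction $\gamma<\frac14-\frac1{2\theta}$ in Proposition \ref{prop:nelson} that makes the exponent of $\log t$ exceed $1$.
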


We conclude this subsection, by noting that the results above also imply the integrability of the field $W^\eps_\eta$ \eqref{eq:def:regularised version of the field Weps}. Indeed proceeding similarly as for $W^\eps$, we need to introduce a field $V^\eps_\eta$, analogously as $V^\eps$, 
\[
V^\eps_\eta \deq \frac12 \int \, \d x \,\d {\tilde x} \,\varphi(\eta x)v^\eps (x-\tilde x)\varphi(\eta\tilde{x}) \, \wick{|\phi(x)|^2 |\phi(\tilde x)|^2}
\]
and prove the estimates for the $L^2(\mathbb{P})$ norm of $V^\eps_\eta - W^\eps_\eta$. For conciseness, we omit the details here.

  \subsection{Correlation functions}
In this subsection, we prove the second claim of Proposition \ref{prop:convergence classical field theories}.  The arguments are analogous to those of \cite[Section 4.4]{FKSS_24}; the main difference is the control of the large scale behaviour of the correlation functions, which we obtain through a representation derived from repeated Gaussian integration by parts. In particular, in our results we can show the local uniform convergence of the Wick-ordered correlation functions with respect to a decaying weight function, which is a stronger result than the desired convergence in $L^1 \cap L^\infty$.

To define the weight function, let \(y_1, \dots , y_\ell \in \mathbb{R}^d\) for \(\ell\geq 1\) and consider the graph of vertices \(\{y_i\}_{i=1}^\ell\) that are connected when \(\abs{y_i-y_j}\leq 1\). Let \(\sim\) be the equivalence relation between two vertices \(y_1, y_2 \) such that \(y_1 \sim y_2\) when \(y_1\) and \(y_2\) belong to the same connected component. For  \(\mathbf{y}\in \mathbb{R}^{d\ell}\), $\theta >0$, we define the weight function  
\[\Upsilon_{\theta}(\mathbf{y})\deq\prod_{ x \in \{y_i\}_i/\sim}\frac{1}{1 + \abs{x}^\theta}\,, \] and for $\mathbf{x}, \tilde{\mathbf{x}} \in \bR^{dp}$ the symmetrised weight  
\[ \Upsilon_{\theta}(\mathbf{x}, \tilde{\mathbf{x}})\deq\sum_{\pi_1, \pi_2, \pi_3 \in S_p} \prod_{i=1}^p \Upsilon_{\theta}(x_i, x_{\pi_1(i)}, \tilde{x}_{\pi_2(i)}, \tilde{x}_{\pi_3(i)})\,.\] 
It can be checked that, for any \(\theta\) satisfying the condition in Remark \ref{rmk:thetha_s2}, we have \(\Upsilon_{\theta}\in L^1( \mathbb{R}^{2dp})\cap L^\infty( \mathbb{R}^{2dp})\).

The claim of  \eqref{eq:diff gamma-gammaW with decay} in Proposition \ref{prop:convergence classical field theories} is now a consequence of the following result.
    \begin{prop} \label{prop:corrfunctionwUpsilon}
    Let $d=2$. Under the assumptions of Theorem \ref{thm:main_result}, for any \(p \in \mathbb{N}\), $\mathbf{x}, \tilde{ \mathbf{x}} \in \mathbb R^d$  and \(c>0\), we have 
\begin{equation}\label{eq:decaywUpsilon}
\lim_{\substack{\eps\to 0} } \sup _{\mathbf{x}, \tilde{\mathbf{x}}} \frac{|(\widehat{\gamma}_p)_{\mathbf{x}, \tilde{\mathbf{x}}}-(\widehat{\gamma}_p^{W^\eps})_{\mathbf{x}, \tilde{\mathbf{x}}}|}{\Upsilon_{\theta-c}(\mathbf{x}, \tilde{\mathbf{x}})}=0\,.
\end{equation}
\end{prop}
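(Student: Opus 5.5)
We plan to follow the scheme of \cite[Section 4.4]{FKSS_24}, and to add a weighted control of the large-scale behaviour obtained from repeated Gaussian integration by parts. First we reduce to non-Wick-ordered quantities. By \eqref{gamma_p_link}, $\widehat\gamma_p$ and $\widehat\gamma^{W^\eps}_p$ are finite combinations of $\mathcal P_p(\gamma_k\otimes\gamma^0_{p-k})\mathcal P_p$ with the same free factors $\gamma^0_{p-k}$. A cleaner route is to represent the Wick-ordered correlation directly: we apply Gaussian integration by parts with respect to $\P$ to each pair $\bar\phi(\tilde x_i)$, $\phi(x_i)$ in $\E[\wick{\prod\bar\phi(\tilde x_i)\phi(x_i)}\,\ee^{-V}]$. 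This gives
\[
(\widehat\gamma_p)_{\mathbf x,\tilde{\mathbf x}}=\frac1\zeta\int \prod_{i=1}^p \d y_i\,\d\tilde y_i\, G(x_i,y_i)G(\tilde y_i,\tilde x_i)\,\E\biggl[\prod_{i=1}^p\frac{\partial}{\partial\phi(y_i)}\frac{\partial}{\partial\bar\phi(\tilde y_i)}\ee^{-V}\biggr].
\]
The same formula holds with $V$ replaced by $W^\eps$, and both are made rigorous through the regularisations $V_N$, $W^\eps_{N}$, followed by the limit $N\to\infty$. Expanding the derivatives of $\ee^{-V}$ produces a finite sum of terms of the form $\ee^{-V}$ times products of derivatives of $V$. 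Each of these is a Wick polynomial in which the derivative variables $y_i,\tilde y_i$ are either pinned together, as for the local quartic $V$ (where derivatives give $\wick{\phi\bar\phi^2}$-type factors at a common point), or joined through $v^\eps$ for $W^\eps$.

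Next we estimate each term. We use Hölder's inequality in $\P$, the uniform integrability of $\ee^{-V}$ and $\ee^{-W^\eps}$ from Proposition \ref{prop:nelson} together with hypercontractivity, and the lower bound $\zeta,\zeta^{W^\eps}\gtrsim 1$, which follows from Jensen's inequality since $\E V=\E W^\eps=O(1)$. This reduces the problem to $L^q(\P)$ bounds on the Wick monomials, which are computed by Wick's theorem as integrals of products of $G$. The decay is then supplied by Proposition \ref{prop:GN} via the bound $G(x,y)\lesssim(\log|x-y|^{-1}\vee1)\,\ee^{-c|x-y|\sqrt{\tilde g(x)+\tilde g(y)}}$. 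The external legs $G(x_i,y_i)$ force $y_i$ to lie within $O(1)$ of $x_i$ up to exponential tails, and the internal vertices tie $y_i$ to $\tilde y_j$. Integrating out the internal vertices against the local density weight $\tilde g^{-1}$ (as in the computation of (A) in Lemma \ref{lm:VMVN}) gives a factor $(1+|z|^{\theta-c})^{-1}$ for each cluster of nearby points. Summing over the pairings then produces exactly the symmetrised weight $\Upsilon_{\theta-c}(\mathbf x,\tilde{\mathbf x})$; the loss $c$ absorbs the logarithms and the replacement of $g$ by $\tilde g$.

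For the convergence we write the difference of the two representations as a telescoping sum. Each resulting term contains at least one difference factor: $\ee^{-V}-\ee^{-W^\eps}$, a derivative difference $\partial V-\partial W^\eps$, or $\zeta^{-1}-(\zeta^{W^\eps})^{-1}$. The first is small in every $L^q$ by the last lemma of the previous subsection. The third is small by \eqref{eq:Wepspartitionfunc}. For the derivative differences we use the splitting inside and outside a ball of radius $R(\eps)$, exactly as in Lemmas \ref{lm:Veps-Weps} and \ref{lm:Veps-V}. Inside the ball, the mean value theorem with the gradient bound \eqref{eq:gradG} gives a factor $\eps^{c}R^C$. Outside the ball, the decay of $G$ gives $R^{-c}$ times the weight. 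Choosing $R=\eps^{-\delta}$ then yields $o(1)\cdot\Upsilon_{\theta-c}$ uniformly in $\mathbf x,\tilde{\mathbf x}$.

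The main obstacle is the uniformity of the weighted bound. The term $-\int\tau^\eps\wick{|\phi|^2}$ and the smearing by $v^\eps$ mean that the derivatives of $W^\eps$ are not localised at single points, so we must check that the cluster structure defining $\Upsilon$ survives the convolution and that the $\log\eps^{-1}$ divergences of $\tau^\eps$ cancel against $E^\eps$ and Wick ordering. Our first attempt would be to express $\partial W^\eps$ as $\partial V^\eps$ plus a term of the type $W^\eps-V^\eps$ appearing in Lemma \ref{lm:Veps-Weps}, which is already shown to be small. We would then treat $V^\eps$ as a smeared local vertex whose range $\eps\le1$ does not affect the clusters at the unit scale used in $\Upsilon$.
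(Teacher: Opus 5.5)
You reproduce the paper's architecture. Your integration-by-parts formula is Proposition \ref{prop:DifferentialRepgammap}, and the Leibniz/Hölder/hypercontractivity reduction to $L^2$ bounds on single derivative factors is Lemma \ref{lm:boundWickgammax}. For the convergence, the paper does not telescope with weights. It combines an $\eps$-uniform weighted bound with the unweighted uniform convergence of Proposition \ref{prop:convgammaVgammaWeps}, and splits $|\mathbf x|+|\tilde{\mathbf x}|\gtrless R$ with $R$ fixed before $\eps\to0$. This makes your quantitative $\eps^cR^C$ control of derivative differences, and the choice $R(\eps)$, unnecessary. The step that fails, in your proposal and equally in the paper's Lemma \ref{lm:boundWickgammax}, is ``summing over the pairings then produces exactly the symmetrised weight $\Upsilon_{\theta-c}$''.

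The problem is the Leibniz terms with fewer than $p$ derivative factors. In such a term one factor carries up to four external points, and it gives only one power $(1+|z|^{\theta})^{-1}$ for its whole cluster. By contrast, $\Upsilon_\theta(\mathbf x,\tilde{\mathbf x})$ is a product of $p$ four-point weights, so it is $\asymp(1+|z|^\theta)^{-p}$ when all points sit at $z$.

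Concretely, take $U=1+|x|^\theta$, $p=2$ and $\mathbf x=\tilde{\mathbf x}=\mathbf z\deq(z,z)$. The term with all four derivatives on one copy of $V$ is deterministic, and the multi-factor terms are $O(|z|^{-2\theta+\delta})$ for any $\delta>0$. Hence
\[
(\widehat\gamma_2)_{\mathbf z,\mathbf z}=-2\int \dd u\, G(z,u)^4+O\bigl(|z|^{-2\theta+\delta}\bigr)\,,\qquad \int \dd u\, G(z,u)^4\asymp |z|^{-\theta}\,,
\]
while $\Upsilon_{\theta-c}(\mathbf z,\mathbf z)\asymp|z|^{-2(\theta-c)}$. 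For $W^\eps$ the corresponding one-factor term is $-2\int\dd u\,\dd\tilde u\,v^\eps(u-\tilde u)G(z,u)^2G(z,\tilde u)^2=O(\eps^{-2}|z|^{-2\theta})$. So for each fixed $\eps$ the ratio in \eqref{eq:decaywUpsilon} grows like $|z|^{\theta-2c}$, and the supremum is infinite whenever $c<\theta/2$. The statement therefore cannot be proved with $\Upsilon$ as defined.

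What the Leibniz--Hölder argument actually gives is a bound by $\sum_P\prod_{B\in P}\Upsilon_\theta(B)$. Here the sum runs over partitions $P$ of the $2p$ external points into blocks of size at most four. This weight is still in $L^1\cap L^\infty$, and its ratio at exponents $\theta$ and $\theta-c$ still vanishes as $|\mathbf x|+|\tilde{\mathbf x}|\to\infty$. With this weight, either your telescoping or the paper's tail/compact argument yields \eqref{eq:diff gamma-gammaW with decay}.
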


We now prove Proposition \ref{prop:corrfunctionwUpsilon}. To this end, we start by considering the probability space consisting of elements \(X=(X_k)_{k \in \mathbb{N}}\) with \(X_k \in \mathbb{C}.\) We define the first order differential operators 
\begin{equation}
    L_{N, x}\deq\sum_{k \in \mathbb{N}} \frac{1}{\sqrt{\lambda_k}} \sqrt{\vartheta(\delta \lambda_k )} u_k(x) \partial_{\bar{X}_k}, \quad \bar{L}_{N, x}\deq\sum_{k \in \mathbb{N}} \frac{1}{\sqrt{\lambda_k}} \sqrt{\vartheta(\delta \lambda_k )} \bar{u}_k(x) \partial_{X_k} .
\end{equation}
over the space of random variables \(f(X)\), where \(f\) is smooth in the sense that all its partial derivatives exist. These operators satisfy $L_{N, x}\phi_{N}(y)=0$ and $L_{N,x}\bar{\phi}_{N}(y)=G_{N}(x,y)$
and similarly for \(\bar{L}_{\delta, x}\). In the sense of distributions, we also have  $L_{N, x}\phi(y)=0$ and $L_{N,x}\bar{\phi}(y)=G_{2N}(x,y)$,
where we use the regularisation function \(\vartheta(x) \deq \ee^{-x}\).

Using these differential operators, we obtain the following representation for the Wick ordered correlation functions.

\begin{prop}\label{prop:DifferentialRepgammap} We have
    \begin{equation*}
        (\widehat{\gamma}_{N, p})_{\mathbf{x}, \tilde{\mathbf{x}}}
=\frac{1}{\zeta} \mathbb{E}\Bigl[\bar{L}_{N, \tilde{x}_1} \cdots \bar{L}_{N, \tilde{x}_p} L_{N, x_1} \cdots L_{N, x_p} \mathrm{e}^{-V}\Bigr]\,.
    \end{equation*}
    The same statement holds for \(V\) replaced with \(W^\eps\). 
\end{prop}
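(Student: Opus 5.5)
The plan is to obtain the representation from repeated Gaussian integration by parts in the coordinates $X_k$, working first with a finite truncation and then removing it. Here $\widehat{\gamma}_{N,p}$ denotes the Wick-ordered correlation function built from the regularised fields, with the Wick ordering taken relative to the covariance $G_N$ of the regularised field (compare \eqref{eq:phiGdelta}); note $\vartheta(\delta\lambda_k)$ in the definition of $L_{N,x}$ with $\delta=1/N$. Since $X$ is a family of independent standard complex Gaussians, the basic identity is $\mathbb{E}[X_k F] = \mathbb{E}[\partial_{\bar X_k} F]$ for smooth $F$ of polynomial growth, together with its conjugate. Summing this against the coefficients $\sqrt{\vartheta(\lambda_k/N)/\lambda_k}\,u_k(x)$ gives
\[
\mathbb{E}\bigl[\phi_N(x) F\bigr] = \mathbb{E}\bigl[L_{N,x} F\bigr], \qquad \mathbb{E}\bigl[\bar\phi_N(\tilde x) F\bigr] = \mathbb{E}\bigl[\bar L_{N,\tilde x} F\bigr].
\]
These hold first for $F$ depending on finitely many $X_k$, and then by truncation $k\le K$ and a limit $K\to\infty$.

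The key algebraic step is the operator identity
\[
\mathbb{E}\bigl[\wick{\bar\phi_N(\tilde x_1)\cdots\bar\phi_N(\tilde x_p)\phi_N(x_1)\cdots\phi_N(x_p)}\, F\bigr] = \mathbb{E}\bigl[\bar L_{N,\tilde x_1}\cdots\bar L_{N,\tilde x_p}L_{N,x_1}\cdots L_{N,x_p} F\bigr].
\]
I would prove it by induction on the number of fields, using the Wick recursion
\[
\wick{\phi_N(x)\, P} = \phi_N(x)\wick{P} - \sum_{\bar\phi_N(y)\in P} G_N(x,y)\,\wick{P\setminus \bar\phi_N(y)} .
\]
Integrating by parts the factor $\phi_N(x)$ produces $\mathbb{E}[L_{N,x}(\wick{P}F)]$. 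Since $L_{N,x}\phi_N=0$ and $L_{N,x}\bar\phi_N(y)=G_N(x,y)$, the Leibniz rule splits $L_{N,x}(\wick{P}F)$ into $\wick{P}\,L_{N,x}F$ plus terms in which $L_{N,x}$ hits $\wick{P}$. Those extra terms are exactly the contraction terms $\sum G_N(x,y)\wick{P\setminus\bar\phi_N(y)}F$, because $L_{N,x}$ acts on Wick monomials as a derivation whose contraction is $G_N$. So they cancel, leaving $\mathbb{E}[\wick{P}\,L_{N,x}F]$. The $L$'s commute, since they involve derivatives in distinct variables with constant coefficients, so iterating over all $2p$ fields gives the identity. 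Applying it with $F=\ee^{-V}$ and dividing by $\zeta$ yields the claim; the same argument with $F=\ee^{-W^\eps}$ gives the statement for $W^\eps$.

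The main obstacle is justification, because $\ee^{-V}$ is not a smooth function of finitely many $X_k$, and $V$ is only an $L^2$ limit. I would approximate $V$ by $V_M$, truncated also to finitely many modes and regularised with a spatial cutoff so that it is a polynomial in finitely many variables. For these, $\ee^{-V_M}$ is smooth, and $\ee^{-V_M}$ together with the $L$-derivatives $L_{N,x}\ee^{-V_M}=-(L_{N,x}V_M)\ee^{-V_M}$ are controlled in $L^q$. I would then pass $M\to\infty$ using three inputs: uniform integrability of $\ee^{-V_M}$ from Proposition \ref{prop:nelson}; convergence $V_M\to V$ in every $L^q$, from Lemma \ref{lm:VMVN}, Lemma \ref{lm:Veps-V} and hypercontractivity; and $L^q$ convergence of the derivatives $L_{N,x}V_M$. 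The last holds because for fixed $N$ they lie in finite chaoses with kernels built from the smooth $G_{2N}$. The right-hand side is then interpreted as this limit, which finishes the proof.
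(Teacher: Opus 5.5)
Your proposal is correct and follows essentially the route the paper relies on. The paper states this representation without a separate proof, as in \cite[Section 4.4]{FKSS_24}, and the underlying argument is Gaussian integration by parts $\mathbb{E}[\phi_N(x)F]=\mathbb{E}[L_{N,x}F]$ together with the fact that $L_{N,x}$ acts on Wick monomials of $\phi_N$ as a derivation with contraction $G_N$, which is what your induction establishes.

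The one point to tidy is the limiting step, since the results you cite do not cover the approximants you chose:
\begin{itemize}
\item Proposition \ref{prop:nelson} and Lemma \ref{lm:VMVN} give uniform integrability and $L^q$ convergence for the exponential-cutoff approximants $V_M$, not for sharp finite-mode truncations with a spatial cutoff. The exponential-cutoff approximants are already bounded below, so you can take $M\to\infty$ along them and remove any finite-mode truncation at fixed $M$ by dominated convergence.
\item For $W^\eps$, the integrability of $\ee^{-W^\eps}$ comes from hypercontractivity and Lemma \ref{lm:Veps-Weps}, not from Proposition \ref{prop:nelson} alone.
\end{itemize}
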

Using this representation, we can show the following statement. 
\begin{prop}\label{prop:convgammaVgammaWeps}Uniformly in \(\mathbf{x},\mathbf{\tilde{x}} \in \mathbb{R}^{2p} \),
    \begin{equation}
        \lim_{\substack{\eps\to 0} } (\widehat{\gamma}_p^{W^\eps})_{\mathbf{x}, \tilde{\mathbf{x}}}= (\widehat{\gamma}_p)_{\mathbf{x}, \tilde{\mathbf{x}}}\,.
    \end{equation}
\end{prop}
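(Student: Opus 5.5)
Write the Wick-ordered correlation function of $W^\eps$ via Proposition \ref{prop:DifferentialRepgammap}: applying the Leibniz rule to $\bar L_{N,\tilde x_1}\cdots L_{N,x_p}\ee^{-W^\eps}$ yields $\ee^{-W^\eps}$ times a finite sum of products of derivatives $\bar L_{N,\tilde x_{i_1}}\cdots L_{N,x_{j_k}}W^\eps$. The same expansion holds for $V$. It therefore suffices to prove three things, uniformly in $\mathbf x,\tilde{\mathbf x}$: (a) $\zeta^{W^\eps}\to\zeta$, which is the first part of Proposition \ref{prop:convergence classical field theories}; (b) $\ee^{-W^\eps}\to\ee^{-V}$ in every $L^q(\P)$, which is the lemma following Proposition \ref{prop:nelson}; and (c) every derivative $D W^\eps$ converges in every $L^q(\P)$ to the corresponding derivative $DV$, uniformly in the evaluation points. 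Given (a)--(c), Hölder's inequality, applied to each term of the Leibniz expansion, gives the claim. The regularisation is removed by taking $N\to\infty$ at fixed $\eps$, using that $G_{2N}\to G$. An alternative is to note that $\wh\gamma_p$ is defined through the unregularised field and that $L_{N,x}\bar\phi(y)=G_{2N}(x,y)$ is a kernel which we may take with $N=\infty$ after the Gaussian integration by parts.

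For (c), each derivative of the quartic functional is explicit. Since $L_{x}\bar\phi(y)=G(x,y)$, a derivative of order $k\le 4$ of $V$ is a Wick polynomial of degree $4-k$ in $\phi$, with kernel given by products of $G(x_i,\cdot)$ integrated against $\delta(y-\tilde y)$. For $W^\eps$ the kernel is the same integrated against $v^\eps(y-\tilde y)$, and the Wick-ordering corrections in $W^\eps$ from \eqref{eq:Weps} produce the $\tau^\eps$ terms. Hypercontractivity (Wiener chaos of bounded degree) reduces the $L^q$ bound to $L^2$. By Wick's theorem the $L^2$ norm of the difference is then a finite integral of products of Green functions, of exactly the type already handled in Lemmas \ref{lm:Veps-V} and \ref{lm:Veps-Weps}, except that some legs are now pinned at the external points $x_i$. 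For example, the first derivative $L_xV-L_xW^\eps$ involves $\int \dd y\, G(x,y)\bigl(\wick{|\phi(y)|^2}\phi(y)-\int v^\eps(y-\tilde y)\wick{|\phi(\tilde y)|^2}\phi(y)\bigr)$ together with a $\tau^\eps$ correction. Its variance is bounded using the mean value estimate \eqref{eq:diffG}, the gradient bound \eqref{eq:gradG}, and the logarithmic bound \eqref{eq:boundlog}, by $\eps^{c}$ times a quantity controlled by $\sup_x\int \dd y\,G(x,y)^2$ and similar integrals. These are finite uniformly in $x$ because $G(x,\cdot)$ has only a logarithmic singularity and decays exponentially by \eqref{eq:boundG}.

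Uniformity in the external points is the main obstacle. I would again split each integral into $|y|\le R$ and $|y|>R$ with $R=R(\eps)$ a small power of $\eps^{-1}$. The inner region gives $R^C\eps^{c}$, and the outer region gives $R^{-c}(\log\eps^{-1})^2$ using the decay $\ee^{-c|x-y|\sqrt{\tilde g}}$ and $\int_{|y|>R}\tilde g(y)^{-1}\lesssim R^{2-\theta}$. The point is that the constants must not depend on $x$: the factors $G(x_i,y)$ are bounded in $L^r(\dd y)$, for $r<\infty$, uniformly in $x_i$ by Proposition \ref{prop:GN}, and these are used as the weights. The $\tau^\eps$ terms with one leg at $x$ reduce, after the $\wick{\bar\phi(x)\phi(\tilde x)}-\wick{|\phi(x)|^2}$ rewriting from the proof of Lemma \ref{lm:Veps-Weps}, to the same structure. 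Collecting the three convergences and optimising $R$ then yields uniform convergence of $(\wh\gamma^{W^\eps}_p)_{\mathbf x,\tilde{\mathbf x}}$ to $(\wh\gamma_p)_{\mathbf x,\tilde{\mathbf x}}$.
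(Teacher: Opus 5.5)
Your proposal is correct and is essentially the paper's argument. The paper proves this statement by appealing to \cite[Lemma 4.10 and proof of (3.4)]{FKSS_24}: the derivative representation of Proposition \ref{prop:DifferentialRepgammap}, the Leibniz rule, H\"older's inequality and hypercontractivity, convergence of $\zeta^{W^\eps}$ and of $\ee^{-W^\eps}$, and $L^2$-convergence of the derivatives of $W^\eps$ to those of $V$ uniformly in the external points. The inhomogeneous adaptation is, as you propose, the splitting into $|y|\le R(\eps)$ and $|y|>R(\eps)$ from Lemmas \ref{lm:Veps-Weps} and \ref{lm:Veps-V}, using the Green function bounds \eqref{eq:boundG} and \eqref{eq:gradG}.
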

\begin{proof}
    This statement follows from \cite[Lemma 4.10 and proof of (3.4)]{FKSS_24}, which we can adapt to our setting using techniques from Lemmas \ref{lm:Veps-Weps} and \ref{lm:Veps-V}. 
\end{proof}
The proof of \eqref{eq:diff gamma-gammaW with decay} follows from the following statement. 
\begin{lemma}\label{lm:boundWickgammax}
    Uniformly in \(\eps>0\),
    \[\abs{(\widehat{\gamma}^{W^\eps}_p)_{\mathbf{x}, \tilde{\mathbf{x}}}} \lesssim_v \Upsilon_{\theta}(\mathbf{x}, \tilde{\mathbf{x}})\, .
          \]
\end{lemma}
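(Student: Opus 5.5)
The plan is to start from the differential representation of Proposition \ref{prop:DifferentialRepgammap} (with $V$ replaced by $W^\eps$), regularised at level $N$, and let $N\to\infty$ at the end. Each operator $L_{N,x_i}$ and $\bar L_{N,\tilde x_j}$ acts on $\ee^{-W^\eps}$ through the chain and Leibniz rules. Since $W^\eps$ is a polynomial of degree four in $(\phi,\bar\phi)$, every derivative lands either on a factor $\phi$ or $\bar\phi$ inside an interaction vertex, or on a factor already produced by an earlier derivative. Using $L_{N,x}\bar\phi_N(y)=G_N(x,y)$ and $L_{N,x}\phi_N(y)=0$, we obtain
\[
\bar L_{N,\tilde x_1}\cdots L_{N,x_p}\,\ee^{-W^\eps}=\sum_{\text{graphs}} \pm\, F_{\text{graph}}\,\ee^{-W^\eps}.
\]
Here each term $F$ is an integral over a bounded number of interaction vertices $z_1,\tilde z_1,\dots$, weighted by factors $v^\eps(z_k-\tilde z_k)$ and $\tau^\eps$. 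It contains a product of Green functions $G_N(x_i,z)$ or $G_N(z,\tilde x_j)$ joining each external point to a vertex, Green functions between vertices, and leftover Wick monomials in $\phi_N$ at the vertices (of degree at most $3$ per vertex).

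For each term we apply Hölder:
\[
\abs{\E[F\,\ee^{-W^\eps}]}\le \norm{F}_{L^2}\,\norm{\ee^{-W^\eps}}_{L^2}.
\]
The second factor is bounded uniformly in $\eps$ by Proposition \ref{prop:nelson}, combined with Lemma \ref{lm:Veps-Weps} and hypercontractivity as in \cite[Lemma 4.6]{FKSS_24}. The first factor, after expanding by Wick's theorem, is a deterministic integral of products of Green functions with the kernels $v^\eps$. The vertex integrals are handled with the bounds on $G$ from Proposition \ref{prop:GN}: a logarithmic singularity at the diagonal and decay like $\ee^{-c\abs{x-y}\sqrt{\tilde g(x)\vee\tilde g(y)}}$. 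The convolutions with $v^\eps$ are harmless because $\int v^\eps=1$ and $\log$-singularities are locally integrable uniformly in $\eps$, as in the estimates of Lemma \ref{lm:Veps-Weps}. The $\tau^\eps$-vertices need separate care, since $\tau^\eps\sim\log\eps^{-1}$ diverges. I would group each $\tau^\eps$-vertex with the matching diagonal contraction coming from a $W^\eps$-vertex, so that only the difference $\int v^\eps(x-\tilde x)G(x,\tilde x)(\cdots)(\tilde x)-\tau^\eps(x)(\cdots)(x)$ appears. This difference is bounded uniformly in $\eps$ by the mean-value and gradient estimate \eqref{eq:diffG}, exactly as in the proof of Lemma \ref{lm:Veps-Weps}.

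To extract the weight $\Upsilon_\theta$, the key step is a tree-type bound on the external legs. Each external point $x_i$ is joined through a chain of Green functions to a cluster of vertices. Integrating out a vertex $z$ against $G(x_i,z)G(z,\cdot)\cdots$ and using the exponential decay gives a factor $(1+\abs{x_i}^\theta)^{-1}$ whenever $x_i$ is far from the other points. This uses the estimate
\[
\int \d z\, G(x,z)\,\ee^{-c\abs{z-y}}\lesssim \tilde g(x)^{-1}\ee^{-c'\abs{x-y}},
\]
which follows from $\int G(x,z)\,\d z\lesssim 1/U(x)$ up to the $\gamma$-dilation in $\tilde g$. Points within distance $1$ of each other share one factor, and this produces exactly the connected-component structure in the definition of $\Upsilon_\theta$. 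Symmetrising over pairings of the $x_i,\tilde x_j$ with vertices gives the sum over $\pi_1,\pi_2,\pi_3$.

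The main obstacle I expect is this last step: showing that every graph yields the full $\theta$-power decay for every component, rather than a loss to $\theta-c$. This requires the sharp off-diagonal decay of $G$ with $\tilde g$ evaluated at the larger endpoint, and care in applying Cauchy--Schwarz so as not to split a Green-function chain between two external points. If the full $\theta$ cannot be reached, I would first aim for $\Upsilon_{\theta-c}$, which suffices for \eqref{eq:decaywUpsilon}.
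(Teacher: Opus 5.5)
Your proposal is correct and follows essentially the paper's route: the integration-by-parts representation of Proposition~\ref{prop:DifferentialRepgammap}, a Leibniz expansion, H\"older together with the uniform-in-$\eps$ integrability of $\ee^{-W^\eps}$, a Wick expansion controlled by Proposition~\ref{prop:GN}, and the $\tau^\eps$/diagonal cancellation handled by the mean-value theorem and the gradient bound \eqref{eq:gradG}. The differences are organisational: the paper applies H\"older and hypercontractivity factor by factor, reducing everything to $L^2$ bounds on the individual derivatives $\mathcal{L}_{N,\mathbf{y}}W^\eps$ with at most four derivatives, which directly produces the four-point blocks in $\Upsilon_\theta$ and avoids cross-contractions between factors; and the prefactor $1/\zeta^{W^\eps}$, which you do not address, is harmless because $\E W^\eps=0$, so Jensen gives $\zeta^{W^\eps}\geq 1$.
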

\begin{proof}[Proof of \eqref{eq:diff gamma-gammaW with decay}] We can reproduce here the proof of \cite[Theorem 1.12]{FKSS_2020}.
    From Proposition \ref{prop:convgammaVgammaWeps} and Lemma \ref{lm:boundWickgammax}, we have \(\abs{(\widehat{\gamma}_p)_{\mathbf{x}, \tilde{\mathbf{x}}}} \leq C  \Upsilon_{\theta}(\mathbf{x}, \tilde{\mathbf{x}}) \). For any \(c,\delta>0\), we can choose \(R\equiv R(\delta)\) such that 
    \[\sup _{\abs{\mathbf{x}}+ \abs{\tilde{\mathbf{x}}}> R} \frac{\big|(\widehat{\gamma}_p)_{\mathbf{x}, \tilde{\mathbf{x}}}-(\widehat{\gamma}_p^{W^\eps})_{\mathbf{x}, \tilde{\mathbf{x}}}\big|}{\Upsilon_{\theta-c}(\mathbf{x}, \tilde{\mathbf{x}})}\leq \sup _{\abs{\mathbf{x}}+ \abs{\tilde{\mathbf{x}}}> R} \frac{C\Upsilon_{\theta}(\mathbf{x}, \tilde{\mathbf{x}})}{\Upsilon_{\theta-c}(\mathbf{x}, \tilde{\mathbf{x}})}\leq C R^{-c\theta}\leq \delta\,.\]
    By Proposition \ref{prop:convgammaVgammaWeps}, we then have 
    \[\lim_{\eps\to 0} \sup _{\abs{\mathbf{x}}+ \abs{\tilde{\mathbf{x}}}\leq R} \frac{\big|(\widehat{\gamma}_p)_{\mathbf{x}, \tilde{\mathbf{x}}}-(\widehat{\gamma}_p^{W^\eps})_{\mathbf{x}, \tilde{\mathbf{x}}}\big|}{\Upsilon_{\theta-c}(\mathbf{x}, \tilde{\mathbf{x}})} =0\,. \qedhere\]
    \end{proof}
\begin{proof}[Proof of Proposition \ref{lm:boundWickgammax}] Let \(g(x) =1+\abs{x}^\theta \) chosen as in Assumption \ref{assumption:P and D for U}. 
    Assume that \(\eps>0 \) small.  Let \(\ell\geq 0\) and write 
    \begin{equation*}
\mathcal{L}_{N, \mathbf{z}}\deq L_{N, z_1}^{\#} \cdots L_{N, z_{\ell}}^{\#}\,,
\end{equation*}
for \(\mathbf{z}=(z_1, \ldots, z_{\ell}) \in \mathbb{R}^{2\ell}\), where \(L^{\#}\) stands for either \(L\) or \(\bar{L}\). 

    Applying the Leibniz rule to the \(2p\) derivatives \(L_{N, x_i}\) and \(\bar{L}_{N, \tilde{x}_i}\), we estimate \(\abs{(\widehat{\gamma}^{W^\eps}_p)_{\mathbf{x}, \tilde{\mathbf{x}}}} \) by a sum of finite number of terms of the form
    \[\sup_{\mathbf{z}_1, \dots, \mathbf{z}_k}\mathbb{E}\Big|\prod_{i=i}^k (\mathcal{L}_{N, \mathbf{z}_i}{W^\eps}) \ee^{-{W^\eps}}\Big| \]
    where \(\lceil p/2\rceil\leq k\leq 2p\) and, for all \(i=1, \cdots, k\), \(\mathbf{z}_i \in \mathbb{R}^{2\ell_i}\) for some \(\ell_i\leq 4 \). Notice that we have \(\sum_{i}\ell_i\leq 2p\). 
    For \(1\leq \ell\leq4\), using Hölder's inequality and hypercontractivity, we only need to show \begin{equation}\label{eq:boundLny}
        \|\mathcal{L}_{N, \mathbf{y}}{W^\eps}\|_{L^2}^2\lesssim \Upsilon_{\theta}(\mathbf{y})\,.
    \end{equation}

    Let us look at first order derivatives, that is, when \(\ell=1\). 
    We start by considering 
    \begin{multline}\label{eq:LWeps}
        L_{N,y}W^\eps = \frac{1}{2}\int \d u\, \d {\tilde{u}}\, v^\eps(u-{\tilde{u}}) G_{2N}(u, y) \phi(u)(\phi({\tilde{u}}) \phi({\tilde{u}})-G({\tilde{u}}, {\tilde{u}}))\\+\frac{1}{2}\int \d u \,\d {\tilde{u}}\,v^\eps(u-{\tilde{u}})(\phi(u) \phi(u)-G(u, u)) G_{2N}(y, {\tilde{u}}) \phi({\tilde{u}})\\-\int \d u\, \d {\tilde{u}}\,v^\eps(u-{\tilde{u}}) G(u, {\tilde{u}}) G_{2N}(u, y) \phi(u)
    \end{multline}
    and we can proceed in the same way for \(\bar{L}_{N,y}W^\eps\).  We can then estimate 
    \begin{align*}    \mathbb{E}\bigl[({L_{N,y}W^\eps})^2\bigr]& \lesssim \frac{1}{\tilde g(y)^{2}} 
    \end{align*}
    by using Proposition \ref{prop:GN} and considering separately the cases where the integration variables \(\hat u\) are close (\(\abs{\hat u-y}\leq1\)) or far (\(\abs{\hat u -y }>1\)) from \(y\). 
    A direct computation shows that \( \mathbb{E}\bigl[({L_{N,y}W^\eps})^2\bigr]\) is a sum of terms that can be grouped as 
    \begin{equation}\label{eq:nodiagonalterm}
        \int \d {u_1} \, \d{\tilde{u}_1} \, \d {u_2} \, \d{\tilde{u}_2}\,v^\eps(u_1- \tilde{u}_1)v^\eps(u_2- \tilde{u}_2)G_{2N}(a_1,y)G_{2N}(a_2,y)G(a_3,a_4)G(a_5,a_6)G(a_7,a_8)
    \end{equation}
    where \(\{a_{i}\}_{i=1}^8= \{u_1, \tilde{u}_1,u_2, \tilde{u}_2\}\) such that \(\{a_{2i-1}, a_{2i}\}\notin\{\{u_1, \tilde{u}_1\},\{u_2, \tilde{u}_2\}\} \) and \(a_{2i-1}\neq a_{2i}\) for all \(i=1, \dots, 4\)
    or   \begin{multline}\label{eq:diffinfterms}
        \int \d {u_1}\, \d{\tilde{u}_1}\,\d {u_2}\, \d{\tilde{u}_2}\, v^\eps(u_1- \tilde{u}_1)v^\eps(u_2- \tilde{u}_2)G_{2N}(a_1,y)G_{2N}(a_2,y)\\\times G(u_1,\tilde{u}_1)G(u_2,\tilde{u}_2)(G(a_3,a_2)- G(a_3, a_4))
    \end{multline}
    where \(a_i\in \{u_1, \tilde{u}_1,u_2, \tilde{u}_2\}\) such that \(\{a_1, a_{2}\}\notin\{\{u_1, \tilde{u}_1\},\{u_2, \tilde{u}_2\}\} \), \(a_3\notin \{a_2, a_4\}\), and \(\{a_2, a_4\}\in \{\{u_1, \tilde{u}_1\},\{u_2, \tilde{u}_2\}\}\). 
    
    We can deal with \eqref{eq:nodiagonalterm} using Young's inequality for products and estimate 
    \begin{multline}
        {G_{2N}(a_2,y)}G(a_3,a_4)G(a_5,a_6)G(a_7,a_8) \\\leq\frac{1}{4}\bigl( G(a_2,y)^4 + G(a_3,a_4)^4 +G(a_5,a_6)^4 +G(a_7,a_8)^4 \bigr)\,
    \end{multline}
    and split domains based on whether \(a_1\) or \(a_2\) belong to \(B_1(y)\). Assume without loss of generality that \(a_1 = u_1\) and \(a_2=u_2\) and note that \( \{a_{2i-1},a_{2i}\}\cap\{u_2, \tilde{u}_2\} \neq \emptyset \).  Since \(v^\eps\) has support in a ball of radius \(\eps\) around the origin, then \(\tilde g({u}_i)\sim \tilde g(\tilde{u}_i)\). 
    In the case where \(a_1,a_2\in B_1(y)\), we have that 
    \begin{equation*}
        \int_{B_1(y)} \d{ u_2}\int \d{\tilde{u}_2}\,  v^\eps(u_2-\tilde{u}_2)G(s,t)^n \leq \int_{B_1(y)} \d{ u_2}\int \d{\tilde{u}_2}\,  v^\eps(u_2-\tilde{u}_2)\log^n(\abs{s-t}\sqrt{\tilde g(y)})^{-1} \lesssim\frac{1}{\tilde g(y)}
    \end{equation*}
    for any \(s\in \{y,u_1, \tilde{u}_1\}\), \(t\in\{u_2, \tilde{u}_2\}\) and \(n\in \mathbb{N}_{>0}\). This implies 
    \begin{multline*}
        \int_{_{\substack{u_1 \in B_1(y)\\
    u_2 \in B_1(y)}}} \d{u_1} \,\d{\tilde{u}_1}\,\d{u_2}\, \d{\tilde{u}_2}\, v^\eps (u_2 -{\tilde{u}_2} )v^\eps (u_1-{\tilde{u}_1} )  G(y,u_1)G(s, t)^n \\ \lesssim \frac{1}{\tilde g(y)}\int_{u_1 \in B_1(y)}\d{u_1}\,G(y,u_1)\lesssim \frac{1}{\tilde g(y)^2}
    \end{multline*}
    for any \(n>0\), and using the fact that \(\widehat{v^\eps}(0)=1\). The remaining cases can be treated analogously, using the fact that
    \begin{equation*}
        \int_{B_1(y)^c} \d{ u_2}\int \d{\tilde{u}_2}\,  v^\eps(u_2-\tilde{u}_2)G(s,t)^n \lesssim \int_{B_1(y)^c} \d{ u_2}\int \d{\tilde{u}_2}\,  v^\eps(u_2-\tilde{u}_2)\ee^{-\sqrt{\tilde g(y)}\abs{s-t}} \lesssim\frac{1}{\tilde g(y)}\,. 
    \end{equation*}
    The term in \eqref{eq:diffinfterms} requires a more detailed analysis, as it contains an integration over the diagonal of \(G\) and \(\int \d {y} \, v^\eps(y)G(x,x+y)\sim \log(\eps^{-1})\). Let us assume that \(a_2=u_1, a_4 = \tilde{u}_1\). Performing a change of variables \(z=(a_4-a_2)/\eps\) 
    \begin{equation}\label{eq:secondtermgammaWeps}
        \begin{aligned}
      \eqref{eq:diffinfterms} &\leq \int_0^1 \d{\lambda}\int \d {u_1} \,\d{\tilde{u}_1}\,\d {u_2}\, \d{\tilde{u}_2}\, v(z)v^\eps(u_2- \tilde{u}_2)G(a_1,y)G(u_1,y)\\ & \qquad \qquad\qquad \qquad\qquad \qquad\times G(u_1,\tilde{u}_1)G(u_2,\tilde{u}_2)\frac{\eps\abs{z}\ee^{-\sqrt{\tilde g(a_3)}\abs{a_3-(u_1+ (1-\lambda)\eps\abs{z})}}}{\abs{a_3-(u_1+ (1-\lambda)\eps{z})}}\\
             &\leq \int_0^1 \d{\lambda}\int \d {u_1} \,\d{\tilde{u}_1}\,\d {u_2}\, \d{\tilde{u}_2}\, v(z)v^\eps(u_2- \tilde{u}_2)G(a_1,y)G(u_1,y)\\ & \qquad \qquad\qquad \qquad\qquad \qquad\times G(u_2,\tilde{u}_2)\frac{\eps^{1-c}\abs{z}^{1-c}\ee^{-\sqrt{\tilde g(a_3)}\abs{a_3-(u_1+ (1-\lambda)\eps\abs{z})}}}{\abs{a_3-(u_1+ (1-\lambda)\eps{z})}}
        \end{aligned}
    \end{equation}
    for any \(c>0\). By splitting cases and using the fact that \(G(y,u_1)G(y,a_1)\lesssim \frac{1}{2}(G(y,u_1)^2 + G(y,a_1)^2)\), we obtain 
    \[\eqref{eq:secondtermgammaWeps} \lesssim \eps^{1-c}\frac{1}{\tilde g(y)^2}\]
    for any \(c>0\). 
    
    We can follow the arguments above for \(\ell=2\) and we obtain
    \begin{align*}
        &\|\bar{L}_{N, y_2}L_{N, y_1}{W^\eps}\|_{L^2} \lesssim_v  \Upsilon_{\theta}(y_1,y_2)
    \end{align*}
    by considering the cases \(|y_1-y_2|\leq 2\) and \(|y_1-y_2|>2\) separately. 

    In order to show
    \[\|{L}_{N, y_2}L_{N, y_1}{W^\eps}\|_{L^2} \vee \|\bar{L}_{N, y_2}\bar{L}_{N, y_1}{W^\eps}\|_{L^2}\lesssim   \Upsilon_\theta(y_1,y_2)\]
    and the corresponding results when \(\ell>2\), it is only necessary to bound terms of the form \eqref{eq:diffinfterms}, and we obtain \eqref{eq:boundLny} by proceeding as in \eqref{eq:nodiagonalterm}. 
\end{proof}
\section{Functional integral representations}
\label{sec:funcintrep}
In this section we derive the functional integral representation underlying the proof of Proposition \ref{prop:convergence quantum many body system and Weps} (see also the overview in Section \ref{sec:strategy}). In the following two propositions, Propositions \ref{Prop:regularisation of euclidean field theory} and \ref{Prop:regularisation of phi42 quantum rel part funct}, we give the functional integral representation of the (quantum)
relative partition function $\cal Z$ from  \eqref{eq:mf quantum relatve partition function} and the (classical) relative partition function $\zeta^{W^\eps}$ corresponding
to the interaction $W^\eps$.

Before  starting, we recall the auxiliary fields used in \cite{FKSS_2020} to derive the functional integral representation.
Let us fix a function $\varphi \in \mathcal{C}^\infty_c(\bR)$ which is even, non-negative, of positive type, and which satisfies $\varphi(0)=1$.  For given $\eta >0$, we define  a positive definite function on the interval \( [0,\nu)\) as 
\begin{equation}\label{eq:defdelta}
    \delta_{\eta,\nu}(x) \deq  \frac{1}{\eta} \sum_{y \in \mathbb{Z}} \check \varphi \Bigl(\frac{x-\nu y}{\eta}\Bigr)\,,
\end{equation}
such that \(\int \d x \,\delta_{\eta, \nu}(x)=1\), where $\check \varphi$ denotes the inverse Fourier transform of $\varphi$. The function $\delta_{\eta,\nu}$ can be interpreted as an approximate delta function on $[0,\nu)$.
For $\eta >0$, we introduce a centred real Gaussian field \(\sigma\col [0,\nu)\times \mathbb{R}^d \to \mathbb{R}\) with covariance kernel \[C_{\eta}(\tau, \tilde{\tau};x, \tilde{x})\deq \delta_{\nu, \eta}(\tau-\tilde{\tau})\varphi(\eta x)v^\eps(x-\tilde{x})\varphi(\eta \tilde{x})\]
whose law we denote with \(\mu_{C_\eta}\). We denote its time average as
\[\xi(x)\deq \frac{1}{\nu}\int_0^\nu\d{\tau}\,\sigma(\tau, x)\]
whose law is \(\mu_{v^\eps_\eta}\), a centred Gaussian with covariance kernel
\[v^\eps_\eta(x,\tilde{x}) = \varphi(\eta x)v^\eps(x-\tilde{x})\varphi(\eta \tilde{x})\,.\]
Note that the fact that \(v\) is smooth and the use of the spatial cutoff  through \(\varphi\) ensure that the field \(\xi\) is smooth and bounded almost surely for every \(\eps, \eta>0\). We need the following assumption on a family of functions $(\alpha^\eps)_{\eps >0}$ as introduced in Section \ref{sec:strategy}.

\begin{definition}
\label{assumption:existence of alpha}
    Let \(U\) satisfy Assumption \ref{assumption:P and D for U} and  \(v\) satisfy Assumption \ref{assumptions: non local interaction}. For each $\epsilon > 0$ we choose a nonnegative function \(\alpha^\eps \in L^\infty(\mathbb{R}^d)  \cap  L^2(\mathbb{R}^d) \) such that, uniformly for all \(\eta\geq0\) small enough,
    \begin{equation}\label{eq:lower bound on Uepseta}
        U^{\eps}_\eta(x)=(U  + \alpha^\eps_\eta- \tau^\eps)(x) \geq \frac{1}{2}U(x)\,,
    \end{equation}
    \begin{equation}\label{unif convergence of alpha_eta}
        \lim_{\eta \to 0} \|\alpha ^\eps_{\eta} -\alpha^\eps_0 \|_{L^\infty}= 0\,,
    \end{equation}
    and \begin{equation}\label{eq:bound on Linf norm of alpha}
        \|\alpha^{\eps}_\eta\|_{L^{\infty}} \vee \|\alpha^{\eps}\|_{L^{\infty}}\lesssim \|\tau^\eps\|_{L^\infty}\,.
    \end{equation}
Here we recall the definition of $\alpha_\eta^\eps$ from \eqref{def:rhoepseta}. 
\end{definition}
\begin{remark}
    A function $\alpha^\eps$ as in Definition \ref{assumption:existence of alpha} can be easily constructed. For instance, we can choose \(\alpha^\eps(x) \deq(\|\tau^\eps_\eta \|_{L^\infty}+1)  \mathds{1}_{\bar{B}_{r_\eps}}\), where \(B_{r}\) is a ball of radius \(r\) centred at 0 and \(r_\eps>0\) is chosen appropriately. 
\end{remark}
From Definition \ref{assumption:existence of alpha}, more precisely from \eqref{eq:lower bound on Uepseta}, it follows that $(-\Delta/2 +\kappa + U^\eps_\eta)^{-1} $ is Hilbert-Schmidt. Indeed, by the Feynman-Kac formula (see Lemma \ref{Feynman-Kac formula} below) we have
     \begin{equation}\label{eq: heat kernel of U tilde upper bound}
         \ee^{-t{h^U}}(x,y)\vee \ee^{-t{h^{U^\eps_\eta}}}(x,y)<\ee^{-t{h^{ U/2}}}(x,y)
     \end{equation} for all \(x,y \in \mathbb{R}^d\), \(t>0\) and uniformly in \(\eps>0, \eta\geq 0\) small enough. Since, for \(h,r>0\)
     \[\frac{1}{h^{r}}=\frac{1}{\Gamma(r)} \int_0^\infty \d{t}\, {\ee^{-ht}} \, t^{r-1}\,,\]
     \eqref{eq: heat kernel of U tilde upper bound} implies that
     \begin{equation}
     \label{eq:uniform lower bound potential in eta and eps}
     \tr{\frac{1}{(h^U)^s}}\vee \tr{\frac{1}{(h^{U^\eps_\eta})^s}}\leq \tr{\frac{1}{(h^{U/2})^s}} \end{equation}
     which is finite whenever \(\tr{\bigl(h^{U}\bigr)^{-s}}<\infty\).

We have the following functional integral representation of the relative partition function $\zeta^{W^\eps}$.
\begin{prop}\label{Prop:regularisation of euclidean field theory}
Suppose that Assumption \ref{assumption:P and D for U} and Assumption \ref{assumptions: non local interaction} hold. Let \(\alpha^\eps: \mathbb{R}^d \to \mathbb{R} \) be the function as in Definition \ref{assumption:existence of alpha}.
    \begin{enumerate}[label = (\roman*)]
        \item   
               The partition function \(\zeta^{W^\eps}=\mathbb{E}\bigl[\ee^{-W^{\eps}}\bigr] \) is given by

            \begin{equation}\label{classical partition function in phi42 case}
                                \zeta^{W^\eps}= \lim_{\eta \to 0}\ee^{E^\eps + \frac{1}{2} \int \d{x}\,\alpha^\eps(x) \alpha^\eps_\eta(x) + S^\eps_\eta }\int \mu_{v^{\eps}_\eta}(\mathrm{d} \xi) \,\mathrm{e}^{-i\langle T^\eps_\eta+ \alpha^\eps,\xi\rangle} \mathrm{e}^{f_2(\xi)}\,,
            \end{equation}
              where we defined
              \begin{equation}\label{def:f_2}
                    f_2(\xi)\deq\int_0^{\infty} \mathrm{d} t \, \tr\biggl(\frac{1}{t+h^{U^{\eps}_{\eta}}- \mathrm{i} \xi}-\frac{1}{t+h^{U^{\eps}_{\eta}}}-\frac{1}{t+h^{U^{\eps}_{\eta}}} \mathrm{i} \xi \frac{1}{t+h^{U^{\eps}_{\eta}}}\biggr)\,,
              \end{equation}
              \begin{equation}
              \label{eq:S_eta}
                S_\eta^\eps = \log \mathbb{E}\biggl[{\exp\Bigl(-\int \d{x}\, (\alpha^\eps_\eta(x)- \tau^\eps(x))\,\wick{\abs{\phi(x)}^2}\Bigr)}\biggr]\,,
              \end{equation}
              and 
              \begin{equation}\label{def:E^Wick_eta}
                T^{\eps}_\eta(x)\deq \biggl(\frac{1}{\kappa- \Delta/2 + U} - \frac{1}{\kappa- \Delta/2 + U^\eps_\eta}\biggr)(x,x)\,.
              \end{equation}
              Here \(\xi\) is interpreted as a multiplication operator.
        \item The function \(f_2(\xi)\) is finite and has non-positive real part.
    \end{enumerate}
\end{prop}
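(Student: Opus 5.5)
We will work at the regularised level and remove the regularisation only at the end. Fix $\eps>0$, a spatial cutoff $\eta>0$ and a spectral cutoff $N$ with $\vartheta=\mathbbold{1}_{[0,1]}$, so that $\phi_N$ lives in the finite-dimensional space spanned by $u_k$ with $\lambda_k\le N$. Replace $W^\eps$ by $W^\eps_{\eta,N}$ from \eqref{eq:def:regularised version of the field Weps}. The first step is to verify
\[
\zeta^{W^\eps}=\lim_{\eta\to0}\lim_{N\to\infty}\E\bigl[\ee^{-W^\eps_{\eta,N}}\bigr].
\]
This follows from $L^2(\P)$ convergence $W^\eps_{\eta,N}\to W^\eps$, which is obtained with the Wick/Green function estimates of Lemmas \ref{lm:Veps-Weps} and \ref{lm:VMVN}. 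We also need uniform integrability of $\ee^{-W^\eps_{\eta,N}}$, which comes from the Nelson-type bound of Proposition \ref{prop:nelson} adapted to $V^\eps_\eta$, as noted at the end of the previous subsection. Since $\eps$ is fixed here, the crude bound $W^\eps_{\eta,N}\ge -C_\eps(\log N)^2$ is enough.

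Next, at fixed $N,\eta$ we perform the Hubbard–Stratonovich transform \eqref{HS} with covariance $v^\eps_\eta$. We apply it to $f=\wick{|\phi_N|^2}-\alpha^\eps$. Since $v^\eps_\eta$ is of positive type and smooth with the cutoff $\varphi$, $\mu_{v^\eps_\eta}$ is a genuine Gaussian measure on bounded smooth fields. Expanding the square gives
\[
\ee^{-\frac12\langle \wick{|\phi_N|^2},v^\eps_\eta\wick{|\phi_N|^2}\rangle}
=\ee^{\frac12\langle\alpha^\eps,\alpha^\eps_\eta\rangle-\langle\alpha^\eps_\eta,\wick{|\phi_N|^2}\rangle}\int\mu_{v^\eps_\eta}(\dd\xi)\,\ee^{\ii\langle\wick{|\phi_N|^2}-\alpha^\eps,\xi\rangle}.
\]
We combine this with the $-\int\tau^\eps\wick{|\phi_N|^2}$ term and $E^\eps$, and use Fubini. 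Fubini is legitimate because the $\xi$-integrand has modulus one and everything is finite-dimensional in $\phi$. This leaves the factor $\ee^{-\langle \alpha^\eps_\eta-\tau^\eps,\wick{|\phi_N|^2}\rangle}$ inside the $\phi$-expectation. We absorb it by the Gaussian change-of-covariance formula from Section \ref{sec:strategy}: its $\P$-expectation produces $\ee^{S^\eps_{\eta,N}}$, and the field becomes Gaussian with covariance $P_N(h^{U^\eps_\eta})^{-1}P_N$, restricted to the range of $P_N$. The positivity \eqref{eq:lower bound on Uepseta} is exactly what makes this tilted quadratic form positive definite.

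Re-Wick-ordering relative to the new covariance shifts $\wick{|\phi_N|^2}$ by the difference of diagonal Green functions, i.e. by $T^\eps_{\eta,N}$, which produces the phase $\ee^{-\ii\langle T^\eps_{\eta,N}+\alpha^\eps,\xi\rangle}$. The remaining Gaussian integral of $\ee^{\ii\langle\xi,\wick{|\phi_N|^2}\rangle}$ (Wick-ordered for the new covariance) is evaluated exactly as $\det(1-\ii\xi G')^{-1}\ee^{-\ii\tr(\xi G')}$, with $G'$ the tilted covariance. Writing $-\log\det(1-A)-\tr A$ via the resolvent identity and $\log$ integrated in $t$ gives precisely the $N$-truncated version of $f_2(\xi)$ in \eqref{def:f_2}. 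The second-order Taylor remainder of the resolvent makes the $t$-integral converge: the integrand is $O((t+\kappa)^{-3}\|\xi\|_\infty^2)$ in trace norm times $\tr (h^{U^\eps_\eta})^{-s}$, using \eqref{eq:uniform lower bound potential in eta and eps}.

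For part (ii) we write the integrand of \eqref{def:f_2} as
\[
\tr\Bigl(R\,\ii\xi\,R\,\ii\xi\,(t+h-\ii\xi)^{-1}\Bigr),\qquad R=(t+h)^{-1},\ h=h^{U^\eps_\eta}.
\]
Since $\xi$ is bounded and $\|(t+h-\ii\xi)^{-1}\|\le (t+\kappa)^{-1}$ by accretivity, this is bounded by $\|\xi\|_\infty^2\|R\|_{\mathfrak S^2}^2$. Finiteness then follows, using $\tr h^{-2}<\infty$ from Remark \ref{rmk:thetha_s2}. For the sign, we note that $\re f_2(\xi)=\re\bigl[-\log\det(1-\ii\xi G')-\ii\tr(\xi G')\bigr]=-\frac12\sum\log(1+\mu_j^2)\le0$, where $\mu_j$ are the real eigenvalues of $G'^{1/2}\xi G'^{1/2}$; the real part of the linear term vanishes. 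This is first done at finite $N$ and then passed to the limit.

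The main obstacle is the removal of the cutoffs. We must show $S^\eps_{\eta,N}\to S^\eps_\eta$ and $T^\eps_{\eta,N}\to T^\eps_\eta$ in a sense strong enough to pass inside $\int\mu_{v^\eps_\eta}$. For $f_2$ this is trace-class convergence. For $T$ it is convergence tested against $\xi$, which needs $L^1$ convergence on the support of $\varphi(\eta\cdot)$; we would get this from Feynman–Kac/heat-kernel bounds on $\ee^{-th}(x,x)$ differences for small $t$. Dominated convergence in $\xi$ uses $|\ee^{f_2}|\le1$ from (ii). Finally, the limit $\eta\to0$ is taken last, and \eqref{unif convergence of alpha_eta} makes $\alpha^\eps_\eta\to\alpha^\eps_0$ harmless in the prefactor.
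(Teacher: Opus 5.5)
Your construction matches the paper's. You use the sharp cutoff $P_N$ and the spatial cutoff $\varphi(\eta\,\cdot)$, complete the square with $\alpha^\eps$, apply Hubbard--Stratonovich with covariance $v^\eps_\eta$, tilt $\P$ to $\P^{U^\eps_\eta}$ (giving $S^\eps_{\eta,N}$), re-Wick-order (giving $T^\eps_{\eta,N}$), identify the remaining Gaussian integral with $f_{2,N}$, and conclude by dominated convergence. Doing Hubbard--Stratonovich before the tilt rather than after makes no difference in finite dimensions. Your $\det_2$ formula $\re f_2(\xi)=-\frac12\sum_j\log(1+\mu_j^2)$ is a more explicit version of what the paper takes from \cite{FKSS_2020}.

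The step that fails is your justification of uniform integrability in $N$: the bound $W^\eps_{\eta,N}\geq -C_\eps(\log N)^2$ is false. The term $-\int\tau^\eps\abs{\phi_N}^2$ acts on all of $\R^2$, while the quartic term only sees $\phi_N$ on the support of $\varphi(\eta\,\cdot)$. Take $\phi_N=\lambda P_Nf$ with $f\in C^\infty_c$ supported outside that set, in a region where $\tau^\eps>0$. The coefficient of $\lambda^4$ is then $O(\norm{(1-P_N)f}^4_{L^2})=O(N^{-4})$, and that of $\lambda^2$ is about $-\int\tau^\eps\abs{f}^2$. Optimising over $\lambda$ gives $\inf W^\eps_{\eta,N}\lesssim -cN^4$. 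This is exactly the obstruction from Section~\ref{sec:strategy} that $\alpha^\eps$ is introduced to get around. Even without $\tau^\eps$, the analogue of Lemma~\ref{lm:boundVN} gives only $-CN^\gamma\log N$, because $\int G_N(x,x)\,\dd x$ grows polynomially on $\R^2$; fixing $\eps$ does not help.

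The fix, which is what the paper relies on, is the representation itself. Since the $\xi$-integrand has modulus at most one, your identity for $\zeta^{W^\eps_{\eta,N}}$ gives $\E[\ee^{-W^\eps_{\eta,N}}]\leq \ee^{E^\eps+\frac12\langle\alpha^\eps,\alpha^\eps_\eta\rangle+S^\eps_{\eta,N}}$. Here $S^\eps_{\eta,N}$ is bounded uniformly in $N$ by Lemma~\ref{lemma:integration of rho and tauepsilon}. Running the same argument for $rW^\eps_{\eta,N}$, with $\alpha$ adapted to $rv^\eps$ and $r\tau^\eps$, gives the uniform $L^r$ bounds you need. The convergence $W^\eps_{\eta,N}\to W^\eps_\eta\to W^\eps$ in $\bigcap_m L^m$ is Lemma~\ref{lemma:W_N is a Cauchy sequence}. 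Lemmas~\ref{lm:Veps-Weps} and~\ref{lm:VMVN} are not the relevant ones: they concern $\eps\to0$ and the exponential cutoff.

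There are also some smaller points.
\begin{itemize}
\item After the tilt, $\phi_N$ has covariance $(P_Nh^{U^\eps_\eta}P_N)^{-1}$ on $\mathrm{ran}\,P_N$, not $P_N(h^{U^\eps_\eta})^{-1}P_N$, because $P_N$ does not commute with $\alpha^\eps_\eta-\tau^\eps$. This is why the proof writes $T^\eps_{\eta,N}$ with $(h^U+(\alpha^\eps_\eta-\tau^\eps)_N)^{-1}$.
\item For the same reason, Feynman--Kac bounds are not suited to proving $T^\eps_{\eta,N}\to T^\eps_\eta$: once $P_N$ is inserted, none of the objects is a heat-kernel diagonal. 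The workable route, as in Lemma~\ref{lemma:convergence_of_EWick}, goes through trace ideals. The resolvent difference is trace class, with $\norm{\mathcal T^\eps_\eta}_{\mathfrak S^1}\leq\norm{(h^U)^{-1}}_{\mathfrak S^2}\norm{\alpha^\eps_\eta-\tau^\eps}_{L^\infty}\norm{(h^{U^\eps_\eta})^{-1}}_{\mathfrak S^2}$. A singular-value expansion then gives $L^1$ convergence of the diagonals, and you conclude with Cauchy--Schwarz in $\mu_{v^\eps_\eta}$ as in \eqref{eq:epsEwickEwick_eta_nu}.
\item Your bound $\norm{\xi}_\infty^2\norm{R}^2_{\mathfrak S^2}(t+\kappa)^{-1}$ integrates to order $\sum_k\lambda_k^{-2}(1+\log(\lambda_k/\kappa))$. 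It therefore needs $\tr h^{-s}<\infty$ for some $s<2$ (Remark~\ref{rmk:thetha_s2}), not just $\tr h^{-2}<\infty$.
\end{itemize}
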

We can obtain a similar expression for a regularised version of the quantum relative partition function \(\mathcal{Z}\) from  \eqref{eq:mf quantum relatve partition function}. In order to do it, we introduce the following definitions.

\begin{definition}
    Let $b$ an operator on \( L^2 ([0,\nu] \times \mathbb{R}^d)\) acting on functions $f \in L^2 ([0,\nu) \times \mathbb{R}^d)$ of time $\tau \in [0,\nu)$ and space $x \in \mathbb R^d$. We use the notation $b^{\tau, \tilde \tau}_{x,\tilde x}$ for the operator kernel of $b$. Thus 
    \[
    (bf)(\tau,x) = \int \d\tau\, \int \d x\, b^{\tau, \tilde \tau}_{x,\tilde x}f(\tau,x)\,.
    \]
\end{definition}
The following operator  plays an important role in the rest of the analysis.
\begin{definition}\label{def:operatorK}
    Let \(I \in \mathbb{R}\) be a closed bounded interval and \(V:I\times {\mathbb{R}^d} \to [0, \infty)\) a locally bounded function, and suppose that $\re V \leq -c$ for some positive constant $c$. On $L^2 ([0,\nu) \times \mathbb{R}^d)$ define the operator
    \[
    K(V) \deq \partial_\tau -\frac{\Delta}{2} - V(\tau)\,,
    \]
    with periodic boundary conditions in $[0,\nu)$.
\end{definition}
Moreover, for a bounded operator $b$ on $L^2 ([0,\nu) \times \mathbb{R}^d)$ with continuous operator kernel $b^{\tau, \tilde \tau}_{x,\tilde x}$, we define
\begin{equation*}
    \Tr (b) \deq \int_0^\nu \d \tau \int_{\mathbb R^d} \d x\, b^{\tau, \tilde \tau}_{x,\tilde x}\,;
\end{equation*}
see also \cite[Definition 3.10]{FKSS_2020_arxiv} for more details.

\begin{prop}\label{Prop:regularisation of phi42 quantum rel part funct}
Let $H$ be defined as in \eqref{eq:WickHamiltonian}, and $U$ satisfy Assumption \ref{assumption:P and D for U}, and $v^\eps$ Assumption \ref{assumptions: non local interaction}. Moreover, let $K$ be defined as in Definition \ref{def:operatorK}. Then the following holds for fixed $\eps, \nu>0$.
\begin{enumerate}[label = (\roman*)]
    \item  We have
    \begin{equation}\label{def:quantum_relative_partition_function}
        \mathcal{Z} = \lim_{\eta \to 0}\ee^{E^\eps + \frac{1}{2} \int \d{x}\,\alpha^\eps(x) \alpha^\eps_\eta(x) +{ S^\eps_{\nu,\eta}} }\int \mu_{C_\eta} (\d{\sigma})\, \ee^{-\ii \langle\sigma,\alpha^\eps+{T^\eps_{\nu,\eta}}\rangle_\nu} \ee^{F_2(\sigma) }\,,
    \end{equation}
where   we defined
\begin{multline}\label{eq:def_of_F_2}
        F_2(\sigma) \deq \int_{0}^\infty \d{t}\, \Tr\biggl(\frac{1}{t+K(-\kappa - U^\eps_\eta + \ii\sigma)}-\frac{1}{t+K(-\kappa - U^\eps_\eta  )} \\ - \frac{1}{t+K(-\kappa - U^\eps_\eta )}\ii\sigma \frac{1}{t+K(-\kappa - U^\eps_\eta )}\biggr)\,,
    \end{multline}
    \begin{equation}\label{eq:def_S_nueta}
        S^\eps_{\nu, \eta}\deq \int_{0}^\infty \d t \Tr\biggl(\frac{1}{t+K(-\kappa - U^\eps_\eta)}- \frac{1}{t+K(-\kappa - U) }\biggr) + \big\langle\varrho_\nu^U,\alpha^\eps_\eta - \tau^\eps\big\rangle\,,
    \end{equation}
    and \begin{equation}\label{eq:def_Ewick_nueta}
       T^\eps_{\nu,\eta}(x)\deq \varrho_\nu^U(x) -\varrho_\nu^{U^\eps_\eta}(x)\,.
    \end{equation}
    Here  \(\sigma\) is interpreted as a multiplication operator.
    \item The function \(F_2\) is finite and has non-positive real part.
    \end{enumerate}
\end{prop}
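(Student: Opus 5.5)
We prove Proposition \ref{Prop:regularisation of phi42 quantum rel part funct} by adapting the space-time functional integral of \cite{FKSS_2020} to the Wick-ordered Hamiltonian \eqref{eq:WickHamiltonian}. The steps are: a spatial cutoff, a Hubbard--Stratonovich transformation with the shifted phase $a^*a-\varrho_\nu/\nu-\alpha^\eps/\nu$, absorption of the counterterm function $\alpha^\eps_\eta-\tau^\eps$ into the one-body potential, and evaluation of the resulting quadratic quantum problem by Feynman--Kac/Trotter. This is the quantum counterpart of the derivation sketched in Section \ref{sec:strategy} for $\zeta^{W^\eps}$.

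\textbf{Cutoff and convergence as $\eta\to 0$.} Replace $v^\eps(x-\tilde x)$ in $H$ by $v^\eps_\eta(x,\tilde x)=\varphi(\eta x)v^\eps(x-\tilde x)\varphi(\eta\tilde x)$, and call the result $H_\eta$. Since $v$ and $\varphi$ are of positive type, the quartic part of $H_\eta$ is nonnegative and increases to that of $H$ as $\eta\to0$. The linear term $-\nu\int\tau^\eps a^*a$ is bounded, by the bound \eqref{eq:bound on Linf norm of alpha} on $\|\tau^\eps\|_{L^\infty}$, since $\eps$ is fixed. Hence $H_\eta\ge \frac12 H^{0,U/2}-C_{\eps}$ uniformly in $\eta$. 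Monotone convergence of the semigroups, or Trotter--Kato, then gives $\Tr_{\cF}\ee^{-H_\eta}\to \Tr_{\cF}\ee^{-H}$. We do not take $\eta\to0$ inside the integral afterwards; this is why the limit appears outside in \eqref{def:quantum_relative_partition_function}.

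\textbf{Hubbard--Stratonovich and change of potential.} Write $\ee^{-H_\eta}$ as a Trotter product over imaginary times in $[0,\nu)$. On each time slice, apply \eqref{HS} with covariance $C_\eta$ and the operator-valued phase
\[
\nu\bigl(a^*(x)a(x)-\varrho_\nu^U(x)/\nu\bigr)-\alpha^\eps(x)\,.
\]
Completing the square produces three things: the constant $\ee^{\frac12\int\alpha^\eps\alpha^\eps_\eta}$; the linear phase $\ee^{-\ii\langle\sigma,\alpha^\eps\rangle_\nu}$; and the shift $\nu\int(\alpha^\eps_\eta-\tau^\eps)(a^*a-\varrho^U_\nu/\nu)$. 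We combine the last with $H^0$. This replaces $U$ by $U^\eps_\eta$ and yields the constant $\ee^{\langle\varrho^U_\nu,\alpha^\eps_\eta-\tau^\eps\rangle}$, which is the second term of \eqref{eq:def_S_nueta}. Because of \eqref{eq:lower bound on Uepseta}, the shifted free Hamiltonian $H^{0,U^\eps_\eta}$ is still confining, and the Gaussian-in-$\sigma$ interchange of trace and $\mu_{C_\eta}$-integral is legitimate; this is the point of Definition \ref{assumption:existence of alpha}. The remaining trace
\[
\Tr_{\cF}\,\mathcal{T}\exp\Bigl(-\int_0^\nu \d\tau\,\bigl(\dd\Gamma(h^{U^\eps_\eta})-\ii\,\dd\Gamma(\sigma(\tau))\bigr)\Bigr)
\]
is the partition function of a quadratic, non-self-adjoint time-dependent one-body problem. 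As in \cite{FKSS_2020}, it equals $\exp\bigl(-\Tr\log(\text{propagator})\bigr)$, which is expressed through the resolvent of the parabolic operator $K(-\kappa-U^\eps_\eta+\ii\sigma)$ via $\log$ as $\int_0^\infty\d t$ of differences of resolvents.

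\textbf{Collecting terms.} Subtract and add the zeroth- and first-order terms in $\sigma$. The zeroth-order part, divided by $Z^0$, gives the first term of $S^\eps_{\nu,\eta}$, since $U^\eps_\eta$ is compared with $U$. The first-order part is $\ii\langle\sigma,\varrho^{U^\eps_\eta}_\nu\rangle_\nu$, using that the diagonal of $K(-\kappa-U^\eps_\eta)^{-1}$ is the free density $\varrho_\nu^{U^\eps_\eta}$. Combined with $-\ii\langle\sigma,\varrho_\nu^U\rangle$ from the Wick ordering, this gives $-\ii\langle\sigma,T^\eps_{\nu,\eta}\rangle_\nu$. The remainder is $F_2(\sigma)$.

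\textbf{Part (ii) and the main obstacle.} For part (ii), we use the resolvent expansion: $F_2$ is the second-order Taylor remainder, written as $\int_0^\infty\d t\,\Tr\bigl(R\,\ii\sigma\,R\,\ii\sigma\,R_\sigma\bigr)$. This is trace class because $\sigma$ is bounded and smooth for $\eta>0$, $R$ is Hilbert--Schmidt by \eqref{eq:uniform lower bound potential in eta and eps}, and $\re K(\cdots)\ge$ coercive. Non-positivity of $\re F_2$ follows as in \cite[Section 3]{FKSS_2020}: $F_2$ is the Hilbert--Schmidt norm-type quantity $\log|\det_2(1-\ii\,\sigma K^{-1})|^{-1}$, whose real part is $\le 0$ because $\sigma$ is real. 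The main obstacle is making the interchange of the Gaussian $\sigma$-integral with the Fock trace rigorous despite the unbounded oscillatory phase. This is where the lower bound \eqref{eq:lower bound on Uepseta} and the cutoff $\eta>0$ are essential. I would handle it by first truncating the particle number and Trotter steps, then using dominated convergence with the bound $|\ee^{F_2}|\le1$.
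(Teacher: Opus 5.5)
Your algebra matches the paper's. The shifted Hubbard--Stratonovich phase produces $\ee^{\frac12\int\alpha^\eps\alpha^\eps_\eta}$, $\ee^{-\ii\langle\sigma,\alpha^\eps\rangle_\nu}$ and $\ee^{\langle\varrho^U_\nu,\alpha^\eps_\eta-\tau^\eps\rangle}$. The one-body potential becomes $U^\eps_\eta$. Splitting off the zeroth- and first-order terms in $\sigma$ (the latter via the diagonal of $K(-\kappa-U^\eps_\eta)^{-1}$, \cite[Lemma 3.13]{FKSS_2020}) then gives $S^\eps_{\nu,\eta}$, $T^\eps_{\nu,\eta}$ and $F_2$.

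\textbf{Main gap: the fixed-$\eta$ object.} The gap is in how you connect the fixed-$\eta$ expression to $\mathcal Z$. The covariance $C_\eta$ contains the temporal mollifier $\delta_{\eta,\nu}(\tau-\tilde\tau)$, so $\sigma\sim\mu_{C_\eta}$ is correlated across different imaginary times. Integrating it out gives the time-smeared path--path interaction $\mathbb V_\eta(\omega_i,\omega_j)$ of Definition \ref{def:quantum_interactins}. This interaction is nonlocal in time, so it is not the Feynman--Kac weight of $\ee^{-H_\eta}$ for any Hamiltonian, and in particular not for your spatially cut-off $H_\eta$.

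For the same reason, ``applying \eqref{HS} with covariance $C_\eta$ on each Trotter slice'' is not a valid step. Slice-wise linearisation needs independent fields on different slices, i.e.\ white noise in time in the limit. For such $\sigma$, the expression for $F_2$ through resolvents of $K(-\kappa-U^\eps_\eta+\ii\sigma)$ is not directly meaningful.

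Hence the right-hand side of \eqref{def:quantum_relative_partition_function} at fixed $\eta$ is not $\Tr_{\cF}\ee^{-H_\eta}/Z^0$, and $\Tr_{\cF}\ee^{-H_\eta}\to\Tr_{\cF}\ee^{-H}$ does not prove (i). The limit sits outside because the fixed-$\eta$ object is the space--time regularised operator $R_\eta$ of \eqref{eq:R_phi42}, and its convergence to $\ee^{-H}$ is a genuine input. The paper takes this from \cite[Proposition 3.12]{FKSS_2020}: local uniform convergence of the kernels of $R_{n,\eta}$ to those of $\ee^{-H_n}$, removing both cutoffs simultaneously. It then performs the replacement $\varrho^U_\nu\to\varrho^{U^\eps_\eta}_\nu$ and the $\alpha^\eps$-shift as scalar identities inside the Brownian path integral. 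In that setting, the interchange of the $\sigma$-integral, the path integrals and the sum over $n$ is controlled by $|\ee^{\ii\int\sigma}|=1$ together with $U^\eps_\eta\ge U/2$ after the shift. This is also the step you identified as the main obstacle.

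\textbf{Secondary issues.} Two further steps are wrong as justified.

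First, the bound $H_\eta\ge\frac12H^{0,U/2}-C_\eps$ does not follow from $\|\tau^\eps\|_{L^\infty}<\infty$. The operator $\nu\int\tau^\eps a^*a$ is unbounded on $\cF$. Moreover, $\tau^\eps$ blows up as $\eps\to0$, so for fixed $\kappa$ it can exceed $\kappa+U/2$ near the origin, and then $\nu\,\mathrm{d}\Gamma(h-\tau^\eps)$ is not even bounded below. This is exactly the obstruction described in Section \ref{sec:strategy}. The bound does hold, but only via the same $\alpha$-trick: positivity of $\frac12\langle\nu a^*a-\varrho^U_\nu-\alpha^\eps,\,v^\eps_\eta(\nu a^*a-\varrho^U_\nu-\alpha^\eps)\rangle$ gives $H_\eta\ge\nu\,\mathrm{d}\Gamma(h^{U^\eps_\eta})-C$ by \eqref{eq:lower bound on Uepseta}.

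Your claimed monotonicity in $\eta$ of the quartic term is also unjustified. The interaction $v$ is only of positive type, and $a^*a-\varrho_\nu/\nu$ has no sign.

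Second, for (ii), $\sigma K^{-1}$ is not self-adjoint, so ``$\sigma$ real'' does not by itself give $|\det_2(1-\ii\sigma K^{-1})|\ge1$. The clean argument is the Feynman--Kac domination. There $\sigma$ enters only through the unimodular factor $\ee^{\ii\int_0^\nu\sigma(t,\omega(t))\,\d t}$. Hence $\ee^{F_2(\sigma)+\ii\langle\sigma,\varrho^{U^\eps_\eta}_\nu\rangle_\nu}$ is a ratio of Fock traces with modulus at most one, which gives $\re F_2\le0$.
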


Notice that, from the representation in \cite[Lemma 5.4]{FKSS_2020}, it follows that \eqref{eq:def_of_F_2} is finite. From the discussion in Section \ref{sub:RiemannsecEstimates}, we can show that \eqref{eq:def_S_nueta} and \eqref{eq:def_Ewick_nueta} are also bounded quantities using a similar Brownian bridge representation.

From Propositions \ref{Prop:regularisation of euclidean field theory} and \ref{Prop:regularisation of phi42 quantum rel part funct}, we can establish functional integral representations for the Wick-ordered \(p\)-point correlation functions similar to \cite[Lemmas 5.18 and 5.19]{FKSS_24}. This shows \eqref{eq:convergence of relative partition function with decay}. For conciseness, we will not provide further details.

We split the proofs of Propositions \ref{Prop:regularisation of euclidean field theory} and \ref{Prop:regularisation of phi42 quantum rel part funct} in two subsections.

\subsection{Proof of Proposition \ref{Prop:regularisation of euclidean field theory}}
The aim of this subsection is to prove Proposition \ref{Prop:regularisation of euclidean field theory}, by deriving a functional integral representation for the classical field theory $W^\eps$. We first show that the constant \eqref{eq:S_eta} is uniformly bounded in \(\eta\) and that the function \eqref{def:E^Wick_eta} is well-defined, integrable and bounded. We use then these results to arrive at the proof of Proposition \ref{Prop:regularisation of euclidean field theory}, stating whenever we can exploit directly results from \cite[Proposition 4.1]{FKSS_2020}.

Throughout the subsection we use the regularised version of the field $W^\eps$, defined in \eqref{eq:def:regularised version of the field Weps}. The following lemma, whose proof is analogous to \cite[proof of Lemma 1.5]{frohlich2017gibbs}, ensures that the field $W^\eps_\eta$ is well defined. 
\begin{lemma}\label{lemma:W_N is a Cauchy sequence}
    Suppose that  Assumption \ref{assumption:P and D for U}, Assumption \ref{assumptions: non local interaction}, and Remark \ref{rmk:thetha_s2} hold. Let $W^\eps_{\eta,N}$ be defined as in \eqref{eq:def:regularised version of the field Weps}. Then, uniformly in \(\eta \geq 0\) and for every \(\eps>0\), the sequence \(\bigl(W^\eps_{\eta,N}\bigr)_{N >0} \)
      is a Cauchy sequence in \(\bigcap_{m \geqslant 1} L^m(\mathbb{P})\). We denote its limit by \(W^\eps_\eta\). Moreover, for all fixed \(\eps>0\), \[\lim_{\eta\to 0 }W^\eps_\eta = W^\eps_0\equiv W^\eps\]
      in \(\bigcap_{m \geqslant 1} L^m(\mathbb{P})\) . 
\end{lemma}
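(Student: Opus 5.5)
Since every $W^\eps_{\eta,N}$ is a polynomial of degree at most four in the Gaussian field $\phi_N$, it lies in the sum of the first four Wiener chaoses of $\P$. By hypercontractivity (e.g.\ \cite[Appendix A]{FKSS_24}), for any $m\geq 2$ we have $\|X\|_{L^m}\leq (m-1)^{2}\|X\|_{L^2}$ on this space. Hence it suffices to prove the Cauchy property, and the convergence as $\eta\to0$, in $L^2(\P)$, uniformly in $\eta\geq 0$. I would split $W^\eps_{\eta,N}$ into its quartic part $A_{\eta,N}\deq\int v^\eps_\eta(x,\tilde x)\wick{|\phi_N(x)|^2}\wick{|\phi_N(\tilde x)|^2}$ and its quadratic part $B_N\deq\int\tau^\eps\wick{|\phi_N|^2}$; the constant $E^\eps$ is unaffected by $N$.

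For the quadratic part, Wick's theorem gives
\[
\|B_N-B_M\|_{L^2}^2=\int\d x\,\d y\,\tau^\eps(x)\tau^\eps(y)\bigl(|G_M(x,y)|^2-2\re\, G_M(x,y)\overline{G_N(x,y)}+|G_N(x,y)|^2\bigr)
\]
for the mixed covariances (here $\phi_N,\phi_M$ are coupled as functionals of the same $X$). This tends to zero by dominated convergence. The domination uses $|G_N(x,y)|\lesssim G(x,y)$, which follows from the Feynman--Kac/heat-kernel positivity, since $G_N=\int_{1/N}^\infty \ee^{-th}\,\d t$ when $\vartheta(x)=\ee^{-x}$ (the cutoff choice is immaterial; it suffices that $G_N\to G$ pointwise off the diagonal). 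The dominating function $\tau^\eps(x)\tau^\eps(y)G(x,y)^2$ is integrable: for fixed $\eps$, $\tau^\eps$ is bounded and, by Proposition \ref{prop:GN}, decays like $\tilde g(x)^{-1}$ times logarithms, while $G(x,y)^2$ has an integrable log singularity and exponential off-diagonal decay. Indeed $\int \d y\,G(x,y)^2\lesssim \tilde g(x)^{-1}\log \tilde g(x)$, and $\theta>2$ makes everything summable.

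For the quartic part, Wick's theorem expresses $\E[A_{\eta,N}\overline{A_{\eta,M}}]$ as
\[
\int v^\eps_\eta(x,\tilde x)v^\eps_\eta(y,\tilde y)\,\mathcal G_{N,M}(x,\tilde x,y,\tilde y)\,\d x\,\d\tilde x\,\d y\,\d\tilde y,
\]
where $\mathcal G_{N,M}$ is a sum of products of four cross covariances, each pairing an $x$-type point with a $y$-type point (no self-contractions, by Wick ordering). Expanding $\|A_{\eta,N}-A_{\eta,M}\|^2_{L^2}$, each term converges pointwise to the same limit as $N,M\to\infty$. Domination is by products such as $G(x,y)G(\tilde x,\tilde y)G(x,\tilde y)G(\tilde x,y)$ weighted by $|v^\eps(x-\tilde x)||v^\eps(y-\tilde y)|$ and $\|\varphi\|_\infty^4$, which is independent of $\eta$. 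Integrability for fixed $\eps$ is shown as in the bounds on $(V^\eps-W^\eps)_{R^c}$ in Lemma \ref{lm:Veps-Weps}: apply Young's inequality to bound the product by $G(x,y)^4+\dots$, integrate out $\tilde x,\tilde y$ against $v^\eps$ (using $v^\eps\in L^\infty$ for fixed $\eps$, so $\log$ singularities are harmless), and use $\int\d x\,\d y\, G(x,y)^4<\infty$. This is the same $\tilde g^{-1}$ computation as in \eqref{eq:WepsVepsRc}. Since the dominating function does not depend on $\eta$, the Cauchy property is uniform in $\eta\geq0$. The main obstacle is exactly this global integrability on $\R^2\times\R^2$: $\eta=0$ removes the spatial cutoff, so decay must come from the Green function alone. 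That decay is supplied by Proposition \ref{prop:GN} together with $\theta>2$.

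For the limit $\eta\to0$ at fixed $\eps$, I would use the uniform-in-$\eta$ Cauchy property to interchange the limits: it suffices to show $W^\eps_{\eta,N}\to W^\eps_{0,N}$ in $L^2$ for each fixed $N$, and then use a standard $3\delta$ argument. Only the quartic part depends on $\eta$, and the $L^2$ norm of the difference is the same Wick integral with $v^\eps_\eta$ replaced by $v^\eps_\eta-v^\eps_0$. Because $\varphi(\eta x)\to\varphi(0)=1$ pointwise and boundedly, dominated convergence with the same majorant as above gives convergence to zero. Hypercontractivity then upgrades all statements to $\bigcap_m L^m(\P)$.
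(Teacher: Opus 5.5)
There is a genuine gap in your domination step, and it sits exactly where you write that ``the cutoff choice is immaterial''. The sequence in the lemma is the one from \eqref{eq:def:regularised version of the field Weps}, which the paper defines with the sharp spectral cutoff $\vartheta=\mathbbold{1}_{[0,1]}$. Thus $\phi_N=P_N\phi$ is finite-dimensional, which the proof of Proposition \ref{Prop:regularisation of euclidean field theory} needs for the change of Gaussian measure and for the trace formula for $f_{2,N}$.

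For this cutoff, $G_N=\sum_{\lambda_k\le N}\lambda_k^{-1}u_ku_k^*$ is not a positive kernel. It is an oscillating finite sum of eigenfunctions, and no bound $|G_N(x,y)|\lesssim G(x,y)$ uniform in $N$ is available. The only general pointwise estimate, $|G_N(x,y)|^2\le G_N(x,x)G_N(y,y)$, grows with $N$. Pointwise convergence $G_N\to G$ off the diagonal is also not clear. So the $N$-independent integrable majorant your dominated-convergence argument requires exists only for the heat-kernel cutoff $G_N=\int_{1/N}^\infty \ee^{-th}\,\dd t$, not for the sequence in the lemma.

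The missing idea is to use Hilbert--Schmidt norms instead of pointwise Green-function bounds, as in \cite[proof of Lemma 1.5]{frohlich2017gibbs}, which the paper invokes:
\begin{itemize}
\item For fixed $\eps$, uniformly in $\eta$, you have $\|v^\eps_\eta\|_{L^\infty}\le\|\varphi\|_\infty^2\|v^\eps\|_{L^\infty}$ and $\|\tau^\eps\|_{L^\infty}<\infty$.
\item Every Wick term is a cycle (or product of cycles) of covariances. Cauchy--Schwarz bounds it by a product of $\mathfrak S^2$ norms.
\item Telescoping gives differences controlled by $\|G_M-G_N\|_{\mathfrak S^2}^2=\sum_{N<\lambda_k\le M}\lambda_k^{-2}\to 0$. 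This holds because $\tr h^{-2}\le\kappa^{s-2}\tr h^{-s}<\infty$ by Remark \ref{rmk:thetha_s2}.
\end{itemize}
This uses no pointwise information on $G$ and works for any cutoff. Your hypercontractivity reduction to $L^2$ and your $3\delta$ argument for $\eta\to0$ are fine once the $L^2$ estimates are done this way.

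Your Wick expansion of the quartic part is also incomplete. In $\wick{|\phi_N(x)|^2}\wick{|\phi_N(\tilde x)|^2}$ the two factors are Wick ordered separately, so contractions between $x$ and $\tilde x$ are allowed. Hence $\E|A_{\eta,N}-A_{\eta,M}|^2$ also contains the disconnected terms $|G(x,\tilde x)|^2|G(y,\tilde y)|^2$ and four-cycles such as $G(x,\tilde x)G(\tilde x,y)G(y,\tilde y)G(\tilde y,x)$. For these your Young-inequality majorant fails. Bounding by $G(x,\tilde x)^4+\cdots$ leaves the factor $\int\dd y\,\dd\tilde y\,|v^\eps(y-\tilde y)|=\infty$. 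These terms are harmless for fixed $\eps$: either group the factors as $G(x,\tilde x)^2G(\tilde x,y)^2+G(y,\tilde y)^2G(\tilde y,x)^2$, or use the Hilbert--Schmidt bound above. They must, however, be included.

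Similarly, the cross term in $\|B_N-B_M\|_{L^2}^2$ is $|\E[\phi_N(x)\overline{\phi_M(y)}]|^2$, which equals $|G_{N\wedge M}(x,y)|^2$ for the sharp cutoff. It is not $G_M\overline{G_N}$.
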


 In the following, we will use the notation
\[P_N \deq  \sum_{k=0}^\infty \mathbbold{1}_{\lambda_k \leq N} u_k u_k^*\,\]
for the projection onto a finite subspace of \(\mathcal{H}\) spanned by the eigenvectors of \(h\)
and define
    \[(b)_N\deq P_N b P_N \,\]
for any operator \(b \) acting on \(\mathcal{H}\). 

\begin{lemma}
    \label{lemma:integration of rho and tauepsilon} 
    Let \(\alpha^\eps_\eta\) be defined as in \eqref{def:rhoepseta}, where \(\alpha^\eps\) satisfies Definition \ref{assumption:existence of alpha}.
    Then, for a fixed \(\eps > 0 \), and for all \(\eta \geq 0\) small enough, we have
    \begin{equation}\label{eq:limNalpha}
    \lim_{N \to \infty}\mathbb{E}\Bigl[{\ee^{-\int \d{x}\,(\alpha^\eps_\eta(x)- \tau^\eps_\eta(x))\col\abs{\phi_N(x)}^2\col}}\Bigr] = \mathbb{E}\Bigl[{\ee^{-\int \d{x}\, (\alpha^\eps_\eta(x)- \tau^\eps_\eta(x))\col\abs{\phi(x)}^2\col}}\Bigr]< \infty\,.
    \end{equation}
\end{lemma}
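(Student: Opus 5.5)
The plan is to reduce everything to finite-dimensional Gaussian computations with the projected field $\phi_N$, evaluate them exactly, and then pass to the limit $N\to\infty$ through Fredholm determinants. Write $q\deq \alpha^\eps_\eta-\tau^\eps_\eta$; this is a bounded, real, compactly supported function by Definition \ref{assumption:existence of alpha}, since $\tau^\eps\in L^\infty$ for fixed $\eps$ and $\alpha^\eps_\eta$ carries the cutoff $\varphi(\eta\,\cdot)$. Let $Q_N\deq G_N^{1/2} q G_N^{1/2}$, where $G_N$ is the covariance of $\phi_N$ (for the sharp cutoff, $G_N=(h^{-1})_N$). Since $\phi_N$ is a finite-dimensional complex Gaussian vector, the standard formula gives
\[
\mathbb{E}\Bigl[\ee^{-\int q\,\col\abs{\phi_N}^2\col}\Bigr]=\det(1+Q_N)^{-1}\,\ee^{\tr Q_N}=\det{}_2(1+Q_N)^{-1}\,,
\]
valid provided $1+Q_N>0$. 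The whole argument therefore rests on two facts: a uniform positivity bound $1+Q_N\geq c>0$, and convergence of $Q_N$ in Hilbert–Schmidt norm.

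For positivity, I use \eqref{eq:lower bound on Uepseta}, which gives $h+q=h^{U^\eps_\eta}\geq \kappa-\Delta/2+U/2\geq \tfrac12 h$ as quadratic forms. Hence $h^{-1/2}(h+q)h^{-1/2}\geq \tfrac12$, that is, $1+h^{-1/2}qh^{-1/2}\geq \tfrac12$. Because $P_N$ commutes with $h$, compressing to $\operatorname{ran}P_N$ gives $1+Q_N\geq\tfrac12$ there, and on the complement $Q_N=0$. This positivity is uniform in $N$, and in $\eta$ small, so $\det_2(1+Q_N)$ is bounded away from $0$ along any convergent sequence.

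For convergence, note that $Q\deq h^{-1/2}qh^{-1/2}$ is Hilbert–Schmidt: $\norm{Q}_{2}^2=\tr(qh^{-1}qh^{-1})\leq \norm{q}_\infty^2\tr(h^{-2})<\infty$ by \eqref{eq:tracehs} with $s<2$. One could also use the compact support of $q$ together with $G\in L^2_{\rm loc}$ from Proposition \ref{prop:GN}. Since $Q_N=P_NQP_N$ and $P_N\to 1$ strongly, we get $Q_N\to Q$ in $\mathfrak S_2$. Continuity of $\det_2$ on $\mathfrak S_2$ then yields $\det_2(1+Q_N)\to\det_2(1+Q)\neq0$, where nonvanishing follows from $1+Q\geq\tfrac12$.

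The remaining step, which I expect to be the main technical point, is identifying this limit with the right-hand side of \eqref{eq:limNalpha}. By definition, $\int q\,\col\abs{\phi}^2\col$ is the $L^2(\mathbb P)$-limit of $Y_N\deq\int q\,\col\abs{\phi_N}^2\col$; this is a second-chaos computation with $\mathbb E|Y_N-Y_M|^2=\norm{Q_N-Q_M}_2^2$ in the appropriate form. The cleanest route is to show that $\ee^{-Y_N}$ is uniformly integrable. Applying the same formula with $q$ replaced by $pq$ for some $p>1$ close to $1$, positivity persists: writing $1+pQ=(1-p)+p(1+Q)\geq \tfrac p2-(p-1)>0$ requires $p<2$, which holds once $p$ is close to $1$. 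This gives $\sup_N\mathbb E[\ee^{-pY_N}]<\infty$. Convergence in probability of $Y_N$ plus uniform integrability then gives convergence of the expectations, and finiteness of the limit follows. If the sharp bound $\tfrac12$ is too tight in \eqref{eq:lower bound on Uepseta}, I would instead use a Vitali argument with Fatou applied to $\ee^{-Y}$, which at least yields finiteness directly.
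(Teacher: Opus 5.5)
Your proposal is correct, and it reaches the conclusion by a different route from the paper's. The paper works in two steps. First, it notes that $I_N=\int q\,\wick{\abs{\phi_N}^2}$ and its limit $I$ lie in the second polynomial chaos, and uses hypercontractivity to bound $\norm{\ee^{I_N-I}-1}_{L^p}$ by $\sum_k (Cp\norm{I_N-I}_{L^2})^k$, so that integrability transfers between $\ee^{-I_N}$ and $\ee^{-I}$. Second, it bounds $\log\E[\ee^{-I_N}]=\int_0^\infty \dd t\,\tr\bigl(\frac{1}{t+h}q_N\frac{1}{t+h}q_N\frac{1}{t+h+q_N}\bigr)$ uniformly in $N$ by Schatten-$3$ Hölder, comparing both resolvents with $(t+h^{U/2})^{-1}$; this gives $\frac12\norm{q}_{L^\infty}^2\tr(h^{U/2})^{-2}$. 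That trace integral is exactly $-\log\det_2(1+Q_N)$, so both proofs rest on the same input: $U^\eps_\eta\geq U/2$ together with $\tr h^{-2}<\infty$.

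The difference lies in how that input is used and how the limit $N\to\infty$ is taken. You encode the lower bound as the operator inequality $1+Q\geq\frac12$ and pass to the limit through Hilbert–Schmidt continuity of $\det_2$. You also replace hypercontractivity with the exact second-chaos identity $\E\abs{Y_N-Y_M}^2=\norm{Q_N-Q_M}_{\mathfrak S^2}^2$, combined with Vitali's theorem and the uniform bound $\sup_N\E[\ee^{-pY_N}]<\infty$ for some $p\in(1,2)$.

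Your route is more self-contained and gives slightly more. It identifies the limit explicitly as $\det_2(1+Q)^{-1}$ and fully justifies convergence of the expectations. The paper, as written, bounds only the first moment uniformly. To pass from its hypercontractive estimate to convergence of $\E[\ee^{-I_N}]$, one needs a uniform $L^{p'}$ bound for some $p'>1$, which is exactly what your step with $pq$ supplies. The paper's chaos argument is more robust because it does not rely on an exact Gaussian formula, but for a quadratic functional that generality is not needed.

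One small correction: $q=\alpha^\eps_\eta-\tau^\eps$ is bounded but need not be compactly supported. $\tau^\eps$ only decays, and $\alpha^\eps_0=v^\eps*\alpha^\eps$ need not have compact support either. Your main estimate uses only $\norm{q}_{L^\infty}^2\tr(h^{-2})$, so this does not affect the proof. Likewise, the closing Fatou fallback is unnecessary.
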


\begin{proof} 

    The convergence of the limit in  \eqref{eq:limNalpha}
    follows from  Lemma \ref{lemma:bounds on taueps}, Definition \ref{assumption:existence of alpha} applied to \(f \deq \alpha^\eps_\eta - \tau^\epsilon\), and hypercontractivity (see \cite[Appendix A.2]{FKSS_24} for further details). The argument, as in \cite[lemma 4.6]{FKSS_24}, can be summarised in the following steps.
    \begin{enumerate}[labelindent=0pt,labelwidth=\widthof{aa},label=\arabic*.,itemindent=1em,leftmargin=!]
        \item
        \(I_N \deq\int \d{x}\, (\alpha^\eps_\eta(x)- \tau^\eps_\eta(x))\wick{\abs{\phi_N(x)}^2}\) and \(I\deq\int \d{x}\,(\alpha^\eps_\eta(x)- \tau^\eps_\eta(x))\wick{\abs{\phi(x)}^2}\) belong to the second polynomial chaos, using the fact that both quantities are orthogonal to the generators \(\wick{\phi(f_1)\cdots\phi(f_n)}\), \(f_i \in \mathcal{H}^{-1}\deq D(h ^{-1/2})\), of the \(n\)-th polynomial chaos whenever \(n \neq 2\). 
        This follows from the fact that 
        \[ \wick{\phi(f_1)\cdots \phi(f_n) } = \int \d{x_1} \cdots \, \d{x_n}\,f(x_1)\cdots f(x_n) \wick{\phi(x_1)\cdots \phi(x_n)}\]
        and that, when \(n \neq m\)
        \[\mathbb{E}\bigl[ \wick{\phi(x_1)\cdots \phi(x_m)} \, \wick{ \phi(x_1)\cdots \phi(x_n)}\bigr] = 0\,.\]
        \item We can thus show that 
\begin{equation}\label{boudedness_equiv}
            \|\ee^{-I_N}\|_{L^p} < \infty \iff \|\ee^{-I}\|_{L^p} < \infty\,.
        \end{equation}
        By step 1, we can estimate
        \begin{equation}\label{hypercontract_2}
            \big\|\ee^{I_N-I} -1\big\|_{L^p} \leq \sum_{k \geq 1} \frac{1}{k!}\big\|(I_N-I)^k\big\|_{L^p} =\sum_{k \geq 1} \frac{1}{k!}\big\|I_N-I\big\|_{L^{p k}}^k\,. 
        \end{equation}
        Now we use the hypercontractivity of the $k$ moments and that \(I_N- I\) is in the second polynomial chaos, as well as the Stirling approximation in the last inequality
        \begin{equation}\label{hypercontract_3}
            \eqref{hypercontract_2}  \leq \sum_{k \geq 1} \frac{1}{k!}(p k)^k\big\|I_N-I\big\|_{L^2}^k \leq \sum_{k \geq 1}\bigl(C p\|I_N-I\|_{L^2}\bigr)^k.
        \end{equation}
        Hence,  \eqref{hypercontract_3} converges for \(N\) large enough from an argument similar to the proof of Lemma \ref{lemma:W_N is a Cauchy sequence}. Indeed, the result  in Lemma \ref{lemma:W_N is a Cauchy sequence} can also be extended to show that \(\int \d x\, f(x)\, \wick{|\phi_N(x)|^2}\) is a Cauchy sequence in \(L^m\) for all \(m \in \mathbb{N}^*\) as long as \(\|f\|_{L^\infty}<\infty\). This proves \eqref{boudedness_equiv}. 
    \end{enumerate} 
    We can now conclude the proof by showing
    \[\mathbb{E}\biggl[{\exp\biggl({-\int \d{x}\, (\alpha^\eps_\eta(x)- \tau^\eps_\eta(x))\, \wick{\abs{\phi_N(x)}^2}}}\biggr)\biggr] < \infty\]
    uniformly in \(N\) for a given choice of regularisation function \(\vartheta\). 
    
    Let us consider the regularisation function \(\vartheta(x)\deq\mathbbold{1}_{x\leq 1}\). For any \(N>0\), we define
    \begin{equation}\label{eq:finite dimensional reg of phi}
        \phi_N = \sum_{k=0}^{\infty} X_k \sqrt{
\frac{\vartheta(\lambda_k/N)}{\lambda_k}}u_k=\sum_{k=0}^{\infty} X_k 
\frac{\mathbbold{1}_{\lambda_k\leq N}}{\sqrt{\lambda_k}}u_k = P_N \phi.
    \end{equation}
Then \(P_N\phi\) is a finite dimensional centred Gaussian vector with covariance \(P_N h^{-1}\).  As in the proof of \cite[Proposition 4.1]{FKSS_2020}, we have
    \begin{multline}\label{eq:trace_pho_tau}
        \mathbb{E}\Bigl[\ee^{-\int \d{x} \,(\alpha^\eps_\eta(x)- \tau^\eps_\eta(x))\col\abs{\phi_N(x)}^2\col}\Bigr] \\ = \exp \biggl\{\int_0^{\infty} \d t\, \tr\biggl({P_N\biggl(\frac{1}{t+h} \bigl(\alpha^\eps_\eta- \tau^\eps_\eta\bigr)_N \frac{1}{t+h} \bigl(\alpha^\eps_\eta- \tau^\eps_\eta\bigr)_N  \frac{1}{t+h+ \bigl(\alpha^\eps_\eta- \tau^\eps_\eta\bigr)_N } \biggr)}\biggr)\biggr\}\,.
    \end{multline}
    To conclude, we want to show that the right-hand side of~\eqref{eq:trace_pho_tau} is bounded.
    By Assumption \ref{assumption:P and D for U} and \eqref{eq:uniform lower bound potential in eta and eps}, using \(\|(\tau^\eps_\eta - \alpha^\eps_\eta)_N\|_{\mathfrak{S}^{\infty}} \leq \|\tau^\eps_\eta - \alpha^\eps_\eta \|_{L^{\infty}}\) and Hölder's inequality, we get
    \begin{align*}
        \abs{\log {\eqref{eq:trace_pho_tau}}} \leq  \int_0^{\infty} \mathrm{d} t\,\|\alpha^\eps_\eta- \tau^\eps_\eta\|_{L^{\infty}}^2\Big\|\frac{1}{t+{h^{U/2}}}\Big\|_{\mathfrak{S}^3}^3 & =\int_0^{\infty} \mathrm{d} t\,\|\alpha^\eps_\eta- \tau^\eps_\eta\|_{L^{\infty}}^2 \tr\Bigl\{\Bigl(\frac{1}{t+h^{U/2}}\Bigr)^3\Bigr\} \\ & =\frac{1}{2}\|\alpha^\eps_\eta- \tau^\eps_\eta\|_{L^{\infty}}^2 \tr\Bigl\{ \bigl({h^{U/2}}\bigr)^{-2}\Bigr\}<\infty\,. \qedhere
    \end{align*}
\end{proof}
Let us the define the operator $\mathcal{T}^{\eps}_\eta$ acting on \(\mathcal{H}\) as
    \begin{equation}\label{eq:def_of_Ewicketa_as_operator}
           \mathcal{T}^{\eps}_\eta \deq \frac{1}{h^U} -\frac{1}{h^{U^{\eps}_\eta}}\,.
        \end{equation}

\begin{lemma}\label{lemma:convergence_of_EWick}
    Let \(U\) satisfy Assumption \ref{assumption:P and D for U} and \(\alpha^\eps\) as in  Definition \ref{assumption:existence of alpha}. Consider $\mathcal{T}^{\eps}_\eta$ as in \eqref{eq:def_of_Ewicketa_as_operator}. Then,
    \begin{enumerate}[label=(\roman*)]
        \item  the operator \(\mathcal{T}^{\eps}_\eta\)
        is trace-class and has a continuous kernel for any \(\eta \geq 0\) small enough. The same statement holds for its restriction to the span of a finite number of eigenvectors of \(h\) \begin{equation}\label{eq:def_of_EwicketaN_as_operator}
            \mathcal{T}^{\eps}_{\eta,N} \deq P_N \mathcal{T}^{\eps}_{\eta} P_N \,.
        \end{equation}

        \item \label{lemma:convergence_of_Ewick_conv_of_EN} The diagonal of the operator \(\mathcal{T}^{\eps}_{\eta, N}\), denoted by \(T_{\eta,N}^{\eps}(x) \equiv \mathcal{T}_{\eta,N}(x,x) \), converges in \(L^1(\mathbb{R}^d)\) to \(T^{\eps}_\eta(x)\). Moreover, \(T^{\eps}_\eta\in L^{\infty}(\mathbb{R}^d).\)
        \item  Uniformly in \(x \in \mathbb{R}^d\), 
        \[\lim_{\eta \to 0} T^\eps_\eta(x) = T^\eps_0(x)\eqd  T^{\eps}(x)\,. \]
    \end{enumerate}
\end{lemma}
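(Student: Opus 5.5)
The plan rests on the second resolvent identity
\[
\mathcal{T}^\eps_\eta=\frac{1}{h^U}-\frac{1}{h^{U^\eps_\eta}}=\frac{1}{h^U}\,\bigl(U^\eps_\eta-U\bigr)\,\frac{1}{h^{U^\eps_\eta}}=\frac{1}{h^U}\,f\,\frac{1}{h^{U^\eps_\eta}}\,,\qquad f\deq\alpha^\eps_\eta-\tau^\eps\,,
\]
with $f$ acting as a multiplication operator. Here $\tau^\eps\in L^\infty$ for fixed $\eps$, by the bounds on $\tau^\eps$ from Lemma \ref{lemma:bounds on taueps} and the logarithmic singularity of $G$. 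Moreover, \eqref{eq:bound on Linf norm of alpha} gives $\|f\|_{L^\infty}\lesssim\|\tau^\eps\|_{L^\infty}$ uniformly in small $\eta$.

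For (i), write $\mathcal{T}^\eps_\eta=\bigl((h^U)^{-1}f^{1/2}\bigr)\bigl(f^{1/2}(h^{U^\eps_\eta})^{-1}\bigr)$, splitting $f=|f|^{1/2}\,\mathrm{sgn}(f)\,|f|^{1/2}$. Each factor is Hilbert–Schmidt, since $\|(h^U)^{-1}|f|^{1/2}\|_{\mathfrak S^2}^2\le\|f\|_{L^\infty}\tr\bigl((h^U)^{-2}\bigr)$. By \eqref{eq:uniform lower bound potential in eta and eps}, $\tr\bigl((h^{U^\eps_\eta})^{-2}\bigr)\le\tr\bigl((h^{U/2})^{-2}\bigr)<\infty$ because $s<2$. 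Hence $\mathcal{T}^\eps_\eta$ is trace class, uniformly in $\eta$.

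For continuity of the kernel, I would use
\[
\mathcal{T}^\eps_\eta(x,y)=\int\d z\,G^U(x,z)f(z)G^{U^\eps_\eta}(z,y)\,.
\]
By Proposition \ref{prop:GN}, both Green functions have only logarithmic singularities and decay exponentially, so $z\mapsto G(x,z)$ lies in $L^2$ locally uniformly and is continuous in $x$ in $L^2$ norm. The product integral is then jointly continuous and bounded, $\sup_x\int G^U(x,z)G^{U/2}(z,x)\,\d z<\infty$, which gives $T^\eps_\eta\in L^\infty$. The same argument applies to $P_N\mathcal{T}^\eps_\eta P_N$: it is a finite-rank operator whose eigenfunctions $u_k$ are continuous (Hölder, as noted in the footnote), so its kernel is continuous and its trace is bounded by the trace of $\mathcal{T}^\eps_\eta$.

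For (ii), I would write $\mathcal{T}^\eps_\eta=A^*B$ with $A=|f|^{1/2}(h^U)^{-1}$ and $B=\mathrm{sgn}(f)|f|^{1/2}(h^{U^\eps_\eta})^{-1}$, both Hilbert–Schmidt. Then
\[
T^\eps_{\eta,N}(x)=\int\d z\,\overline{(AP_N)(z,x)}\,(BP_N)(z,x)\,,
\]
and by Cauchy–Schwarz,
\[
\|T^\eps_{\eta,N}-T^\eps_\eta\|_{L^1}\le\|A(P_N-1)\|_{\mathfrak S^2}\|B\|_{\mathfrak S^2}+\|A\|_{\mathfrak S^2}\|B(P_N-1)\|_{\mathfrak S^2}\,.
\]
Since $P_N\to1$ strongly and $A,B$ are Hilbert–Schmidt, both terms tend to zero. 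This is the standard fact that the diagonal of a product of two Hilbert–Schmidt operators is $L^1$-continuous in HS norm.

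For (iii), the approach is to show that $\eta\mapsto T^\eps_\eta(x)$ is continuous uniformly in $x$. By \eqref{unif convergence of alpha_eta}, $\|U^\eps_\eta-U^\eps_0\|_{L^\infty}=\|\alpha^\eps_\eta-\alpha^\eps_0\|_{L^\infty}\to0$. By the resolvent identity,
\[
T^\eps_\eta(x)-T^\eps_0(x)=\int\d z\,G^{U^\eps_\eta}(x,z)\,(\alpha^\eps_\eta-\alpha^\eps_0)(z)\,G^{U^\eps_0}(z,x)\,,
\]
which in absolute value is at most $\|\alpha^\eps_\eta-\alpha^\eps_0\|_{L^\infty}\sup_x\int G^{U/2}(x,z)^2\,\d z$. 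This uses the pointwise domination $0\le G^{U^\eps_\eta}\le G^{U/2}$ from \eqref{eq: heat kernel of U tilde upper bound}. The supremum is finite because, by Proposition \ref{prop:GN}, $G^{U/2}(x,\cdot)^2$ is a square-integrable logarithmic singularity times an exponentially decaying factor, uniformly in $x$.

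I expect the main obstacle to be the continuity of the kernels in (i), together with the uniformity in $x$ required for the $L^\infty$ bounds. These rely on the Green function estimates of Section \ref{app:Greenfunction}, specifically the bound $\sup_x\|G^{U/2}(x,\cdot)\|_{L^2}<\infty$, and on dominated convergence in $x$.
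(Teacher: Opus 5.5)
Your proof is correct. Parts (i) and (iii) follow the paper's argument:
\begin{itemize}
\item the resolvent identity $\mathcal{T}^\eps_\eta=(h^U)^{-1}(\alpha^\eps_\eta-\tau^\eps)(h^{U^\eps_\eta})^{-1}$ combined with H\"older's inequality in Schatten norms gives the trace-class property;
\item the representation $\int\d z\,G^U(x,z)f(z)G^{U^\eps_\eta}(z,y)$ plus dominated convergence gives continuity of the kernel;
\item for (iii), the bound $\|\alpha^\eps_\eta-\alpha^\eps_0\|_{L^\infty}\sup_x\int\d z\,G(x,z)^2$ gives uniform convergence. The paper dominates by the free kernel and obtains $\kappa^{d/2-2}$, which is the same estimate.
\end{itemize}

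The genuine difference is in (ii). The paper proceeds in three steps:
\begin{itemize}
\item it takes the singular value decomposition $\mathcal{T}^\eps_\eta=\sum_k\mu_k\phi_k\psi_k^*$;
\item it invokes the theorem on traceable integral kernels to identify the continuous diagonal with $\sum_k\mu_k\phi_k(x)\bar\psi_k(x)$ almost everywhere;
\item it truncates at $k\le M$ and applies Cauchy--Schwarz term by term.
\end{itemize}

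You instead use the concrete Hilbert--Schmidt factorisation $\mathcal{T}^\eps_\eta=A^*B$ supplied by the resolvent identity. Here $\int\d z\,\overline{A(z,x)}B(z,y)$ equals the explicit kernel $\int\d z\,G^U(x,z)f(z)G^{U^\eps_\eta}(z,y)$ for every $(x,y)$, and likewise after inserting $P_N$, where the $u_k$ are continuous. So the identification of the diagonal with the continuous kernel is automatic. This is exactly the step for which the paper needs the traceable-kernel result. A single Cauchy--Schwarz in $(x,z)$ then yields the $L^1$ estimate, with no truncation in $M$.

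Since $P_N$ is a spectral projection of $h^U$, your first error term is even explicit: $\|A(1-P_N)\|_{\mathfrak S^2}^2\le\|f\|_{L^\infty}\sum_{\lambda_k>N}\lambda_k^{-2}$.

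The paper's route is more abstract and would apply to any trace-class operator with continuous kernel. Yours is more elementary and quantitative, but it depends on the explicit product structure coming from the resolvent identity. You also prove $T^\eps_\eta\in L^\infty$ explicitly, which the paper's proof of (ii) leaves implicit; it follows from the same estimate used in (iii).
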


\begin{proof}
    { (i)} Consider the operator \(\mathcal{T}^\eps_\eta\) as in \eqref{eq:def_of_EwicketaN_as_operator} whose kernel we denote by \(\mathcal{T}^\eps_\eta(x,y) \) for \(x,y\in \mathbb{R}^d\). By a resolvent expansion and Hölder's inequality on Schatten spaces, we find that \(\mathcal{T}^\eps_\eta \) is trace-class, since 
    \[\|\mathcal{T}^\eps_{\eta}\|_{\mathfrak{S}_1}\leq \Big\|\frac{1}{h^U}\Big\|_{\mathfrak{S}_2} \|\alpha^\eps_{\eta}- \tau^\eps\|_{L^\infty} \Big\|\frac{1}{h^{U^{\eps}_\eta}}\Big\|_{\mathfrak{S}_2}< \infty\]
    uniformly, for \(\eta \geq0\) small enough. 
    
    Let us now show the continuity of \(T^\eps_\eta(x,y)\) at all points \((x,y) \in \mathbb{R}^d \times \mathbb{R}^d\). Since the kernel of \(1/h^U\), \(1/h^{U^\eps_\eta}\) is continuous outside the set of measure zero \(\{(x,x)\col x \in \mathbb{R}^d\}\) and  
    \begin{equation*}
        {\mathcal{T}^\eps_\eta(x,y) = \int\d w \,\frac{1}{h^U}(x,w)\,({\alpha^\eps_\eta(w)-\tau^\eps(w)})\,\frac{1}{h^{U^\eps_\eta}}(w,y)}\,, 
    \end{equation*}   we can conclude by dominated convergence that \((x,y) \mapsto T^\eps_\eta(x,y)\) is jointly continuous.

    Let us now consider the operator \(\mathcal{T}^\eps_{\eta, N}\deq P_N \mathcal{T}^\eps P_N \) defined in \eqref{eq:def_of_EwicketaN_as_operator}. Arguing as above and using the fact that 
    \[\Big\|(1-P_N)\frac{1}{h^U}\Big\|_{\mathfrak{S}_2}^2=\sum_{k\geq0}\frac{\mathbbold{1}_{\lambda_k>N}}{\lambda_k^2} \xrightarrow[N\to \infty]{}0,\]
    we also have that \(\mathcal{T}^\eps_{\eta, N}\) is a trace-class operator for all \(N>0\), and its kernel 
    \[\mathcal{T}^\eps_{\eta, N}(x,y)= \sum_{n,m \in \mathbb{N}} \mathbbold{1}_{\lambda_n\vee\lambda_m\leq N}\langle u_l,\mathcal{T}^\eps_\eta u_m\rangle u_m(x)\bar{u}_n(y)\]
    is continuous since it is a finite sum of continuous functions.

    (ii) Let \(\{\mu_k\}_{k \geq 0}\) denote the singular values of \(T^\eps_\eta\), \(\{\psi_k\}_{k\geq 0}\) the corresponding singular vectors and  \(\{\phi_k\}_{k\geq 0}\) a family of orthonormal vectors defined as \(\phi_k\deq \mu_k^{-1} T^\eps_\eta\psi_k\) for all \(k\geq 0\). Then, following the argument in \cite[Section 1]{traceableintegralkernels} together with the continuity of \(T^\eps_\eta(x,y)\), we have 
    \begin{equation}\label{eq:Ewick_kernel_expansion_singvect}
        (\mathcal{T}^\eps_\eta f) (x) = \sum_{k\geq 0} \mu_k \langle f,\psi_k\rangle \phi_k(x) 
    \end{equation}
    and the series expansion 
    \[\sum_{k\geq 0} \mu_k\phi_k(x)\bar{\psi}_k(x)\]
    converges for almost every point in the diagonal of \(\mathbb{R}^d \times \mathbb{R}^d\) to \(T^\eps_\eta(x)\). The series expansion \(\eqref{eq:Ewick_kernel_expansion_singvect}\) and the argument above imply that 
    \[T^\eps_{\eta,N} (x)= \sum_{k\geq 0}\mu_k P_N\phi_k(x) P_N\overline{\psi}_k(x) \]
    for almost every \(x\in \mathbb{R}^d\). 
    Using Cauchy-Schwarz we find  
    \begin{equation}
    \label{eq:diff_of_kernelsEwick}
    \begin{split}
        \int \d x \,\abs{T^\eps_\eta(x)- T^\eps_{\eta,N}(x)} & \leq \sum_{k\geq 0}\mu_k \biggl( \int\d x\,  \abs{(1-P_N)\phi_k(x)\bar{\psi}_k(x)} \\
        &\hspace{4cm}+ \int\d x \, \abs{P_N\phi_k(x)(1-P_N)\bar{\psi}_k(x)} \biggr) \\
        &\leq \sum_{k= 0}^M\mu_k \bigl(\|(1-P_N)\phi_k\|_{L^2(\mathbb{R}^d)} + \|(1-P_N)\psi_k\|_{L^2(\mathbb{R}^d)}\bigr) +  \sum_{k> M } \mu_{k}\,.
    \end{split}\end{equation}
    the right-hand side of \eqref{eq:diff_of_kernelsEwick}  converges to 0 as $M, N \to \infty$, since for any \(\epsilon>0\), we can choose \(M>0\) big enough such the second term of \eqref{eq:diff_of_kernelsEwick} is smaller than \(\epsilon\), as \(\mathcal{T}^\eps_\eta\) is Hermitian and \(\|\mathcal{T}^\eps_\eta\|_{\mathfrak{G}^2}<\infty\), and set \(N>0\) large enough such that \(\abs{(1-P_N)\phi_k} + \abs{(1-P_N)\psi_k} \leq \epsilon\) for all \(k=0, \dots,M\).

    (iii) By a resolvent expansion and the bounds on $ (\kappa-\Delta)^{-1}$, which follow from Proposition \ref{prop:GN}, we obtain 
    \begin{align*}
        \abs{T^\eps_\eta(x)-T^\eps_0(x)} & = \int \d y\, \frac{1}{h^{U^\eps_0}}(x,y)\abs{\alpha^\eps_\eta(x)-\alpha^\eps_0(x)}\frac{1}{h^{U^\eps_\eta}}(y,x)\leq \kappa^{d/2-2}\|\alpha^\eps_\eta-\alpha^\eps_0\|_{L^\infty}.
    \end{align*}
   The uniform convergence for \(\eta \to 0\) follows from the limit \eqref{unif convergence of alpha_eta}. 
\end{proof}

\begin{proof}[Proof of Proposition \ref{Prop:regularisation of euclidean field theory}]

    We regularise the interaction \(W^\eps\) replacing it with $W^\eps_\eta$ as in \eqref{eq:def:regularised version of the field Weps}, so that 
    \[\zeta^{W^{\eps}}= \lim_{\eta\to 0}\lim_{N \to \infty} \zeta^{W^{\eps}_{\eta,N}}\]
    where we use the regularisation function \(\vartheta(x)\deq \mathbbold{1}_{\abs{x}\leq 1}\) as in \eqref{eq:finite dimensional reg of phi}.  
    Indeed, by Lemma \ref{lemma:W_N is a Cauchy sequence}, we have 
\begin{equation*}
\lim_{\eta\to 0}\lim_{N \to \infty}\big\|W^\eps - W^\eps_{\eta, N} \big\|_{L^q} \leq \lim_{\eta\to 0}\big\|W^\eps - W^\eps_{\eta} \big\|_{L^q}+ \lim_{N \to \infty}\big\|W^\eps_{\eta}- W^\eps_{\eta, N} \big\|_{L^q} = 0
\end{equation*}
    for any \(1\leq q< \infty\). 
    We  then have for any \(1 \leq p < \infty \) and \(r= pq/(q-p)\) 
    \begin{align*}
        \big\|\ee^{-W^\eps}- \ee^{-W^\eps_{\eta, N} }\big\|_{L^p} &\leq \int_{0}^{1}\d t\, \big\|(W^\eps - W^\eps_{\eta, N})\,\ee ^{-tW^\eps -(1-t) W^\eps_{\eta, N}}\big\|_{L^p}\\
        &\leq \big\|W^\eps - W^\eps_{\eta, N}\big\|_{L^q} \int_{0}^{1}\d t\, \big\| t\ee ^{-W^\eps}+(1-t)\ee ^{- W^\eps_{\eta, N}}\big\|_{L^r}  \\
        &\leq  C  \big\|W^\eps - W^\eps_{\eta, N}\big\|_{L^q}\,,
    \end{align*}
    where we used the integrability of $W^\eps_\eta$ (cf.\ Section \ref{sec:field}) and 
    \[\zeta^{W^\eps_{\eta, N}} < \infty\,,\]
    which can be seen from the identity \eqref{zeta^W^eps_eta_N} below
    and that the fact that \(\re{f_{2,N}}<0\) for any \(N>0\). 
    Let us now show 
    \[\zeta^{W^\eps_\eta}=\ee^{E^\eps + \frac{1}{2} \int \d{x}\,\alpha^\eps(x) \alpha^\eps_\eta(x) + S_\eta }\int \mu_{v^{\eps}_\eta}(\mathrm{d} \xi)\, \ee^{-\ii\langle T^\eps_\eta+ \alpha^\eps,\xi\rangle} \mathrm{e}^{f_2(\xi)}\,.\]
    Introducing a zero in the exponent and rearranging the terms, we find
    \begin{multline} \label{w_n:introductionofzero}
            \zeta^{W^\eps_{\eta, N}}
         =\ee^{E^\eps+ \frac{1}{2}\int \d{x} \,\alpha^\eps(x)\alpha^\eps_\eta(x) }
         \\
\times         \E\qbb{
        \ee^{ - \int\d{x}\, \bigl( \alpha^\eps_\eta(x)-\tau^\eps(x)\bigr):\abs{\phi_N(x)}^2: -\frac{1}{2}\int \d{x}\, \d{\tilde{x}}\,\bigl(\abs{\phi_N(x)}^2 - \alpha^\eps(x) \bigr)\varphi(\eta x) v^\eps(x-\tilde{x})\varphi(\eta \tilde{x})\bigl(\abs{\phi_N(\tilde x)}^2 - \alpha^\eps(\tilde{x}) \bigr) }}\,.
    \end{multline}
    Since \[\E\qb{\langle\phi_N(x),\phi_N(y)\rangle} = (h^{-1})_N (x,y) = \sum_{k=0}^\infty\frac{\mathbbold{1}_{\lambda_k\leq N}}{\lambda_k} u_k(x)u_k^*(y)\,,\] \(\phi_N\) is a finite-dimensional complex Gaussian with covariance \((h^{-1})_N\). We then find by a  direct computation
    \begin{equation}
        \label{eq:change_of_gaussian_law}
        \ee^{-\int \d{x}\, (\alpha^\eps_\eta(x)- \tau^\eps(x))\col\abs{\phi_N(x)}^2\col} \, \P(\d\phi)  = \ee^{S^\eps_{\eta, N}} \, \P^{U^\epsilon_\eta}(\d\phi)\,,
    \end{equation}
    where
    \begin{equation*}
         S^\eps_{\eta, N}\deq \log \mathbb{E}\qB{\ee^{-\int \d{x} \,(\alpha^\eps_\eta(x)- \tau^\eps(x))\, :\abs{\phi_N(x)}^2:}}
    \end{equation*}
    is uniformly bounded in \(N>0\) and converges by Lemma \ref{lemma:integration of rho and tauepsilon}. Similarly, for a centred Gaussian field \(\phi\) covariance of covariance \(\bigl(h^{U^\eps_\eta}\bigr)^{-1}\), we have
    \[\mathbb{E}^{U^\eps_\eta} \bigl[\phi_N(x) \overline{\phi_N}(y)\bigr] = \Bigl(\frac{1}{h^{U^\eps_\eta}} \Bigr)_N (x,y)\,.\] 
Hence,
\begin{multline}\eqref{w_n:introductionofzero} = \ee^{S^\eps_{\eta, N}+E^\varepsilon + \frac{1}{2}\int \d{x} \,\alpha^\eps(x)\alpha^\eps_\eta(x)}
            \\ \times \E^{U^{\eps}_{\eta}}
            \qB{\ee^{-\frac{1}{2}\int \d{x}\,\d{\tilde{x}}\,\bigl(:\abs{\phi_N(x)}^2:_{h^U} - \alpha^\eps(x) \bigr)\varphi(\eta x) v^\eps(x,\tilde{x})\varphi(\eta \tilde{x})\bigl(\col\abs{\phi_N(\tilde x)}^2\col_{h^U} - \alpha^\eps(\tilde{x}) \bigr)}}\label{eq:w_n:change of law}
        \end{multline}
        where the Wick ordering
        \[\wick{|\phi_N(x)|^2}_{h^U}\deq |\phi_N(x)|^2- \bigl(h^{U} \bigr)^{-1}_N (x,x)\]
        is now performed with respect to a covariance different from that of the law of \(\phi_N\). In order to arrive at the expression of $f_2$ in \eqref{def:f_2}, we need to Wick order with respect to the law of \(\phi\) in \eqref{eq:w_n:change of law}. Thus, we replace \(\wick{\abs{\phi_N(x)^2}}_{h^U}\) by \(\wick{\abs{\phi_N(x)^2}}_{h^{U^{\eps}_{\eta}}}\) by subtracting 
    \begin{equation*}
        T^{\eps}_{\eta,N} (x)= \biggl(\frac{P_N}{h^U} - \frac{P_N}{h^U + (\alpha^\eps_\eta- \tau^\eps )_N}\biggr)(x,x)\,.
    \end{equation*}
    Thus, we rewrite 
\begin{multline}\label{final_step_of_f_2}
    ~\eqref{eq:w_n:change of law} = \ee^{S^\eps_{\eta, N}+E^\varepsilon + \frac{1}{2}\int \d{x}\,\alpha^\eps(x)\alpha^\eps_\eta(x)}
    \\
\times \E^{U^{\eps}_{\eta}} \qbb{
        \ee^{-\frac{1}{2}\int \d{x}\,\d{\tilde{x}}\,\bigl(\col|\phi_N(x)|^2\col - \alpha^\eps(x) - T^\eps_{\eta,N} (x) \bigr)\varphi(\eta x) v^\eps(x-\tilde{x})\varphi(\eta \tilde{x})\bigl(\col|\phi_N(\tilde x)|^2\col - \alpha^\eps(\tilde{x}) -  T^\eps_{\eta,N} (\tilde{x}) \bigr)}}\,,
    \end{multline}
    where the Wick ordering is done with respect to the measure $\P^{U^{\eps}_{\eta}}$.
    The rest of the proof follows the one of \cite[Proposition~4.1]{FKSS_2020}. We present here the main ideas.

Applying a Hubbard-Stratonovich transformation \eqref{HS} with \(f(x) = \wick{\abs{\phi_N(x)}^2} - \alpha^\eps(x) - T^\eps_{\eta,N} (x)\) and using the notation \(\varrho_N(x)= \big(\frac{P_N}{h + (\alpha^\eps_\eta- \tau^\eps )_N}\big)(x,x)\), we find
\begin{equation*}
                \eqref{final_step_of_f_2}  =  \ee^{S^\eps_{\eta, N}+E^\varepsilon + \frac{1}{2}\int \d{x}\, \alpha(x)\alpha^\eps_\eta(x)- \ii \langle \xi,\alpha^\eps + T^\eps_{\eta,N}\rangle}
            \int \mu_{v^{\eps}_\eta}(\d {\xi})\,\ee^{ \ii \langle\xi,\varrho_N \rangle } \E^{U^{\eps}_{\eta}} \qB{\ee^{\ii\langle \phi_N, \xi \phi_N\rangle}}\,.
\end{equation*}
    Moreover, we find the expression
    \begin{align*}
            \begin{split}
                {f_{2,N} (\xi)} &\deq
                \log\biggl\{\ee^{ \ii \langle\xi,\varrho_N \rangle } \E^{U^{\eps}_{\eta}} \qB{\ee^{\ii\langle\phi_N,\xi \phi_N\rangle }}\biggr\}\\ 
            &=
                - \ii\langle\xi,\varrho_N \rangle + \log{\E^{U^{\eps}_{\eta}} \qB{\ee^{\ii\langle\phi_N,\, \xi \phi_N\rangle }}} \\
                &= 
                - \ii\langle\xi,\varrho_N \rangle  +\int_0^{\infty} \mathrm{d} t \, \tr\biggl[P_N\biggl(\frac{1}{t+h^{U} + (\alpha^\eps_\eta- \tau^\eps-\ii \xi )_N}{}-\frac{1}{t+h^{U} + (\alpha^\eps_\eta- \tau^\eps)_N}\biggr) \biggr]\\
              & = \int_0^{\infty} \mathrm{d} t \, \tr\biggl[P_N\biggl(\frac{1}{t+h^{U} + (\alpha^\eps_\eta- \tau^\eps-\ii \xi )_N}{}-\frac{1}{t+h^{U} + (\alpha^\eps_\eta- \tau^\eps)_N}\\ & \qquad -\frac{1}{t+h^{U} + (\alpha^\eps_\eta- \tau^\eps)_N} \ii \xi_N \frac{1}{t+h^{U} + (\alpha^\eps_\eta- \tau^\eps)_N}\biggr)\biggr]\,.
            \end{split}
        \end{align*}
        
    Hence, we can rewrite the truncated partition function $\zeta^{W^\eps_{\eta, N}}$ as 
  \begin{equation}\label{zeta^W^eps_eta_N}
        \zeta^{W^\eps_{\eta, N}} = \ee^{E^\eps + \frac{1}{2} \int \d{x}\,\alpha^\eps(x) \alpha^\eps_\eta(x) + S^\eps_{\eta, N} }\int \mu_{v^{\eps}_\eta}(\mathrm{d} \xi)\, \ee^{-\ii\langle T^\eps_{\eta,N}+ \alpha^\eps,\xi\rangle} \ee^{f_{2,N}(\xi)}\,.
    \end{equation}
    
    \noindent Finally, we notice that \(\mathrm{Re} \bigl\{\bigl(- \ii \langle \xi, \alpha^\eps + T^\eps_{\eta,N}\rangle + f_{2,N}(\xi) \bigr) \bigr\},\re{ (- \ii \langle \xi, \alpha^\eps  +T^\eps_{\eta}\rangle + f_2(\xi) ) } < 0\) where \(f_2(\xi)\deq \lim_{N \to \infty} f_{2,N}(\xi) \) exists almost surely since \(\xi\) is almost surely bounded (see \cite[Proof of Proposition 4.1.]{FKSS_2020} for further details). 

    Lastly, we want to show the convergence of $\zeta^{W^\eps_{\eta, N}}$ to $\zeta^{W^\eps_{\eta}}$ as $N\to \infty$. First using Cauchy-Schwarz, recalling \eqref{def: rescaled interaction potential} and the fact that \(\|\varphi\|_{L^\infty}<\infty\) since \(\varphi\) has compact support, we show
    \begin{align}
        \biggl(\int \mu_{v^\eps_\eta}(\d {\xi}) | \langle\xi,T^\eps_\eta-T^\eps_{ \eta, N}\rangle| \biggr)^2&\leq \int \mu_{v^\eps_\eta}(\d {\xi}) \big| \langle\xi,T^\eps_\eta-T^\eps_{ \eta, N}\rangle\big|^2\nonumber\\
        &= \int \d x\, \d y\, (T^\eps_\eta-T^\eps_{ \eta, N})(x) v^\eps_\eta(x-y) (T^\eps_\eta-T^\eps_{ \eta, N})(y)\nonumber\\
        &\lesssim\|T^\eps_\eta-T^\eps_{ \eta, N}\|^2_{L^2 \nonumber}\\ & \lesssim\|T^\eps_\eta-T^\eps_{ \eta, N}\|_{L^\infty} \|T^\eps_\eta-T^\eps_{ \eta, N}\|_{L^1} \xrightarrow[N\to \infty]{} 0\,,
    \label{eq:epsEwickEwick_eta_nu}\end{align}
    where in the last line we used Young's convolution inequality and Lemma \ref{lemma:convergence_of_EWick} (ii). Hence, using dominated convergence, we get
    \begin{align*}
        &\lim_{N \to \infty} \Big|{\zeta^{W^\eps_{\eta,N}}} - \ee^{E^\eps + \frac{1}{2} \int \d{x}\,\alpha^\eps(x) \alpha^\eps_\eta(x) + S^\eps_{\eta} }\int \mu_{v^{\eps}_\eta}(\mathrm{d} \xi)\, \ee^{-\ii\langle T^\eps_{\eta}+ \alpha^\eps,\xi\rangle} \ee^{f_{2}(\xi)} \Big|\\
        &\,\leq  \lim_{N \to \infty} \Big|\ee^{S^\eps_{\eta, N}} -\ee^{S^\eps_{\eta}} \Big| \ee^{E^\eps + \frac{1}{2} \int \d{x}\,\alpha^\eps(x) \alpha^\eps_\eta(x) }\int \mu_{v^{\eps}_\eta}(\mathrm{d} \xi)\, \ee^{-\ii\langle T^\eps_{\eta}+ \alpha^\eps,\xi\rangle} \ee^{f_{2}(\xi)} \\
        &\;\;  +  \lim_{N \to \infty} \ee^{E^\eps + \frac{1}{2} \int \d{x}\,\alpha^\eps(x) \alpha^\eps_\eta(x) + S^\eps_{\eta} }\int \mu_{v^{\eps}_\eta}(\mathrm{d} \xi)\, \big|  \bigl(- \ii \langle \xi ,\, T^\eps_{\eta,N}\rangle + f_{2,N}(\xi) +\ii \langle \xi,\,T^\eps_{\eta}\rangle - f_2(\xi) \bigr)\big|\\
        & \, =  0\,. \qedhere
    \end{align*} 
\end{proof}

\subsection{Brownian paths} \label{sec:brownian}
We recall now some basic notions for Brownian paths.
For \(0 \leq \tilde \tau < \tau\), let \(\Omega ^{\tau, \tilde{\tau}}\) be the space of continuous paths \(\omega\col [\tilde{\tau}, \tau] \to \mathbb{R}^d\), and for $x, \tilde x \in \mathbb R^d$, \(\Omega ^{\tau, \tilde{\tau}}_{x, \tilde{x}}\) the space of continuous paths that start at $\tilde x \in \mathbb{R}^d$ at time $\tilde \tau $ and finish at $x \in \mathbb{R}^d$ at time $\tau $. Given $x, \tilde x \in \mathbb R^d$ and $0 \leq \tilde \tau < \tau$, we denote by \(\mathbb{P}_{x,\tilde{x}}^{\tau, \tilde \tau}\) the law of the Brownian bridge equal to \(\tilde x\) at time \(\tilde \tau\) and to \(x\) at time \(\tau\). 
Moreover, for $x,\tilde x \in \bR^d$ and $0 \leq \tilde \tau <\tau$, we define the positive measure over paths in \(\Omega ^{\tau, \tilde{\tau}}\)  through
\begin{equation}\label{def:W law of Brownian Bridge starting and finishing at x, tilde x}
    \mathbb{W}_{x, \tilde{x}}^{\tau, \tilde{\tau}}(\mathrm{d} \omega)\deq\psi^{\tau-\tilde{\tau}}(x-\tilde{x}) \mathbb{P}_{x, \tilde{x}}^{\tau, \tilde{\tau}}(\mathrm{d} \omega)\,,
\end{equation}
where for $t>0$ we defined the heat kernel 
\begin{equation}\label{eq:heatkernel}
  \psi^t(x)\deq\bigl(\mathrm{e}^{t\Delta  / 2}\bigr)_{x, 0}=(2 \pi t)^{-d / 2} \mathrm{e}^{-|x|^2 / 2 t} \,.
\end{equation}
The law $\mathbb{W}_{x, \tilde{x}}^{\tau, \tilde{\tau}} $ is characterised by its finite dimensional distribution. For $n \in \mathbb{N}^*$, $\tilde{\tau}<t_1<\cdots<t_n<\tau$ and $f\col \mathbb{R}^{dn} \rightarrow \mathbb{R}$ a continuous function, we have
\[
\begin{aligned}
\int \mathbb{W}&_{\tilde{x}, x}^{\tau, \tilde{\tau}}(\mathrm{d} \omega) \, f\bigl(\omega(t_1), \ldots, \omega(t_n)\bigr) \\
& =\int \mathrm{d} x_1 \cdots \mathrm{d} x_n \,\psi^{t_1-\tilde{\tau}}(x_1-\tilde{x}) \psi^{t_2-t_1}(x_2-x_1) \cdots \psi^{t_n-t_{n-1}}(x_n-x_{n-1})\\
&\hspace{7cm}\times\psi^{\tau-t_n}(x -x_{n}) f(x_1, \ldots, x_n)\,.
\end{aligned}
\]

In the next lemma we provide some useful bounds, whose proofs can be found in \cite{FKSS_24,FKSS_2020}.

\begin{lemma}[Analogous to \text{\cite[Lemmas~2.2 and 2.3]{FKSS_2020}}] \label{Lemma:bounds_on_the_law_of_brownian_bridge}
\leavevmode
\begin{enumerate}[label=(\roman*)] \item There exists a constant \(C \equiv C_d>0\) such that, for all \(0\leq \tilde{\tau} < \tau\), we have
              \begin{equation*} \sup_{x, \tilde{x}}\int \mathbb{W}^{\tau, \tilde{\tau}}_{x,\tilde{x}}(\d{\omega}) = \sup_{x, \tilde{x}} \psi ^{\tau, \tilde{\tau}}(x-\tilde{x}) \leq \frac{C}{(\tau - \tilde{\tau})^{\frac{d}{2}}}\,.\end{equation*}
        \item \label{Lemma:bounds_on_the_law_of_brownian_bridge(ii)} There exists a constant \(C > 0\) such that for all \(\tilde{\tau} \leq s \leq t \leq \tau\), we have
              \begin{equation*} \int \mathbb{P}_{x, \tilde{x}}^{\tau, \tilde{\tau}}(\mathrm{d} \omega)\,|\omega(t)-\omega(s)|^2 \leq C \biggl( (t-s)+|x-\tilde{x}|^2 \frac{(t-s)^2}{(\tau-\tilde{\tau})^2}\biggr) \,.\end{equation*}
               
    \end{enumerate}
\end{lemma}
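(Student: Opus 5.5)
The two bounds are elementary Gaussian computations, and I will treat them separately.

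\emph{Part (i).} The total mass of $\mathbb{W}^{\tau,\tilde\tau}_{x,\tilde x}$ is, by its definition \eqref{def:W law of Brownian Bridge starting and finishing at x, tilde x} and the fact that $\mathbb{P}^{\tau,\tilde\tau}_{x,\tilde x}$ is a probability measure, exactly $\psi^{\tau-\tilde\tau}(x-\tilde x)$. By the explicit formula \eqref{eq:heatkernel}, $\psi^{t}(z)=(2\pi t)^{-d/2}\ee^{-|z|^2/2t}\le(2\pi t)^{-d/2}$. Hence (i) holds with $C_d=(2\pi)^{-d/2}$.

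\emph{Part (ii).} By translation invariance in time I will set $\tilde\tau=0$ and $T\deq\tau-\tilde\tau$. I will write the Brownian bridge as
\[
\omega(r)=\tilde x+\frac{r}{T}(x-\tilde x)+\beta(r),
\]
where $\beta$ is a standard Brownian bridge from $0$ to $0$ on $[0,T]$. This representation can be checked on finite-dimensional distributions, namely the Gaussian densities $\psi$ appearing in the characterisation of $\mathbb{W}$. Then
\[
\omega(t)-\omega(s)=\frac{t-s}{T}(x-\tilde x)+\beta(t)-\beta(s).
\]
By $(a+b)^2\le 2a^2+2b^2$, the first term contributes $2|x-\tilde x|^2(t-s)^2/T^2$.

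For the second term, $\beta$ is centred Gaussian in each of the $d$ coordinates with covariance $\min(r,r')-rr'/T$. This gives
\[
\E|\beta(t)-\beta(s)|^2=d\biggl((t-s)-\frac{(t-s)^2}{T}\biggr)\le d\,(t-s).
\]
Combining the two contributions yields (ii) with $C=2\max(d,1)$.

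There is no real obstacle here. The only step requiring care is justifying the decomposition of the bridge into a linear interpolation plus an independent centred bridge. Alternatively, one can compute the covariance of the bridge directly from the Gaussian transition densities in the finite-dimensional distribution formula.
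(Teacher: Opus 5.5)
Your proof is correct and is essentially the standard argument behind the proofs the paper cites from \cite{FKSS_24,FKSS_2020}; the paper itself gives no proof of its own. Part (i) is immediate from the total mass $\psi^{\tau-\tilde\tau}(x-\tilde x)\le (2\pi(\tau-\tilde\tau))^{-d/2}$, and part (ii) follows from splitting the bridge into its linear interpolation plus a centred bridge, the same coupling as \eqref{eq:bbcoupling} used later in the paper. As a minor remark, the cross term has zero mean, so the factor $2$ is unnecessary and you in fact get the exact identity $\int \mathbb{P}^{\tau,\tilde\tau}_{x,\tilde x}(\mathrm{d}\omega)\,|\omega(t)-\omega(s)|^2 = \frac{(t-s)^2}{(\tau-\tilde\tau)^2}|x-\tilde x|^2 + d\,\bigl((t-s)-\frac{(t-s)^2}{\tau-\tilde\tau}\bigr)$.
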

The following result is well known.
\begin{lemma}[Feynman-Kac formula]\label{Feynman-Kac formula}
    Let \(I \in \mathbb{R}\) be a closed bounded interval and \(V\col I\times {\mathbb{R}^d} \to [0, \infty)\) a locally bounded function. Let \(\bigl(W^{\tau, \tilde{\tau}}\bigr)_{\tilde{\tau}\leq \tau \in I }\) be the propagator satisfying 
    \[\partial_\tau W^{\tau, \tilde{\tau} } = \bigg(\frac{\Delta}{2}+ V(\tau) \biggr)W^{\tau, \tilde{\tau}} \quad W^{\tau, \tau} = 1\,.\]
    Then \(W^{\tau, \tilde{\tau}}\) has operator kernel
    \[W_{x, \tilde{x}}^{\tau, \tilde{\tau}}=\int \mathbb{W}_{x, \tilde{x}}^{\tau, \tilde{\tau}}(\mathrm{d} \omega)\, \mathrm{e}^{\int_{\tilde{\tau}}^\tau \mathrm{d} s \,V(s, \omega(s))}\,. \]
\end{lemma}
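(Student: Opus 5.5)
The statement to prove is the Feynman--Kac formula (Lemma \ref{Feynman-Kac formula}) for a time-dependent, locally bounded, nonnegative potential $V$ on $I\times\mathbb{R}^d$. I would prove it by a Trotter product argument, first for a bounded potential $V$ that is continuous in $\tau$, and then remove these restrictions by approximation.

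\textbf{Step 1: the finite-dimensional formula.} Partition $[\tilde\tau,\tau]$ into $n$ intervals of length $\Delta t=(\tau-\tilde\tau)/n$ with points $t_j$, and consider the product
\[
\prod_j \ee^{\Delta t\,\Delta/2}\,\ee^{\Delta t\, V(t_j)}.
\]
Its kernel is an integral of products of heat kernels $\psi^{\Delta t}$ times $\exp\bigl(\sum_j \Delta t\, V(t_j,x_j)\bigr)$. By the finite-dimensional distributions of $\mathbb{W}^{\tau,\tilde\tau}_{x,\tilde x}$ recalled above, this kernel equals
\[
\int \mathbb{W}_{x,\tilde x}^{\tau,\tilde\tau}(\d\omega)\,\exp\Bigl(\sum_j \Delta t\, V\bigl(t_j,\omega(t_j)\bigr)\Bigr).
\]

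\textbf{Step 2: the limit $n\to\infty$.} On the path side, continuity of Brownian paths and Riemann sums give pointwise convergence to $\ee^{\int_{\tilde\tau}^\tau V(s,\omega(s))\,\d s}$. Since this is dominated by $\ee^{|I|\,\|V\|_\infty}$, which is integrable against $\mathbb{W}$ by Lemma \ref{Lemma:bounds_on_the_law_of_brownian_bridge}(i), dominated convergence applies. On the operator side, the product converges strongly to $W^{\tau,\tilde\tau}$. This is the standard time-ordered Trotter--Kato result: $V$ is a bounded perturbation of the generator $\Delta/2$ and is continuous in $\tau$, so one can use Duhamel's formula and a telescoping estimate, each local error being $o(\Delta t)$ in operator norm on the relevant domain. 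Strong convergence of operators together with pointwise convergence of the kernels, tested against $L^2$ functions, identifies the kernel of $W^{\tau,\tilde\tau}$.

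\textbf{Step 3: removing the restrictions.} For merely locally bounded $V\geq0$, truncate to $V_M\deq V\wedge M$, mollify in $\tau$ if needed, and pass to the limit. On the path side, monotone convergence applies because $V\geq0$ and the truncations increase. On the operator side, the propagators $W_M$ increase monotonically by positivity of the kernels (a consequence of Step 1), and they converge to $W$. The propagator $W$ is understood here as the minimal one, namely the monotone limit, which is how the lemma is used in \eqref{eq: heat kernel of U tilde upper bound} (there with $V$ of negative sign, where boundedness from above makes the argument even easier).

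\textbf{Main obstacle.} The step I expect to be hardest is the operator convergence in Step 2 for time-dependent $V$. Besides the estimate itself, one must make sure that the propagator defined by the ODE is well posed, since the generator is unbounded and time-dependent. I would handle this by working with bounded, time-continuous $V$, for which the Dyson series converges in norm.
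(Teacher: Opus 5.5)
The paper gives no proof of this lemma and simply calls it well known. Your Steps 1--2 reproduce the standard Trotter argument for bounded potentials, but there is a genuine gap in Step 3 and a smaller one in Step 2.

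\textbf{Step 3.} You treat $V\geq 0$ locally bounded and unbounded, as the hypothesis literally allows, but nothing can be proved in that case. Take $V(x)=|x|^4$. A bridge path that moves to distance $R$ and stays there for a time of order $\tau-\tilde\tau$ gains a factor $\ee^{cR^4}$ at Gaussian cost $\ee^{-CR^2}$. So the right-hand side is $+\infty$ for all $x,\tilde x$, and no operator has this kernel. Declaring $W$ to be the monotone limit of the $W_M$ only turns the conclusion into a definition; it does not produce a solution of the stated equation.

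The statement is meaningful only for $V$ bounded above. That is how the paper uses it: $V=-\kappa-U$ in \eqref{eq: heat kernel of U tilde upper bound}, in the proof of Proposition~\ref{prop:GN}, and in the proof of Lemma~\ref{lemma:convergence_of_Ewick_quantitative}. Your parenthetical remark about the negative sign is therefore the crux, not an aside. In that case the step you call ``even easier'' is exactly where the propagator is identified:
\begin{itemize}
\item truncate from below, $V_M\deq V\vee(-M)$;
\item on the path side, use dominated convergence, since the integrand is at most $\ee^{|I|\sup V}$;
\item on the operator side, invoke the monotone convergence theorem for closed quadratic forms. The operators $-\Delta/2+U\wedge M$ increase to the form sum of $-\Delta/2$ and $U$, which gives strong convergence of the semigroups to the one generated by that self-adjoint operator.
\end{itemize}
Any time dependence should be kept in a bounded part that is continuous in $\tau$, as in all of the paper's applications.

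\textbf{Step 2.} You justify pointwise convergence of $\sum_j\Delta t\,V(t_j,\omega(t_j))$ to $\int V(s,\omega(s))\,\dd s$ along continuous paths via Riemann sums. That requires continuity of $V$ in $x$, which you did not assume. It fails, for example, for the step potential allowed by Assumption~\assP{}. There are two ways to repair this:
\begin{itemize}
\item Mollify in $x$ as well and pass to the limit on both sides. On the operator side, use Duhamel's formula plus dominated convergence for $(V_\delta(s)-V(s))W^{s,\tilde\tau}f$. On the path side, use convergence in measure as in the proof of Lemma~\ref{lm:gradE}.
\item Alternatively, prove directly that the Riemann sums converge in $L^1(\mathbb{W}^{\tau,\tilde\tau}_{x,\tilde x})$, by approximating $V$ with continuous functions in a heat-kernel-weighted $L^1$ norm.
\end{itemize}
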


\subsection{Proof of Proposition \ref{Prop:regularisation of phi42 quantum rel part funct}}

In this subsection we show Proposition \ref{Prop:regularisation of phi42 quantum rel part funct}. In particular we explain how to adapt the strategy in \cite[Proposition 3.12]{FKSS_2020} to our setting. To keep the paper as concise as possible, we present here only the main changes.

\begin{proof}[Proof of Proposition \ref{Prop:regularisation of phi42 quantum rel part funct}]
Consider $H^0$ as in \eqref{def: free Hamiltonian acting on the fock space} and \(H_n\) the restriction of the Hamiltonian \eqref{eq:WickHamiltonian} to the \(n\)-particle space \(\mathcal{H}_n\).
 Let us define on the Fock space $\cF$ the operator \(R_\eta\deq \bigoplus_{n \geq 0} R_{n,\eta} \) for $\eta>0$, where 
 \begin{multline}
    (R_{n,\eta})_{{\mathbf{x}}, \mathbf{\tilde{x}}}  \deq\ee^{- \nu \kappa n+ E^\eps} \int \mu_{C_\eta}(\d{\sigma})\,\ee^{-\ii\langle\sigma,\varrho^U_{\nu} \rangle_\nu- \langle\tau^\eps,\varrho^U_\nu\rangle} \\ \times \int \biggl(\prod_{i=1}^n \mathbb{W}_{x_i,\tilde{x}_i}^{\nu, 0}(\d{\omega_i}) \ee^{-{\int\limits_0^\nu \d{t}\, (U(\omega_i(t))- \tau^\eps(\omega_i(t)))}} \biggr) \prod_{i=1}^{n}\ee^{\ii\int\limits_{0}^\nu \d{t}\, \sigma(t, \omega_i(t)) },
    \label{eq:R_phi42}
\end{multline} 
and     \[
\langle \sigma, \varrho\rangle_\nu \deq \frac{1}{\nu}\int_0^\nu\d\tau\int\d x \, \sigma(\tau,x)\varrho(x)\,. 
\]
From \cite[Proposition 3.12]{FKSS_2020}, the operator kernel \eqref{eq:R_phi42} converges locally uniformly to \(\bigl(\ee^{-H_n}\bigr)_{{\mathbf{x}}, \mathbf{\tilde{x}}}\).

The rest of the proof is devoted to showing that the regularised partition function 
\[\mathcal{Z}_{\eta}\deq\Tr_{\mathcal{F}}(R_{\eta})/\Tr_{\mathcal{F}}\bigl(\ee^{-H^0}\bigr)\,. \] can be written as the right-hand side of \eqref{def:quantum_relative_partition_function}. 
We start by replacing \(\varrho_\nu^U\) by \(\varrho_\nu^{U^\eps_\eta}\) when convenient, by adding the term \eqref{eq:def_Ewick_nueta}. We obtain 
\begin{multline}
    \eqref{eq:R_phi42}= \ee^{- \nu \kappa n + E^\eps} \int \mu_{C_\eta}(\d{\sigma})\,\ee^{-\ii\langle\sigma,\varrho^{U^\eps_\eta}_{\nu} \rangle_\nu-\ii\langle\sigma,T^\eps_{\nu,\eta } \rangle_\nu - \langle\tau^\eps, \varrho^U_\nu\rangle} \\ \times \int \biggl(\prod_{i=1}^n \mathbb{W}_{x_i,\tilde{x}_i}^{\nu, 0}(\d{\omega_i}) \ee^{-{\int\limits_0^\nu \d{t}\, (U(\omega_i(t))- \tau^\eps(\omega_i(t)))}} \biggr) \prod_{i=1}^{n}\ee^{\ii\int\limits_{0}^\nu \d{t}\, \sigma(t, \omega_i(t)) }\,.\label{eq:R_phi42, second exp}
\end{multline}
We introduce the function \(\alpha^\eps_\eta\) in \eqref{eq:R_phi42, second exp} by performing a Hubbard-Stratonovich transformation similarly as in \cite{FKSS_2020}, with an additional term \(\alpha^\eps\), and then compensate for the resulting contributions. In particular, for \(f(t,x) \deq\sum_{i=1}^n \delta(x-\omega_i(t)) - \varrho^U_{\nu}(x)/\nu  - \alpha^\eps(x)/\nu\), we find that  
\[
\begin{split}-\frac{1}{2}\langle f, C_\eta f\rangle& = -\frac{1}{2}\int_0^\nu \d{t}\, \d{\tilde{t}} \int \d{x}\, \d{\tilde{x}} 
        \biggl( \sum_{i=1}^n  \delta(x-\omega_i(t)) - \frac{\varrho_{\nu}(x)}{\nu}\biggr)  \\
        &\hspace{1cm}\times
        \bigl( \nu \delta_{\eta,\nu}(t-\tilde{t}) v^\eps(x- \tilde{x})\varphi(\eta x)\varphi(\eta\tilde{x})\bigr)
        \biggl( \sum_{i=1}^n \delta(\tilde{x}-\omega_i(\tilde{t})) - \frac{\varrho_{\nu}(\tilde{x})}{\nu}\biggr) \\
        &+ \sum_{i=1}^n \int_0^\nu \d t\, \alpha^\eps_{\eta}(\omega_i(t)) - \langle\alpha^\eps,\varrho^U_\nu\rangle -\frac{1}{2}\int \d x\, \d {\tilde{x}}\, \alpha^\eps_\eta(x)\alpha^\eps(\tilde{x})\,. 
\end{split}\]
 Hence, we can rewrite
\begin{multline}
    \eqref{eq:R_phi42, second exp}= \ee^{- \nu \kappa n+ E^\eps+\frac{1}{2}\int \d x\, \d {\tilde{x}} \,\alpha^\eps_\eta(x)\alpha^\eps(\tilde{x})} \int \mu_{C_\eta}(\d{\sigma})\,\ee^{-\ii\langle\sigma,\varrho^{U^\eps_\eta}_{\nu} \rangle_\nu-\ii\langle\sigma,T^\eps_{\nu,\eta } + \alpha^\eps\rangle_\nu + \langle\alpha^\eps- \tau^\eps,\varrho^U_\nu\rangle} \\ \times \int \biggl(\prod_{i=1}^n \mathbb{W}_{x_i,\tilde{x}_i}^{\nu, 0}(\d{\omega_i}) \ee^{-{\int\limits_0^\nu \d{t}\, U^\eps_\eta(\omega_i(t))}} \biggr) \prod_{i=1}^{n}\ee^{\ii\int\limits_{0}^\nu \d{t} \,\sigma(t, \omega_i(t)) }\label{eq:R_phi42, third}\,.
\end{multline}
Taking the trace of $R_\eta$ we find
\begin{multline*}
     \Tr_{\mathcal{F}}(R_\eta) = \ee^{E^\eps+\frac{1}{2}\int \d x\, \d {\tilde{x}}\, \alpha^\eps_\eta(x)\alpha^\eps(\tilde{x})} \int \mu_{C_\eta}(\d{\sigma})\ee^{\ii\langle\sigma,T^\eps_{\nu,\eta } + \alpha^\eps\rangle_\nu + \langle\alpha^\eps- \tau^\eps,\varrho^U_\nu\rangle}\\ 
     \times \exp\biggl\{\int_0^\infty \d t \,\Tr\biggl(\frac{1}{t + K(-\kappa - U^\eps_\eta - \ii\sigma)}- \frac{1}{t + K(-\kappa - U^\eps_\eta )}\ii\sigma \frac{1}{t + K(-\kappa - U^\eps_\eta )}\biggr)\biggr\}
\end{multline*}
where we used the fact that from \cite[lemma 3.13]{FKSS_2020}
\[\big\langle\sigma,\varrho_\nu^{U^\eps_\eta}\big\rangle_\nu = \int_{0}^\infty \d t\, \Tr\biggl(\frac{1}{t + K(-\kappa - U^\eps_\eta )}\sigma \frac{1}{t + K(-\kappa - U^\eps_\eta )}\biggr)\, .  \]
 Finally, recalling \eqref{eq:def_S_nueta} and  
    \[\Tr_{\mathcal{F}}\bigl(\ee^{-H^0}\bigr) =\exp\biggl\{\int_0^\infty \d t \Tr\biggl(\frac{1}{t + K(-\kappa - U)}\biggr)\biggr\} \,,\]
we can conclude the proof. 
\end{proof}

\section{Proof of Proposition \ref{prop:convergence quantum many body system and Weps}} \label{sec:quantumtoclassic}

 In this section we show Proposition \ref{prop:convergence quantum many body system and Weps}. We use the functional integral representation introduced in Section \ref{sec:funcintrep} to study the rate of convergence of the relative partition function and the correlation functions in the mean-field limit, while keeping track of the parameter $\eps$. The basic strategy is similar to that of \cite[Section 5]{FKSS_24}. However, as we saw in Section \ref{sec:funcintrep}, in the inhomogeneous setting the functional integral representation is more complicated and the approach has to be suitably adapted. Our methods work for $d=2,3$, and hence all results of this section are stated for both dimensions.

Furthermore, as a result of our analysis, in analogy with \cite{FKSS_24}, we have some restrictions on the choice of the interaction potential $v^\eps$. In particular, we have that $\eps \equiv \eps(\nu)>0$  satisfies the lower bounds in \eqref{eq:epsbound}.
Let us define \(\chi\col [0,\infty) \to \mathbb{R}\) by 
\begin{equation}\label{eq:def:chieps}
    \chi(t)\deq\begin{cases}
        \log {t^{-1}} & d=2 \\
        t^{-1} & d=3 \,.
    \end{cases}
\end{equation}
From Lemmas \ref{lemma:bounds on taueps} and \ref{assumption:existence of alpha}, we have, uniformly in \(\eta>0\), that 
\[\|\tau^\eps\|_{L^\infty}\vee \|\alpha^\eps_{\eta} \|_{L^\infty}\lesssim_{v, \kappa}\chi(\eps)\]
as well as the bounds \begin{equation}\label{bound of E eps and S eps}
    E^\eps\vee S^\eps_{\eta} \lesssim_{v} \chi(\eps)^2 \,. 
\end{equation} 
 Moreover, \eqref{eq:epsbound} and \eqref{eq:def:chieps} imply that, for all \( C, b>0\) 
\begin{equation}\label{eq:limchisquare}
\lim _{\nu \rightarrow 0} \mathrm{e}^{C \chi(\varepsilon)^2} \nu^b=0\,.\end{equation}
The following propositions and lemma, which contains quantitative estimates on the relative partition function and the correlation functions, together with \eqref{eq:limchisquare}, allow us to show Proposition \ref{prop:convergence quantum many body system and Weps}. 
\begin{prop}\label{prop:estimates difference of partition functions}
    There exists \(C\equiv C_{\kappa, v}>0\) such that 
    \[\abs{\mathcal{Z}-\zeta^{W^{\eps}}} \lesssim_{\kappa, v,s, \theta} \ee^{C\chi(\eps)^2}(\nu^{1/4}\vee\nu^{1-s/2})\]
    for any \(s\in (d/2,2)\) satisfying \eqref{eq:tracehs}. 
\end{prop}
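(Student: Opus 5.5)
We will compare the two functional integral representations of Propositions \ref{Prop:regularisation of euclidean field theory} and \ref{Prop:regularisation of phi42 quantum rel part funct} at fixed $\eta>0$ and then let $\eta\to 0$. The common prefactor $\ee^{E^\eps+\frac12\int\alpha^\eps\alpha^\eps_\eta}$ is the same on both sides. The remaining differences are threefold: the constants $S^\eps_{\nu,\eta}$ versus $S^\eps_\eta$, the phases $T^\eps_{\nu,\eta}$ versus $T^\eps_\eta$, and the exponents $F_2(\sigma)$ versus $f_2(\xi)$.

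\textbf{Step 1: coupling of the auxiliary fields.} Following \cite[Section 5]{FKSS_24}, we realise $\sigma\sim\mu_{C_\eta}$ and its time average $\xi\sim\mu_{v^\eps_\eta}$ on one probability space. We then write $\sigma=\xi+\zeta$ with $\zeta$ the fluctuating part, so that $\langle\sigma,g\rangle_\nu=\langle\xi,g\rangle$ for time-independent $g$. Using $|\ee^{a}-\ee^{b}|\le|a-b|$ for $\re a,\re b\le0$ (part (ii) of both propositions), we bound
\[
|\mathcal Z_\eta-\zeta^{W^\eps}_\eta|\lesssim \ee^{E^\eps+\frac12\int\alpha^\eps\alpha^\eps_\eta}\,\ee^{S^\eps_\eta}\Bigl(|\ee^{S^\eps_{\nu,\eta}-S^\eps_\eta}-1|+\E\bigl|\langle\xi,T^\eps_{\nu,\eta}-T^\eps_\eta\rangle\bigr|+\E|F_2(\sigma)-f_2(\xi)|\Bigr).
\]
By \eqref{bound of E eps and S eps} and $\|\alpha^\eps\|_{L^\infty}\lesssim\chi(\eps)$, together with the support of $\alpha^\eps$ in a ball of radius polynomial in $\chi(\eps)$, the prefactor is at most $\ee^{C\chi(\eps)^2}$.

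\textbf{Step 2: quantum-to-classical estimates for the one-particle quantities.} The phase differences $T^\eps_{\nu,\eta}-T^\eps_\eta$ and $S^\eps_{\nu,\eta}-S^\eps_\eta$ are differences of $\varrho^{V}_\nu-G^{V}(x,x)$ for $V=U,U^\eps_\eta$. Equivalently, they are Riemann-sum errors $\nu\sum_{k\ge1}\ee^{-k\nu h}(x,x)$ versus $\int_0^\infty\ee^{-th}(x,x)\,\dd t$, taken after a resolvent expansion in $\alpha^\eps_\eta-\tau^\eps$. We bound these using the Feynman--Kac representation (Lemma \ref{Feynman-Kac formula}) together with Lemma \ref{Lemma:bounds_on_the_law_of_brownian_bridge}. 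The short-time region $t\lesssim\nu$ produces $\nu^{1-s/2}$ after using $\tr h^{-s}<\infty$ (via \eqref{eq:uniform lower bound potential in eta and eps}). The Hölder continuity of paths produces errors $\nu^{1/2}$ multiplied by $\|\alpha^\eps_\eta-\tau^\eps\|_{L^\infty}\lesssim\chi(\eps)$. All polynomial factors in $\chi(\eps)$ are absorbed into $\ee^{C\chi(\eps)^2}$.

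\textbf{Step 3: the main obstacle, $F_2$ versus $f_2$.} Following \cite[Lemma 5.4]{FKSS_2020} and the expansion in \cite{FKSS_24}, both $F_2$ and $f_2$ are second-order remainders. We write each as an integral of loops of Brownian bridges with weights $\ee^{-\int U^\eps_\eta(\omega)}$ and phases $\ee^{\ii\int\sigma(\omega)}-1-\ii\int\sigma(\omega)$, in the quantum case summed over winding number $r\ge1$ with loop length $\nu r$. The difference then splits into three parts:
\begin{itemize}
\item replacing $\sigma$ by $\xi$ along paths, controlled by the variance of $\zeta$, which vanishes as $\eta$-regularised time fluctuations average out, giving $\nu^{1/4}$ as in \cite{FKSS_24};
\item the Riemann-sum error in the loop length, giving $\nu^{1-s/2}$ by the same argument as in Step 2;
\item the inhomogeneity, since the loop weight now contains $U^\eps_\eta$.
\end{itemize}
The third part is where new work is needed. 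The lower bound \eqref{eq:lower bound on Uepseta}, combined with \eqref{eq: heat kernel of U tilde upper bound}, lets us dominate every kernel by that of $h^{U/2}$ uniformly in $\eps$ and $\eta$. Hence all traces are controlled by $\tr(h^{U/2})^{-s}$. Moments of $\xi$ enter only through $\|v^\eps\|_{L^\infty}\lesssim\eps^{-d}$, which is again absorbed into $\ee^{C\chi(\eps)^2}$. Collecting Steps 1--3 and letting $\eta\to0$, using Lemma \ref{lemma:convergence_of_EWick}(iii), yields the claimed bound uniformly in $\eta$.
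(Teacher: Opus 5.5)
Steps 1--2 follow the paper's reduction. Lemma~\ref{lemma:diff of relative partition function} is your three-term bound, with the second rather than the first moment of $F_2(\sigma)-f_2(\langle\sigma\rangle)$. Lemma~\ref{lemma:Riemann sum estimates} handles $S$ and $T$ purely by small/large-time splitting with $\tr(h^{U/2})^{-s}$; no path regularity enters there. For $T$, the relevant input is the spatial bound $\|T^\eps_\eta-T^\eps_{\nu,\eta}\|_{L^1}\lesssim\chi(\eps)\nu^{1-s/2}$. Combining it with $\E|\langle\xi,g\rangle|\le\|v^\eps\|_{L^\infty}^{1/2}\|g\|_{L^1}$, in the spirit of your remark on $\|v^\eps\|_{L^\infty}$, gives the stated exponent. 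By contrast, the paper's $L^\infty$--$L^1$ interpolation in Lemma~\ref{lemma:convergence_of_Ewick_quantitative} only yields $\nu^{1/2-s/4}$. Separately, $\|\alpha^\eps\|_{L^\infty}\lesssim\chi(\eps)$ together with a support of radius $\mathrm{poly}(\chi(\eps))$ only gives $\int\alpha^\eps\alpha^\eps_\eta\lesssim\chi(\eps)^2\,\mathrm{poly}(\chi(\eps))$, not $\chi(\eps)^2$. So your prefactor bound does not follow as stated; the paper passes over this term as well.

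The genuine gap is Step~3, which is where the proof actually lives.

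\emph{The mechanism is misidentified.} All bounds must be uniform in $\eta$, and $\zeta=\sigma-\langle\sigma\rangle$ does not have small variance; pointwise, its variance diverges as $\eta\to0$. What is small is the fluctuation of $\int\dd s\,\zeta([s]_\nu,\omega(s))$ along a path: $\zeta$ averages to zero over each period, and the path moves only $O(\sqrt\nu)$ within one. This is Proposition~\ref{prop:phi42:path-path_interaction_bounds}, which reduces $\mathbb{V}_\eta-\mathbb{V}^\eps$ to $\hat{\mathbb{V}}_\eta$. That term is bounded by $\eps^{-d-1}$ (the Lipschitz constant of $v^\eps$, not $\|v^\eps\|_{L^\infty}$) times path increments over windows of length $\nu$.

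\emph{The comparison is never made computable.} To use such a bound, the paper integrates out $\sigma$. It expands $\int\mu_{\mathcal C_\eta}(\dd\sigma)|F_2(\sigma)-f_2(\langle\sigma\rangle)|^2$ into $\overline{F_2}F_2$, $\overline{f_2}F_2$, $\overline{F_2}f_2$, $\overline{f_2}f_2$ (Lemmas~\ref{lemma:FFff}, \ref{lemma:Ffff}). After Gaussian integration, each becomes an explicit six-bridge integral, such as $J(\mathbf r,\tilde{\mathbf r})$ and $I^\eps(\mathbf r,\tilde{\mathbf r})$, with path--path and point--path interactions. The paper then compares $J$ with $\nu^6I^\eps$ (Lemma~\ref{eq:approximation_of_J}) and the lattice sum with the integral (Lemma~\ref{lemma:riemannsumestimates}). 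Your first-moment version would instead need a pathwise coupling of quantum loops (lengths in $\nu\N^*$, start times in $[0,\nu)$, weight $1/r$) with classical loops (weight $1/t$). You do not supply one.

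\emph{The inhomogeneous ingredient is missing.} Domination by $\ee^{-th^{U/2}}$ does not control the resulting error, because it is not a trace. It is $\int\dd x_1\int\WU_{x_1,x_1}^{\tau_1+|\mathbf r|,\tau_1}(\dd\omega)\,|\omega(\tilde s)-\omega(\hat s)|$, integrated over all of $\R^d$. The increment bound of Lemma~\ref{Lemma:bounds_on_the_law_of_brownian_bridge}(ii) is for the free bridge, and discarding $U$ to apply it destroys the confinement needed for the $x_1$-integral. Lemma~\ref{lemma: bounds on V hat} resolves this by splitting the loop at $x_2$. It uses the free-bridge increment bound on the segment $\omega_1$ and the confinement estimate of Lemma~\ref{lem:boundonexp2} on the complementary segment $\omega_2$. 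This leaves a singularity $t_2^{-d/\theta}$, which is integrable because $\theta>d$. Without this decoupling, or an equivalent Markov-property argument, your Step~3 gives no spatially integrable bound, let alone the rate $\nu^{1/4}$.
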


\begin{lemma}\label{lemma:bound on decay of wick ordered correlation functions}
    There exists a constant \(C\equiv C_{\kappa, v}\) for all \(p\in \mathbb{N}\), we have 
    \[(\widehat{\gamma}_p)_{\mathbf{x}, \tilde{\mathbf{x}}}\vee\nu^p(\widehat{\Gamma}_p)_{\mathbf{x}, \tilde{\mathbf{x}}}\lesssim_{d,v,\kappa,p,\theta} \ee^{C\chi(\eps)^2} \Upsilon_{\theta}({\mathbf{x}, \tilde{\mathbf{x}}})\,.\]
\end{lemma}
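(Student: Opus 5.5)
We will prove the bound separately for the classical object $\widehat{\gamma}_p$ (really $\widehat\gamma_p^{W^\eps}$, which is what the comparison in Proposition \ref{prop:convergence quantum many body system and Weps} uses) and for the quantum object $\nu^p\widehat\Gamma_p$. In both cases we start from the functional integral representations of Section \ref{sec:funcintrep}: Proposition \ref{Prop:regularisation of euclidean field theory} for the classical side and Proposition \ref{Prop:regularisation of phi42 quantum rel part funct} for the quantum side. We extend them to the Wick-ordered correlation functions in the same way as \cite[Lemmas 5.18 and 5.19]{FKSS_24}. This writes $(\widehat\gamma_p)_{\mathbf x,\tilde{\mathbf x}}$ as a normalised integral over $\xi$ (respectively $\sigma$). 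The prefactor is $\ee^{E^\eps+\frac12\langle\alpha^\eps,\alpha^\eps_\eta\rangle+S^\eps_\eta}$ divided by the partition function. The integrand is a sum, over pairings and over subsets of the $2p$ arguments, of products of resolvent-type kernels. Concretely, these are differences $(h^{U^\eps_\eta}-\ii\xi)^{-1}(x_i,\tilde x_j)-(h^{U^\eps_\eta})^{-1}(x_i,\tilde x_j)$ together with free kernels $G^{U^\eps_\eta}$, corrected by the difference $T^\eps_\eta$ between the Wick orderings. In the quantum case the analogous objects are Brownian-bridge kernels of $K(-\kappa-U^\eps_\eta+\ii\sigma)$ integrated over the time $[0,\nu)$, as in \cite{FKSS_2020}.

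\textbf{Step 1: controlling the prefactor.} By \eqref{bound of E eps and S eps} and \eqref{eq:bound on Linf norm of alpha} the exponential prefactor is at most $\ee^{C\chi(\eps)^2}$, since $\|\alpha^\eps\|_{L^1\cap L^\infty}$ is controlled once the support radius $r_\eps$ is chosen polynomially in $\chi(\eps)$. The phase $\ee^{-\ii\langle T+\alpha,\xi\rangle}$ has modulus one, and $\re f_2\le0$ (respectively $\re F_2\le0$), so these factors can be bounded in absolute value. For the denominator we need a lower bound: $\zeta^{W^\eps}\ge \ee^{-C\chi(\eps)^2}$, and likewise for $\mathcal Z$. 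Jensen's inequality gives $\zeta^{W^\eps}\ge \ee^{-\E W^\eps}=\ee^{E^\eps}\ge1$. For the quantum partition function $\mathcal Z$ we use the Peierls--Bogoliubov inequality, combined with the Wick-ordered form \eqref{eq:WickHamiltonian} whose free expectation is $-E^\eps$ plus an error of order $\chi(\eps)$.

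\textbf{Step 2: pointwise decay of the kernels.} Inside the $\xi$- or $\sigma$-integral we bound every kernel pointwise by the corresponding positive kernel. For the classical kernels this follows from the Feynman--Kac formula (Lemma \ref{Feynman-Kac formula}), since $|\ee^{\ii\int\xi}|=1$, which gives $|(t+h^{U^\eps_\eta}-\ii\xi)^{-1}(x,y)|\le (t+h^{U/2})^{-1}(x,y)$. For the quantum kernels we use \eqref{eq: heat kernel of U tilde upper bound} together with Lemma \ref{Lemma:bounds_on_the_law_of_brownian_bridge}. The $\sigma$-dependence is thereby removed, and we are left with graphs built from kernels of $h^{U/2}$ and its heat kernel. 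Proposition \ref{prop:GN} gives exponential off-diagonal decay $\ee^{-c|x-y|\sqrt{\tilde g(x)\vee\tilde g(y)}}$ and logarithmic on-diagonal singularities, and the same holds for the quantum Green function summed over time. Integrating the internal vertices produces a factor $\tilde g(z)^{-1}\lesssim (1+|z|^\theta)^{-1}$ for each connected cluster of external points. Clusters are identified as in the definition of $\Upsilon_\theta$: points within distance $1$ are merged, and distant clusters pick up the exponential factor. This yields $\Upsilon_\theta(\mathbf x,\tilde{\mathbf x})$ after symmetrisation over the pairings $\pi_1,\pi_2,\pi_3$.

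\textbf{Main obstacle.} The hard part is the quantum case. There we must control the diagonal singularity of the quantum Green function, which is of order $\chi$ at scale $\sqrt\nu$. The interplay of $\nu$ with the Brownian-bridge estimates must be uniform, so that only the $\ee^{C\chi(\eps)^2}$ loss remains. We would follow \cite[Section 5]{FKSS_24}, replacing the translation-invariant heat kernel bounds by Feynman--Kac domination with $U/2$. The potential growth of $U$ then supplies the spatial decay via a bound of the form $\int\mathbb W(\dd\omega)\,\ee^{-\int_0^\nu U(\omega)/2}\lesssim \psi^\nu\,\ee^{-c\nu \tilde g}$, summed over winding numbers.
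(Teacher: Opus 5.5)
Your outline follows the same route as the paper. The paper's proof consists only of the functional integral representations of Propositions \ref{Prop:regularisation of euclidean field theory} and \ref{Prop:regularisation of phi42 quantum rel part funct}, the argument of \cite[Proposition 7.2]{FKSS_2020}, and the prefactor bound \eqref{bound of E eps and S eps}. However, the two steps you fill in yourself do not work as written.

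\textbf{Step 2.} Dominating every kernel by its positive majorant discards exactly the cancellation the lemma relies on. If you expand the Wick-ordered moment over subsets and bound termwise, you are left with undifferenced kernels between external points. Even after you collapse the alternating sum into a permanent of the differences $\bigl(h^{U^\eps_\eta}-\ii\xi\bigr)^{-1}-\bigl(h^{U}\bigr)^{-1}$, bounding each difference by $2(h^{U/2})^{-1}$ leaves $G^{U/2}(x_i,\tilde x_j)$. This kernel diverges logarithmically at $x_i=\tilde x_j$, and at separations $\lesssim\tilde g(x_i)^{-1/2}$ it does not decay in $|x_i|$ at all. So neither boundedness nor the cluster weights of $\Upsilon_\theta$ can come out of it.

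You have to use the $\xi$-dependence before removing it:
\begin{enumerate}
\item Split off $-\mathcal T^\eps_\eta$, whose kernel is at most $\|\alpha^\eps_\eta-\tau^\eps\|_{L^\infty}(h^{U/2})^{-2}(x,y)$.
\item Take absolute values inside the $\xi$-integral, using $\re f_2\le0$.
\item In the Feynman--Kac representation of the remaining difference, use $\bigl|\ee^{\ii\int_0^t\xi(\omega(s))\,\dd s}-1\bigr|\le\bigl|\int_0^t\xi(\omega(s))\,\dd s\bigr|$ together with $\int\mu_{v^\eps_\eta}(\dd\xi)\,\bigl|\int_0^t\xi(\omega(s))\,\dd s\bigr|^2\lesssim t^2\eps^{-d}$, and the analogue for $\sigma$.
\end{enumerate}
The extra factor $t$ turns $\int\dd t\,\ee^{-th^{U/2}}$ into $(h^{U/2})^{-2}$. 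For $d\le3$ this kernel is bounded on the diagonal, and by Lemma \ref{lem:boundonexp2} it is $\lesssim(\tilde g(x)\vee\tilde g(y))^{-1}$ times an exponentially decaying factor in $|x-y|$. This is where $\Upsilon_\theta$ comes from, at a cost of at most $(\eps^{-d/2}+\chi(\eps))^p\lesssim_p\ee^{\chi(\eps)^2}$.

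\textbf{Step 1.} This step does not give $\ee^{C\chi(\eps)^2}$.
\begin{enumerate}
\item The $\alpha$ term: your own estimate gives $\frac12\int\alpha^\eps\alpha^\eps_\eta\lesssim\chi(\eps)^2r_\eps^d$. Definition \ref{assumption:existence of alpha} forces $\alpha^\eps_\eta\ge\tau^\eps-U/2$, and $\tau^\eps\gtrsim\chi(\eps)$ wherever $U\lesssim\chi(\eps)$. Hence $r_\eps\gtrsim\chi(\eps)^{1/\theta}$, and your bound is of order $\chi(\eps)^{2+d/\theta}$.
\item The $E^\eps$ term, more seriously: the bound $E^\eps\lesssim\chi(\eps)^2$ in \eqref{bound of E eps and S eps}, which the paper's one-line proof also relies on, is the bound from the homogeneous torus setting and fails in a trap. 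Take $d=2$ and $U\asymp1+|x|^\theta$. Feynman--Kac gives $G(x,\tilde x)\gtrsim1$ for $|x-\tilde x|\le\eps$ throughout the set $\{U\lesssim\eps^{-2}\}$, whose area is $\asymp\eps^{-4/\theta}$. Proposition \ref{prop:GN}, with its ``$\vee1$'', gives the matching upper bound. Thus $E^\eps\asymp\eps^{-4/\theta}$.
\end{enumerate}
In the limit $\eta\to0$, the $\xi$-integral in \eqref{classical partition function in phi42 case} equals $\zeta^{W^\eps}=O(1)$ divided by the prefactor. Therefore any argument that uses only $|\ee^{f_2}|\le1$ loses the full factor $\ee^{E^\eps+\frac12\int\alpha^\eps\alpha^\eps_\eta+S^\eps_\eta}$, which is not $\ee^{O(\chi(\eps)^2)}$.

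There are also two smaller slips, neither of which affects the conclusion.
\begin{enumerate}
\item $\E W^\eps=0$, not $-E^\eps$, so Jensen gives only $\zeta^{W^\eps}\ge1$.
\item The free expectation of $H-H^0$ is $\frac12\int v^\eps\mathcal G_\nu^2+\frac{\nu}{2}v^\eps(0)\int\varrho_\nu-E^\eps$. The lower bound on $\mathcal Z$ still follows: $\frac12\int v^\eps\mathcal G_\nu^2\le E^\eps$ because $v$ is of positive type and $0\le\mathcal G_\nu\le G$ as operators, and $\nu v^\eps(0)\int\varrho_\nu\lesssim\nu^{2-s}\eps^{-d}$.
\end{enumerate}

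Finally, for the classical half you can bypass Step 1 entirely: Lemma \ref{lm:boundWickgammax} already bounds $\widehat\gamma^{W^\eps}_p$ by $\Upsilon_\theta$ uniformly in $\eps$.
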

The result follows from \cite[Proposition 7.2]{FKSS_2020}, recalling Propositions \ref{Prop:regularisation of euclidean field theory} and \ref{Prop:regularisation of phi42 quantum rel part funct} and \eqref{bound of E eps and S eps}.

Using a more precise bound, we improve \cite[Proposition 5.2]{FKSS_24} and obtain the following analogous result regarding the convergence of the Wick-ordered correlation function. 
\begin{prop}\label{prop:p point correlation functions convergence} Let \(s\in(d/2,2)\) satisfy \eqref{eq:tracehs}.
    Consider \(R>0\) and let  \(\mathbf{x},\mathbf{\tilde {x}} \in \mathbb{R}^{dp}\) satisfy \(\abs{\mathbf{\tilde{x}}}+\abs{\mathbf{x}}\leq R\).  We define $$
\theta(s,d)\deq \frac{(2-s)(4-d)}{8 } \wedge \frac{(4-d)}{16}\,.
$$
Then, for all \(p\in \mathbb{N}\), there exists a constant \(C_1\) depending on \(v,\kappa\) such that 
\[\big|{\nu^p(\widehat{\Gamma}_p)_{\mathbf{x}, \tilde{\mathbf{x}}}-(\widehat{\gamma}_p)_{\mathbf{x}, \tilde{\mathbf{x}}}}\big|\lesssim_{d,p,v, \kappa}\ee^{C_1\chi(\eps)^2}\nu^{\theta(s,d)}(1+ \sqrt{\nu R})\,.\]
\end{prop}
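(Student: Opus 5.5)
The plan is to follow the structure of \cite[Proposition 5.2]{FKSS_24}, working throughout with the functional integral representations of Propositions \ref{Prop:regularisation of euclidean field theory} and \ref{Prop:regularisation of phi42 quantum rel part funct}, extended to correlation functions as in \cite[Lemmas 5.18 and 5.19]{FKSS_24}. In these representations the Wick-ordered correlation functions are written as $\mu_{v^\eps_\eta}$- and $\mu_{C_\eta}$-integrals. The integrands are built from three ingredients: the factors $\ee^{f_2(\xi)}$ and $\ee^{F_2(\sigma)}$; the phases $\ee^{-\ii\langle T+\alpha^\eps,\cdot\rangle}$; and products of kernels of the resolvents $(h^{U^\eps_\eta}-\ii\xi)^{-1}$ in the classical case and of the corresponding time-averaged propagators $\nu\int_0^\nu(\cdot)$ of $K(-\kappa-U^\eps_\eta+\ii\sigma)$ in the quantum case. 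The prefactors are handled by \eqref{bound of E eps and S eps}. Each of $E^\eps$, $S^\eps_\eta$, $S^\eps_{\nu,\eta}$ and $\frac12\int\alpha^\eps\alpha^\eps_\eta$ contributes at most $\ee^{C\chi(\eps)^2}$. This is the source of the factor $\ee^{C_1\chi(\eps)^2}$, and it lets us treat all $\eps$-dependent prefactors as harmless constants. I then split
\[
\nu^p\widehat\Gamma_p-\widehat\gamma_p
\]
telescopically into three pieces: differences of the prefactors ($S^\eps_{\nu,\eta}-S^\eps_\eta$), differences of the phases ($T^\eps_{\nu,\eta}-T^\eps_\eta$), and differences of the integrands under the common coupling $\xi=\nu^{-1}\int_0^\nu\sigma$.

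Step one is to compare the quantum and classical Green-function-type objects. Using the Brownian bridge representation (Lemma \ref{Feynman-Kac formula}, Lemma \ref{Lemma:bounds_on_the_law_of_brownian_bridge}) I would show that $\varrho^{V}_\nu(x)-G^{V}(x,x)$ is small for $V=U,U^\eps_\eta$, and in fact I only need the difference $T^\eps_{\nu,\eta}-T^\eps_\eta$. Writing both sides as $\int_0^\infty$ of heat kernels with Matsubara sums, the error comes from the time-fluctuation of the potential along a path of duration $\nu$. Lemma \ref{Lemma:bounds_on_the_law_of_brownian_bridge}(ii) bounds this by $\mathbb E|\omega(t)-\omega(s)|^2\lesssim\nu+|x-\tilde x|^2$. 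This step is where $\nabla U$ (Assumption \assD) or a Hölder modulus enters, together with the growth $\tilde g$, and it is also the source of the factor $(1+\sqrt{\nu R})$: the bridge displacement gets paired against the local size of the potential near $|x|\le R$. The trace over large energies is controlled by $\tr h^{-s}<\infty$, and the interpolation between short times (where $\nu$-errors are $O(\nu^{1-s/2})$) and long times produces the exponent $(2-s)(4-d)/8$.

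Step two is the comparison $F_2(\sigma)$ versus $f_2(\xi)$ and the comparison of the resolvent kernels evaluated at the fixed points $\mathbf x,\tilde{\mathbf x}$. Here I would follow \cite[Section 5]{FKSS_24} verbatim: expand to second order, use $\Re f_2,\Re F_2\le0$ so that all exponentials are bounded by $1$, and bound the difference by moments of $\sigma-\xi$ under $\mu_{C_\eta}$. These moments yield powers $\nu^{(4-d)/16}$ after Hölder continuity of $\sigma$ in time, since $\delta_{\eta,\nu}$ regularises at scale $\eta$ and we take $\eta\to0$ only after the bounds. The novelty compared to \cite{FKSS_24} is that the free operator is $h^{U^\eps_\eta}$ rather than $\kappa-\Delta/2$. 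I will use the uniform domination \eqref{eq: heat kernel of U tilde upper bound} by $h^{U/2}$ to get all Schatten and kernel bounds uniformly in $\eps,\eta$, with constants absorbed in $\ee^{C\chi(\eps)^2}$ through $\|\alpha^\eps_\eta-\tau^\eps\|_{L^\infty}\lesssim\chi(\eps)$.

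I expect the main obstacle to be the kernel estimates at fixed points in step one, namely the pointwise, rather than trace-norm, comparison of quantum and classical propagators for the inhomogeneous rough potential $U^\eps_\eta$. This potential is only bounded below, has unbounded growth, and its $\eps$-dependent piece $\alpha^\eps_\eta-\tau^\eps$ is not smooth. My first attempt is to put the potential difference along the bridge into a Duhamel formula. I would use $|\ee^{-a}-\ee^{-b}|\le|a-b|$ for $a,b\ge0$ on the positive part $U/2$, estimate $|U(\omega(t))-U(\omega(s))|$ via $\nabla U\lesssim \tilde g^{3/2}$ combined with the Gaussian decay of the bridge, and treat the bounded, $\chi(\eps)$-sized part $\alpha^\eps_\eta-\tau^\eps$ crudely by its sup norm. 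The restriction to $|\mathbf x|+|\tilde{\mathbf x}|\le R$ keeps the local size of $U$ controlled.
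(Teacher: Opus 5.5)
Your outer architecture (the functional integral representations extended to correlation functions as in \cite[Lemmas 5.18, 5.19]{FKSS_24}, a telescoping split, and prefactors absorbed into $\ee^{C\chi(\eps)^2}$) is what the paper relies on when it defers to \cite[Section 5.2, Appendix C]{FKSS_24}. However, both error mechanisms you describe are misidentified, and as described each step would fail.

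In step one there is no ``time-fluctuation of the potential along a path'' to control. In the quantum representation (Proposition \ref{Prop:regularisation of phi42 quantum rel part funct} and its correlation-function analogue), the paths carry exactly the same continuous-time Feynman--Kac weight $\ee^{-\int U^\eps_\eta(\omega(r))\,\dd r}$ as the classical resolvent expansion; both are built from $\WU$ in \eqref{eq:def:WU}. The potential $U^\eps_\eta$ is time-independent and is never discretised. The quantum and classical expressions differ in only two ways. First, there are sums over path lengths in $\nu\mathbb{N}$ and $\boldsymbol{\tau}$-integrals in place of continuum integrals; these are pure Riemann-sum errors in time, controlled by $\tr h^{-s}<\infty$. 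This is how $T^\eps_{\nu,\eta}-T^\eps_\eta$ is handled in Lemma \ref{lemma:convergence_of_Ewick_quantitative}; cf.\ also Lemma \ref{lemma:riemannsumestimates}. Second, the quantum interaction has a different time structure. A Duhamel argument based on $\abs{U(\omega(t))-U(\omega(s))}$ and $\nabla U$ or a H\"older modulus of $U$ therefore targets a term that does not exist. It could not be carried out under \assP{} anyway, e.g.\ for the step potential $1+\floor{\abs{x}}^\theta$, which the statement covers. The pointwise kernel bounds at $\mathbf{x},\tilde{\mathbf{x}}$ that you flag as the main obstacle need only the heat-kernel estimates of Lemma \ref{Lemma:bounds_on_the_law_of_brownian_bridge}~(i) and Lemma \ref{lem:boundonexp2}. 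These are made uniform in $\eps,\eta$ by the domination \eqref{eq: heat kernel of U tilde upper bound}.

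In step two, bounding moments of $\sigma-\xi$ via H\"older continuity of $\sigma$ in time cannot give estimates uniform in $\eta$. The time covariance $\delta_{\eta,\nu}$ converges to a periodic delta function as $\eta\to0$, so $\sigma$ becomes white noise in time and $\E\abs{\sigma(\tau,x)-\xi(x)}^2$ blows up. The missing idea is to integrate out $\sigma$ first. This produces the point-point, point-path and path-path interactions $\mathbb{V}_\eta$ of Definition \ref{def:quantum_interactins}, which are bounded uniformly in $\eta$ because $\sigma$ only ever appears integrated along paths.

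One then compares $\mathbb{V}_\eta$ with $\mathbb{V}^\eps$ via Proposition \ref{prop:phi42:path-path_interaction_bounds}. The remainder $\hat{\mathbb{V}}_\eta$ is exactly where Lemma \ref{Lemma:bounds_on_the_law_of_brownian_bridge}~(ii) enters. Path increments over time windows of length $\nu$ are paired with $\norm{\nabla v^\eps}_{L^\infty}\lesssim\eps^{-d-1}$ (see \eqref{eq:first_error_term5.46}), and the resulting negative powers of $\eps$ are absorbed into $\ee^{C_1\chi(\eps)^2}$. For the open paths from $\tilde x_i$ to $x_i$ in the correlation functions, the factor $(1+\sqrt{\nu R})$ comes from the endpoint term $\abs{x-\tilde x}^2(t-s)^2/(\tau-\tilde\tau)^2$ of that lemma, with $\abs{x_i-\tilde x_i}\le R$. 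It does not come from the local size of $U$ near $\abs{x}\le R$.

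Finally, you do not address the normalisations. The correlation functions carry $1/\mathcal{Z}$ and $1/\zeta^{W^\eps}$ respectively. Comparing them is where the paper uses Proposition \ref{prop:estimates difference of partition functions}, together with a lower bound such as $\zeta^{W^\eps}\ge1$, which follows from Jensen's inequality since $\E W^\eps=0$.
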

\begin{proof}
    The proof follows from \cite[Section 5.2,Appendix C]{FKSS_24}, using Proposition \ref{prop:estimates difference of partition functions} and the second estimate of Lemma \ref{Lemma:bounds_on_the_law_of_brownian_bridge}; we omit further details. 
\end{proof}
We can now show Proposition \ref{prop:convergence quantum many body system and Weps}.

\begin{proof}[Proof of Proposition \ref{prop:convergence quantum many body system and Weps}]
    We obtain the convergence of the relative partition functions as a direct consequence of Proposition \ref{prop:estimates difference of partition functions}. The bound in   \eqref{eq:convergence of relative partition function with decay} follows from
\begin{equation}\label{eq:corrfunctionGammawUpsilon}
\lim_{\eps,\nu \to 0 } \sup _{\mathbf{x}, \tilde{\mathbf{x}}} \frac{\big|\nu^p(\widehat{\Gamma}_p)_{\mathbf{x}, \tilde{\mathbf{x}}}-(\widehat{\gamma}_p^{W^\eps})_{\mathbf{x}, \tilde{\mathbf{x}}}\big|}{\Upsilon_{\theta-c}(\mathbf{x}, \tilde{\mathbf{x}})}=0\,.
  \end{equation}  
  Hence, we show \eqref{eq:corrfunctionGammawUpsilon}. 
    Let \(c, R>0\), we have from Lemma \ref{lemma:bound on decay of wick ordered correlation functions}
    \begin{equation}
        \sup_{\abs{\mathbf{x}}+\abs{\mathbf{\tilde x}}>R} \frac{\big|{\nu^p(\widehat{\Gamma}_p)_{\mathbf{x}, \tilde{\mathbf{x}}}-(\widehat{\gamma}_p)_{\mathbf{x}, \tilde{\mathbf{x}}}}\big|}{\Upsilon_{\theta-c}({\mathbf{x}, \tilde{\mathbf{x}}})} \leq \sup_{\abs{\mathbf{x}}+\abs{\mathbf{\tilde x}}>R} \frac{C\ee^{C_1\chi(\eps)^2}\Upsilon_{\theta}({\mathbf{x}, \tilde{\mathbf{x}}})} {\Upsilon_{\theta-c}({\mathbf{x}, \tilde{\mathbf{x}}})} \leq C\ee^{C_1\chi(\eps)^2}R^{- c(2-d/2)}\,.
    \end{equation}
    By Proposition \ref{prop:p point correlation functions convergence}, 
    \[\sup_{\abs{\mathbf{x}}+\abs{\mathbf{\tilde x}}\leq R} \frac{\big|{\nu^p(\widehat{\Gamma}_p)_{\mathbf{x}, \tilde{\mathbf{x}}}-(\widehat{\gamma}_p)_{\mathbf{x}, \tilde{\mathbf{x}}}}\big|}{\Upsilon_{\theta-c}({\mathbf{x}, \tilde{\mathbf{x}}})} \leq C\ee^{C_1\chi(\eps)^2}\nu^{\theta(s,d)}(1+ \sqrt{\nu R})R^{\theta(2-d/2)}\,. \]
    Choosing \(R=\nu ^{-\alpha}\) with \(\alpha>0\) small enough (depending on \(d,s\)), we can conclude the proof. 
\end{proof}

\subsection{Proof of Proposition \ref{prop:estimates difference of partition functions}}\label{sub:RiemannsecEstimates}
 
 The rest of the section is dedicated to show Proposition \ref{prop:estimates difference of partition functions}. We start by noticing the following result, whose proof is similar to the corresponding lemma in \cite[Lemma 5.9]{FKSS_24}.
\begin{lemma}\label{lemma:diff of relative partition function} Recall \eqref{def:f_2}, \eqref{eq:S_eta}, and \eqref{def:E^Wick_eta} as well as \eqref{eq:def_of_F_2}, \eqref{eq:def_S_nueta}, and \eqref{eq:def_Ewick_nueta}. 
    Uniformly in \(\eta \geq0\), there exists a constant \(C>0\) such that
    \begin{multline}\label{eq:diff of relative partition function}
        \big|\mathcal{Z}_\eta-\zeta^{W^{\varepsilon}}\big| \leq \mathrm{e}^{C\chi(\eps)^2} \biggl[ \biggl(\int \mu_{\mathcal{C}_\eta}(\mathrm{d} \sigma)\big|F_2(\sigma)-f_2(\langle\sigma\rangle)\big|^2\biggr)^{1 / 2} + \abs{S^\eps_{\nu,\eta}-S^\eps_{\eta}} \\  +\int \mu_{v^\eps_\eta}(\d {\xi}) \big| \langle\xi,T^\eps_\eta- T^\eps_{\nu, \eta}\rangle\big|\biggr] \,.
    \end{multline}
\end{lemma}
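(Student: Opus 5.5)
The plan is to subtract the two functional integral representations from Propositions \ref{Prop:regularisation of euclidean field theory} and \ref{Prop:regularisation of phi42 quantum rel part funct} at fixed $\eta$. Then $\zeta^{W^\eps}$ is replaced by its $\eta$-regularised version $\zeta^{W^\eps_\eta}$, which is harmless by Lemma \ref{lemma:W_N is a Cauchy sequence}. The common prefactor $\ee^{E^\eps+\frac12\int\alpha^\eps\alpha^\eps_\eta}$ is bounded by $\ee^{C\chi(\eps)^2}$ by \eqref{bound of E eps and S eps} and \eqref{eq:bound on Linf norm of alpha}.

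Next I would couple the two Gaussian integrals. The time average $\xi=\langle\sigma\rangle$ of $\sigma\sim\mu_{C_\eta}$ has law $\mu_{v^\eps_\eta}$, and for the real field $\xi$ we have $\langle \sigma, f\rangle_\nu=\langle\xi,f\rangle$ for time-independent $f$. Hence
\[
\zeta^{W^\eps_\eta}=\ee^{\cdots+S^\eps_\eta}\int\mu_{C_\eta}(\d\sigma)\,\ee^{-\ii\langle\sigma,T^\eps_\eta+\alpha^\eps\rangle_\nu}\ee^{f_2(\langle\sigma\rangle)}\,,
\]
and both quantities are integrals over the same measure. Writing $A=S,\ B=-\ii\langle\sigma,T+\alpha^\eps\rangle_\nu,\ D=F_2$ (quantum) and $a,b,d$ for the classical analogues, I would telescope
\[
\ee^{A}\ee^{B}\ee^{D}-\ee^{a}\ee^{b}\ee^{d}=(\ee^{A}-\ee^{a})\ee^{B+D}+\ee^{a}(\ee^{B}-\ee^{b})\ee^{D}+\ee^{a}\ee^{b}(\ee^{D}-\ee^{d}).
\]
Every term has a modulus bound from the following facts. $\re F_2,\re f_2\le0$ by part (ii) of both propositions. $\ee^{B},\ee^{b}$ are pure phases. $\ee^{S}$ is at most $\ee^{C\chi(\eps)^2}$, where the quantum $S^\eps_{\nu,\eta}$ must also be shown to be $O(\chi(\eps)^2)$; I would do this via the same Brownian bridge representation used for $T^\eps_{\nu,\eta}$.

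The three differences are then controlled by elementary inequalities. For the first, $|\ee^{A}-\ee^{a}|\le\ee^{\max(A,a)}|A-a|$ gives the term $|S^\eps_{\nu,\eta}-S^\eps_\eta|$. For the second, $|\ee^{\ii x}-\ee^{\ii y}|\le|x-y|$ gives $\int|\langle\xi,T^\eps_\eta-T^\eps_{\nu,\eta}\rangle|$. For the third, since both real parts are nonpositive, $|\ee^{D}-\ee^{d}|\le|D-d|$, and Cauchy--Schwarz in $\mu_{C_\eta}$ produces the $L^2$ term with $f_2(\langle\sigma\rangle)$. All bounds are uniform in $\eta$ because the $\chi(\eps)$ bounds on $\alpha^\eps_\eta,\tau^\eps$ and the trace bound \eqref{eq:uniform lower bound potential in eta and eps} are uniform in $\eta$.

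The main obstacle is the a priori bound $|S^\eps_{\nu,\eta}|\lesssim\chi(\eps)^2$ uniformly in $\nu,\eta$. I would treat it as follows:
\begin{itemize}
\item Expand the $t$-integrated trace difference in \eqref{eq:def_S_nueta} to second order in $w\deq\alpha^\eps_\eta-\tau^\eps$.
\item The first-order term should cancel exactly against $\langle\varrho^U_\nu,w\rangle$, since $\varrho^U_\nu$ is the $t$-integrated diagonal of the quantum resolvent.
\item The remainder is quadratic in $w$. I would bound it by $\|w\|_{L^\infty}^2$ times a quantum analogue of $\tr (h^{U/2})^{-2}$, controlled through the Feynman--Kac domination \eqref{eq: heat kernel of U tilde upper bound}.
\end{itemize}
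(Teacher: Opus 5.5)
Your telescoping argument is the route the paper intends; its proof is only a reference to \cite[Lemma 5.9]{FKSS_24}. Realising both integrals on $\mu_{C_\eta}$ via $\xi=\langle\sigma\rangle$, splitting into three differences, and using $\re F_2,\re f_2\le0$, $|\ee^{\ii x}-\ee^{\ii y}|\le|x-y|$ and Cauchy--Schwarz are all fine. The bound $|S^\eps_{\nu,\eta}|\lesssim\chi(\eps)^2$ that you single out as the main obstacle is in fact immediate from $|S^\eps_\eta|\lesssim\chi(\eps)^2$ and Lemma~\ref{lemma:Convergence of K_eta}. The genuine gap is elsewhere, in your first step. In this unbounded setting the common prefactor $\ee^{E^\eps+\frac12\int\alpha^\eps\alpha^\eps_\eta}$ is not bounded by $\ee^{C\chi(\eps)^2}$, and the bounds you cite do not show that it is.

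Take for instance $v\ge0$ and $U$ as in \assP{}. Restricting the Feynman--Kac integral for $G$ to times $t\lesssim(1+|x|)^{-\theta}$ gives $G(x,\tilde x)\gtrsim\log\bigl(\eps^{-2}(1+|x|)^{-\theta}\bigr)$ for $|x-\tilde x|\le\eps$, whenever the right-hand side is at least $1$. Two consequences follow.

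(a) On $\{|x|\le c\,\chi(\eps)^{1/\theta}\}$ one has $\tau^\eps-U/2\gtrsim\chi(\eps)$, so \eqref{eq:lower bound on Uepseta} forces $\alpha^\eps_\eta\gtrsim\chi(\eps)$ there. Using $0\le\hat v\le\|v\|_{L^1}$ and $|\varphi|\le1$, one has $\frac12\int\alpha^\eps\alpha^\eps_\eta=\frac12\langle\alpha^\eps,v^\eps_\eta\alpha^\eps\rangle\ge(2\|v\|_{L^1})^{-1}\|\alpha^\eps_\eta\|_{L^2}^2$. Hence every admissible $\alpha^\eps$ gives $\frac12\int\alpha^\eps\alpha^\eps_\eta\gtrsim\chi(\eps)^{2+2/\theta}$. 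This is an $L^2$-type quantity that \eqref{eq:bound on Linf norm of alpha} cannot control.

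(b) For $|x|\lesssim\eps^{-2/\theta}$ one has $\int\dd\tilde x\,v^\eps(x-\tilde x)G(x,\tilde x)^2\gtrsim1$, so $E^\eps\gtrsim\eps^{-4/\theta}$. Thus \eqref{bound of E eps and S eps} cannot be relied on for $E^\eps$ either.

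On the unit torus of \cite{FKSS_24}, with constant $\alpha$, both quantities are $O(\chi(\eps)^2)$, which is why the crude bound works there.

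These large factors are compensated inside the Gaussian integrals. Since $\zeta^{W^\eps}$ is bounded uniformly in $\eps$, the $\xi$-integral in \eqref{classical partition function in phi42 case} has modulus of order at most $\ee^{-E^\eps-\frac12\int\alpha^\eps\alpha^\eps_\eta-S^\eps_\eta}$. Bounding the phases and $|\ee^{F_2}|,|\ee^{f_2}|$ by $1$ discards exactly this compensation.

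As written, your argument therefore gives the estimate only with $\ee^{C\chi(\eps)^2+E^\eps+\frac12\int\alpha^\eps\alpha^\eps_\eta}$ in place of $\ee^{C\chi(\eps)^2}$. This is useless for the admissible $\eps$ in \eqref{eq:epsbound}: for $\eps=\exp(-(\log\nu^{-1})^{(1-c)/2})$ one has $\eps^{-4/\theta}\gg\log\nu^{-1}$.

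Reaching the stated constant requires an additional ingredient that keeps this compensation. One option is to retain $\ee^{\max(\re F_2,\re f_2)}$ in $|\ee^{F_2}-\ee^{f_2}|$ and prove exponential-moment bounds on $E^\eps+\re f_2$. You would also need some way of exploiting the oscillation of the $\alpha^\eps$-phase. The paper's one-line reference to the homogeneous proof does not supply such an ingredient either.
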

In order to show Proposition \ref{prop:estimates difference of partition functions}, we need some key bounds on the three sums on the right-hand side of \eqref{eq:diff of relative partition function}. We are going to split the analysis in two different steps. First, we need to control the terms containing the differences $S^\eps_{\nu,\eta}-S^\eps_{\eta}$ and $T^\eps_\eta- T^\eps_{\nu, \eta}$. Then, we consider the term with difference of the functions $F_2(\sigma)-f_2(\langle\sigma\rangle)$. We sum up the results in the following two lemmas. Hence, the proof of Proposition \ref{prop:estimates difference of partition functions} is a direct consequence of  Lemma \ref{lemma:diff of relative partition function} together with Lemma \ref{lemma:Riemann sum estimates} and \ref{lemma:bound on diff F2 and f2} below. 
\begin{lemma}\label{lemma:Riemann sum estimates}
    Uniformly in \(\eta\geq0\) small enough, and for any \(s\in(d/2,2)\) satisfying \eqref{eq:tracehs} 
    \[\big|S^\eps_{\nu,\eta}-S^\eps_{\eta}\big|  +\int \mu_{v^\eps_\eta}(\d {\xi}) \big| \langle\xi,T^\eps_\eta- T^\eps_{\nu, \eta}\rangle\big| \lesssim_{ \kappa,s}\chi(\eps)^2\nu^{1/2-s/4}\,.\]
\end{lemma}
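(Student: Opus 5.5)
We will reduce both quantities in the statement to a single quantum-versus-classical comparison: the difference between a quantum density and its classical Green-function counterpart, for an arbitrary potential $W$ that is bounded below by $U/2$.

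\textbf{Step 1: reduction to diagonal kernels.} First we rewrite the $K$-trace terms in \eqref{eq:def_S_nueta}. The identity $\int_0^\infty \d t\,\Tr\bigl((t+K(-\kappa-W))^{-1}\bigr)=\log\Tr_{\mathcal F}\ee^{-\nu\,\d\Gamma(h^W)}$ is already used in the proof of Proposition \ref{Prop:regularisation of phi42 quantum rel part funct}. By \cite[Lemma 3.13]{FKSS_2020}, the derivative of this expression along the path $W_\lambda\deq U+\lambda(\alpha^\eps_\eta-\tau^\eps)$ is $-\nu^{-1}\langle\varrho_\nu^{W_\lambda},\alpha^\eps_\eta-\tau^\eps\rangle$. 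This gives
\[
S^\eps_{\nu,\eta}=-\int_0^1\d\lambda\,\big\langle \varrho_\nu^{W_\lambda}-\varrho_\nu^{U},\alpha^\eps_\eta-\tau^\eps\big\rangle .
\]
The same interpolation applied to $S^\eps_\eta=\log\mathbb E[\cdots]$, via \eqref{eq:change_of_gaussian_law}, gives the analogous formula with $\varrho_\nu^{W}$ replaced by $G^{W}(x,x)$ after Wick ordering, i.e. with $\varrho_\nu^{W_\lambda}-\varrho_\nu^U$ replaced by $-(G^U-G^{W_\lambda})(x,x)$. Both $W_\lambda\geq U/2$ (by \eqref{eq:lower bound on Uepseta}) and $\|\alpha^\eps_\eta-\tau^\eps\|_{L^\infty}\lesssim\chi(\eps)$. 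The first term of the lemma is therefore bounded by
\[
\chi(\eps)\,\sup_{W}\big\|\bigl(\varrho_\nu^{U}-\varrho_\nu^{W}\bigr)-\bigl(G^U-G^{W}\bigr)(\cdot,\cdot)\big\|_{L^1}.
\]
For the second term, Cauchy--Schwarz as in \eqref{eq:epsEwickEwick_eta_nu} bounds it by $\|T^\eps_\eta-T^\eps_{\nu,\eta}\|_{L^2}\le\|\cdot\|_{L^\infty}^{1/2}\|\cdot\|_{L^1}^{1/2}$, and $T^\eps_\eta-T^\eps_{\nu,\eta}$ is exactly a quantity of the form above with $W=U^\eps_\eta$.

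\textbf{Step 2: a resolvent identity.} We write $\varrho^W_\nu(x)=\nu\bigl(\ee^{\nu h^W}-1\bigr)^{-1}(x,x)$ and $G^W=(h^W)^{-1}$. Setting $f_\nu(a)\deq\nu(\ee^{\nu a}-1)^{-1}-a^{-1}$, the target reduces to bounding the diagonal of $f_\nu(h^U)-f_\nu(h^W)$ in $L^1$ and $L^\infty$. Using $f_\nu(a)=-\tfrac{\nu}{2}+O(\nu^2 a)$ for $\nu a\lesssim1$, the constant $-\nu/2$ cancels in the difference. We then represent
\[
f_\nu(h^U)-f_\nu(h^W)=\sum_{k\ge1}\nu\int\ldots\bigl(\ee^{-k\nu h^U}-\ee^{-k\nu h^W}\bigr)-\int_0^\infty \d t\,\bigl(\ee^{-th^U}-\ee^{-th^W}\bigr),
\]
which is a Riemann sum minus the corresponding integral. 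The error in a Riemann sum is controlled by $\nu\int \d t\,|\partial_t(\ee^{-th^U}-\ee^{-th^W})(x,x)|$ on scale-separated pieces, and we interpolate this against the trivial bound.

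\textbf{Step 3: kernel estimates.} Using Duhamel, $\ee^{-th^U}-\ee^{-th^W}=\int_0^t \ee^{-(t-r)h^U}(W-U)\ee^{-rh^W}\d r$. The Feynman--Kac domination \eqref{eq: heat kernel of U tilde upper bound} lets us bound every heat kernel by that of $h^{U/2}$. Taking the trace of the diagonal then gives $L^1$ bounds of the form $\|W-U\|_{L^\infty}\,\min(t,\nu)^{\,\cdot}\,\tr\bigl((h^{U/2})^{-s}\bigr)$-type integrals. Specifically, we split at $t=\nu^{1/2}$, giving a small-$t$ piece of size $\lesssim\nu^{1/2-s/4}$ and a large-$t$ piece using $\nu\,\tr\bigl(h^{2-s}\ee^{-th}\bigr)$-type factors. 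The $L^\infty$ bound is analogous, using $\sup_x\ee^{-th^{U/2}}(x,x)\lesssim t^{-d/2}\ee^{-\kappa t/2}$. Collecting, each difference is $\lesssim\chi(\eps)\nu^{1/2-s/4}$, and multiplying by the $\chi(\eps)$ from Step 1 gives the claim.

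The main obstacle is Step 3 near $t\to0$: the $t^{-d/2}$ singularity of the diagonal heat kernel, with the first Duhamel correction only gaining a factor $t$. Obtaining the exponent $1/2-s/4$ requires balancing this small-$t$ singularity against the trace-class input $\tr h^{-s}<\infty$, and we expect to need the interpolation $\ee^{-x}\lesssim x^{-\gamma}$ as in Lemma \ref{lm:boundVN}.
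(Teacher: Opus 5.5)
Your proposal is correct. For the $T$-term it is essentially the paper's argument: Cauchy--Schwarz against $v^\eps_\eta$, an $L^\infty\times L^1$ bound, Duhamel with domination by $h^{U/2}$ and $\tr\ee^{-th^{U/2}}\lesssim t^{-s}$ for $t\le\delta$, a Riemann-sum error $\nu\delta^{-s}$ for $t>\delta$, and $\delta\sim\nu^{1/2}$.

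The genuine difference is the $S$-term. In Lemma \ref{lemma:Convergence of K_eta}, the paper expands $S^\eps_\eta$ and $S^\eps_{\nu,\eta}$ to second order in $\alpha^\eps_\eta-\tau^\eps$. It writes both as triple time integrals, the quantum one via the representation of \cite{FKSS_2020}, and compares them by a small/large-time split plus telescoping. This gives $\|\alpha^\eps_\eta-\tau^\eps\|_{L^\infty}^2\nu^{2-s}$.

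You instead integrate along $W_\lambda=(1-\lambda)U+\lambda U^\eps_\eta$, which satisfies $W_\lambda\ge U/2$ by convexity and \eqref{eq:lower bound on Uepseta}. You use that the $\lambda$-derivative of each log partition function is minus the pairing of $\alpha^\eps_\eta-\tau^\eps$ with $\varrho^{W_\lambda}_\nu$, respectively with $G^{W_\lambda}(x,x)-G^U(x,x)$ under the tilted Gaussian measure. Setting $T^W\deq(G^U-G^W)(x,x)$ and $T^W_\nu\deq\varrho^U_\nu-\varrho^W_\nu$, this gives
$S^\eps_{\nu,\eta}-S^\eps_\eta=\int_0^1\d\lambda\,\langle T^{W_\lambda}_\nu-T^{W_\lambda},\alpha^\eps_\eta-\tau^\eps\rangle$.
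So a single $L^1$ estimate serves both terms.

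Your route is shorter and avoids the three-variable Riemann-sum bookkeeping. The price is the weaker rate $\chi(\eps)^2\nu^{1-s/2}$ for the $S$-difference instead of $\nu^{2-s}$. That is harmless, since $1-s/2>1/2-s/4$.

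Some small corrections:
\begin{itemize}
\item The $\lambda$-derivative carries no factor $\nu^{-1}$. Your displayed formula for $S^\eps_{\nu,\eta}$ is nonetheless correct.
\item With $\delta=\nu^{1/2}$, the Duhamel piece is $\delta^{2-s}=\nu^{1-s/2}$, not $\nu^{1/2-s/4}$. Hence $\|T^W-T^W_\nu\|_{L^1}\lesssim\chi(\eps)\nu^{1-s/2}$.
\item Because of this, your $\nu$-decaying $L^\infty$ bound is correct but unnecessary. The uniform Feynman--Kac bound $\|T^W-T^W_\nu\|_{L^\infty}\lesssim\chi(\eps)$ suffices, exactly as in the paper.
\item The relevant large-time trace is $\tr(h\ee^{-th})\lesssim t^{-1-s}\tr h^{-s}$, not $\tr(h^{2-s}\ee^{-th})$.
\item In the large-time piece it is simplest to treat the two Riemann sums separately. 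Since $h\ee^{-th}$ is not positivity preserving, you cannot dominate it by $h^{U/2}$ kernels. The resulting bound $\nu\delta^{-s}$ lacks the factor $\chi(\eps)$, which is harmless because $\chi(\eps)\gtrsim1$.
\end{itemize}
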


\begin{lemma}\label{lemma:bound on diff F2 and f2}
 Uniformly in \(\eta\geq0\) small enough, and for any \(s\in(d/2,2)\) satisfying \eqref{eq:tracehs} 
    \[\int \mu_{\mathcal{C}_\eta}(\mathrm{d} \sigma)|F_2(\sigma)-f_2(\langle\sigma\rangle)|^2\lesssim_{\kappa,v, s}\frac{\nu^{1/2}\vee\nu^{2-s}}{\eps^{5d+1}}\,.\]
\end{lemma}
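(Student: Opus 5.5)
The plan is to follow the strategy of \cite[Lemma 5.12 and Section 5.3]{FKSS_24}, adapted to the inhomogeneous propagators. First I expand both $F_2(\sigma)$ and $f_2(\langle\sigma\rangle)$ to second order via the resolvent identity: each is a third-order remainder $\int_0^\infty \d t\,\Tr\bigl(R_t \ii\sigma R_t \ii\sigma R_t^{\sigma}\bigr)$, with $R_t$ the free resolvent (quantum: $(t+K(-\kappa-U^\eps_\eta))^{-1}$; classical: $(t+h^{U^\eps_\eta})^{-1}$) and $R_t^\sigma$ the interacting one. I then split $F_2 - f_2(\langle\sigma\rangle)$ into two parts. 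The first is the contribution of the time fluctuations $\sigma-\langle\sigma\rangle$. The second is the difference between the quantum operator with time-independent field $\langle\sigma\rangle$ and the classical one. For time-independent potentials, the quantum trace over $[0,\nu)$ reduces via $\int_0^\infty \d t$ to a Riemann-sum-type expression in $\nu h$, namely $\tr\log(1-\ee^{-\nu h})$ versus $\tr\log(\nu h)$. This should give an error $\nu^{1-s/2}\tr h^{-s}$, using \eqref{eq:uniform lower bound potential in eta and eps} to control the traces uniformly in $\eps,\eta$.

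For the fluctuation part, I write the quantum resolvents through the Feynman--Kac representation (Lemma \ref{Feynman-Kac formula}), i.e. Brownian loops weighted by $\ee^{-\int U^\eps_\eta(\omega)}$, as in \cite[Lemma 5.4]{FKSS_2020}. The difference between $\ee^{\ii\int_0^{\nu}\sigma(t,\omega(t))\,\d t}$ and its time-averaged version is then controlled by the Hölder regularity of $\sigma$ in time and space. Taking the Gaussian expectation over $\mu_{C_\eta}$, derivatives of $v^\eps$ produce factors $\eps^{-1}$ per spatial increment; together with $\|v^\eps\|_{L^\infty}\lesssim \eps^{-d}$, the repeated appearances of $\sigma$ in the second moment account for the $\eps^{-(5d+1)}$. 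The spatial increments $|\omega(t)-\omega(s)|$ are estimated by Lemma \ref{Lemma:bounds_on_the_law_of_brownian_bridge}(ii), giving the $\nu^{1/2}$ factor. The temporal variation of $\sigma$ within $[0,\nu)$ is handled by $\delta_{\eta,\nu}$ averaging as in \cite{FKSS_2020}.

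The key new point relative to the torus case is to get all bounds uniform over $\R^d$ with no volume factor. For this I use the pointwise domination \eqref{eq: heat kernel of U tilde upper bound}, which bounds every path weight by that of $U/2$. After integrating over the loop base point, every spatial integral then produces traces of powers of $h^{U/2}$ rather than volumes, and these are finite by Remark \ref{rmk:thetha_s2}. The imaginary field contributes only phases of modulus one, so it is harmless here.

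I expect the main obstacle to be the second-moment estimate of the fluctuation term. It involves products of two third-order traces, each with three propagators, and the combinatorics of Wick pairings of $\sigma$ must be organised so that each pairing costs at most $\|v^\eps\|_{W^{1,\infty}}$ while the remaining propagators still produce a finite trace. My first attempt would be a Hölder inequality in Schatten norms: put $\sigma$ differences in $L^\infty$ after taking moments, and keep the resolvents in $\mathfrak S^{3}$ norms bounded by $\tr (h^{U/2})^{-s}$. If the $\int_0^\infty\d t$ integral causes trouble near $t=0$, I would fall back on the path representation directly.
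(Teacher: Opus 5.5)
There is a genuine gap in how you control the fluctuation part. The problem is uniformity in $\eta$, which the statement requires, since $\eta\to0$ in \eqref{def:quantum_relative_partition_function}.

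Under $\mu_{\mathcal C_\eta}$ the time covariance of $\sigma$ is $\delta_{\eta,\nu}$, an approximate delta function of width $\eta$. So $\sigma$ is asymptotically white noise in time, and $\E|\sigma(\tau,x)|^2$ is of order $\delta_{\eta,\nu}(0)\,v^\eps(0)$, which diverges like $\eta^{-1}$. This has three consequences:
\begin{enumerate}[label=(\roman*)]
\item $\sigma$ has no Hölder regularity in time uniformly in $\eta$.
\item $\sigma-\langle\sigma\rangle$ is not small in $L^\infty$; its moments blow up as $\eta\to0$.
\item The point insertions $\sigma(\tau_2,x_2)\,\sigma(\tau_3,x_3)$ in your third-order remainder cannot be bounded in absolute value, in $L^\infty$, or by Schatten--Hölder before the Gaussian integration is carried out.
\end{enumerate}
Your ``first attempt'' (putting $\sigma$-differences in $L^\infty$ after taking moments, with the resolvents in $\mathfrak S^3$) therefore gives bounds that diverge as $\eta\to0$ and carry no smallness in $\nu$. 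Similarly, a pathwise treatment of $F_2(\langle\sigma\rangle)-f_2(\langle\sigma\rangle)$ in the spirit of Lemma \ref{lemma:Convergence of K_eta} would need moments of $\|\langle\sigma\rangle\|_{L^\infty}$, and these grow as $\eta\to0$.

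The missing idea is to integrate out $\sigma$ exactly before estimating anything.
\begin{itemize}
\item The paper expands $\int\mu_{\mathcal C_\eta}(\d\sigma)\,|F_2(\sigma)-f_2(\langle\sigma\rangle)|^2$ as $\int\overline{F_2}F_2-2\re\int\overline{f_2(\langle\sigma\rangle)}F_2+\int\overline{f_2}f_2$. It then compares the first two terms with the third (Lemmas \ref{lemma:FFff} and \ref{lemma:Ffff}).
\item Each term is computed in closed form in the loop representation. The phases become $\ee^{-\frac12(\mathbb V(\omega_1,\omega_1)+\mathbb V(\tilde\omega_1,\tilde\omega_1)-2\mathbb V(\omega_1,\tilde\omega_1))}\leq 1$, with $\mathbb V=\mathbb V_\eta$ or $\mathbb V^\eps$. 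The insertions become a sum over partial pairings of point--point and point--path interactions, see \eqref{eq:def:Ieps} and \eqref{eq:def:J}. After this, $\sigma$ enters only through $\delta_{\eta,\nu}$ integrated in time.
\item The smallness then comes from an exact cancellation, not from time regularity. Over a full period $I$ of length $\nu$ one has $\int_I\d s\,\sigma(\{s\}_\nu,x)=\nu\langle\sigma\rangle(x)$. Hence $\mathbb V_\eta$ differs from $\mathbb V^\eps$ only by incomplete-period terms and by $\hat{\mathbb V}_\eta$, which pairs intra-period Brownian displacements with $\nabla v^\eps$ (Proposition \ref{prop:phi42:path-path_interaction_bounds}).
\item Lemma \ref{lemma: bounds on V hat} turns this into a factor $\nu^{1/2}\eps^{-(d+1)}$. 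Together with the $\eps^{-4d}$ from the four pairing factors, this gives Lemma \ref{eq:approximation_of_J}.
\item Lemma \ref{lemma:riemannsumestimates} compares the sum over $\mathbf r\in(\nu\N)^3$ with the integral and produces the $\nu^{2-s}$ term.
\end{itemize}

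Your ingredients are the right ones: Lemma \ref{Lemma:bounds_on_the_law_of_brownian_bridge}(ii), domination of the path weights by $U/2$, and the count of $\eps$-powers. But they only become usable inside this exactly integrated structure. Your split into ``time fluctuations'' plus a ``time-independent comparison'' does not provide that structure.
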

\subsubsection{Proof of Lemma \ref{lemma:Riemann sum estimates}}
\label{subsec:Riemann sum estimates}
We split the proof of Lemma \ref{lemma:Riemann sum estimates} in two steps. We start with the bound of the term involving the functions $T^\eps_\eta$ and $T^\eps_{\nu, \eta}$ defined in \eqref{def:E^Wick_eta} and \eqref{eq:def_Ewick_nueta}  respectively.
\begin{lemma}\label{lemma:convergence_of_Ewick_quantitative}
    Let the $T^\eps_\eta$ and $T^\eps_{\nu, \eta}$ be as in \eqref{def:E^Wick_eta} and \eqref{eq:def_Ewick_nueta}. Let \(\xi\) be a centred Gaussian field of covariance \(v^\eps_\eta\). Then, under the assumptions of Lemma \ref{lemma:Riemann sum estimates},
    \begin{equation}\label{eq:conv of EWicks}
        \int \mu_{v^\eps_\eta}(\d {\xi}) \big| \langle\xi,T^\eps_\eta- T^\eps_{\nu, \eta}\rangle\big|\lesssim_{s,\kappa}\nu^{1/2-s/4}\|\alpha^\eps_\eta-\tau^\eps\|_{L^\infty}\,. 
    \end{equation}
\end{lemma}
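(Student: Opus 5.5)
First, by Cauchy--Schwarz and the covariance structure of $\xi$, exactly as in \eqref{eq:epsEwickEwick_eta_nu}, we reduce the left-hand side of \eqref{eq:conv of EWicks} to a deterministic norm:
\[
\biggl(\int \mu_{v^\eps_\eta}(\d\xi)\,\big|\langle \xi, T^\eps_\eta - T^\eps_{\nu,\eta}\rangle\big|\biggr)^2 \leq \int \d x\,\d y\,(T^\eps_\eta-T^\eps_{\nu,\eta})(x)\, v^\eps_\eta(x,y)\,(T^\eps_\eta-T^\eps_{\nu,\eta})(y)\,.
\]
Since $v$ is of positive type with $\int v = 1$ and $\varphi$ is bounded, the quadratic form of $v^\eps_\eta$ is bounded by the $L^2$ norm uniformly in $\eps$ (its Fourier multiplier is at most $1$). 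Hence it suffices to show
\[
\|T^\eps_\eta - T^\eps_{\nu,\eta}\|_{L^2} \lesssim_{s,\kappa} \nu^{1/2-s/4}\|\alpha^\eps_\eta-\tau^\eps\|_{L^\infty}.
\]
Writing $W \deq \alpha^\eps_\eta-\tau^\eps$, which is bounded, we recall that $h^{U^\eps_\eta} = h^U + W$.

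Next we express the quantum densities spectrally. The rescaled free density is $\varrho^U_\nu(x) = \nu\,\bigl(\ee^{\nu h^U}-1\bigr)^{-1}(x,x)$. Setting $F_\nu(\lambda)\deq \nu/(\ee^{\nu\lambda}-1)$ and $F_0(\lambda)\deq 1/\lambda$, we get
\[
T^\eps_\eta - T^\eps_{\nu,\eta} = \Bigl(\bigl[F_0(h^U)-F_\nu(h^U)\bigr] - \bigl[F_0(h^U+W)-F_\nu(h^U+W)\bigr]\Bigr)(x,x).
\]
The function $D_\nu \deq F_0-F_\nu$ satisfies $0\leq D_\nu(\lambda)\leq \min(\nu/2,\,1/\lambda)$, and more generally $|D_\nu(\lambda)|\lesssim \nu^{a}\lambda^{a-1}$ for $a\in[0,1]$. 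To exploit the difference between the two potentials, we represent $D_\nu$ as a superposition of resolvents. Using the Mittag-Leffler expansion $F_\nu(\lambda) = \sum_{k\in\mathbb{Z}} (\lambda + 2\pi \ii k/\nu)^{-1} - \nu/2$ (suitably symmetrised), we obtain
\[
D_\nu(h) = \frac{\nu}{2} - \sum_{k\neq 0}\frac{1}{h+2\pi\ii k/\nu}.
\]
The constant $\nu/2$ cancels between the two potentials. The resolvent identity then gives
\[
T^\eps_\eta - T^\eps_{\nu,\eta} = -\sum_{k\neq0}\Bigl(\frac{1}{h^U+\ii\omega_k}\,W\,\frac{1}{h^U+W+\ii\omega_k}\Bigr)(x,x), \qquad \omega_k = \frac{2\pi k}{\nu}.
\]
This is the quantum analogue of the Matsubara sum in the Brownian bridge representation used in Section \ref{sec:brownian}.

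We then bound each Matsubara term on the diagonal in $L^2$ (or even in $L^1\cap L^\infty$). The bound $|A B(x,x)|\leq \|A(x,\cdot)\|_{L^2}\|B(\cdot,x)\|_{L^2}$ gives, for $R_k\deq (h+\ii\omega_k)^{-1}$,
\[
\int \d x\,\big|\bigl(R_k W R'_k\bigr)(x,x)\big| \leq \|W\|_{L^\infty}\|R_k\|_{\mathfrak S^2}\|R'_k\|_{\mathfrak S^2}.
\]
Here $\|R_k\|_{\mathfrak S^2}^2 = \sum_j (\lambda_j^2+\omega_k^2)^{-1} \leq |\omega_k|^{-(2-s)}\tr h^{-s}$, by interpolating $(\lambda^2+\omega^2)^{-1}\leq \lambda^{-s}|\omega|^{s-2}$. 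The comparison \eqref{eq:uniform lower bound potential in eta and eps} gives the same bound for $h^U+W$, uniformly in $\eta,\eps$. Summing over $k\neq 0$ gives $\sum_k |\omega_k|^{s-2} = \nu^{2-s}\sum_k k^{s-2}$, which diverges. The \textbf{main obstacle} is therefore convergence of the Matsubara sum. To fix it, we split the sum at $|k|\leq K$. For the low modes we use the $L^1$ bound above, which gives $\lesssim \nu^{2-s}K^{s-1}\|W\|_\infty$. For the high tail we use an $L^\infty$ diagonal bound: the diagonal of $R_kWR'_k$ is bounded by $\|W\|_\infty\int \d y\,|R_k(x,y)||R'_k(y,x)|$. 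By Feynman--Kac domination by the free heat kernel this is $\lesssim \|W\|_\infty|\omega_k|^{-1}$ in $d=2$ (from $\int_0^\infty t\,e^{-\kappa t}\ldots$; in $d=3$ it is $|\omega_k|^{-1/2}$), but this still needs summation. We therefore instead combine pairs $\pm k$, whose imaginary parts cancel to leading order, gaining an extra factor $|\omega_k|^{-1}h$. Interpolating between the $L^1$ and $L^\infty$ bounds with $\|f\|_{L^2}^2\leq \|f\|_{L^1}\|f\|_{L^\infty}$ and optimising $K$, we expect the exponent $\nu^{1/2-s/4}$ claimed in \eqref{eq:conv of EWicks}, namely the square root of the $L^1$ gain $\nu^{1-s/2}$. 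If the pairing proves delicate, we fall back on the Brownian bridge representation: $T_{\nu}$ is a Riemann sum in the time variable of the integral defining $T$. We control the discrepancy via Lemma \ref{Lemma:bounds_on_the_law_of_brownian_bridge}\ref{Lemma:bounds_on_the_law_of_brownian_bridge(ii)}, bounding the increments of $\int_0^t W(\omega(s))\,\d s$ by $\|W\|_\infty\,\nu$ on bridges of length at most $\nu$. Bridges of length larger than $\nu$ we control by the trace bound $\tr h^{-s}$.
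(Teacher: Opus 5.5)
Your reduction to a bound on $\|T^\eps_\eta-T^\eps_{\nu,\eta}\|_{L^2}$ is correct, and so is the Matsubara identity $T^\eps_\eta-T^\eps_{\nu,\eta}=-\sum_{k\neq0}(R_kWR'_k)(x,x)$. The argument breaks down exactly at the step you flag, and the proposed remedy does not work.

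Pairing $\pm k$ does not gain a factor $h/|\omega_k|$ for the term at hand. To first order in $W$, the paired term has spectral symbol
$2\,\mathrm{Re}\,[(\lambda_i+\ii\omega_k)(\lambda_j+\ii\omega_k)]^{-1}=2(\lambda_i\lambda_j-\omega_k^2)/\bigl((\lambda_i^2+\omega_k^2)(\lambda_j^2+\omega_k^2)\bigr)$.
This is comparable in absolute value to the unpaired symbol whenever $\lambda_i,\lambda_j$ lie both below or both above $|\omega_k|$. So, with the Hilbert--Schmidt and heat-kernel bounds you set up, the paired terms are still of order $|\omega_k|^{s-2}$ in $L^1$ and $|\omega_k|^{-1}$ in $L^\infty$, and both series still diverge.

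The Matsubara sum converges only because this symbol changes sign at $\lambda\approx|\omega_k|$, that is, through a cancellation across the spectrum inside the diagonal. Nothing in your toolkit captures that when $W$ does not commute with $h^U$. As a result the tail $|k|>K$ is controlled in neither norm, and ``optimising $K$'' has nothing to balance. You therefore establish neither the $L^1$ bound $\nu^{1-s/2}\|W\|_{L^\infty}$ nor the $L^\infty$ bound $\|T^\eps_\eta-T^\eps_{\nu,\eta}\|_{L^\infty}\lesssim\|W\|_{L^\infty}$ that your final interpolation needs.

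Your fallback uses the right time-domain representation but misses the key device. Suppose you split at time $\nu$ and, for longer times, compare the Riemann sum with the integral for each potential via $\lvert\tr((\ee^{-th}-\ee^{-[t]_\nu h})\varphi)\rvert\lesssim\nu t^{-1-s}$. You then get $\int_\nu^\infty\nu t^{-1-s}\,\dd t\sim\nu^{1-s}$, which blows up because $s>1$.

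The paper instead splits at an intermediate scale $\delta$:
\begin{itemize}
\item For $t\le\delta$ it bounds the integral and the Riemann sum \emph{separately}. It uses Duhamel and domination by the heat kernel of $h^{U/2}$: $\lvert\int\dd x\,(\ee^{-th^U}-\ee^{-th^{U^\eps_\eta}})(x,x)\varphi(x)\rvert\le t\|W\|_{L^\infty}\tr\ee^{-th^{U/2}}\lesssim\|W\|_{L^\infty}t^{1-s}$. This contributes $\|W\|_{L^\infty}\delta^{2-s}$.
\item For $t>\delta$ it uses the Riemann-sum comparison above, contributing $\nu\delta^{-s}$.
\end{itemize}
Choosing $\delta\sim\nu^{1/2}$ gives the $L^1$ bound $\nu^{1-s/2}$, by duality against $\|\varphi\|_{L^\infty}\le1$. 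The $L^\infty$ bounds on $T^\eps_\eta$ and $T^\eps_{\nu,\eta}$ follow from Feynman--Kac and $\lvert\ee^{-a}-\ee^{-b}\rvert\le\lvert a-b\rvert$. Then $\|f\|_{L^2}^2\le\|f\|_{L^1}\|f\|_{L^\infty}$ concludes. The Brownian-bridge increment estimate you invoke plays no role in this lemma.
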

\begin{proof}
    Note that 
    \begin{equation*}
        \varrho_\nu(x) =\Bigl(\frac{\nu }{\ee^{\nu h}-1}\Bigr)(x,x) =\Bigl(\frac{\nu \ee^{-\nu h}}{1-\ee^{-\nu h}}\Bigr)(x,x)  = \nu\sum_{n=1}^\infty\bigl(\ee^{-\nu n h}\bigr)(x,x)\,. 
    \end{equation*}
    Thus from the definition \eqref{eq:def_Ewick_nueta}, and Feynman-Kac formula, we can rewrite
    \begin{align}
        T^\eps_{\nu, \eta}(x)& =\nu\sum_{n=1}^\infty\bigl(\ee^{-\nu n h^U}- \ee^{-\nu n h^{U^\eps_\eta}} \bigr)(x,x) \nonumber \\
        & = \nu\sum_{t\in \nu \mathbb{N}^*}\ee^{-\kappa t } \int \mathbb{W}^{t,0}_{x,x}(\d \omega)\bigl(\ee^{-\int_0^t \d s\, U(\omega(s))}-\ee^{-\int_0^t \d s \, U^\eps_\eta(\omega(s))  }\bigr)\,.
    \end{align}
    Using the assumptions that \(U, U^\eps_\eta >0\), we find a uniform upper bound on \(T^\eps_{\nu, \eta}\) in both \(x \in \mathbb{R}^d\) and \(\eta>0\) small enough, since
    \begin{align}
        \abs{ T^\eps_{\nu, \eta}(x)} &\leq \nu\sum_{t\in \nu \mathbb{N}^*}\ee^{-\kappa t } \int \mathbb{W}^{t,0}_{x,x}(\d \omega)\,\Big|\ee^{-\int_0^t \d s\, U(\omega(s))}-\ee^{-\int_0^t \d s \,U^\eps_\eta(\omega(s))  }\Big| \nonumber \\&\leq \|\alpha^\eps_\eta-\tau^\eps\|_{L^\infty}\nu \sum_{t\in \nu \mathbb{N}^*} t\ee^{-\kappa t } \int \mathbb{W}^{t,0}_{x,x}(\d \omega) \nonumber\\
        & \lesssim \kappa^{d/2-2}\|\alpha^\eps_\eta-\tau^\eps\|_{L^\infty}\,.\label{eq:uniform upper bound on Ewicketanu x}
    \end{align}
    for every \(\eps>0\), where we used a Riemann sum estimate in the last line together with Lemma \ref{Lemma:bounds_on_the_law_of_brownian_bridge} (i). Proceeding as in \eqref{eq:epsEwickEwick_eta_nu}, we find 
    \begin{equation}
         \biggl(\int \mu_{v^\eps_\eta}(\d {\xi}) \abs{ \langle\xi,T^\eps_\eta- T^\eps_{\nu, \eta}\rangle}\biggr)^2\lesssim  \|\alpha^\eps_\eta-\tau^\eps\|_{L^\infty}  \|T^\eps_\eta- T^\eps_{ \nu,\eta}\|_{L^1}
    \end{equation}
    which follows from \eqref{eq:uniform upper bound on Ewicketanu x} and Lemma \ref{lemma:convergence_of_EWick} \ref{lemma:convergence_of_Ewick_conv_of_EN}. Hence, \eqref{eq:conv of EWicks} follows if we can show
    \begin{equation}\label{eq:diffTeps_Tepsnu}
        \|T^\eps_\eta- T^\eps_{ \nu,\eta}\|_{L^1} \lesssim_{s}\|\alpha^\eps_\eta-\tau^\eps\|_{L^\infty} \nu^{1-\frac{s}{2}} \,.
    \end{equation}
    By duality, we have 
    \[\|T^\eps_\eta- T^\eps_{ \nu,\eta}\|_{L^1} = \sup_{\|\varphi\|_{L^\infty}\leq 1}\absbb{\int \d{x}\, (T^\eps_\eta- T^\eps_{ \nu,\eta})(x)\varphi(x)}\, .\]
     For a fixed \(\varphi \in \mathcal{H}\) with \(\|\varphi\|_{L^\infty}\leq 1\), we consider the expression inside the integral  
    \begin{equation}\label{eq:diff_pointwise_Teta_Tnueta}
        T^\eps_{ \eta}(x)-T^\eps_{\nu, \eta}(x)=
         \int \d t \, \bigl(\ee^{-t h^U}- \ee^{-t h^{U^\eps_\eta}} \bigr)(x,x)  - \nu\sum_{t\in \nu \mathbb{N}} \bigl(\ee^{-t h^U}- \ee^{-t h^{U^\eps_\eta}} \bigr)(x,x) \, . 
    \end{equation}
    Thus, to bound \(\abs{\int \d{x}\, (T^\eps_\eta- T^\eps_{ \nu,\eta})(x)\varphi(x)}\), we can  estimate the two terms on the right-hand side of \eqref{eq:diff_pointwise_Teta_Tnueta} considering small and large times. For small times \(t\leq\delta\) we have
    \begin{align}\label{eq:Tsmalltime}
       \bigg| \int \d x\, \nu\sum_{\substack{t\in \nu \mathbb{N}\\t \leq \delta}} \bigl(\ee^{-t h^U}- \ee^{-t h^{U^\eps_\eta}} \bigr)(x,x)\varphi(x)\bigg|&\lesssim_s \|\alpha^\eps_\eta-\tau^\eps\|_{L^\infty} \delta^{2-s}
    \end{align}
    and a similar estimate can be obtained if we replace the Riemann sum by the corresponding integral. For large times \(t>\delta\), we compare the differences between the Riemann sums and the corresponding integral and find 
    \begin{align}\label{eq:Tlargetime}
       \abs{ \int \d x \int_\delta^\infty \d t\, \bigl(\ee^{-t h^U}- \ee^{-[t]_\nu h^{U}} \bigr)(x,x)\varphi(x)}&\lesssim_s \nu \delta^{-s}
    \end{align}
    where we used the notation \([t]_\nu\deq \min\{s \in \nu \mathbb{N^*} \col s<t\}\). The same bound holds if we replace \(h^U \) in \eqref{eq:Tlargetime} by \(h^{U^\eps_\eta}\). Optimizing over the size \(\delta \sim \nu^{1/2}\) yields the claim \eqref{eq:diffTeps_Tepsnu}. 
    To conclude the proof of \eqref{eq:conv of EWicks}, it remains to show \eqref{eq:Tsmalltime} and \eqref{eq:Tlargetime}. We start with \eqref{eq:Tsmalltime},  using Duhamel's formula and \eqref{eq: heat kernel of U tilde upper bound}, we have 
    \begin{align}
        \Biggr|\int \d x\, &\bigl(\ee^{-t h^U}- \ee^{-t h^{U^\eps_\eta}} \bigr)(x,x)\varphi(x)\Biggl|& \\& =\abs{\int \d x \int_0^1 \d \lambda\, \bigl(\ee^{-\lambda t h^U}t(\alpha^\eps_\eta - \tau^\eps) \ee^{-(1-\lambda)t h^{U^\eps_\eta}} \bigr)(x,x)\varphi(x)} \nonumber\\
        &\leq t\|\alpha^\eps_\eta - \tau^\eps\|_{L^\infty} \|\varphi\|_{L ^\infty} \int_0^1 \d \lambda \int \d x\, \d{\tilde{x}}\, \ee^{-\lambda t h^{U/2}}(x,\tilde{x}) \ee^{-(1-\lambda)t h^{U/2}}(\tilde{x}, x)   \nonumber\\
        &\leq t \|\alpha^\eps_\eta - \tau^\eps\|_{L^\infty} \|\varphi\|_{L ^\infty} \tr\bigl({\ee^{-t h^{U/2}}\bigr)}\,.\label{eq:bound_on_trace_of_diff}
    \end{align}
 Thus, using the bound 
    \[\tr{\ee^{-th^{U/2}}}\lesssim_st^{-s}\tr{h^{-s}}\lesssim_s t^{-s},\]
    and \eqref{eq:bound_on_trace_of_diff} we finally have that the left-hand side of \eqref{eq:Tsmalltime} is bounded by 
    \[\eqref{eq:Tsmalltime} \leq C_s\|\alpha^\eps_\eta - \tau^\eps\|_{L^\infty} \nu\sum_{\substack{t\in \nu \mathbb{N}\\t \leq \delta}} t^{1-s} \lesssim_s\|\alpha^\eps_\eta - \tau^\eps\|_{L^\infty}\delta^{2-s}\,,\]
    where in the last step we performed a Riemann sum approximation. 

    As for  \eqref{eq:Tlargetime}, using Hölder's inequality, we have
    \begin{align}
        \abs{\int\d x\, \bigl(\ee^{-t h^U}- \ee^{-[t]_\nu h^{U}} \bigr)(x,x)\varphi(x)} &\leq \int_{[t]_\nu}^t\d r\,\abs{\tr\bigl({h\ee^{-rh^U}\varphi}\bigr)}\nonumber\\
        &\leq \|\varphi\|_{L^\infty}\int_{[t]_\nu}^t\d r\, \tr\bigl({h \ee^{-rh}\bigr)}\nonumber\\
        &\lesssim_s\int_{[t]_\nu}^t \d r\, \tr\Bigl({h \frac{1}{h^{1+s}}}\Bigr)\frac{1}{r^{1+s}} \lesssim_s \nu t^{-1-s}\,,\label{eq:diff Riemann sums}
    \end{align}
    \eqref{eq:Tlargetime} then follows by integrating \eqref{eq:diff Riemann sums} with respect to \(t\in (\delta, \infty)\).  
\end{proof}
    
We now prove the estimate for the term involving the constants $S_\eta^\eps$ and $S^\eps_{\nu, \eta}$ defined in \eqref{eq:S_eta} and \eqref{eq:def_S_nueta} respectively. 

\begin{lemma}\label{lemma:Convergence of K_eta}
Let $S_\eta^\eps$ and $S^\eps_{\nu, \eta}$ be as in \eqref{eq:S_eta} and \eqref{eq:def_S_nueta}. Then, under the assumptions of Lemma \ref{lemma:Riemann sum estimates}
\begin{equation}\label{eq:diffS}  \abs{S^\eps_\eta-S^\eps_{\nu,\eta}}\lesssim_{\kappa, s} \|\alpha^\eps_\eta-\tau^\eps\|^2_{L^\infty} \nu^{2-s}\,.\end{equation}
\end{lemma}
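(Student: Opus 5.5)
We will write both constants as integrals over $t$ of traces of the same second-order quantity, classical versus quantum, and then reduce their difference to a Riemann-sum comparison, as in the proof of Lemma \ref{lemma:convergence_of_Ewick_quantitative}. Set $f\deq \alpha^\eps_\eta-\tau^\eps\in L^\infty$.

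\textbf{Classical side.} By \eqref{eq:trace_pho_tau} and Lemma \ref{lemma:integration of rho and tauepsilon}, the Wick-ordered Gaussian integral gives the exact formula
\[
S^\eps_\eta=\log\det(1+Gf)-\tr(Gf)=\int_0^1\d\lambda\,(1-\lambda)\,\bigl(-\tr\bigl(G_\lambda f G_\lambda f\bigr)\bigr),\qquad G_\lambda\deq (h^U+\lambda f)^{-1}.
\]
This is well defined, since each trace is bounded by $\|f\|_{L^\infty}^2\tr (h^{U/2})^{-2}$, using \eqref{eq: heat kernel of U tilde upper bound} for the interpolating potentials $U+\lambda f\ge U/2$.

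\textbf{Quantum side.} We will use \cite[Lemma 3.13]{FKSS_2020} and the Feynman--Kac representation. The first term of \eqref{eq:def_S_nueta} equals $\log\Tr_{\cF}\ee^{-H^0_{U^\eps_\eta}}-\log\Tr_{\cF}\ee^{-H^0_U}=-\tr\log(1-\ee^{-\nu h^{U+f}})+\tr\log(1-\ee^{-\nu h^U})$. Adding $\langle \varrho^U_\nu,f\rangle=\tr\bigl(\nu(\ee^{\nu h^U}-1)^{-1}f\bigr)$ and interpolating along $U+\lambda f$ gives
\[
S^\eps_{\nu,\eta}=\int_0^1\d\lambda\,(1-\lambda)\,\partial_\lambda^2\Bigl(-\tr\log\bigl(1-\ee^{-\nu h^{U+\lambda f}}\bigr)\Bigr).
\]
Duhamel expansion of $\ee^{-\nu n h}$ shows that this second derivative equals
\[
-\nu\sum_{n\ge1}\int_0^{n\nu}\d r\,\tr\bigl(\ee^{-r h_\lambda}f\ee^{-(n\nu-r)h_\lambda}f\bigr)\cdot\tfrac{1}{n\nu}\cdot n\nu,
\]
that is, a Riemann sum in $t=n\nu$ of $F(t)\deq\int_0^t\d r\,\tr(\ee^{-rh_\lambda}f\ee^{-(t-r)h_\lambda}f)$. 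The classical counterpart is $\tr(G_\lambda fG_\lambda f)=\int_0^\infty\d t\,F(t)$. Hence
\[
|S^\eps_\eta-S^\eps_{\nu,\eta}|\le \sup_\lambda\Bigl|\int_0^\infty F(t)\,\d t-\nu\sum_{t\in\nu\N^*}F(t)\Bigr|.
\]

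\textbf{Bounds on $F$.} Using \eqref{eq: heat kernel of U tilde upper bound} and Hölder in Schatten norms,
\[
|F(t)|\le \|f\|_{L^\infty}^2\,t\,\tr\ee^{-th^{U/2}}\lesssim_s \|f\|_\infty^2\,t^{1-s}.
\]
Differentiating, $F'(t)=\tr(\ee^{-th}f f)$ plus terms with an $h$ inserted. Each of these is bounded by $\|f\|_\infty^2\,\tr(h\,\ee^{-th/2})\,t$ plus the boundary term, which altogether gives $|F'(t)|\lesssim\|f\|_\infty^2\,t^{-s}$.

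\textbf{Riemann-sum comparison.} We split at $\delta$. For $t\le\delta$, both the sum and the integral are bounded by $\|f\|^2\delta^{2-s}$. For $t>\delta$, the Riemann error is bounded by $\nu\int_\delta^\infty|F'|\lesssim \nu\,\delta^{1-s}$, using $s>1$. Choosing $\delta=\nu$ balances the two and gives $\nu^{2-s}$, which is \eqref{eq:diffS}. This is better than Lemma \ref{lemma:convergence_of_Ewick_quantitative} because $F$ is less singular at $0$ (an extra factor of $t$).

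\textbf{Main obstacle.} The hardest step is the rigorous identification of the quantum quantity as an exact Riemann sum of $F$. This is delicate because Proposition \ref{Prop:regularisation of phi42 quantum rel part funct} defines $S^\eps_{\nu,\eta}$ through the space–time operator $K$, periodic in $[0,\nu)$. We would verify it via \cite[Lemma 3.13]{FKSS_2020}, which expands $(t+K)^{-1}$ into the discrete sum over windings $n$, and through the finiteness and interchange of $\lambda$-integration with the traces. The second-derivative bound on $F'$ is routine once the kernels are dominated by $\ee^{-th^{U/2}}$.
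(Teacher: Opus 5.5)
Your argument is correct, but it follows a genuinely different route from the paper's.

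\textbf{Comparison of routes.} The paper proceeds in four steps:
\begin{itemize}
\item It writes $S^\eps_\eta$ through the exact third-order resolvent identity $\int_0^\infty\mathrm{d}t\,\tr\bigl((t+h^U)^{-1}f(t+h^U)^{-1}f(t+h^{U^\eps_\eta})^{-1}\bigr)$.
\item It Laplace-transforms this into three time variables $(r_1,r_2,r_3)$.
\item It matches the result with the quantum representation from \cite[proof of Lemma 5.4]{FKSS_2020}, in which the times are shifted to $r_i-\{r_{i+1}\}_\nu$.
\item It compares the two using a cutoff $\delta=C_0\nu$ in each $r_i$ and a telescoping argument over the three factors.
\end{itemize}
You instead Taylor-expand to second order in the coupling along $h_\lambda=h^{U+\lambda f}$. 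Both constants then become $\int_0^1(1-\lambda)\,\mathrm{d}\lambda$ of the same scalar function $F_\lambda$: integrated over $(0,\infty)$ on the classical side, Riemann-summed over $\nu\N^*$ on the quantum side. The estimate reduces to a one-variable Riemann-sum bound. It needs only $|F_\lambda(t)|\lesssim\|f\|_\infty^2t^{1-s}$ and $|F_\lambda'(t)|\lesssim\|f\|_\infty^2t^{-s}$ uniformly in $\lambda$, plus exponential decay at large $t$. These hold because $U+\lambda f\geq U/2$ by convexity.

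Your route is more elementary. It avoids $K$, the fractional parts and the telescoping, and $\delta=\nu$ already suffices. The paper's route reuses the $K$/Brownian-loop machinery it needs anyway for $F_2$.

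The step you flag as the main obstacle is in fact easy:
\begin{itemize}
\item $\Psi_\nu(\lambda)\deq-\tr\log(1-\ee^{-\nu h_\lambda})=\sum_{n\geq1}\frac1n\tr\ee^{-n\nu h_\lambda}$ is finite for fixed $\nu$.
\item The paper's formula for $\Tr_\cF(\ee^{-H^0})$ identifies the first term of \eqref{eq:def_S_nueta} with $\Psi_\nu(1)-\Psi_\nu(0)$.
\item One has $\Psi_\nu'(0)=-\langle\varrho^U_\nu,f\rangle$.
\item Termwise differentiation in $\lambda$ is justified by absolute summability in $n$.
\end{itemize}

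\textbf{Signs.} The correct identities are $S^\eps_\eta=+\int_0^1(1-\lambda)\tr(G_\lambda fG_\lambda f)\,\mathrm{d}\lambda$ and $\partial_\lambda^2\Psi_\nu=+\nu\sum_{n\geq1}F_\lambda(n\nu)$. Your two sign slips cancel, so the difference is unaffected.

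\textbf{The bound on $F'$ needs a repair.} As written it is not justified. Differentiating $\int_0^t\mathrm{d}r\,\tr(\ee^{-rh}f\ee^{-(t-r)h}f)$ directly produces $-\int_0^t\mathrm{d}r\,\tr(\ee^{-rh}fh\ee^{-(t-r)h}f)$. Here $\|h\ee^{-(t-r)h}\|\sim(t-r)^{-1}$ is not integrable at $r=t$. Since $f$ is only $L^\infty$, you cannot move $h$ across it, and kernel domination does not apply to $h\ee^{-uh}$.

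Instead write $F(t)=t\int_0^1\mathrm{d}u\,\tr(\ee^{-tuh}f\ee^{-t(1-u)h}f)$. The $t$-derivative then falls on both semigroups, with weights $u$ and $1-u$. Using $tu\,\|h\ee^{-tuh}\|_{\mathfrak{S}^p}\lesssim\|\ee^{-tuh/2}\|_{\mathfrak{S}^p}$ together with Hölder gives $|F'(t)|\lesssim\|f\|_\infty^2\tr\ee^{-th^{U/2}/2}\lesssim_s\|f\|_\infty^2t^{-s}$, which is what you need.

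\textbf{Your closing aside.} The comparison with Lemma~\ref{lemma:convergence_of_Ewick_quantitative} is off. That integrand is also of size $t\,\tr\ee^{-th}$. The weaker rate there comes from bounding the large-time Riemann errors for $h^U$ and $h^{U^\eps_\eta}$ separately rather than for their difference.
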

\begin{proof}
    From \eqref{eq:trace_pho_tau}, we can write
    \begin{align}
        S_\eta^\eps & = \int_0 ^\infty \d t \,\tr\Bigl({\frac{1}{t + h^ {U^{\eps}_\eta }}-\frac{1}{t+h^U } + \frac{1}{t+h^U }(\alpha^\eps_\eta- \tau^\eps) \frac{1}{t+h^U }  }\Bigr)\nonumber \\
            & = \int_0 ^\infty \d t \,\tr{\Bigl( \frac{1}{t+h^U }(\alpha^\eps_\eta- \tau^\eps) \frac{1}{t+h^U } (\alpha^\eps_\eta- \tau^\eps) \frac{1}{t+h^{U^\eps_\eta} } \Bigr)} \nonumber\\
            & = \int_0 ^\infty \d t \, \int_{[0, \infty)^3} \d {\mathbf{r}} \, \tr{\Bigl( \ee^{- r_1(t+h^U)}(\alpha^\eps_\eta- \tau^\eps) \ee^{- r_2(t+h^U)} (\alpha^\eps_\eta- \tau^\eps) \ee^{- r_3(t+h^{U^\eps_\eta})} \Bigr)} \label{eq:rewriting Seta}
    \end{align}
    where \(\abs{\mathbf{r}} = r_1 + r_2 +r_3\). From \cite[proof of lemma 5.4]{FKSS_2020} with a re-indexing of the variables \(\tau_i, i=1,2,3\), we can write \(S^\eps_{\nu,\eta}\) similarly as
    \begin{multline}\label{eq:Snueta from mf} 
        S^\eps_{\nu, \eta}= \int_0 ^\infty \d t \int_{[0, \nu]^3}\d {\mathbf \tau} \sum_{\substack{ \mathbf{q}\in (\nu\mathbb{N})^3\\\abs{\mathbf{q}}>0}}  \tr\Bigl( \ee^{- (q_1+\tau_1-\tau_2)(t+h^U)}(\alpha^\eps_\eta- \tau^\eps)\ee^{- (q_2+\tau_2-\tau_3)(t+h^U)}\\ \times (\alpha^\eps_\eta- \tau^\eps) \ee^{- (q_3+\tau_3-\tau_1)(t+h^{U^\eps_\eta})} \Bigr)\,. 
    \end{multline}
    By performing the change of variables \(r_i = q_i + \tau_i\) and denoting \(\{r_i\}_\nu= r_i-[r_i]_\nu \in[0, \nu) \), we can rewrite \eqref{eq:Snueta from mf} as
    \begin{multline}\label{eq:Snuetareindexed} 
        S^\eps_{\nu, \eta}= \int_0 ^\infty \d t \, \int_{ [0, \infty)^3}\d {\mathbf{r}}\, \mathbbold{1}_{\abs{r}\geq\nu} \tr\Bigl( \ee^{- (r_1-\{r_2\}_{\nu})(t+h^U)}(\alpha^\eps_\eta- \tau^\eps)\ee^{- (r_2-\{r_3\}_{\nu})(t+h^U)}\\ \times (\alpha^\eps_\eta- \tau^\eps) \ee^{- (r_3-\{r_1\}_{\nu})(t+h^{U^\eps_\eta})} \Bigr)\,. 
    \end{multline}
    We proceed as in the proof of Lemma \ref{lemma:convergence_of_Ewick_quantitative} splitting the estimate in small and large times: first, we restrict the integration to the case where one of the time variables \(r_i\) lies in the interval \([0, \delta]\), where  \(\delta\equiv\delta(\nu)= C_0\nu\) and \(C_0\geq2\) is a positive constant. Then we consider when the time variables \(r_i> \delta\). In the first case we have the bounds for $S_\eta^\eps$ 
    \begin{multline}\label{eq:small time estimate S_eta}
         \absbb{\int_0 ^\infty \d t\,\int_{[0, \infty)^3} \d {\mathbf{r}} \, \mathbbold{1}_{\min_i {r_i}\leq \delta} \,  \tr\Bigl({ \ee^{- r_1(t+h^U)}(\alpha^\eps_\eta- \tau^\eps) \ee^{- r_2(t+h^U)} (\alpha^\eps_\eta- \tau^\eps) \ee^{- r_3(t+h^{U^\eps_\eta})}}\Bigr)}\\ \lesssim_s \|\alpha^\eps_\eta-\tau^\eps\|^2_{L^\infty}\delta^{2-s}\,,
    \end{multline}    
    and for $S_{\nu, \eta}^\eps$
    \begin{multline}\label{eq:small time estimate S_eta_nu}
        \Biggl|\int_0 ^\infty \d t\, \int_{ [0, \infty)^3} \d {\mathbf{r}} \,\mathbbold{1}_{\nu\leq\min_i {r_i}\leq \delta}\\ \times \tr\Bigl( \ee^{- (r_1-\{r_2\}_{\nu})(t+h^U)}(\alpha^\eps_\eta- \tau^\eps) \ee^{- (r_2-\{r_3\}_{\nu})(t+h^U)} (\alpha^\eps_\eta- \tau^\eps) \ee^{- (r_3-\{r_1\}_{\nu})(t+h^{U^\eps_\eta})} \Bigr)\Biggr|\\\lesssim_s \|\alpha^\eps_\eta-\tau^\eps\|^2_{L^\infty}\delta^{2-s}\,.
    \end{multline}
    On the other hand, for large times, we have
    \begin{multline}\label{eq:large time diff S}
          \Biggl|\int_0 ^\infty \d t\, \int_{\mathbf{r}\in [\delta, \infty)^3} \d {\mathbf{r}} \,\tr\Bigl( \ee^{- (r_1-\{r_2\}_{\nu})(t+h^U)}(\alpha^\eps_\eta- \tau^\eps) \ee^{- (r_2-\{r_3\}_{\nu})(t+h^U)} (\alpha^\eps_\eta- \tau^\eps) \\\times\bigl(\ee^{- r_3(t+h^{U^\eps_\eta})}-\ee^{- (r_3-\{r_1\}_{\nu})(t+h^{U^\eps_\eta})}\bigr) \Bigr)\Biggr| \lesssim_s \|\alpha^\eps_\eta-\tau^\eps\|^2_{L^\infty}\nu(\delta-\nu)^{1-s}\,.
    \end{multline}
    Hence, the proof of \eqref{eq:diffS} follows from \eqref{eq:small time estimate S_eta}, \eqref{eq:small time estimate S_eta_nu} and by using a telescoping argument, applying \eqref{eq:large time diff S} to each pair of terms. 
    
    Now we need to establish \eqref{eq:small time estimate S_eta}, \eqref{eq:small time estimate S_eta_nu} and \eqref{eq:large time diff S}. For \eqref{eq:small time estimate S_eta}, it is sufficient to show 
    \begin{multline}\label{eq:small time estimate S_eta_reduced claim}
        \absbb{\int_0 ^\infty \d t \,\int_{[0, \infty)^3} \d {\mathbf{r}}\,  \mathbbold{1}_{{r_3}\leq \delta}  \tr{ \ee^{- r_1(t+h^U)}(\alpha^\eps_\eta- \tau^\eps) \ee^{- r_2(t+h^U)} (\alpha^\eps_\eta- \tau^\eps) \ee^{- r_3(t+h^{U^\eps_\eta})}}}\\ \lesssim_s \|\alpha^\eps_\eta-\tau^\eps\|^2_{L^\infty}\delta^{2-s}\,. 
    \end{multline}
    From the bound in \eqref{eq: heat kernel of U tilde upper bound}, we find that the left-hand side of \eqref{eq:small time estimate S_eta_reduced claim} is bounded by 
    \begin{align}
        \begin{split}
             &  \|\alpha^\eps_\eta-\tau^\eps\|^2_{L^\infty}\int_0 ^\infty \d t \,\int_{[0, \infty)^3} \d {\mathbf{r}}\,  \mathbbold{1}_{{r_3}\leq \delta} \\&\qquad \qquad \qquad\qquad \times\int \d {\mathbf {x}}\,  \ee^{- r_1(t+h^{U/2})}(x_1,x_2) \ee^{- r_2(t+h^{U/2})}(x_2,x_3)\ee^{- r_3(t+h^{U/2})}(x_3,x_1) \\
             &\quad\leq \|\alpha^\eps_\eta-\tau^\eps\|^2_{L^\infty}\int_0 ^\infty \d t \,\int_{[0, \infty)^3} \d {\mathbf{r}}\,  \mathbbold{1}_{ {r_3}\leq \delta} \tr{ \ee^{- r_1(t+h^{U/2})} \ee^{- r_2(t+h^{U/2})}\ee^{- r_3(t+h^{{U/2}})}}    \\    
            &\quad = \|\alpha^\eps_\eta-\tau^\eps\|^2_{L^\infty}\int_0 ^\infty \d t\, \tr\Bigl({\frac{1}{(t+h^{U/2})^3}(1-\ee^{-\delta(t+ h^{U/2})})}\Bigr) \, .
        \end{split}\label{eq:small time estimate S_eta_reduced claim bound before t integration}
    \end{align}
For fixed \(\lambda>0\), using the fact that \(h^{U/2}\) is a positive operator and the Hölder's inequality for Schatten norms, we find
 \begin{equation}\label{eq:small time estimate S_eta_reduced claim bound trace}
 \begin{split}
\tr\biggl(&\frac{1}{(t+h^{U/2})^3}(1-\ee^{-\delta(t+ h^{U/2})})\biggr) \\& =  \tr\biggl(\frac{1}{(t+h^{U/2})^3}(1-\ee^{-\delta(t+ h^{U/2})})(\mathbbold{1}_{t+h^{U/2}\leq \lambda}+\mathbbold{1}_{h^{U/2}+t>\lambda})\biggr)\\
     &\leq\delta\tr\biggl({\frac{1}{(t + h^{U/2})^{2}}\mathbbold{1}_{t+h^{U/2}\leq \lambda}}\biggr) + \tr\biggl({\frac{1}{(t+h^{U/2})^{3}}\mathbbold{1}_{h^{U/2}+t>\lambda}}\biggr)\\
         &\leq\delta\tr\biggl({\frac{1}{(t + h^{U/2})^{1+s}}(t+h^{U/2})^{s-1}\mathbbold{1}_{t+h^{U/2}\leq \lambda}} \biggr)\\&\qquad\qquad\qquad\qquad\qquad\qquad\qquad + \tr\biggl({\frac{1}{(t+h^{U/2})^{1+s}}(t+h^{U/2})^{s-2}\mathbbold{1}_{h^{U/2}+t>\lambda}}\biggr)\\
     & \leq \tr\biggl({\frac{1}{(t + h^{U/2})^{1+s}}}\biggr)(\delta\lambda^{s-1}  + \lambda^{s-2})\,.
     \end{split}
 \end{equation}
 Choosing \(\lambda=\delta^{-1}\) in \eqref{eq:small time estimate S_eta_reduced claim bound trace}, we get 
 \begin{align}
     \eqref{eq:small time estimate S_eta_reduced claim bound before t integration} \leq \|\alpha^\eps_\eta-\tau^\eps\|^2_{L^\infty}\delta^{2-s}\int_0 ^\infty \d t\, \tr\biggl({\frac{1}{(t + h^{U/2})^{1+s}}}\biggr)& = \|\alpha^\eps_\eta-\tau^\eps\|^2_{L^\infty}\delta^{2-s}\tr\biggl({\frac{1}{ (h^{U/2})^{s}}\biggr)}\nonumber\\
     & \lesssim_{s} \|\alpha^\eps_\eta-\tau^\eps\|^2_{L^\infty}\delta^{2-s}\,, \nonumber
 \end{align}
which proves \eqref{eq:small time estimate S_eta_reduced claim}. 

In a similar way, we can establish \eqref{eq:small time estimate S_eta_nu} by showing the bound
\begin{equation} \label{eq:bound on S_nu sum small time}\begin{split}
     &\int_0 ^\infty \d t \int_{ [0, \infty)^3} \d {\mathbf{r}}\,\mathbbold{1}_{\abs{r}\geq\nu} \mathbbold{1}_{r_3\leq \delta} \\
     &\hspace{3.5cm}\times \tr\Bigl( \ee^{- (r_1-\{r_2\}_{\nu})(t+h^{U/2})}\ee^{- (r_2-\{r_3\}_{\nu})(t+h^{U/2})}\ee^{- (r_3-\{r_1\}_{\nu})(t+h^{U/2})} \Bigr)\\
     & \quad =  \int_0 ^\infty \d t\, \nu^3\sum_{\mathbf{r}\in {\nu \mathbb{N}}^3 }  \mathbbold{1}_{\abs{r}\geq\nu}\mathbbold{1}_{r_3\leq \delta} \tr\bigl(\ee^{-\abs{\mathbf r}(t+h^{U/2})}\bigr) \\& \quad \leq \int_0 ^\infty \d t \,\nu^3\sum_{r_3\in {\nu \mathbb{N}^*} }  \tr\bigl(\ee^{-\abs{r_3}(t+h^{U/2})}
     \bigr)+ \int_0 ^\infty \d t \,\nu^3\sum_{\mathbf{r}\in {\nu \mathbb{N}}^3 } \mathbbold{1}_{r_3\leq \delta}\mathbbold{1}_{r_1+r_2\geq\nu}  \tr\bigl(\ee^{-\abs{\mathbf{r}}(t+h^{U/2})}\bigr)\\
     & \quad \lesssim C_s\nu^3 \sum_{r_3\in {\nu \mathbb{N}^*} } \frac{1}{r_3 ^{1+s}}\int_0 ^\infty \d t\,\tr\biggl(\frac{1}{(t+h^{U/2})^{1+s}}\biggr) \\
     &\quad \quad + 
     C_s \delta \nu^2 \sum_{r_1, r_2\in {\nu \mathbb{N}^*} }\frac{1}{{(r_1+r_2)} ^{1+s}} \int_0^\infty \d t \,\tr\biggl(\frac{1}{(t+h^{U/2})^{1+s}}\biggr) \lesssim_{s}\nu^{2-s} +\delta\nu^{1-s}\lesssim_s \delta ^{2-s}\,, 
\end{split}
\end{equation}
where in the last step we used that \(\tr\bigl(\ee^{-r_3h}\bigr)\leq C_s\tr\Bigl(\frac{1}{(r_3h)^{1+s}}\Bigr)\) to bound the two  terms. 
To conclude, we need to show \eqref{eq:large time diff S}. Using Hölder's inequality, the fact that, for a multiplication operator  \(f\), we have\footnote{We use $\norm{\cdot}_{\fra S^p}$ to denote the $p$-th Schatten norm.} \(\|f\|_{\mathfrak{S^\infty}}=\|f\|_{L^\infty}\),  and for \(h>0\) we have $\int_s^t \d\lambda \, h\, \ee^{-\lambda h }= \ee^{-s h}- \ee^{-t h}$, we can bound the left-hand side of \eqref{eq:large time diff S} by
\begin{multline}\label{eq:bound traces S3}
    \|\alpha^\eps_\eta- \tau^\eps\|_{L^\infty}^2\int_0 ^\infty \d t\int_{ [\delta, \infty)^3} \d {\mathbf{r}} \,\ee^{-{(|\mathbf r|-3 \nu)\kappa}} \\\times \big\|\ee^{- (r_1-\{r_2\}_{\nu})(t-\Delta/2 + U)}\big\|_{\mathfrak{S}^3} \big\|\ee^{- (r_2-\{r_3\}_{\nu})(t-\Delta/2 + U)}\big\|_{\mathfrak{S}^3}\\ \times \int_{0}^{\{r_1\}_\nu} \d\lambda\, \big\|(t-\Delta/2 + {U^\eps_\eta})\ee^{- (r_3-\lambda)(t-\Delta/2 + U^\eps_\eta)}\big\|_{\mathfrak{S}^3}\,.  
\end{multline}
Since \(\ee^{-3x}x^3 \leq \ee^{-x}\) for all \(x\in \mathbb{R}\), we find 
\begin{equation}\label{eq:firsttermshatten}
\begin{split}
    &\int_{0}^{\{r_1\}_\nu} \d\lambda\, \big\|(t-\Delta/2 + U^\eps_\eta )\ee^{- (r_3-\lambda)(t-\Delta/2 + U^\eps_\eta)}\big\|_{\mathfrak{S}^3} \\&\quad\deq\int_{0}^{\{r_1\}_\nu} \d\lambda\,\tr\Bigl({\bigl(t-\Delta/2 + U^\eps_\eta\bigr)^3\ee^{- 3(r_3-\lambda)(t-\Delta/2 + U^\eps_\eta)}}\Bigr)^{1/3}\\
    &\quad \leq \nu \frac{1}{r_3-\nu} \tr{\Bigl(\ee^{- (r_3-\nu)(t-\Delta/2 + U^\eps_\eta)}\Bigr)}^{1/3}\,. 
    \end{split}
\end{equation}
Moreover, using  the bound
\begin{multline*}
\tr{\Bigl( \ee^{- (r_i-\nu)(t-\Delta/2 + U)}\Bigr)}^{1/3}\vee\tr{\Bigl( \ee^{- (r_i-\nu)(t-\Delta/2 + U^\eps_\eta)}\Bigr)}^{1/3}\lesssim_{a} \frac{1}{(r_i-\nu)^a}\tr\biggl({\frac{1}{(t-\Delta/2 + U/2)^{3a}}}\biggr)^{1/3}
\end{multline*}
with \(a=\frac{s+1}{3}\), together with \eqref{eq:firsttermshatten} we can conclude 
 \begin{align*}
     \eqref{eq:bound traces S3} &\lesssim_{s}  \nu\|\alpha^\eps_\eta- \tau^\eps\|_{L^\infty}^2 \int_{\mathbf{r}\in [\delta-\nu, \infty)^3} \d {\mathbf{r}}\, \frac{\ee^{-\abs{\mathbf r}\kappa}}{r_1^{a}r_2^{a}r_3^{1+a}} \int_0 ^\infty \d t\,   \tr{\frac{1}{(t-\Delta/2 + U/2)^{3a}}}\\&\lesssim_{\kappa,s} \|\alpha^\eps_\eta- \tau^\eps\|_{L^\infty}^2 \nu(\delta-\nu)^{1-s}\,. \qedhere
 \end{align*}
 
 \end{proof}

\subsubsection{Proof of Lemma \ref{lemma:bound on diff F2 and f2}}
We finally prove Lemma \ref{lemma:bound on diff F2 and f2}. We proceed by introducing two lemmas which are similar to \cite[Lemma 5.10-5.11]{FKSS_24}. The two results below show some quantitative estimates for the quantities $F_2(\sigma)$ and $f_2(\xi)$ as defined in \eqref{def:f_2} and \eqref{eq:def_of_F_2}.

\begin{lemma}\label{lemma:FFff}
    Uniformly in \(\eta>0\), we have 
    \[\bigg|\int \mu_{\mathcal{C}_\eta}(\mathrm{d} \sigma)\, \overline{F_2(\sigma)} F_2(\sigma)-\int \mu_{v^{\varepsilon}}(\mathrm{d} \xi)\, \overline{f_2(\xi)} f_2(\xi)\bigg| \lesssim_{\kappa, v }\frac{\nu^{1/2}\vee \nu^{2-s}}{\varepsilon^{5 d+1}} \,.\]
\end{lemma}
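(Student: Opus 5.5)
We follow the strategy of \cite[Lemma 5.10]{FKSS_24}, adapted to the inhomogeneous operator $h^{U^\eps_\eta}$. The first step is to expand both $F_2$ and $f_2$ as the Gaussian integrals of a second-order resolvent remainder. Integrating out $t$ and expanding in $\sigma$, we write $F_2(\sigma)$ as a convergent sum over loop diagrams. For the quantum side these are the loop integrals from \cite[Lemma 5.4]{FKSS_2020}: $\Tr$ of products of the propagator $(K(-\kappa-U^\eps_\eta))^{-1}$ with insertions of $\ii\sigma$ at times $\tau_i\in[0,\nu)$. For $f_2(\xi)$ we obtain the analogous expression with the classical resolvent $(h^{U^\eps_\eta})^{-1}$ and $\xi=\langle\sigma\rangle$ time-averaged.

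Computing $\int\mu_{\mathcal C_\eta}\overline{F_2}F_2$ by Wick's theorem in $\sigma$ gives a sum over pairings of two loops. Each pairing carries covariances $\delta_{\eta,\nu}(\tau-\tilde\tau)v^\eps_\eta(x,\tilde x)$. The classical term $\int\mu_{v^\eps}\overline{f_2}f_2$ is the same sum with the time-delta replaced by its average $1/\nu$. To organise the comparison, we pass to the Brownian-bridge representation (Feynman–Kac, Lemma \ref{Feynman-Kac formula}) with killing rate $\kappa+U^\eps_\eta$. By \eqref{eq: heat kernel of U tilde upper bound}, every heat kernel is dominated pointwise, uniformly in $\eps,\eta$, by that of $h^{U/2}$, which produces all the trace bounds through $\tr(h^{U/2})^{-s}<\infty$.

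The difference is then bounded diagram by diagram, as in \cite[Section 5]{FKSS_24}. There are two sources of error. The first is Riemann-sum errors, from replacing the discrete time sums $\nu\sum_{t\in\nu\N}$ by integrals. We treat these exactly as in Lemmas \ref{lemma:convergence_of_Ewick_quantitative} and \ref{lemma:Convergence of K_eta}, splitting into small times $t\le\delta$ and large times. The small-time part costs $\delta^{2-s}$ via the trace bound \eqref{eq:small time estimate S_eta_reduced claim bound trace}. The large-time part costs $\nu\delta^{-s}$-type factors. Optimising in $\delta$ gives the $\nu^{2-s}$ contribution. The second source is the path fluctuation error: within a time interval of length $\nu$ the Brownian paths $\omega_i$ move. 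We therefore replace $v^\eps(\omega_i(t)-\omega_j(\tilde t))$ by $v^\eps$ evaluated at equal times, using $|\nabla v^\eps|\lesssim\eps^{-d-1}$ and Lemma \ref{Lemma:bounds_on_the_law_of_brownian_bridge}(ii). This yields $\E|\omega(t)-\omega(s)|\lesssim\nu^{1/2}$ plus a term in $|x-\tilde x|^2/(\tau-\tilde\tau)^2$ that is controlled by the Gaussian factor of $\psi$. Collecting the factors $\|v^\eps\|_{L^\infty}\lesssim\eps^{-d}$ from the remaining vertices, we expect at most $\eps^{-5d-1}$ overall, giving the $\nu^{1/2}\eps^{-5d-1}$ contribution.

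The main obstacle is uniformity in the unbounded domain. In \cite{FKSS_24} translation invariance and the finite volume make the spatial integrals of the loops trivially bounded. Here we must integrate over $x\in\R^d$ using only the decay provided by $U^\eps_\eta\ge U/2$. We therefore plan to bound every closed loop by a trace $\tr\bigl(\ee^{-r h^{U/2}}\bigr)\lesssim r^{-s}$, rather than by a supremum of kernels times a volume, taking care that the derivative falling on $v^\eps$ does not destroy the trace structure. If this fails, the fallback is to bound $|\nabla v^\eps|$ pointwise by $\eps^{-1}v$-type majorants, localised within $\eps$, so that the diagram retains a positive kernel whose trace can be estimated.
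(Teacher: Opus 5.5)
Your two error sources are the right ones and match the paper's Lemmas~\ref{lemma:riemannsumestimates} and~\ref{eq:approximation_of_J}: Riemann-sum errors give $\nu^{2-s}$, and intra-period path fluctuations give $\nu^{1/2}\eps^{-5d-1}$. The gap is in your first step, where you expand $F_2$ and $f_2$ in powers of $\sigma$ into a ``convergent sum over loop diagrams'', Wick-pair everything, and compare diagram by diagram. This cannot work, for three reasons.

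\emph{No convergence.} The Neumann series of $(t+K(-\kappa-U^\eps_\eta+\ii\sigma))^{-1}$ in $\sigma$ converges only when $\sigma$ is small compared with $t+\kappa$. But $\sigma$ is a Gaussian field whose pointwise variance contains $\delta_{\eta,\nu}(0)\sim\eta^{-1}$, so nothing uniform in $\eta$ can come out of it.

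\emph{No absolute bounds.} Even in the time-ordered Feynman--Kac form, where order $n$ carries $1/n!$, the Gaussian expectations of the diagrams sum correctly only through sign cancellations. With $X$ the integral of $\sigma$ along $\omega_1$ and $Q\deq\mathbb V_\eta(\omega_1,\omega_1)$ its variance, $\int\mu_{\mathcal C_\eta}(\dd\sigma)\,\ee^{\ii X}=\sum_k(-1)^kQ^k/(2^kk!)=\ee^{-Q/2}$. Any diagram-by-diagram error bound sums absolute values and therefore produces $\ee^{+Q/2}$. Since $Q$ can be of order $|\mathbf r|^2\|v^\eps\|_{L^\infty}$, this is not integrable against $\ee^{-\kappa|\mathbf r|}/|\mathbf r|$.

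\emph{No fixed power of $\eps$.} Your count of $\eps^{-5d-1}$ ``from the remaining vertices'' presupposes a bounded number of vertices, which an all-orders expansion does not have.

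The missing idea is the non-perturbative representation \eqref{eq:int_barf2f2}--\eqref{eq:def:Ieps} and \eqref{eq:int_barF2F2}--\eqref{eq:def:J}.
\begin{itemize}
\item $f_2$ and $F_2$ are kept as exact second-order remainders. Each loop carries two explicit insertions of $\ii\sigma$ and one full propagator containing $\ii\sigma$, which Feynman--Kac turns into a phase $\ee^{\ii\int\sigma(\omega_1)}$.
\item The Gaussian integral of $\ee^{\ii\int\sigma(\omega_1)-\ii\int\sigma(\tilde\omega_1)}\prod_{a\in A}\sigma(\tau_a,x_a)$ is computed exactly. It equals $\ee^{-\frac12(\mathbb V_\eta(\omega_1,\omega_1)+\mathbb V_\eta(\tilde\omega_1,\tilde\omega_1)-2\mathbb V_\eta(\omega_1,\tilde\omega_1))}$ times a finite sum over partial pairings $\Pi\in\mathfrak M(A)$ of the four insertions.
\item The exponent is a variance, so this factor is bounded by $1$. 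The four insertions give the fixed power $\eps^{-4d}$, and the comparison adds one factor $\eps^{-d-1}$.
\end{itemize}
Your path-fluctuation estimate must then be carried out inside this exponential. You compare $\mathbb V_\eta$ with $\mathbb V^\eps+\hat{\mathbb V}_\eta$ via Proposition~\ref{prop:phi42:path-path_interaction_bounds} and use $|\ee^{-a}-\ee^{-b}|\le|a-b|$ for $a,b\ge0$ (both exponents are variances). The point-path interactions are compared in the same way.

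On the unbounded domain: what breaks the trace structure is the weight $|\omega_1(\tilde s)-\omega_1(\hat s)|$, not the derivative on $v^\eps$, so your fallback targets the wrong factor. In Lemma~\ref{lemma: bounds on V hat} the paper proceeds as follows:
\begin{itemize}
\item It bounds the unperturbed loop by $\tr(\ee^{-|\tilde{\mathbf r}|(-\Delta/2+U/2)})$.
\item It splits the perturbed loop at $x_2$ and applies Lemma~\ref{Lemma:bounds_on_the_law_of_brownian_bridge}(ii) to the bridge $\omega_1$.
\item It recovers integrability in $x_1$ from Lemma~\ref{lem:boundonexp2} applied to the complementary segment $\omega_2$.
\end{itemize}
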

\begin{lemma}\label{lemma:Ffff}
    Uniformly in \(\eta>0\), we have 
    \[\bigg|\int \mu_{\mathcal{C}_\eta}(\mathrm{d} \sigma)\, \overline{f_2(\langle\sigma\rangle)} F_2(\sigma)-\int \mu_{v^{\varepsilon}}(\mathrm{d} \xi) \,\overline{f_2(\xi)} f_2(\xi)\bigg| \lesssim_{\kappa, v }\frac{\nu^{1/2}\vee \nu^{2-s}}{\varepsilon^{5 d+1}} \,.\]
\end{lemma}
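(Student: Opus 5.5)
We follow the strategy of \cite[Lemma 5.11]{FKSS_24}, adapted to the inhomogeneous one-particle operator $h^{U^\eps_\eta}$. First expand $F_2$ and $f_2$ in their resolvent (Duhamel) series. Using the Hubbard--Stratonovich structure and the fact that $\langle\sigma\rangle = \xi$ has law $\mu_{v^\eps_\eta}$, we write $f_2(\langle\sigma\rangle)$ as a function of the time-averaged field. We then represent both quantities through the Brownian-loop (Feynman--Kac) representation of Lemma \ref{Feynman-Kac formula}: $F_2(\sigma)$ is an integral over loops $\omega$ of time length $t\in\nu\mathbb N^*$ weighted by $\ee^{-\kappa t-\int_0^t U^\eps_\eta(\omega)}$ times $\ee^{\ii\int_0^t\sigma(s,\omega(s))\dd s}-1-\ii\int_0^t\sigma$. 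The corresponding representation of $f_2(\xi)$ has the same form with $\sigma(s,\cdot)$ replaced by $\xi$ and the Riemann sum over $t$ replaced by an integral. Taking the product with $\overline{f_2(\langle\sigma\rangle)}$ and integrating over $\mu_{\mathcal C_\eta}$ produces two-loop integrals in which the Gaussian integral can be computed explicitly. This gives an exponential of quadratic forms of the schematic shape $\int\int\delta_{\eta,\nu}(s-s')v^\eps_\eta(\omega_1(s),\omega_2(s'))$ against its time-averaged counterpart $\int\int v^\eps_\eta(\omega_1(s),\omega_2(s'))/\nu$. We compare these exactly as in \cite{FKSS_24}.

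The error terms split into two types, and we treat them in order. The first type is a Riemann-sum error, coming from discrete loop lengths $t\in\nu\mathbb N^*$ versus $t\in(0,\infty)$. We handle it exactly as in Lemma \ref{lemma:convergence_of_Ewick_quantitative} and Lemma \ref{lemma:Convergence of K_eta}. We split loop lengths at a scale $\delta$. Short loops are bounded using $\tr\ee^{-th^{U/2}}\lesssim_s t^{-s}$, which comes from \eqref{eq: heat kernel of U tilde upper bound} and \eqref{eq:uniform lower bound potential in eta and eps}. For long loops we use the derivative bound \eqref{eq:diff Riemann sums}. Together these produce the factor $\nu^{2-s}$. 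The second type is a time-regularity error, which measures the difference between $\int_0^t\sigma(s,\omega(s))\dd s$ and $\nu^{-1}$-averaged versions within each period of length $\nu$. After the Gaussian integration it is controlled by
\[
\bigl|v^\eps(\omega_1(s)-\omega_2(s'))-v^\eps(\omega_1(\tilde s)-\omega_2(s'))\bigr|\le \|\nabla v^\eps\|_{L^\infty}\,|\omega_1(s)-\omega_1(\tilde s)|, \qquad \|\nabla v^\eps\|_{L^\infty}\lesssim\eps^{-d-1}.
\]
The expected path increment $|\omega_1(s)-\omega_1(\tilde s)|$ is then bounded by Lemma \ref{Lemma:bounds_on_the_law_of_brownian_bridge} (ii), giving $\sqrt\nu$ up to the endpoint term.

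The endpoint term $|x-\tilde x|^2(t-s)^2/(\tau-\tilde\tau)^2$ vanishes for loops because they have coinciding endpoints. The remaining powers of $\eps^{-1}$ arise from expanding the exponential of the quadratic form. Each additional factor of $v^\eps$ costs $\|v^\eps\|_{L^\infty}\lesssim\eps^{-d}$, and the whole expansion is bounded by $\eps^{-5d-1}$. Here the bounds $\|v^\eps_\eta\|_{L^\infty}$ are uniform in $\eta$ because $\varphi$ is bounded.

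The main obstacle is the lack of translation invariance. In \cite{FKSS_24} the loop integrals over the starting point are controlled by the volume of the torus. Here we must instead extract spatial integrability from the weight $\ee^{-\int_0^t U^\eps_\eta(\omega)}$. We plan to do this via the lower bound \eqref{eq:lower bound on Uepseta}, $U^\eps_\eta\ge U/2$, and the Feynman--Kac domination \eqref{eq: heat kernel of U tilde upper bound}. These reduce every spatial integral to traces $\tr\ee^{-th^{U/2}}$ or $\tr (h^{U/2})^{-s}$, which are finite by Remark \ref{rmk:thetha_s2}. The delicate point is that after bounding $|\omega_1(s)-\omega_1(\tilde s)|$ we must keep the loop weight and not just the bridge measure. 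We will use Cauchy--Schwarz in the loop measure, separating $\E|\omega(s)-\omega(\tilde s)|^2$ under $\mathbb P^{t,0}_{x,x}$ from the Feynman--Kac weight. This costs at most a square root, which explains the rate $\nu^{1/2}$ rather than $\nu$.
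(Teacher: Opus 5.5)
Your outline is the paper's. The paper proves only Lemma \ref{lemma:FFff} in detail and calls Lemma \ref{lemma:Ffff} analogous, with the same two error sources you identify. The first is a fixed-loop-length comparison of quantum and classical interactions, where $\hat{\mathbb V}_\eta$ is bounded via $\|\nabla v^\eps\|_{L^\infty}\lesssim\eps^{-d-1}$ and Brownian increments (Lemma \ref{eq:approximation_of_J}, rate $\nu^{1/2}$). The second is a Riemann-sum comparison (Lemma \ref{lemma:riemannsumestimates}, rate $\nu^{2-s}$).

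Two corrections to your bookkeeping. First, the exponential of the quadratic form is never expanded, since that would not give a finite power of $\eps^{-1}$. Its exponent is minus half a variance, one uses $|\ee^{-a}-\ee^{-b}|\le|a-b|$ for $a,b\ge0$, and $\eps^{-5d-1}=\eps^{-4d}\cdot\eps^{-d-1}$ comes from the four insertions indexed by $A$ times one factor $\hat{\mathbb V}_\eta$. Second, in the mixed case you can simplify. The covariance of $\langle\sigma\rangle$ with $\sigma(\tau,\cdot)$ does not depend on $\tau$, so every interaction between the $F_2$-loop and the $f_2$-loop is exactly classical. Only the self-interactions of the $F_2$-loop produce errors.

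The step you flag as delicate does not work as you describe it. Cauchy--Schwarz in the loop measure at fixed base point $x$, together with $U^\eps_\eta\ge U/2$, gives
\[
\int\mathbb W^{t,0}_{x,x}(\dd\omega)\,\ee^{-\int_0^t U^\eps_\eta(\omega(r))\,\dd r}\,|\omega(s)-\omega(\tilde s)|\lesssim \bigl(\psi^t(0)\bigr)^{1/2}\,\nu^{1/2}\,\bigl(\ee^{-t(-\Delta/2+U)}(x,x)\bigr)^{1/2}.
\]
But $\int\dd x\,\bigl(\ee^{-t(-\Delta/2+U)}(x,x)\bigr)^{1/2}$ is not a trace. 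An $L^1$ bound on the diagonal, which is all that $\tr\ee^{-th^{U/2}}\lesssim t^{-s}$ provides, does not control the $L^1$ norm of its square root. Doing Cauchy--Schwarz jointly in $(x,\omega)$ fails too: the increment factor is then not integrable in $x$ unless you keep part of the weight on it, which brings back the original problem. What you actually need is pointwise spatial decay of the heat kernel, i.e.\ Lemma \ref{lem:boundonexp2}, or a weighted Cauchy--Schwarz in $x$ with $(1+|x|)^{\pm a}$, $a>d$.

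The paper's Lemma \ref{lemma: bounds on V hat} avoids Cauchy--Schwarz altogether:
\begin{itemize}
\item it cuts the loop at $x_2$ and discards the weight on the segment $\omega_1$ that carries the increment;
\item it applies Lemma \ref{Lemma:bounds_on_the_law_of_brownian_bridge}(ii) to $\omega_1$, so the endpoint term $|x_1-x_2|\nu/t_1$ does appear and is handled by \eqref{eq:intheatkernels2};
\item it keeps the weight on $\omega_2$ and integrates over $x_1$ using Lemma \ref{lem:boundonexp2};
\item it then integrates over $\tau_1,\tau_2$ to absorb the $t_2^{-d/\theta}$ singularity.
\end{itemize}
Your full-loop version legitimately removes the endpoint term, but the decisive input is still Lemma \ref{lem:boundonexp2}, not a trace bound.

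Your Riemann-sum step is also incomplete. The bounds \eqref{eq:diff Riemann sums} and \eqref{eq:large time diff S} control only differences of semigroups. After the Gaussian integration, however, the integrand depends on the loop lengths also through the path--path and point--path interactions. Changing a loop length by less than $\nu$ must therefore be controlled by increment bounds as well. In the paper's proof of Lemma \ref{lemma:riemannsumestimates}, this is supplied by \eqref{eq:diff ws wt} together with the claims in the proof of \cite[Lemma~5.15]{FKSS_24}.
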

Since the proofs of Lemmas \ref{lemma:FFff} and \ref{lemma:Ffff} are similar, we focus our discussion only on showing    Lemma \ref{lemma:FFff}. We start by recalling some useful notation from \cite[Section 5.1]{FKSS_24}. 
\begin{definition}[Classical interactions]
    Let \(x, \tilde{x} \in \mathbb{R}^d\) and \(\omega \in \Omega^{\tau_1, \tilde{\tau}_1}\), \(\tilde \omega \in \Omega^{\tau_2, \tilde{\tau}_2}\) be continuous paths. We define the \emph{point-point interaction}
    \[
\bigl(\mathbb{V}^{\varepsilon}\bigr)_{x, \tilde{x}}\deq\int \mu_{v^{\varepsilon}}(\d \xi)\, \xi(x) \xi(\tilde{x})=v^{\varepsilon}(x-\tilde{x})
\]
the \emph{point-path interaction}
\[
\bigl(\mathbb{V}^{\varepsilon}\bigr)_x(\omega)\deq\int \mu_{v^{\varepsilon}}(\d \xi) \int \d s\, \xi(\omega(s))=\int \d s\, v^{\varepsilon}(x-\omega(s))
\]
and the \emph{path-path interaction}
\[
\mathbb{V}^{\varepsilon}(\omega, \tilde{\omega})\deq\int \mu_{v^{\varepsilon}}(\d \xi) \int \d s\, \xi(\omega(s)) \int \d{\tilde{s}}\, \xi(\tilde{\omega}(\tilde{s}))=\int \d s \int \d  {\tilde{s}} \,v^{\varepsilon}(\omega(s)-\tilde{\omega}(\tilde{s}))\,.
\]

\end{definition}

We use the notation:  \(x_{i,0}\equiv x_i, \tilde{x}_{i,0}\equiv \tilde{x}_i\) for \(i=1,2\), and  \[A\deq \{2,3\}\times \{0,1\}.\]
Following \cite[(5.8)-(5.9)]{FKSS_2020}, we have 
\begin{equation}\label{eq:int_barf2f2}
    \int \mu_{v^{\varepsilon}}(\mathrm{d} \xi)\, \overline{f_2(\xi)} f_2(\xi)=\int_{[0, \infty)^3} \mathrm{d} \mathbf{r} \,\frac{\mathrm{e}^{-\kappa|\mathbf{r}|}}{|\mathbf{r}|} \int_{[0, \infty)^3} \mathrm{d} \tilde{\mathbf{r}}\, \frac{\mathrm{e}^{-\kappa|\tilde{\mathbf{r}}|}}{|\tilde{\mathbf{r}}|} I^{\varepsilon}(\mathbf{r}, \tilde{\mathbf{r}})
\end{equation}
where 
\begin{equation}\label{eq:def:Ieps}
    \begin{aligned}
I^{\varepsilon}(\mathbf{r}, \tilde{\mathbf{r}})\deq & \int \mathrm{d} \mathbf{x} \int \mathrm{d} \tilde{\mathbf{x}} \int\WU_{x_1, x_3}^{r_3, 0}(\mathrm{d} \omega_3)\WU_{x_3, x_2}^{r_2, 0}(\mathrm{d} \omega_2)\WU_{x_2, x_1}^{r_1, 0}(\mathrm{d} \omega_1)\WU_{\tilde{x}_1, \tilde{x}_3}^{\tilde{r}_3, 0}(\mathrm{d} \tilde{\omega}_3) \\
& \quad\quad  \times\WU_{\tilde{x}_3, \tilde{x}_2}^{\tilde{r}_2, 0}(\mathrm{d} \tilde{\omega}_2)\WU_{\tilde{x}_2, \tilde{x}_1}^{\tilde{r}_1,0}(\mathrm{d} \tilde{\omega}_1) \mathrm{e}^{-\frac{1}{2}\bigl(\mathbb{V}^{\varepsilon}(\omega_1, \omega_1)+\mathbb{V}^{\varepsilon}(\tilde{\omega}_1, \tilde{\omega}_1)-2 \mathbb{V}^{\varepsilon}(\omega_1, \tilde{\omega}_1)\bigr)} \\
& \quad\quad  \times \sum_{\Pi \in \mathfrak{M}(A)} \prod_{\{a, b\} \in \Pi}(\mathbb{V}^{\varepsilon})_{x_a, x_b} \prod_{a \in A \backslash[\Pi]} \mathrm{i}\bigl((\mathbb{V}^{\varepsilon})_{x_a}(\omega_1)-(\mathbb{V}^{\varepsilon})_{x_a}(\tilde{\omega}_1)\bigr)\,,
\end{aligned}
\end{equation}
Here we defined $[\Pi] \deq \{1,\dots,\Pi\}$, and $\mathfrak{M}(A)$ the set of partial pairings of the set $A$. Moreover, we introduced the notation 
\begin{equation}\label{eq:def:WU}
    \WU^{t,s}_{y,x}(\d \omega)\deq{\mathbb{W}}^{t,s}_{y,x}(\d \omega)\ee^{-\int_s^t \d r\,U^{\eps}_\eta(\omega(r))}\eqd\psi^{t-s}(x-y){\PU}^{t,s}_{y,x}(\d \omega)\,.
\end{equation}

\begin{definition}[Quantum interactions]\label{def:quantum_interactins}
Let $(\tau, x),(\tilde{\tau}, \tilde{x}) \in[0, \nu] \times \mathbb{R}^d$ and let $\omega \in \Omega^{\tau_1, \tilde{\tau}_1}, \tilde{\omega} \in$ $\Omega^{\tau_2, \tilde{\tau}_2}$ be continuous paths. With $\delta_\eta$ given by  \eqref{eq:defdelta}, we define the \emph{point-point interaction}
    \[
\bigl(\mathbb{V}_\eta\bigr)_{x, \tilde{x}}^{\tau, \tilde{\tau}}\deq\int \mu_{\mathcal{C}_\eta}(\mathrm{d} \sigma) \,\sigma(\tau, x) \sigma(\tilde{\tau}, \tilde{x})=\nu \delta_\eta(\tau-\tilde{\tau}) v^{\varepsilon}(x-\tilde{x})\,,
\]
the \emph{point-point interaction}
\begin{align*}
\bigl(\mathbb{V}_\eta\bigr)_x^\tau(\omega)& \deq\int \mu_{\mathcal{C}_\eta}(\mathrm{d} \sigma)\, \sigma(\tau, x) \int_0^\nu \mathrm{d} t \int \mathrm{d} s \,\delta(t-[s]_\nu) \sigma(t, \omega(s))\\ & =\nu \int \mathrm{d} s\, \delta_\eta(\tau-[s]_\nu) v^{\varepsilon}(x-\omega(s))\,,
\end{align*}
and the \emph{path-path interaction}
\begin{align*}
    \mathbb{V}_\eta(\omega, \tilde{\omega})&\deq\int \mu_{\mathcal{C}_\eta}(\mathrm{d} \sigma) \int_0^\nu \d t \int \mathrm{d} s \, \delta(t-[s]_\nu) \sigma(t, \omega(s)) \int_0^\nu \d {\tilde{t}} \int \d {\tilde{s}} \, \delta(\tilde{t}-[\tilde{s}]_\nu) \sigma(\tilde{t}, \tilde{\omega}(\tilde{s})) \\
& =\nu \int \d {s} \int \d{\tilde{s}} \,\delta_\eta([s]_\nu-[\tilde{s}]_\nu) v^{\varepsilon}(\omega(s)-\tilde{\omega}(\tilde{s}))\,.
\end{align*}

\end{definition}

Recalling \eqref{eq:def_of_F_2}, and proceeding as in \cite[(5.17)–(5.18), (5.21)]{FKSS_2020}, we have 
\begin{equation}\label{eq:int_barF2F2}
    \int \mu_{\mathcal{C}_\eta}(\mathrm{d} \sigma)\, \overline{F_2(\sigma)} F_2(\sigma)=\sum_{\mathbf{r} \in(\nu \mathbb{N})^3} \frac{\mathbbold{1}_{|\mathbf{r}|>0} \mathrm{e}^{-\kappa|\mathbf{r}|}}{|\mathbf{r}|} \sum_{\tilde{\mathbf{r}} \in(\nu \mathbb{N})^3} \frac{\mathbbold{1}_{|\tilde{\mathbf{r}}|>0} \mathrm{e}^{-\kappa|\tilde{\mathbf{r}}|}}{|\tilde{\mathbf{r}}|} J(\mathbf{r}, \tilde{\mathbf{r}})
\end{equation}
where 
\begin{equation}\label{eq:def:J}
    \begin{aligned}
& J(\mathbf{r}, \tilde{\mathbf{r}})\\&\deq\int_{[0, \nu]^3} \d {\boldsymbol{\tau}} \int_{[0, \nu]^3} \d {\tilde{\boldsymbol{\tau}}} \int \d {\mathbf{x}} \int \d{ \tilde{\mathbf{x}} }\int\WU_{x_1, x_3}^{\tau_1+r_3, \tau_3}(\d{ \omega_3})\WU_{x_3, x_2}^{\tau_3+r_2, \tau_2}(\d{ \omega_2})\WU_{x_2, x_1}^{\tau_2+r_1, \tau_1}(\d {\omega_1}) \\
& \quad \times\WU_{\tilde{x}_1, \tilde{x}_3}^{\tilde{\tau}_1+\tilde{r}_3, \tilde{\tau}_3}(\d {\tilde{\omega}_3})\WU_{\tilde{x}_3, \tilde{x}_2}^{\tilde{\tau}_3+\tilde{r}_2, \tilde{\tau}_2}(\d {\tilde{\omega}_2})\WU_{\tilde{x}_2, \tilde{x}_1}^{\tilde{\tau}_2+\tilde{r}_1, \tilde{\tau}_1}(\d {\tilde{\omega}_1})  \ee^{-\frac{1}{2}(\mathbb{V}_\eta(\omega_1, \omega_1)+\mathbb{V}_\eta(\tilde{\omega}_1, \tilde{\omega}_1)-2 \mathbb{V}_\eta(\omega_1, \tilde{\omega}_1))} \\
& \quad \times \sum_{\Pi \in \mathfrak{M}(A)} \prod_{\{a, b\} \in \Pi}(\mathbb{V}_\eta)_{x_a, x_b}^{\tau_a, \tau_b} \prod_{a \in A \backslash[\Pi]} \mathrm{i}\bigl((\mathbb{V}_\eta)_{x_a}^{\tau_a}(\omega_1)-(\mathbb{V}_\eta)_{x_a}^{\tau_a}(\tilde{\omega}_1)\bigr)\,.
\end{aligned}
\end{equation}

In order to prove Lemma \ref{lemma:FFff}, we first need to compare \eqref{eq:def:Ieps} with \eqref{eq:def:J}. As a result we have the following statement.
\begin{lemma}\label{eq:approximation_of_J}
    For all \(\mathbf{r}, \mathbf{\tilde r} \in (\nu \mathbb{N})^3\) with \(\abs{\mathbf{r}}, \abs{\mathbf{\tilde r}}>0\), uniformly in \(\eta >0\) we have 
    \begin{multline*}
        \big| J(\mathbf{r}, \tilde{\mathbf{r}})-\nu^6 I^{\varepsilon}(\mathbf{r}, \tilde{\mathbf{r}})\big|\\\lesssim \frac{\nu^{13/2}}{\eps^{5d + 1}}\Bigl(\tr\bigl(\ee^{-\abs{{\mathbf{r}}}(-\Delta/2 + U/2)}\bigr) +\frac{1}{|\mathbf r|^{d/2}}\Bigr)  \Bigl(\tr\bigl(\ee^{-|\tilde{\mathbf{r}|}(-\Delta/2 + U/2)}\bigr) +\frac{1}{|{\tilde{\mathbf r}|^{d/2}}}\Bigr) (1 + |{\mathbf{r}}|+ |\tilde{\mathbf{r}}|)^6\,.
    \end{multline*}
\end{lemma}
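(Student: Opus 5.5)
The plan is to adapt the proof of the homogeneous analogue \cite[Lemma 5.12]{FKSS_24} (itself based on \cite[Section 5]{FKSS_2020}) by a telescoping comparison of the integrands of $J(\mathbf r,\tilde{\mathbf r})$ and $\nu^6 I^\eps(\mathbf r,\tilde{\mathbf r})$. The difference arises from four sources, which I will handle one at a time.

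\begin{itemize}
\item[(a)] The time shifts $\tau_i,\tilde\tau_i\in[0,\nu]$ in the path measures $\WU^{\tau_1+r_3,\tau_3}_{x_1,x_3}$ etc., versus the unshifted measures $\WU^{r_3,0}_{x_1,x_3}$.
\item[(b)] The quantum point--point interactions $\nu\delta_\eta(\tau_a-\tau_b)v^\eps(x_a-x_b)$, versus the classical $v^\eps(x_a-x_b)$.
\item[(c)] The point--path interactions $(\mathbb V_\eta)^{\tau}_x(\omega)$, versus $(\mathbb V^\eps)_x(\omega)$.
\item[(d)] The path--path interaction $\mathbb V_\eta(\omega,\tilde\omega)$ in the exponent, versus $\nu\,\mathbb V^\eps(\omega,\tilde\omega)$.
\end{itemize}

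The factor $\nu^6$ comes from the six $\tau$-integrals over $[0,\nu]$. The normalisation is fixed by integrating $\delta_\eta$ against the $\tau$ variables, using $\int\delta_{\eta,\nu}=1$ on the circle of length $\nu$.

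For sources (c) and (d), I will write $\nu\int \d s\,\delta_\eta(\tau-[s]_\nu)\,v^\eps(x-\omega(s))$ and compare it with $\int \d s\, v^\eps(x-\omega(s))$. After integrating $\tau$ over $[0,\nu]$, the difference reduces to replacing $\omega(s)$ by $\omega(s')$ with $|s-s'|\le\nu$. I will bound this via $\|\nabla v^\eps\|_{L^\infty}\lesssim\eps^{-d-1}$ together with Lemma~\ref{Lemma:bounds_on_the_law_of_brownian_bridge}(ii), which gives an increment of order $\nu^{1/2}(1+|x-\tilde x|\,\nu^{1/2}/r)$ in $L^2$ of the bridge. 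The exponentials $\ee^{-\frac12(\cdots)}$ are handled by $|\ee^{-a}-\ee^{-b}|\le|a-b|$, using that both real parts are nonnegative by positive definiteness of the covariances. Each interaction factor is bounded by $\|v^\eps\|_\infty\lesssim\eps^{-d}$ times path lengths. Counting at most four interaction factors plus one gradient, this gives the loss $\eps^{-5d-1}$. The path lengths produce the polynomial factor $(1+|\mathbf r|+|\tilde{\mathbf r}|)^6$.

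The main inhomogeneous issue is source (a) and the reassembly of the path measures. In the homogeneous case one integrates out the positions freely. Here the weights $\WU$ contain $\ee^{-\int U^\eps_\eta(\omega)}$, so I will first dominate every $\WU$ by the measure with potential $U/2$, using \eqref{eq: heat kernel of U tilde upper bound} and Lemma~\ref{Feynman-Kac formula}.
\begin{itemize}
\item After the interaction factors are bounded, the $x$-integrals of the concatenated bridges collapse to $\tr\bigl(\ee^{-|\mathbf r|(-\Delta/2+U/2)}\bigr)$. This covers the contribution where positions are unconstrained.
\item In the contributions where a pairing $\Pi$ or a point--path factor ties $x_a$ to the path $\omega_1$ (or to $\tilde\omega_1$), cyclicity is broken. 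These give instead the heat-kernel sup bound $|\mathbf r|^{-d/2}$ from Lemma~\ref{Lemma:bounds_on_the_law_of_brownian_bridge}(i), after integrating $v^\eps$ in one variable.
\end{itemize}
This explains the two terms $\tr(\cdots)+|\mathbf r|^{-d/2}$ in each factor.

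For the time shifts in (a), I will write $\WU^{\tau+r,\tau'}$ as a bridge of duration $r+\tau-\tau'$ with $|\tau-\tau'|\le\nu$. The change of duration will be compared using the semigroup property and the short-time estimate, contributing at most $\nu^{1/2}$ relative to the main term. The step I expect to be hardest is keeping this comparison trace-class uniformly in $U^\eps_\eta$: the shift alters the potential-weighted heat kernel, not just the Gaussian factor $\psi$. I would first try a Duhamel expansion $\ee^{-a h}-\ee^{-b h}$ with $|a-b|\le\nu$, as in \eqref{eq:diff Riemann sums}. Should this fail for small $|\mathbf r|$, I would fall back on a crude separate bound for $|\mathbf r|\le\nu^{1/2}$.
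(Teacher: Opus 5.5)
Your bookkeeping of the $\eps^{-5d-1}$ loss and your handling of the exponentials are fine. The gap lies in the step that is genuinely new in the inhomogeneous setting.

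\textbf{The increment estimate.} Some error terms are trivially of size $\nu\eps^{-d}$ times traces. Every other term in $J-\nu^6I^\eps$ reduces to
\[
\int\d x\int\WU^{|\mathbf r|,0}_{x,x}(\d\omega)\,|\omega(s)-\omega(t)|\,,\qquad |s-t|\le\nu\,,
\]
once you bound interaction factors by $|v^\eps|\lesssim\eps^{-d}$ and the differenced factor by $|\nabla v^\eps|\lesssim\eps^{-d-1}$. This is the quantity that \eqref{eq:first_error_term5.46} and \eqref{eq:diff ws wt} control. You plan to bound the increment with Lemma~\ref{Lemma:bounds_on_the_law_of_brownian_bridge}(ii) and then let the $x$-integrals ``collapse to the trace''. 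These two steps do not fit together. Lemma~\ref{Lemma:bounds_on_the_law_of_brownian_bridge}(ii) concerns the free bridge with fixed endpoints. To use it you must drop the weight $\ee^{-\int U^\eps_\eta}$ on the segment carrying the increment and condition on its endpoints, and then no trace survives. If instead you drop the weight on the whole loop, the $x$-integral over $\R^d$ diverges.

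The paper splits the loop at $x_2$ into two pieces:
\begin{itemize}
\item on $\omega_1$ the weight is dropped and Lemma~\ref{Lemma:bounds_on_the_law_of_brownian_bridge}(ii) is applied;
\item on $\omega_2$ the weight is kept, and the pointwise decay of Lemma~\ref{lem:boundonexp2} gives integrability in $x_1$, see \eqref{eq:t_2 bound of kernel}; this is integrable in $\tau_2$ because $d/\theta<1$.
\end{itemize}
Integrating $\psi^{t_1}\psi^{t_2}$ over $x_2$ then produces the factor $\nu^{1/2}(1+|\mathbf r|)^2|\mathbf r|^{-d/2}$.

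Your explanation of the $|\mathbf r|^{-d/2}$ term, via pairings ``breaking cyclicity'', is not correct. Pairing and point--path factors are simply bounded in sup norm, which preserves the trace, as in $|I^\eps|\lesssim\eps^{-4d}\tr(\cdots)\tr(\cdots)(1+|\mathbf r|+|\tilde{\mathbf r}|)^4$. A trace can in fact be kept at time $|\mathbf r|/2$. Condition on $\omega(s)=y,\ \omega(t)=z$, bound the shorter arc by the free heat kernel, and bound the longer arc's kernel $K$ by $K(y,z)\le K(y,y)^{1/2}K(z,z)^{1/2}$. Some argument of this kind is needed, and your proposal does not contain one.

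\textbf{The time shifts.} The step you single out as hardest, source (a), is attacked with the wrong tool. A segment-by-segment Duhamel comparison of $\ee^{-ah}$ and $\ee^{-bh}$ with $|a-b|\le\nu$ costs a factor $\nu/\min(a,b)$. It breaks down whenever some $r_i$ is $0$ or of order $\nu$, which happens for arbitrarily large $|\mathbf r|$, so your fallback for $|\mathbf r|\le\nu^{1/2}$ does not cover it. For $r_3=0$ the classical segment is $\delta(x_1-x_3)$, and the difference is small only after testing against $v^\eps$; it is an increment estimate in disguise. Moreover, the integrand depends on the whole path $\omega_1$ through $\mathbb V(\omega_1,\cdot)$ and the point--path terms. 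It is therefore not a product of semigroups and multiplication operators, which is what Duhamel needs.

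The missing observation is that the shifts cancel:
\[
(r_1+\tau_2-\tau_1)+(r_2+\tau_3-\tau_2)+(r_3+\tau_1-\tau_3)=|\mathbf r|\,.
\]
After gluing the three bridges, both $J$ and $I^\eps$ are integrals against the same loop measure of length $|\mathbf r|$. The $\tau$'s only move the times at which $x_2$, $x_3$ and the endpoint of $\omega_1$ are read off, by at most $\nu$. So (a) reduces to the same increment bound as (c) and (d), and that bound is the only genuinely new input.
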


Finally, Lemma \ref{lemma:FFff} follows from controlling the difference between Riemann sums and the corresponding integral. 
\begin{lemma}\label{lemma:riemannsumestimates}
    Recall \eqref{eq:def:Ieps}. Under the assumptions of Remark \ref{rmk:thetha_s2}, we have that uniformly in $\eta >0$ 
    \begin{multline}\label{eq:diff riemann sums Ieps}
        \bigg|\int_{[0, \infty)^3} \mathrm{~d} \mathbf{r} \frac{\mathrm{e}^{-\kappa|\mathbf{r}|}}{|\mathbf{r}|} \int_{[0, \infty)^3} \mathrm{~d} \tilde{\mathbf{r}} \frac{\mathrm{e}^{-\kappa|\tilde{\mathbf{r}}|}}{|\tilde{\mathbf{r}}|} I^{\varepsilon}(\mathbf{r}, \tilde{\mathbf{r}})\\
        -\nu^6 \sum_{\mathbf{r} \in(\nu \mathbb{N})^3} \frac{\mathbbold{1}_{|\mathbf{r}|>0} \mathrm{e}^{-\kappa|\mathbf{r}|}}{|\mathbf{r}|} \sum_{\tilde{\mathbf{r}} \in(\nu \mathbb{N})^3} \frac{\mathbbold{1}_{|\tilde{\mathbf{r}}|>0} \mathrm{e}^{-\kappa|\tilde{\mathbf{r}}|}}{|\tilde{\mathbf{r}}|} I^{\varepsilon}(\mathbf{r}, \tilde{\mathbf{r}})\bigg| \lesssim_{\kappa, v} \frac{\nu^{2-s}}{\eps^{5d+1}}\, .
    \end{multline}
\end{lemma}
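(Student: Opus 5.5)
The plan is to treat \eqref{eq:diff riemann sums Ieps} as a Riemann-sum error estimate in the six time variables $(\mathbf r,\tilde{\mathbf r})$. Write $F(\mathbf r,\tilde{\mathbf r})\deq \frac{\ee^{-\kappa|\mathbf r|}}{|\mathbf r|}\frac{\ee^{-\kappa|\tilde{\mathbf r}|}}{|\tilde{\mathbf r}|}I^\eps(\mathbf r,\tilde{\mathbf r})$. We decompose $[0,\infty)^3\times[0,\infty)^3$ into cells $\mathbf q+[0,\nu)^3$, $\tilde{\mathbf q}+[0,\nu)^3$ with $\mathbf q,\tilde{\mathbf q}\in(\nu\N)^3$. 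The difference then equals
\[
\sum_{\mathbf q,\tilde{\mathbf q}}\int_{[0,\nu)^6}\bigl(F(\mathbf q+\mathbf u,\tilde{\mathbf q}+\tilde{\mathbf u})-F(\mathbf q,\tilde{\mathbf q})\bigr)\,\dd\mathbf u\,\dd\tilde{\mathbf u},
\]
plus the contribution of the cells adjacent to $|\mathbf r|=0$ or $|\tilde{\mathbf r}|=0$, where the sum has been cut off by $\mathbbold 1_{|\mathbf r|>0}$. As in the proofs of Lemmas \ref{lemma:convergence_of_Ewick_quantitative} and \ref{lemma:Convergence of K_eta}, we split into small times, where some $r_i$ or $\tilde r_i$ is at most $\delta$, and large times, where all of them exceed $\delta$. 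We then choose $\delta$ of order $\nu$ (say $\delta=C_0\nu$), as in Lemma \ref{lemma:Convergence of K_eta}.

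The first step is an a priori bound on $I^\eps$. We bound every factor $(\mathbb V^\eps)_{x_a,x_b}$ and $(\mathbb V^\eps)_{x_a}(\omega_1)$ by $\|v^\eps\|_{L^\infty}\lesssim\eps^{-d}$ times path lengths, and the exponential by $1$, which is admissible since $v$ is of positive type. We then use \eqref{eq: heat kernel of U tilde upper bound} to replace $\WU$ by the Feynman--Kac measure with potential $U/2$. Integrating out $\mathbf x,\tilde{\mathbf x}$ should give
\[
|I^\eps(\mathbf r,\tilde{\mathbf r})|\lesssim \eps^{-4d}(1+|\mathbf r|+|\tilde{\mathbf r}|)^4\,\tr\bigl(\ee^{-|\mathbf r|h^{U/2}}\bigr)\,\tr\bigl(\ee^{-|\tilde{\mathbf r}|h^{U/2}}\bigr).
\]
Combining this with $\tr(\ee^{-th^{U/2}})\lesssim_s t^{-s}$ and $\int_0^\infty t^{2}\,t^{-1-s}\ee^{-\kappa t}\,\dd t$-type integrals over $\mathbf r$, whose volume element is $|\mathbf r|^2\dd|\mathbf r|$, shows that the small-time region contributes $O(\delta^{2-s})=O(\nu^{2-s})$. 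The estimate is analogous to \eqref{eq:small time estimate S_eta} and \eqref{eq:bound on S_nu sum small time}, where the $t$-integral of the trace is likewise replaced by the available decay.

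The second step is a derivative bound for $F$ in each $r_i$ on the large-time region, which is used to control the cell differences. Differentiating $\ee^{-\kappa|\mathbf r|}/|\mathbf r|$ is harmless. To differentiate $I^\eps$ in $r_i$, we pass to the operator form: the propagator $\ee^{-r_ih^{U^\eps_\eta}}$ has $\partial_{r_i}$ producing a factor $h^{U^\eps_\eta}$. This is bounded exactly as in \eqref{eq:firsttermshatten}, costing $r_i^{-1}$ and degrading $\tr(\ee^{-r_ih})$ to $\tr(\ee^{-(r_i-\nu)h})$. The interaction terms $\mathbb V^\eps(\omega_1,\cdot)$ and $(\mathbb V^\eps)_x(\omega_1)$ depend on $r_1$ through the length of the time interval. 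Their derivatives produce boundary terms bounded by $\|v^\eps\|_{L^\infty}$, and further factors of $\|\nabla v^\eps\|_{L^\infty}\lesssim\eps^{-d-1}$ arise if we rescale the bridge to a fixed interval. This is where the extra power $\eps^{-(5d+1)}$ comes from. The upshot is
\[
|\partial_{r_i}F|\lesssim\eps^{-5d-1}\,r_i^{-1}\,(\text{the same trace bounds}),
\]
so each cell difference is $O(\nu)$ times this, summed over $\mathbf q$ with $q_i\ge\delta-\nu$. Integrating $\nu\,r_i^{-1-a}$ over $r_i\ge\delta$ with the trace factors, and splitting the exponent $s$ among the three times as in the proof of \eqref{eq:large time diff S} (with $a=(1+s)/3$), gives $\nu(\delta-\nu)^{1-s}\lesssim\nu^{2-s}$. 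A telescoping argument changes one variable at a time across the six variables.

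The main obstacle is the derivative of the interaction factors of $I^\eps$ with respect to the path durations. The Brownian bridges live on intervals of length $r_i$, so we must either rescale time, which introduces $\nabla v^\eps$ and moment bounds from Lemma \ref{Lemma:bounds_on_the_law_of_brownian_bridge} \ref{Lemma:bounds_on_the_law_of_brownian_bridge(ii)}, or use the Markov property to split off a short segment of length $\le\nu$ and compare. I would first try the Markov splitting, writing $\WU^{r+u,0}$ as $\WU^{r+u,r}$ composed with $\WU^{r,0}$ and bounding the short segment's effect on the interactions by $\nu\|v^\eps\|_{L^\infty}$. I would fall back on rescaling with $\|\nabla v^\eps\|_{L^\infty}$ only where the endpoint displacement of $x_a$ enters.
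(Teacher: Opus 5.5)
Your overall architecture matches the paper's. You use the split of Lemma \ref{lemma:Convergence of K_eta} into times below and above $\delta=C_0\nu$. You use the same a priori bound $|I^\eps(\mathbf r,\tilde{\mathbf r})|\lesssim\eps^{-4d}(1+|\mathbf r|+|\tilde{\mathbf r}|)^4\tr(\ee^{-|\mathbf r|(-\Delta/2+U/2)})\tr(\ee^{-|\tilde{\mathbf r}|(-\Delta/2+U/2)})$; your derivation of it, including dropping the exponential by positive definiteness of $v$, is fine. And you get $O(\nu^{2-s})$ from the small-time region.

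The gap is the large-time claim $|\partial_{r_i}F|\lesssim\eps^{-5d-1}r_i^{-1}(\cdots)$, i.e.\ that moving one duration by $u\le\nu$ changes $I^\eps$ by $O(u)$ at the cost of only $\nabla v^\eps$. None of the mechanisms you list gives this.
\begin{itemize}
\item The integrand depends on the junction points $x_2,x_3,\tilde x_2,\tilde x_3$ (and on the position of $\omega_1$) through $v^\eps$. Lengthening a segment via the Markov property therefore shifts such a point by a Brownian increment of duration $u$. With $\|\nabla v^\eps\|_{L^\infty}\lesssim\eps^{-d-1}$ and Lemma \ref{Lemma:bounds_on_the_law_of_brownian_bridge} \ref{Lemma:bounds_on_the_law_of_brownian_bridge(ii)}, this costs $\eps^{-d-1}u^{1/2}$, not $\eps^{-d-1}u$. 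That is a H\"older-$\frac12$ bound in $r_i$ (the mechanism of \eqref{eq:diff ws wt}), not a derivative bound.
\item Your fallback, rescaling the bridge to a fixed interval, does give a genuine $r$-derivative of the interaction terms using $\nabla v^\eps$. But it also dilates the path inside the Feynman--Kac weight $\ee^{-\int U^\eps_\eta(\omega(s))\,\dd s}$, so it needs $\nabla U$, which is not available under \assP{} (e.g.\ $U=1+\floor{\abs{x}}^\theta$).
\item \eqref{eq:firsttermshatten} was applied to a trace of semigroups separated by the bounded function $\alpha^\eps_\eta-\tau^\eps$. $I^\eps$ is not of that form, so ``passing to the operator form'' needs an argument you have not supplied.
\end{itemize}

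What your tools actually give is of order $(\nu^{1/2}\vee\nu^{2-s})\,\eps^{-5d-1}$. This is weaker than the stated bound whenever $s<3/2$. It is all that is used downstream, since Lemma \ref{lemma:FFff} is stated with $\nu^{1/2}\vee\nu^{2-s}$, and the paper's own sketch likewise rests on the first-order increment bound \eqref{eq:diff ws wt}.

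To get $O(u)$ per cell you must use that the increment has mean zero. For instance, write the endpoint shift as $\ee^{-uh^{U^\eps_\eta}}-1=-\int_0^u\ee^{-wh^{U^\eps_\eta}}h^{U^\eps_\eta}\,\dd w$ and let the generator act on the smooth factors $v^\eps(x_a-\cdot)$. This costs $\Delta v^\eps$ (one more factor $\eps^{-1}$) and a term $u\,U^\eps_\eta$ that must be absorbed by the decay of the loop weights.

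Separately, the even split $a=(1+s)/3$ that you import from \eqref{eq:large time diff S} integrates, as written, to $\nu(\delta-\nu)^{-(1+s)/3}\asymp\nu^{(2-s)/3}$, not $\nu(\delta-\nu)^{1-s}$. You need to put (almost) all of the small-time singularity on the differentiated variable, or bound through the total length using $\nu\int_{\min_i r_i\geq\delta}r_i^{-1}|\mathbf r|^{-1-s}\ee^{-\kappa|\mathbf r|}\,\dd\mathbf r\lesssim\nu\delta^{1-s}$.
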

We dedicate the rest of the subsection to prove \eqref{eq:approximation_of_J} and Lemma \ref{lemma:riemannsumestimates}. 
To start we recall the following result from \cite{FKSS_24} regarding path-path interactions.
\begin{prop}[\text{\cite[lemma C.1]{FKSS_24}}]\label{prop:phi42:path-path_interaction_bounds}
    Consider two continuous paths \(\omega \in \Omega^{\tau_1, \tilde{\tau}_1}\), \(\tilde{\omega} \in\Omega^{\tau_2, \tilde{\tau}_2} \). Then, uniformly in \(\eta>0\), 
    \[\big|\mathbb{V}_{\eta}(\omega, \tilde \omega)- \bigl(\mathbb{V}^\eps(\omega, \tilde \omega) +\hat{\mathbb{V}}_\eta (\omega, \tilde \omega)\bigr)\big|\lesssim \frac{\nu}{\eps^d }(\tau_1-\tilde{\tau}_1)\]
    where 
    \begin{equation}\label{eq:bound_on_hatVeta}
        \big|\hat{\mathbb{V}}_\eta(\omega, \tilde{\omega})\big| \lesssim \frac{1}{\varepsilon^{d+1}} \int \mathrm{d} s \sum_{i=1}^N \int_{I_i} \d {\tilde{s}} \int_{I_i} \d {\hat{s}}\,|\tilde{\omega}(\tilde{s})-\tilde{\omega}(\hat{s})| \delta_\eta(\{s\}_\nu-\{\hat{s}\}_\nu)\,, 
    \end{equation}
     for \(N\deq\big \lceil\frac{\tau_2-\tilde{\tau}_2}{\nu} \big\rceil\) and 
     \[I_i=\begin{cases} 
         [\tilde{\tau}_2+(i-1)\nu,\tilde{\tau}_2+i\nu ), i=1, \dots ,N-1 \\
         [\tilde{\tau}_2+(N-1)\nu, \tau_2], i=N\,.
     \end{cases}\]
\end{prop}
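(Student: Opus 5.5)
The plan is to compute $\mathbb{V}_\eta(\omega,\tilde\omega)$ directly from Definition \ref{def:quantum_interactins} and compare it with $\mathbb{V}^\eps(\omega,\tilde\omega)=\int\d s\int\d\tilde s\, v^\eps(\omega(s)-\tilde\omega(\tilde s))$. I would decompose the $\tilde s$-integration over $[\tilde\tau_2,\tau_2]$ into the $N$ intervals $I_i$ of length at most $\nu$. On each $I_i$ I would write $\tilde\omega(\tilde s)=\tilde\omega(\hat s)+(\tilde\omega(\tilde s)-\tilde\omega(\hat s))$ and average over an auxiliary variable $\hat s\in I_i$. This splits
\[
\nu\int\d s\int\d\tilde s\,\delta_\eta([s]_\nu-[\tilde s]_\nu)\,v^\eps(\omega(s)-\tilde\omega(\tilde s))
\]
into a main term and a remainder. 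In the main term the path $\tilde\omega$ is effectively frozen in its time argument. In the remainder the difference $v^\eps(\omega(s)-\tilde\omega(\tilde s))-v^\eps(\omega(s)-\tilde\omega(\hat s))$ is controlled by $\|\nabla v^\eps\|_{L^\infty}|\tilde\omega(\tilde s)-\tilde\omega(\hat s)|\lesssim \eps^{-d-1}|\tilde\omega(\tilde s)-\tilde\omega(\hat s)|$. I would take $\hat{\mathbb{V}}_\eta$ to be exactly this remainder. Its bound \eqref{eq:bound_on_hatVeta} then follows immediately, after rewriting $\delta_\eta([s]_\nu-[\tilde s]_\nu)$ in terms of the fractional times $\{s\}_\nu,\{\hat s\}_\nu$ (on $I_i$ the quantity $[\tilde s]_\nu$ is constant up to the $\nu$-periodicity of $\delta_{\eta,\nu}$).

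For the main term, the key identity is that $\delta_{\eta,\nu}$ is $\nu$-periodic with $\int_0^\nu\delta_{\eta,\nu}=1$. Once $\tilde\omega$ is frozen at $\hat s$, integrating $\nu\,\delta_\eta(\{s\}_\nu-\{\hat s\}_\nu)$ over $\hat s\in I_i$, with $|I_i|=\nu$, gives $\nu\cdot\frac{1}{\nu}\int_{I_i}\d\hat s$ weights. Reorganising, this produces
\[
\int\d s\sum_i\int_{I_i}\d\hat s\,v^\eps(\omega(s)-\tilde\omega(\hat s))=\mathbb{V}^\eps(\omega,\tilde\omega)
\]
for all complete intervals. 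The bookkeeping I intend is to average the Riemann-type product $\nu\delta_\eta\cdot\mathbbold{1}_{I_i}$ and use the periodicity to replace it by $1$.

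The error comes only from the incomplete last interval $I_N$, which has length less than $\nu$, together with the mismatch between $[s]_\nu$ and $s$ at the endpoints of the $s$-interval. On those pieces I would bound crudely by $\|v^\eps\|_{L^\infty}\lesssim\eps^{-d}$. The $\tilde s$-length is $O(\nu)$, the $s$-length is $\tau_1-\tilde\tau_1$, and $\nu\int\delta_\eta\lesssim1$, which gives the error $\nu\eps^{-d}(\tau_1-\tilde\tau_1)$ claimed.

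The main obstacle will be uniformity in $\eta$. Since $\delta_\eta$ concentrates as $\eta\to0$, we may use only its integral and positivity, never pointwise bounds. We therefore have to make sure every estimate integrates $\delta_\eta$ in one time variable against quantities independent of it. The auxiliary average over $\hat s$ is designed precisely to make this possible, so this is the step where I expect the care to be needed.
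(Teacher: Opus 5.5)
Your proposal is correct and is essentially the argument behind \cite[Lemma C.1]{FKSS_24}, which the paper cites without reproducing. That argument splits the $\tilde s$-integration into the intervals $I_i$ and inserts $\frac1\nu\int_{I_i}\mathrm{d}\hat s$ on each complete interval. It then uses that $\delta_{\eta,\nu}\geq 0$ is $\nu$-periodic with unit mass over a period, which turns the frozen term into $\mathbb{V}^\eps(\omega,\tilde\omega)$. The $\nabla v^\eps$-remainder, after the symmetric relabelling $\tilde s\leftrightarrow\hat s$, goes into $\hat{\mathbb{V}}_\eta$, and the incomplete interval $I_N$ is bounded crudely by $\nu\eps^{-d}(\tau_1-\tilde\tau_1)$.

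Two small clean-ups are needed. In the main term, $\delta_\eta$ must depend on the time variable \emph{other} than the one at which $\tilde\omega$ is frozen; otherwise it does not integrate out to $1$. Also, $[s]_\nu$ here denotes $s$ modulo $\nu$, so periodicity gives $\delta_\eta([s]_\nu-[\tilde s]_\nu)=\delta_\eta(s-\tilde s)$ exactly. Hence there is no endpoint mismatch in $s$ to handle, and the remark that $[\tilde s]_\nu$ is constant on $I_i$ should be dropped, since it is not.
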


The rest of this subsection closely follows the arguments of \cite[Section 5]{FKSS_24}, with any modifications arising from the need to bound quantities of the form 
 \begin{equation}\label{eq:diff between close time BB}
        \int\WU_{x, x}^{|\mathbf{r}|, 0}(\mathrm{d} \omega )\, |{\omega}(\tilde{s})-{\omega}(\hat{s})|
    \end{equation}
for times \(\hat s, \tilde{s}\) satisfying \(\abs{\hat s-\tilde{s}}\leq\nu\), by an integrable function of \(x \in \mathbb{R}^d\). For this purpose we show the following result. 

\begin{lemma}\label{lemma: bounds on V hat}
    Let \(\omega \in \Omega_{x_1, x_1}^{\tau_1+|\mathbf{r}|, \tau_1}\) be a Brownian loop of length \(\abs{r}\deq r_1+r_2+r_3\), with \(r_i\in \nu \mathbb{N}\) for \(i=1,2,3\),  and \({\omega}_1\deq\restr{{\omega}}{[{\tau}_1, {r}_1+ {\tau}_2]}\) its restriction between the times \(\tau_1\) and \(r_1 + \tau_2\). Then
    \begin{multline}\label{eq:first_error_term5.46}
    \int_{[0, \nu]^3} \mathrm{d} \boldsymbol{\tau} \int_{[0, \nu]^3} \mathrm{d} \tilde{\boldsymbol{\tau}} \int \d  x_1 \int \d {\tilde{x}_1}\int \WU_{x_1, x_1}^{\tau_1+|\mathbf{r}|, \tau_1}(\mathrm{d} \omega)\WU_{\tilde{x}_1, \tilde{x}_1}^{\tilde{\tau}_1+ |\tilde{\mathbf{r}}|, \tilde{\tau}_1}(\mathrm{d} \tilde{\omega})\big|\hat{\mathbb{V}}_\eta(\omega_1, \omega_1)\big| \\ \lesssim  \frac{\nu^{13/2}}{\eps^{d+1}}\tr{\Bigl(\ee^{-\abs{\tilde{\mathbf{r}}}(-\Delta/2 + U/2)}\Bigr)} \frac{(1+{|\mathbf{r}|)^2}}{|\mathbf r|^{d/2}}\,.\end{multline} 
\end{lemma}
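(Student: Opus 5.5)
The plan is to insert the bound \eqref{eq:bound_on_hatVeta} from Proposition~\ref{prop:phi42:path-path_interaction_bounds}, with $\omega=\tilde\omega=\omega_1$, and then perform the Brownian-loop integrals. First, the $\tilde\omega$ integral factors off. By Lemma~\ref{Feynman-Kac formula} and \eqref{eq: heat kernel of U tilde upper bound},
\[
\int \d\tilde x_1 \int\WU^{\tilde\tau_1+|\tilde{\mathbf r}|,\tilde\tau_1}_{\tilde x_1,\tilde x_1}(\d\tilde\omega)\leq \tr\bigl(\ee^{-|\tilde{\mathbf r}|(-\Delta/2+U/2)}\bigr),
\]
uniformly in $\tilde{\boldsymbol\tau}$. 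The $\tilde{\boldsymbol\tau}$ integral then gives $\nu^3$, and the two free components of $\boldsymbol\tau$ give $\nu^2$. This leaves one $\tau_1$ integral together with the loop $\omega$.

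For the $\omega$ part, I would use $\WU\leq \mathbb W$ from $U^\eps_\eta\geq 0$ and write $\mathbb W_{x_1,x_1}=\psi^{|\mathbf r|}(0)\,\mathbb P_{x_1,x_1}$, so that the prefactor is $|\mathbf r|^{-d/2}$. Under $\mathbb P_{x_1,x_1}$ with $x=\tilde x$, Lemma~\ref{Lemma:bounds_on_the_law_of_brownian_bridge}(ii) and Cauchy--Schwarz give $\E|\omega(\tilde s)-\omega(\hat s)|\lesssim|\tilde s-\hat s|^{1/2}\leq\nu^{1/2}$, because $\tilde s,\hat s$ lie in the same interval $I_i$ of length at most $\nu$. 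In \eqref{eq:bound_on_hatVeta}, integrating over $s\in[\tau_1,\tau_2+r_1]$ against $\delta_\eta(\{s\}_\nu-\{\hat s\}_\nu)$ gives $O(1)$ per period of length $\nu$, since $\int_0^\nu\delta_\eta=1$. Summing over $s$ therefore gives a factor $\lesssim 1+|\mathbf r|/\nu$. The sum over $i\leq N\lesssim 1+|\mathbf r|/\nu$, together with $\int_{I_i}\d\tilde s\int_{I_i}\d\hat s\leq\nu^2$, gives $\nu^2(1+|\mathbf r|/\nu)$. Overall, $|\hat{\mathbb V}_\eta|$ has expectation $\lesssim \eps^{-d-1}\nu^{5/2}(1+|\mathbf r|/\nu)^2$. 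I would need to track the powers more carefully than this, but the target $\nu^{13/2}(1+|\mathbf r|)^2$ requires $\nu^{3/2}$ from the $\omega$ part after the $\nu^5$ from the time variables. That suggests the cleaner count should treat $\delta_\eta$ as a probability density in the $s$ variable, so that one $\nu$ is lost, giving $\nu\cdot\nu^{1/2}(1+|\mathbf r|)^2$.

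The main obstacle is the $x_1$ integral. The crude route bounds the Brownian-bridge increment uniformly in $x_1$, but then $\int\d x_1$ diverges, so the confining weight has to be kept. To handle this, I would retain $\WU\leq \mathbb W\,\ee^{-\int U/2}$ and apply Cauchy--Schwarz: the increment is bounded by $\nu^{1/2}$ under the bridge measure, and the remaining factor $\int\d x_1\int\mathbb W\,\ee^{-\int U/2}$ is a trace. Alternatively, I would split into $|x_1|\leq R$, where the $|\mathbf r|^{-d/2}$ bound is used, and $|x_1|>R$, where $\ee^{-|\mathbf r| U/2}$ gives decay. Since the claimed bound puts $|\mathbf r|^{-d/2}$ on the $\omega$ side and not a trace, I expect the intended argument uses a Hölder split: keep one factor $\ee^{-\int_{I}U}$ over a short subinterval to make $x_1$ integrable, and use the heat-kernel supremum for the rest. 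The $(1+|\mathbf r|)^2$ absorbs the resulting polynomial losses.
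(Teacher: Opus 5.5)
\textbf{There is a genuine gap: the $x_1$-integral is never controlled.}

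Your first steps are in line with the paper: inserting \eqref{eq:bound_on_hatVeta}, splitting off the $\tilde\omega$-loop as $\tr\bigl(\ee^{-\abs{\tilde{\mathbf r}}(-\Delta/2+U/2)}\bigr)$, and controlling the increment with Lemma~\ref{Lemma:bounds_on_the_law_of_brownian_bridge}\,(ii). However, the step you yourself call the main obstacle, the $x_1$-integral, is never carried out, and your first proposal for it is wrong as stated.

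Cauchy--Schwarz under $\mathbb P^{\abs{\mathbf r},0}_{x_1,x_1}$ does not give $\nu^{1/2}\,\E_{x_1,x_1}[\ee^{-\int U/2}]$. It gives
\[
\psi^{\abs{\mathbf r}}(0)\,\nu^{1/2}\,\Bigl(\E^{\abs{\mathbf r},0}_{x_1,x_1}\bigl[\ee^{-\int U(\omega)}\bigr]\Bigr)^{1/2}.
\]
What remains is therefore $\int\mathrm{d}x_1$ of the \emph{square root} of a diagonal heat kernel. This is not a trace, is not controlled by one, and requires pointwise decay in $x_1$.

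Your other suggestions, a cut at $\abs{x_1}=R$ and a ``Hölder split'' over a subinterval, are left unspecified. In particular, you never say how the increment, which lives on $\omega_1$, is decoupled from the confining factor. Note also that $\tau_2$ is not a free variable: it fixes the endpoint $\tau_2+r_1$ of $\omega_1$, and the paper needs its integral.

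\textbf{What the paper does.} The missing mechanism is a Markov decomposition of the loop at $x_2\deq\omega(\tau_2+r_1)$. The paper writes the $\omega$-integral as
\[
\int\mathrm{d}x_2\int\WU^{\tau_1+\abs{\mathbf r},\tau_2+r_1}_{x_1,x_2}(\mathrm{d}\omega_2)\int\WU^{\tau_2+r_1,\tau_1}_{x_2,x_1}(\mathrm{d}\omega_1)\,\abs{\omega_1(\tilde s)-\omega_1(\hat s)}\,.
\]
On $\omega_1$ (length $t_1$) it drops the potential. Lemma~\ref{Lemma:bounds_on_the_law_of_brownian_bridge}\,(ii) with distinct endpoints then gives $\psi^{t_1}(x_1-x_2)\bigl(\nu^{1/2}+\abs{x_1-x_2}\,\nu/t_1\bigr)$.

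On the complementary arc $\omega_2$ (length $t_2$) it keeps the potential. Lemma~\ref{lem:boundonexp2} bounds $\int\PU^{t_2,0}_{x_1,x_2}(\mathrm{d}\omega_2)$ uniformly in $x_2$ by a function decaying in $x_1$. This gives $\int\mathrm{d}x_1\int\PU^{t_2,0}_{x_1,x_2}\lesssim 1+t_2^{d/(2(\theta+1))}+t_2^{-d/\theta}+t_2^{-d/(2(\theta+1))}$.

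The $x_2$-integral of $\psi^{t_1}\psi^{t_2}$ then yields $\abs{\mathbf r}^{-d/2}$, and $\sqrt{t_1t_2}\,\abs{\mathbf r}^{-(d+1)/2}$ for the drift term. The singularity in $t_2$ is removed by integrating over $\tau_2$, using $d/\theta<1$, which produces $\nu(1+\abs{\mathbf r})$.

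On your power counting: the paper uses $\int\delta_\eta=1$ to treat the $s$-integral as $O(1)$, which is your second guess. The factor $\nu t_1$ comes from $\sum_i\abs{I_i}^2\le N\nu^2\lesssim\nu t_1$.

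\textbf{Your Cauchy--Schwarz idea can be completed without the split.} Apply Lemma~\ref{lem:boundonexp2} with $x=\tilde x=x_1$ and take square roots termwise. This gives $\int\mathrm{d}x_1\bigl(\E^{\abs{\mathbf r},0}_{x_1,x_1}[\ee^{-\int U(\omega)}]\bigr)^{1/2}\lesssim 1+\abs{\mathbf r}^{-d/\theta}$. The extra $\abs{\mathbf r}^{-d/\theta}$ is then absorbed by the factor $t_1\lesssim\abs{\mathbf r}$ (recall $\abs{\mathbf r}\ge\nu$), since $\abs{\mathbf r}^{1-d/\theta}\le 1+\abs{\mathbf r}$ for $\theta>d$. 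That argument, however, is not in your proposal.
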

\begin{proof}

     Using Lemma \ref{Feynman-Kac formula}, \eqref{eq:bound_on_hatVeta}, the definition of $\delta_\eta$ \eqref{eq:defdelta}, in particular \(\int \delta_{\eta} =1\), and Fubini's theorem, we find that the left-hand side of \eqref{eq:first_error_term5.46} is bounded by
    \begin{multline} 
    \label{eq:second_error_term5.46}
        \frac{C}{\eps^{d+1}}\tr\bigl(\ee^{-|\tilde{\mathbf{r}}|(-\Delta/2 + U/2)}\bigr)\\\times \int_{[0, \nu]^3} \mathrm{d} \boldsymbol{\tau}\int_{[0, \nu]^3} \mathrm{d} \tilde{\boldsymbol{\tau}}  \sum_{i=1}^N \int_{I_i} \mathrm{d} \tilde{s} \int_{I_i} \mathrm{d} \hat{s}\int \d {x_1 }\int \WU_{x_1, x_1}^{\tau_1+|\mathbf{r}|, \tau_1}(\mathrm{d} \omega ) |{\omega_1}(\tilde{s})-{\omega_1}(\hat{s})|\,.
    \end{multline}
    We now bound the distance between two positions of the Brownian bridge \(\omega_1\) in a time interval of size \(\nu\) by Lemma \ref{Lemma:bounds_on_the_law_of_brownian_bridge} \ref{Lemma:bounds_on_the_law_of_brownian_bridge(ii)}, by integrating over the final position \(x_2\) of \(\omega_1\col [\tau_1, \tau_2+r_1] \to \mathbb{R}^d\), and obtain
    \begin{equation}\label{eq:lemma5.10_first bound}\begin{split}
        &\int\WU_{x_1, x_1}^{\tau_1+|\mathbf{r}|, \tau_1}(\mathrm{d} \omega ) |{\omega_1}(\tilde{s})-{\omega_1}(\hat{s})| \\&\quad = \int \d {x_2}\int\WU_{x_1, x_2}^{\tau_1+|\mathbf{r}|, \tau_2+r_1}(\mathrm{d} \omega_2 )\int \WU_{x_2, x_1}^{\tau_2+r_1, \tau_1}(\mathrm{d} \omega_1 ) |{\omega_1}(\tilde{s})-{\omega_1}(\hat{s})| \\
         &\quad \lesssim \int   \d {x_2}\, \int \PU_{x_1, x_2}^{t_2, 0}(\d {\omega_2}) \psi^{t_2}(x_1-x_2)\psi^{t_1}(x_1-x_2)\Bigl(|\hat s-\tilde{s}|^{1/2} + \abs{x_1-x_2}\frac{|\hat s-\tilde{s}|}{{t_1}} \Bigr)
         \end{split}
    \end{equation}
        where \(t_1\deq r_1+ \tau_2-\tau_1 \) and  \(t_2\deq\abs{\mathbf{r}}-t_1\) are the lengths of the time domains of \(\omega_1\) and \(\omega_2\) respectively and we recall the notation \eqref{eq:def:WU}. Using heat kernel estimates Lemma \ref{lem:boundonexp2} with \(g(x)=1+\abs{x}^\theta\), 
        \[\int \PU_{x_1, x_2}^{t_2, 0}(\d {\omega_2}) \leq \mathbbold{1}_{t_2 \leqslant|x_1|^{-2(\theta+1)}}+\mathrm{e}^{-c|x_1|^\theta t_2}+\mathrm{e}^{-c|x_1|^{1+\theta / 2}}+\mathrm{e}^{-c(\sqrt{t_2}|x_1|^{\theta+1})^{2 / 3}}\,, \]
        and  integrating in \(x_1\), we obtain the bound
        \begin{equation}\label{eq:t_2 bound of kernel}
            \int \d {x_1}\int \PU_{x_1, x_2}^{t_2, 0}(\d {\omega_2})  \lesssim_d 1+ t_2^{d/(2(\theta+1))} + t_2^{-d/\theta} + t_2^{-d/(2(\theta+1))}\,.
        \end{equation}
        We can now compute the two integrals involving the heat kernel appearing on the right-hand side of \eqref{eq:lemma5.10_first bound}
        \begin{align}
            \int \d {x_2} \,\psi^{t_2}(x_1-x_2)\psi^{t_1}(x_1-x_2)  &\lesssim  \frac{1}{(t_1 + t_2)^{d/2}} =  \frac{1}{\abs{\mathbf r}^{d/2}}\,, \label{eq:intheatkernels1} \\
        \int \d {x_2}\, \psi^{t_2}(x_1-x_2)\psi^{t_1}(x_1-x_2)\abs{x_1-x_2} &\lesssim\frac{\sqrt{t_1t_2}}{{(t_1+t_2)^{\frac{d+1}{2}}}}= \frac{\sqrt{t_1t_2}}{{\abs{\mathbf{r}}^{\frac{d+1}{2}}}} \,.\label{eq:intheatkernels2}
        \end{align}
Notice that by definition, for any \(\hat s,\tilde{s} \in I_i, i=1\dots N\) we have \(\abs{\hat s-\tilde s}\leq \nu \) and \(\nu N\lesssim t_1\). Combining this bound together with \eqref{eq:second_error_term5.46}-\eqref{eq:intheatkernels2}, we have 
        \begin{multline}\label{eq:boundwitht2t1}
            \int \d  x_1 \int \d {\tilde{x}_1}\int \WU_{x_1, x_1}^{\tau_1+|\mathbf{r}|, \tau_1}(\mathrm{d} \omega)\WU_{\tilde{x}_1, \tilde{x}_1}^{\tilde{\tau}_1+ |\tilde{\mathbf{r}}|, \tilde{\tau}_1}(\mathrm{d} \tilde{\omega})\big|\hat{\mathbb{V}}_\eta(\omega_1, \omega_1)\big| \\\lesssim  \frac{1}{\eps^{d+1}}\tr\Bigl({\ee^{-\abs{\tilde{\mathbf{r}}}(-\Delta/2 + U/2)}}\Bigr) \bigl(1+ t_2^{d/(2(\theta+1))} + t_2^{-d/\theta} + t_2^{-d/(2(\theta+1))}\bigr)\\\times\nu t_1 \biggl(\frac{\sqrt{\nu}}{\abs{\mathbf{r}}^{d/2}}+\frac{\nu\sqrt{t_1t_2}}{t_1\abs{\mathbf{r}}^{(d+1)/2}}\biggr)\,.
        \end{multline}
        Moreover, using that \(\nu\leq\abs{\mathbf{r}}\), \(t_1, t_2\leq \abs{\mathbf{r}}\), we can further obtain 
            \begin{equation}
                \eqref{eq:boundwitht2t1}\lesssim \frac{1}{\eps^{d+1}}\tr\bigl(\ee^{-|\tilde{\mathbf{r}}|(-\Delta/2 + U/2)}\bigr)\Bigl(1+ t_2^{d/(2(\theta+1))} + t_2^{-d/\theta} + t_2^{-d/(2(\theta+1))}\Bigr)\nu^{3/2} \biggl(\frac{1}{|\mathbf{r}|^{d/2-1}}\biggr)\,. 
            \end{equation}
         We now integrate over the variables \(\tau_1, \tau_2\), and, since from Remark \ref{rmk:thetha_s2} we have \(1-d/\theta>0\), we find
        \begin{align*}
            \int_{0\leq \tau_2,\tau_1\leq \nu }\d{\tau_1}\,\d{\tau_2}\Bigl(1+ t_2^{d/(2(\theta+1))} + t_2^{-d/\theta} + t_2^{-d/(2(\theta+1))}\Bigr) & \\\lesssim_{\theta} \nu^2+\nu^2|\mathbf r| +\int_0^\nu \d{\tau_2}\, (|\mathbf r|+\nu-\tau_2)^{1-d/\theta}
            & \lesssim \nu (1+|\mathbf r|)  \,.
        \end{align*}
        We thus obtain \eqref{eq:first_error_term5.46} by integrating over the variables \(\tau_3\) and \(\tilde{\boldsymbol{\tau}}\). 
\end{proof}
We conclude this subsection by  we showing  \eqref{eq:approximation_of_J} and Lemma \ref{lemma:riemannsumestimates}. 
\begin{proof}[Proof of \eqref{eq:approximation_of_J}]
The proof proceeds along the same lines as in \cite[lemma 5.15]{FKSS_24}, where we replace the estimates involving \(\hat{\mathbb{V}}_\eta\) with \eqref{eq:first_error_term5.46}. For times \(s,t\geq0\) satisfying \(\abs{s-t}\leq\nu\), we can prove as in Lemma \ref{lemma: bounds on V hat} that 
    \begin{multline}\label{eq:diff ws wt}
    \nu\int_{[0, \nu]^3} \mathrm{d} \boldsymbol{\tau} \int_{[0, \nu]^3} \mathrm{d} \tilde{\boldsymbol{\tau}} \int \d  x_1 \int \d {\tilde{x}_1}\int \WU_{x_1, x_1}^{\tau_1+|\mathbf{r}|, \tau_1}(\mathrm{d} \omega)\WU_{\tilde{x}_1, \tilde{x}_1}^{\tilde{\tau}_1+ |\tilde{\mathbf{r}}|, \tilde{\tau}_1}(\mathrm{d} \tilde{\omega})|\omega(s)-\omega(t)| \\ \lesssim  \nu^{13/2}\tr\bigl(\ee^{-|\tilde{\mathbf{r}}|(-\Delta/2 + U/2)}\bigr) \frac{(1+|\mathbf{r}|)^2}{|\mathbf r|^{d/2}}\,,\end{multline} 
    which allows us to conclude the proof. 
\end{proof}

\begin{proof}[Proof of Lemma \ref{lemma:riemannsumestimates}]
    We can show this result following the steps of the proof Lemma \ref{lemma:Convergence of K_eta} by noting that 
    \(\int_0^\infty \d t\, \ee^{-\abs{\mathbf r}t}=\frac{1}{\abs{\mathbf {r}}}\), and 
    \begin{equation}
        |I^\eps(\mathbf{r}, \mathbf{\tilde{r}})| \lesssim \frac{1}{\eps^{4d}}\tr\bigl(\ee^{-\abs{\mathbf{r}}(-\Delta/2 + U/2)}\bigr)\tr\bigl(\ee^{-\abs{\mathbf{\tilde r}}(-\Delta/2 + U/2)}\bigr)(1 +|\mathbf{r}|+ |\mathbf{ \tilde r}|)^4\,,
    \end{equation} 
    combined with the estimates \cite[proof of lemma 5.15, claims (i)-(iii)]{FKSS_24} and \eqref{eq:diff ws wt}. 
\end{proof}

\section{Regularity of the Green function} \label{app:Greenfunction}

In this section we state and prove estimates on the Green function $G$ from \eqref{eq:defG} of the Hamiltonian \eqref{def:one body hamiltonian} and its regularised version $G_N$ from \eqref{eq:defGN} with exponential cutoff function $\vartheta(x)\deq \ee^{-x}$, as well as estimates on the gradient of $G$. As an application, we prove bounds on the function $\tau^\eps$ from \eqref{def:taueps_Eeps}. Since the cutoff function is exponential, we have
\begin{equation}
	\label{eq:GN}
	G_N = \frac{1}{h} \ee^{- h/N} =  \int_{\frac1N}^\infty \d t \, \ee^{-th}\,, \qquad G_\infty = G\,.
\end{equation}

Throughout this section, \(g\col\mathbb{R}^d \to [1, \infty)\) is a positive, radial, non-decreasing function satisfying $g \leq U$, and for fixed $0 < \gamma < 1$ we define \(\tilde{g}\) as in \eqref{eq:gtilde} .

The first main result of this section gives bounds on $G$ and the difference $G_N - G$.

\begin{prop}[Estimates on $G_N$] 
	\label{prop:GN} Let $d=2$. For $0 < N \leq \infty$, let $G_N$ the truncated Green function \eqref{eq:GN}. Then for any fixed \(0<\gamma<1\) there exists a constant \( c\equiv c_\gamma>0\) such that
	\begin{equation}\label{eq:boundG}
		G_N(x,y) \lesssim \Bigl[\Bigl( \bigl[\log \Bigl(|x-y|\sqrt{\tilde g(x)\vee \tilde g(y)}\Bigr)^{-1}\bigr]_+ \wedge \log N \Bigr)\vee 1\Bigr] \ee^{-c|x-y|\sqrt{\tilde g(x)\vee \tilde g(y)}}
	\end{equation}
	and
	\begin{equation}
		\label{eq:GN-G}
		|G_N(x,y) - G(x,y) | \lesssim   \Bigl[\log(N|x-y|^2)^{-1}\Bigr]_+  \wedge \frac{\ee^{-N|x-y|^2}}{N|x-y|^2}\,.
	\end{equation} 
\end{prop}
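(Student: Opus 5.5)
The plan is to work with the heat kernel representation \eqref{eq:GN}, $G_N(x,y)=\int_{1/N}^\infty \d t\,\ee^{-th}(x,y)$, and to bound the heat kernel $\ee^{-th}(x,y)$ pointwise using the Feynman--Kac formula (Lemma \ref{Feynman-Kac formula}). Since $U>0$, we have $\ee^{-th}(x,y)\le \ee^{-\kappa t}\psi^t(x-y)$, which already gives \eqref{eq:GN-G}. Indeed,
\[
|G_N-G|(x,y)=\int_0^{1/N}\d t\,\ee^{-th}(x,y)\le \int_0^{1/N}\d t\,\frac{\ee^{-|x-y|^2/2t}}{2\pi t}\,.
\]
After the substitution $u=|x-y|^2/(2t)$ this equals $\frac{1}{2\pi}E_1(N|x-y|^2/2)$. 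This is bounded by $C[\log(N|x-y|^2)^{-1}]_+\vee C$ for small argument and by $C\ee^{-cN|x-y|^2}/(N|x-y|^2)$ for large argument, which is the claim up to adjusting constants.

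For \eqref{eq:boundG}, the key input is a confinement bound on the heat kernel. Set $x$ to be the point with $\tilde g(x)\ge\tilde g(y)$ and $r\deq|x-y|$. Along a Brownian bridge from $y$ to $x$ in time $t$, consider the event that the path stays within distance $(1-\gamma)|x|$ of $x$ for a fraction of time of order $t$. On this event, radial monotonicity of $g$ gives $U(\omega(s))\ge g(\omega(s))\ge g(\gamma x)=\tilde g(x)$, so the Feynman--Kac weight is at most $\ee^{-ct\tilde g(x)}$. The complementary event requires the bridge to travel distance of order $|x|$, and this has Gaussian-small probability $\ee^{-c|x|^2/t}$. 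This is the same mechanism as in the heat kernel estimate Lemma \ref{lem:boundonexp2} used earlier, which I would invoke or reprove. The target is
\[
\ee^{-th}(x,y)\lesssim \frac{1}{t}\,\ee^{-r^2/(4t)}\Bigl(\ee^{-ct\tilde g(x)}+\ee^{-c|x|^2/t}\Bigr)\ee^{-\kappa t}\,.
\]

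With this bound, I split the time integral. For the main term,
\[
\int_{1/N}^\infty \frac{\d t}{t}\,\ee^{-r^2/4t-ct\tilde g(x)}\,,
\]
the standard Bessel-type computation gives $K_0$-like behaviour. If $r\sqrt{\tilde g}\le1$ it is bounded by $\log\bigl(1/(r\sqrt{\tilde g})\bigr)\vee1$, and it is also at most $\log N$ because $t\ge 1/N$ and $t\lesssim 1/\tilde g$ dominates. If $r\sqrt{\tilde g}>1$ it is bounded by $\ee^{-cr\sqrt{\tilde g}}$, by minimising $r^2/4t+ct\tilde g$ at $t\sim r/\sqrt{\tilde g}$. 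For the error term $\ee^{-c|x|^2/t}$, combining it with $\ee^{-\kappa t}$ gives $\ee^{-c|x|}$. I then need this to be dominated by $\ee^{-cr\sqrt{\tilde g(x)}}$. Small $|x|$ is harmless, since there $\tilde g$ is bounded and $\kappa t$ gives the decay. For large $|x|$, however, $\sqrt{\tilde g(x)}\gg|x|$, so the crude bound does not suffice.

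This is where the main obstacle lies. I would refine the confinement event to require the path to stay near the segment $[x,y]$, both within distance $(1-\gamma)|x|$ and within a tube of width $\ll|x|$. Then the escape probability becomes $\ee^{-c|x|^2/t}$ only for $t\gtrsim r/\sqrt{\tilde g}$, while for the relevant times $t\lesssim r/\sqrt{\tilde g}$ one has $|x|^2/t\gtrsim r\sqrt{\tilde g}\cdot|x|^2/r^2$. When $r\le|x|$ this is at least $r\sqrt{\tilde g}$. When $r>|x|$, I would instead use the symmetric roles of the two points together with $\tilde g(x)\vee\tilde g(y)$ and a chaining argument through intermediate points, using the semigroup property, to propagate the decay rate $\sqrt{\tilde g}$ along the segment. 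Getting this Agmon-type decay with the maximum of $\tilde g(x)$ and $\tilde g(y)$ rather than the minimum is the delicate step. My first attempt would be to use that, by radial monotonicity, any path from $y$ to $x$ spends time of order $t$ in the region $\{g\ge\tilde g(x)\}$ near $x$, unless it moves fast, which costs $\ee^{-r^2/t}$.
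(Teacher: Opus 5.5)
Your argument for \eqref{eq:GN-G} is correct, and shorter than the paper's, which reruns the four-term decomposition from Lemma \ref{lem:boundonexp2}. The free bound $\ee^{-th}(x,y)\le\ee^{-\kappa t}\psi^t(x-y)$ alone gives the exponential integral $\frac{1}{2\pi}E_1(N|x-y|^2/2)$, which is exactly the computation in \eqref{eq:terma}. (Both you and the paper really obtain $\bigl(1+[\log(N|x-y|^2)^{-1}]_+\bigr)\wedge \ee^{-cN|x-y|^2}/(N|x-y|^2)$, so your ``up to constants'' reading is the right one.)

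For \eqref{eq:boundG}, however, there is a genuine gap, and it sits exactly where you flag it. Your two-term target bound throws away the potential accumulated \emph{before} the path leaves the high-potential region. On the escape event you keep only the probability $\ee^{-c|x|^2/t}$, which is of order one once $t\gtrsim|x|^2$. After integrating against $\ee^{-\kappa t}$ you therefore cannot do better than $\ee^{-c\sqrt{\kappa}|x|}$. Tubes around $[x,y]$ or chaining through intermediate points do not fix this. The bad paths are those that exit early and then linger where $U$ is small, so the geometry of the segment plays no role. Your case $r>|x|$ is also harmless: taking $|x|\ge|y|$ gives both $\tilde g(x)\ge\tilde g(y)$ and $r\le2|x|$.

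The missing idea is to stop at the exit time $\sigma$ of $B(x,(1-\gamma)|x|)$. Up to $\sigma$ the path satisfies $|\omega(s)|\ge\gamma|x|$, hence $U(\omega(s))\ge\tilde g(x)$. So the Feynman--Kac weight on $\{\sigma<t\}$ is at most $\ee^{-\tilde g(x)\sigma}$. For Brownian motion, $\mathbb E_x[\ee^{-\lambda\sigma}]\lesssim\ee^{-c\rho\sqrt{\lambda}}$ with $\rho=(1-\gamma)|x|$; heuristically, $\sup_{s>0}\ee^{-\tilde g(x)s-c|x|^2/s}=\ee^{-c'|x|\sqrt{\tilde g(x)}}$. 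This bound is \emph{uniform in $t$}. It is precisely the third term $\ee^{-c(|x|+|\tilde x|)\sqrt{g(\gamma x)\vee g(\gamma\tilde x)}}$ of Lemma \ref{lem:boundonexp2}. The indicator term and the $\ee^{-ct^{1/3}(\cdots)^{2/3}}$ term there come from the bridge being pinned at its endpoint. So you should invoke Lemma \ref{lem:boundonexp2} as actually stated, rather than the weaker bound you wrote down.

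With it the paper finishes elementarily. Assume $|x|\ge|y|$ and $|x|\ge1$. The escape term then contributes $\bigl(([\log|x-y|^{-1}]_+\wedge\log N)\vee1\bigr)\ee^{-c|x|\sqrt{\tilde g(x)}}$. Write $\log|x-y|^{-1}=\log(|x-y|\sqrt{\tilde g(x)})^{-1}+\log\sqrt{\tilde g(x)}$. The extra $\log\sqrt{\tilde g(x)}$ is absorbed by half of the exponential, and $\ee^{-\frac c2|x|\sqrt{\tilde g(x)}}\le\ee^{-\frac c4|x-y|\sqrt{\tilde g(x)}}$ because $|x-y|\le2|x|$. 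The indicator and $t^{1/3}$ terms are elementary time integrals of the same type as your Bessel computation for the main term.
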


An upper bound similar to \eqref{eq:boundG} was previously obtained in \cite{CW-heatkernel}. The estimate  \eqref{eq:boundG} is a generalisation of this result to potentials with faster radial growth and to finite \(N\). 

\begin{remark}
By the same proof, the result of Proposition \ref{prop:GN} holds in any dimension $d\geq 3$ with bound
\begin{equation}\label{eq:Gdim3}
    { G(x,y) \lesssim |x-y|^{2-d}\ee^{-c|x-y|\sqrt{\tilde g(x) \vee \tilde g(y)}}\,.}
\end{equation}
\end{remark}

The second main result of this section shows that under Assumption \ref{assumption:P and D for U}, $G$ is continuously differentiable with a bound on its gradient.

\begin{prop}[Estimate on $\nabla G$] \label{lm:gradG}
Fix $d \geq 2$ and $0 < \gamma < 1$. Suppose that \(U\) satisfies Assumption \ref{assumption:P and D for U}. Then there exists a constant \( c\equiv c_\gamma>0\) such that $G$ is differentiable for all $x \neq y$ and
	\begin{equation}\label{eq:gradG}
    |\nabla_xG(x,y)| \lesssim \frac{1}{\abs{x-y}^{d-1}}\ee^{-c\abs{x-y}\sqrt{ \tilde g( y)}}\,.
	\end{equation}
\end{prop}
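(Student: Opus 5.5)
Our plan is to prove \eqref{eq:gradG} by writing $\nabla_x G(x,y)$ through a perturbative (Duhamel) representation around a locally frozen free operator, and combining this with the decay already available from Proposition \ref{prop:GN} (and its $d\geq3$ analogue \eqref{eq:Gdim3}). Set $h_0\deq \kappa-\Delta/2$ with Green function $G_0(x,y)=G_0(x-y)$, which is explicitly a Bessel function satisfying $|\nabla G_0(z)|\lesssim |z|^{1-d}\ee^{-c|z|}$. Fix $y$ and put $\lambda\deq \tilde g(y)$. Rather than expanding around $h_0$, we expand around the operator $h_\lambda\deq \kappa+\lambda/2-\Delta/2$, whose Green function $G_\lambda$ satisfies $|\nabla G_\lambda(z)|\lesssim |z|^{1-d}\ee^{-c|z|\sqrt\lambda}$. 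The resolvent identity then gives
\[
G(x,y)=G_\lambda(x-y)-\int \d w\, G_\lambda(x-w)\,\bigl(U(w)-\lambda/2\bigr)\,G(w,y)\,,
\]
and we differentiate under the integral in $x$:
\[
\nabla_xG(x,y)=\nabla G_\lambda(x-y)-\int \d w\,\nabla G_\lambda(x-w)\,\bigl(U(w)-\lambda/2\bigr)\,G(w,y)\,.
\]
Because $|x-w|^{1-d}$ is locally integrable and $U\in L^\infty_{\mathrm{loc}}$, this integral converges absolutely, and dominated convergence (using continuity of $G(\cdot,y)$ away from $y$) yields differentiability for $x\neq y$.

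The bound is then obtained by estimating the integral term. The main issue is that $U(w)$ is unbounded, so the factor $U(w)G(w,y)$ has to be controlled using the decay of $G(w,y)$. From Proposition \ref{prop:GN} we have $G(w,y)\lesssim \Phi_d(|w-y|\sqrt{\tilde g(w)\vee\tilde g(y)})\,\ee^{-c|w-y|\sqrt{\tilde g(w)\vee \tilde g(y)}}$, where $\Phi_2$ is the logarithm and $\Phi_d(r)=r^{2-d}\cdot(\ldots)$ for $d\geq3$. We split the $w$-integral into three regions: (a) $|w-y|\leq |x-y|/2$; (b) $|w-x|\leq |x-y|/2$; and (c) the remaining region.

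In region (a), $|\nabla G_\lambda(x-w)|\lesssim |x-y|^{1-d}\ee^{-c|x-y|\sqrt\lambda}$. Moreover $\int U(w)G(w,y)\,\d w$ is bounded, since under \assP{} we have $U\lesssim g$ and the term $U\ee^{-c|w-y|\sqrt{\tilde g(w)}}$ is integrable with mass $\lesssim 1$ (as $\int U\,\ee^{-c r\sqrt U}\,r^{d-1}\,\d r\sim U^{1-d/2}$, which is acceptable for $d\geq2$ at the level of a constant). In region (b) we instead use the pointwise bound on $G(w,y)\lesssim$ the value at distance $|x-y|$ and integrate $|\nabla G_\lambda(x-w)|U(w)$ over a ball; the local integrability of $|z|^{1-d}$ together with the exponential weight $\ee^{-c|x-w|\sqrt{\lambda}}$ gives a factor $\lambda^{-1/2}$. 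This factor must absorb $U(w)\lesssim U(x)$, and that is the delicate point.

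\textbf{Main obstacle.} The potential $U$ may be much larger near $x$ than $\tilde g(y)$ when $|x|\gg|y|$. To handle this we use the exponential decay $\ee^{-c|w-y|\sqrt{\tilde g(w)}}$ from $G(w,y)$, not only the factor $\sqrt{\tilde g(y)}$. Along the segment from $y$ to $w$, $\tilde g$ grows, and $U(w)\ee^{-c|w-y|\sqrt{\tilde g(w)}}\lesssim \ee^{-c'|w-y|\sqrt{\tilde g(w)}}$ whenever $|w-y|\sqrt{\tilde g(w)}\gtrsim \log \tilde g(w)$. The leftover region, where $|w-y|\lesssim \log\tilde g(w)/\sqrt{\tilde g(w)}$, is small, and there $\tilde g(w)\sim\tilde g(y)$ by passing from $\gamma$ to a slightly larger $\gamma$; this is why the estimate is stated with $\tilde g$ and a reduced constant $c$. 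Under \assD{} the growth of $g$ can be very fast, but the lower bound $U\geq cg$ together with the gradient bound controls $U$ on balls of radius $\sim g^{-1/2}$, which gives the comparability $U(w)\lesssim g(\gamma' w)^{3/2}$ needed to run the same argument.

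Region (c) is handled by $|x-w|,|w-y|\gtrsim |x-y|$, the product of the two exponentials, and the triangle inequality. If the error term does not directly close, we iterate the identity once more (a Neumann-type argument), absorbing it using $\|\,|U-\lambda/2|\,h_\lambda^{-1}\|$ restricted to the region where $U\lesssim\lambda$.
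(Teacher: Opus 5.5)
Your resolvent-identity route, expanding around $h_\lambda=\kappa+\lambda/2-\Delta/2$ with $\lambda=\tilde g(y)$ and feeding in Proposition~\ref{prop:GN}, is genuinely different from the paper's. Under \assP{} it essentially works. There $U\lesssim 1+|x|^\theta\lesssim_\gamma \tilde g$, so $|U(w)-\lambda/2|\lesssim \Lambda(w)$ with $\Lambda(w)=\tilde g(w)\vee\tilde g(y)$. Proposition~\ref{prop:GN} then gives the pointwise bound $|U(w)-\lambda/2|\,G(w,y)\lesssim |w-y|^{-d}\ee^{-c|w-y|\sqrt{\tilde g(y)}}$, which handles your region (b). Regions (a) and (c), i.e.\ $|x-w|\geq \rho/2$ with $\rho=|x-y|$, are cleanest via the exact Feynman--Kac identity $\int (\kappa+U(w))\,G(w,y)\,\dd w=1$ together with $\lambda\int G(w,y)\,\dd w\lesssim 1$. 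This also repairs your justification of (a), which used $U\lesssim g$. That proof would be shorter than the paper's and needs no information on $\nabla U$.

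Under \assD{}, however, there is a genuine gap in region (b), $|w-x|<\rho/2$. There you need pointwise control of $U(w)G(w,y)$, but Proposition~\ref{prop:GN} only gives decay of $G(w,y)$ at rate $\sqrt{\tilde g}$. Under \assD{}, $U$ can exceed $\tilde g$ by an unbounded factor: for $U(x)=\exp((1+|x|^2)^C)$ and any admissible $g\leq U/c$, one has $U(x)/\tilde g(x)\gtrsim U(x)/U(\gamma x)\to\infty$.

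Concretely, take $|x|\approx|y|$ large and $\rho=\tilde g(x)^{-1/2}$. For $w\in B(x,\rho/2)$ the only information you have is $G(w,y)\lesssim \rho^{2-d}$ (a constant if $d=2$). Your bound for region (b) is then of order $U(x)\,\rho\cdot\rho^{2-d}$, against the target $\rho^{1-d}$, so it is off by the factor $U(x)\rho^2=U(x)/\tilde g(x)$.

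Neither of your proposed repairs closes this. The comparability $U(w)\lesssim g(\gamma' w)^{3/2}$ only caps the factor by $g(\gamma' x)^{3/2}/g(\gamma x)$, which is still unbounded. The Neumann-type iteration cannot absorb anything, since $|U-\lambda/2|\,h_\lambda^{-1}$ is not small precisely where $U\gg\lambda$.

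What is missing is a Green function bound whose decay is governed locally by $U$ itself, of Agmon type, e.g.\ $U(w)G(w,y)\lesssim |w-y|^{-d}$. Such a bound would have to come from the gradient hypothesis in \assD{}, which gives $|\nabla U|\lesssim U^{3/2}$, i.e.\ $U$ is roughly constant on balls of radius $U^{-1/2}$. Your argument never uses $\nabla U$, so as it stands it cannot cover \assD{}.

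The paper avoids this by differentiating the Feynman--Kac representation, $\nabla_x G=G_1+G_2$. The term $G_2$ is controlled by Lemma~\ref{lm:gradE}, $|\nabla_x\mathbb{E}^{t,0}_{y,x}[\ee^{-\int_0^t U(X_r)\,\dd r}]|\lesssim t^{-1/2}$. Under \assD{} this comes from writing the gradient as $-\int_0^t(1-s/t)\,\mathbb{E}^{t,0}_{y,x}[\ee^{-\int_0^t U(X_r)\,\dd r}\nabla U(X_s)]\,\dd s$. The factor $|\nabla U(w)|\lesssim g(\gamma w)^{3/2}$ is then absorbed into the heat-kernel decay $\ee^{-c g(\gamma w)(t-s)}$ of the remaining bridge. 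Under \assP{} the lemma follows from mollification plus a Gr\"onwall argument.
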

\begin{remark} \label{rmk:gradG_continuous}
    For any positive locally bounded potential \(U\), we have from Proposition \ref{prop:GN} that \(U\cdot G \) is \(L^p_{\text{loc}}(\Omega)\), for any \(p\in [1,\infty]\) and \(\Omega \subset \subset \mathbb{R}^d\setminus \{(x,\cdots,x)\col x\in \mathbb{R}\}\). Using Proposition \ref{lm:gradG}, Calderón-Zygmund estimates and Sobolev embedding with \(p>d\), we find that \[x \mapsto \nabla_x G(x,y) \] is continuous for all \(x\neq y\).  
\end{remark}

The rest of this section is devoted to the proofs of Proposition \ref{prop:GN} and Proposition \ref{lm:gradG}.

\subsection{Proof of Proposition \ref{prop:GN}}
To prove Proposition \ref{prop:GN} we first need the following results from \cite[Section 7]{FKSS_2020}. Notice that these are modified versions of Lemma 7.3 and Lemma 7.4 in that paper. In particular, Lemma \ref{lem:boundonexp2} is a more general version of \cite[Lemma 7.3]{FKSS_2020}, and the proof follows using a similar stopping time argument which we omit here.

\begin{lemma}\label{lem:boundonexp2}
Let $d \geq 1$. For any fixed $\gamma \in (0,1) $ there exist constants $C, c >0$ such that
	\begin{equation} \label{eq:boundU}
		\begin{split}
			&\int \mathbb{W}_{x,\tilde x}^{t,0}(\d{\omega})\ee^{-\int_0^t \d{s}\, U(\omega(s))} \,\\
            &\quad \leq  C \psi^t(x-\tilde x)\ee^{-c \frac t2} \Bigl(\mathbbm{1}_{t\leq (g(\gamma x)\vee g(\gamma \tilde x))^{-2}(|x| +|\tilde x|)^{-2}} +\ee^{-c(g(\gamma x)\vee g(\gamma \tilde x))t} \\& \hspace{3cm}+ \ee^{-c(|x|+|\tilde x|)\sqrt{g(\gamma x)\vee g(\gamma \tilde x)}} +\ee^{-c t^{1/3}(|x|+|\tilde x|)^{2/3}(g(\gamma x)\vee g(\gamma\tilde x))^{2/3}}\Bigr)
	\end{split} \end{equation}
for all $x, \tilde x \in \R^d$ and $t > 0$.
\end{lemma}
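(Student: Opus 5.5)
The plan is to use the stopping-time argument of \cite[Lemma 7.3]{FKSS_2020}, adapted to the general growth function $g$. Write $R \deq |x|+|\tilde x|$ and $g_* \deq g(\gamma x)\vee g(\gamma \tilde x)$. By symmetry of the bridge under time reversal, assume $g_* = g(\gamma x)$, i.e.\ $|x|\geq|\tilde x|$. Factor $\mathbb{W}^{t,0}_{x,\tilde x} = \psi^t(x-\tilde x)\,\mathbb{P}^{t,0}_{x,\tilde x}$. Since $U\geq c g \geq c$, we can always pull out $\ee^{-ct/2}$ and keep half of $U$. It then suffices to bound the bridge expectation $\mathbb{E}_{x,\tilde x}^{t}\bigl[\ee^{-\frac12\int_0^t U(\omega(s))\,\dd s}\bigr]$ by the sum of the four terms in the bracket of \eqref{eq:boundU}.

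Introduce the stopping time $\sigma$, the first time (running the bridge backwards from its endpoint $x$) at which $|\omega(s)|\leq \gamma' |x|$, for some $\gamma'\in(\gamma,1)$. Before $\sigma$ the path stays in $\{|y|\geq \gamma'|x|\}$, where $U(y)\geq c\,g(\gamma x)$ because $g$ is radial and non-decreasing. We then split into three cases.

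\textbf{(a)} If $\sigma\geq t$, so the path never enters the ball, the integral is at least $c\,g_* t$, which gives the term $\ee^{-c g_* t}$.

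\textbf{(b)} If $\sigma = s < t$, then on $[0,s]$ the path accumulates $c g_* s$. It has also travelled a distance of at least $(1-\gamma')|x|\gtrsim R$ in time $s$. Using the Gaussian tail for the bridge displacement (Lemma \ref{Lemma:bounds_on_the_law_of_brownian_bridge} together with the reflection principle for the bridge),
\[
\mathbb{P}(\sigma\in \dd s)\lesssim \ee^{-cR^2/s} + (\text{drift part}),
\]
so this case is bounded by $\sup_{s\leq t}\ee^{-c g_* s - cR^2/s}$. Optimising over $s$, the value $s=R/\sqrt{g_*}$ gives $\ee^{-cR\sqrt{g_*}}$ if this $s$ lies in $(0,t)$. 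Otherwise the optimum sits at $s=t$, which gives $\ee^{-cR^2/t}$, and this is dominated by the other terms.

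\textbf{(c)} The drift of the bridge towards $\tilde x$ contributes an additional displacement of size about $R s/t$. This makes the exit cheaper when $t$ is small, and balancing $g_* s$ against the cost of the bridge fluctuation gives the cross term $\ee^{-c t^{1/3}R^{2/3}g_*^{2/3}}$. Finally, when $t\leq g_*^{-2}R^{-2}$ none of these mechanisms yields decay, and we simply bound the expectation by $1$; this is the indicator term.

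The main obstacle is case (c): making the drift estimate for the bridge precise. I would handle it by conditioning on $\omega(s)$ and using the explicit Gaussian transition density of the bridge. This yields a bound of the form $\exp(-c\,\mathrm{dist}^2\, t/(s(t-s)))$. One then minimises $g_* s + R^2 t/(s(t-s))$, separating the regimes $s\leq t/2$ and $s>t/2$; the three exponents above emerge from the different minimisers. All constants depend only on $\gamma$ through $\gamma'$, which gives uniformity in $x,\tilde x,t$.
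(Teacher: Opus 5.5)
Your skeleton (extracting $\ee^{-ct/2}$, stopping the time-reversed bridge when it first enters $\{|y|\leq\gamma'|x|\}$, and using $U\geq c\,g(\gamma x)$ before that time) is the stopping-time argument the paper alludes to when it omits the proof and refers to \cite[Lemma 7.3]{FKSS_2020}. The gap is in the step you yourself flag as the main obstacle, the drift.

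Write the reversed bridge as $Y_u=x+\frac ut(\tilde x-x)+b_u$ with $b_u=B_u-\frac ut B_t$. Whenever the segment from $x$ to $\tilde x$ meets the ball $\{|y|\leq\gamma'|x|\}$ (e.g.\ $\tilde x=0$ or $\tilde x=-x$), the mean path enters the ball at a time of order $t$. So $\P(\sigma\leq s)$ is of order one for $s\asymp t$. For such $s$ the Gaussian cost in (b) is unavailable, and the bound $\exp(-c\,\mathrm{dist}^2\,t/(s(t-s)))$ with $\mathrm{dist}\asymp R$ in (c) is false. The ``(drift part)'' is therefore never actually controlled. Even granting that bound, $g_*s+R^2t/(s(t-s))\geq\min(2R\sqrt{g_*},\,g_*t/2)$ for all $s\in(0,t)$. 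So the minimisation cannot produce the exponent $t^{1/3}R^{2/3}g_*^{2/3}$ you say emerges from it.

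The missing observation is that no probability estimate is needed once $\sigma$ is comparable to $t$. Put $D\deq(1-\gamma')|x|\geq(1-\gamma')R/2$ and $c_0\deq(1-\gamma')/4$.

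\emph{Case $\sigma\geq c_0t$.} The potential alone gives $\ee^{-cg_*c_0t}$.

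\emph{Case $\sigma\leq s\leq c_0t$.} The drift has moved the path by at most $\frac st|x-\tilde x|\leq 2c_0|x|=D/2$, so $\sup_{u\leq s}|b_u|\geq D/2$. Hence $\P(\sigma\leq s)\lesssim\ee^{-cD^2/s}$ by the reflection principle for $B$; the term $\frac ut B_t$ is harmless since $s\leq t$.

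With $a\deq cg_*$ and $\ee^{-a\sigma}=\int_\sigma^\infty a\,\ee^{-as}\,\dd s$, this gives
\[
\E\bigl[\ee^{-a\sigma}\bigr]\leq 2\ee^{-ac_0t}+C\int_0^\infty a\,\ee^{-as-cD^2/s}\,\dd s\lesssim \ee^{-ac_0t}+\ee^{-c'R\sqrt{g_*}}\,.
\]
This is \eqref{eq:boundU} with only $\ee^{-cg_*t}+\ee^{-cR\sqrt{g_*}}$ in the bracket, which suffices since the other terms are nonnegative.

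In fact those other terms are redundant. First, $t^{1/3}R^{2/3}g_*^{2/3}=(g_*t)^{1/3}(R\sqrt{g_*})^{2/3}\geq\min(g_*t,R\sqrt{g_*})$, so the last term is dominated by the two you have. Second, on $\{t\leq g_*^{-2}R^{-2}\}$ the last term is at least $\ee^{-c}$, so the indicator is dominated as well. You should simply drop case (c) and the indicator discussion rather than try to derive them.
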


\begin{proof}[Proof of Proposition \ref{prop:GN}] We start with the proof of \eqref{eq:boundG}. Without loss of generality we can assume $|x|\geq |y|$. We also assume that $|x|\geq 1$, for otherwise the statement is trivial. Hence, it is sufficient to prove that
	\[
	G_N(x,y) \lesssim \Bigl[\Bigl( \Bigl[\log \Bigl(|x-y|\sqrt{\tilde g(x)}\Bigr)^{-1}\Bigr]_+ \wedge \log N\Bigr)\vee 1\Bigr] \ee^{-c|x-y|\sqrt{\tilde g(x)}}\,.
	\]
The Feynman-Kac formula from Lemma \ref{Feynman-Kac formula} yields
	\[\begin{split}
		G_N(x,y) & =  \int_{\frac1N}^\infty \d{t} \, \ee^{-th}(x,y) = \int_{\frac 1N}^\infty \d{t}\, \ee^{-\kappa t} \int \mathbb W_{x,y}^{t,0} (\d{\omega})\, \ee^{-\int_0^{t}\d{s}\, U(\omega(s))},
	\end{split}\]
	from which we deduce using \eqref{eq:boundU} that
	\begin{align} 
				G_N(x,y)
&\lesssim 
			\int_{\frac1N}^\infty \d{t} \, \frac{1}{t}  \ee^{-\frac{|x-y|^2}{2t}}\ee^{-c \frac t2}\Bigl(\mathbbm{1}_{t \leq \tilde g(x)^{-2}|x|^{-2}} + \ee^{-c\tilde g(x) t}+ \ee^{-c|x|\sqrt{\tilde g(x)}} +\ee^{-ct^{1/3}|x|^{2/3}\tilde g(x)^{2/3}}\Bigr) \notag \\ \label{eq:G1}
			 &\eqd \text{(a)} + \text{(b)} + \text{(c)} +\text{(d)}\,.
	\end{align}
	We analyse the four terms separately. First we consider the term (a): 
	\[\begin{split}
		\text{(a)} \lesssim \int_{\frac1N}^{\tilde g(x)^{-2}|x|^{-2}} \d{t}\, \frac{1}{t}  \ee^{-\frac{|x-y|^2}{2t}} & =\,  \int_{\frac{|x-y|^2}2\tilde g(x)^2|x|^2}^{\frac{N|x-y|^2}{2}}\, \d{s}\, \frac{1}{s} \ee^{-s}\,.
	\end{split}\]
	We consider different cases: if $|x-y|^2\tilde g(x) \leq 2$ and $|x-y|^2\leq \frac2N$ then
	\begin{equation}\label{eq:a1}
		\text{(a)} \lesssim \log(\frac{ \tilde g(x)^2|x|^2}N)^{-1} \ee^{-c |x-y|^2\tilde g(x)}\lesssim \log N \ee^{-c |x-y|^2\tilde g(x)}\,.
	\end{equation}
	On the other hand, if $|x-y|^2\tilde g(x) \leq 2$ and $|x-y|^2> \frac2N$ then
	\begin{equation}\label{eq:a2}\begin{split}
			\text{(a)} &\lesssim \int_{\frac{|x-y|^2}2\tilde g(x)^2|x|^2}^1\, \d{s}\, \frac{1}{s} \ee^{-s} +\int_{1}^{\frac{N|x-y|^2}{2}}\, \d{s} \,\frac{1}{s} \ee^{-s}\\
			&\lesssim 
			\Bigl(\Bigl[-\log(|x-y| \tilde g(x)|x|)\Bigr]_+ +1\Bigr) \ee^{-c |x-y|^2\tilde g(x)^2|x|^2} \\
			&\lesssim \Bigl(\Bigl[-\log(|x-y|\sqrt{\tilde g(x)})\Bigr]_+ +1\Bigr)\ee^{-c |x-y|\sqrt{\tilde g(x)}}\,.
	\end{split}\end{equation}
	On the other hand, if $|x-y|^2\tilde g(x) >2 $, we have $\tilde g(x)|x|^2 \geq \tilde g(x)|x-y|^2 >2$, which also implies $|x-y|^2 \tilde g(x)^2 |x|^2 >2$. Thus, 
	\begin{equation}\label{eq:a3}
		\text{(a)} \lesssim \int_{\frac{|x-y|^2}2\tilde g(x)^2|x|^2}^{\frac{N|x-y|^2}{2}}\d s\, \ee^{-s} \lesssim \ee^{-c|x-y|^2\tilde g(x)^2|x|^2} \lesssim \ee^{-c|x-y|^2\tilde g(x)} \lesssim \ee^{-c|x-y|\sqrt{\tilde g(x)}}\,.
	\end{equation}
	Together \eqref{eq:a1}, \eqref{eq:a2}, and \eqref{eq:a3} are all bounded by the right-hand side of \eqref{eq:boundG}.
	
Next, we consider the term (b) in \eqref{eq:G1}. We perform a change of variable $s = \tilde g(x)t$ and use that $\tilde g(x) +\frac 12 \geq \tilde g(x)$, which yields 
	\begin{equation}\label{eq:b0}
		\begin{split}
			\text{(b)} &= \int_{\frac1N}^\infty \d{t}\, \frac{1}{t}  \ee^{-\frac{|x-y|^2}{2t}} \ee^{-c\tilde g(x) t - c\frac t2} 
			\\& \lesssim \int_{\frac{ \tilde g(x)}N}^\infty \d{s}\, \frac1s \ee^{-\frac{|x-y|^2\tilde g(x)}{2s}}\ee^{-s}\\
			& =  \int_{\frac{ \tilde g(x)}N}^1 \d{s}\, \frac1s \ee^{-\frac{c|x-y|^2\tilde g(x)}{2s}}\ee^{-s} +\int_1^\infty  \d{s}\, \frac1s \ee^{-\frac{c|x-y|^2\tilde g(x)}{2s}}\ee^{-s}\,,
		\end{split}
	\end{equation}
	where, without loss of generality,  we assumed $\frac{ \tilde g(x)}N \leq 1$. Notice that, whenever  $\frac{ \tilde g(x)}N > 1$, we can bound $(b)$ with the second integral on the right-hand side on the last line of \eqref{eq:b0}.
	We perform another change of variables in the first integral on the right-hand side of \eqref{eq:b0} to get
	\begin{equation}\label{eq:intb}\begin{split}
			\int_{\frac{\tilde g(x)}N}^1 \d{s}\, \frac1s \ee^{-\frac{c|x-y|^2\tilde g(x)}{2s}}\ee^{-s} & \lesssim \int_{\frac{\tilde g(x)}N}^1 \d{s}\, \frac1s \ee^{-\frac{c|x-y|^2\tilde g(x)}{2s}} \\
			& = \int_{c|x-y|^2\tilde g(x)}^{cN|x-y|^2} \d{t}\, \frac1t \ee^{-t}  = \int_{c|x-y|^2\tilde g(x)}^1 \d{t}\, \frac1t \ee^{-t}  + \int_1^{cN|x-y|^2} \d{t}\, \frac1t \ee^{-t}
	\end{split}\end{equation}
	for $cN|x-y|^2> 1 $. Hence,
	\[\begin{split}
		\int_{c|x-y|^2\tilde g(x)}^1 \d{t}\, \frac1t \ee^{-t}  + \int_1^{cN|x-y|^2} \d{t}\, \frac1t \ee^{-t}  &\lesssim \bigl([-\log(|x-y|^2\tilde g(x))]_+ +1 \bigr)\ee^{-c |x-y|^2\tilde g(x)}\,.
	\end{split}\]
	If $ cN|x-y|^2\leq 1$ then the right-hand side of \eqref{eq:intb} can be bounded as 
	\[
	\int_{c|x-y|^2\tilde g(x)}^{cN|x-y|^2} \d{t}\, \frac1t \ee^{-t} \lesssim \log \pbb{\frac{\tilde g(x)}N}^{-1} \ee^{-c |x-y|^2\tilde g(x)}\lesssim \log N \ee^{-c |x-y|^2\tilde g(x)}\,.
	\]
	The second integral in \eqref{eq:b0} can be estimated using an elementary argument by
	\[
	\int_1^\infty  \d{s}\, \frac1s \ee^{-\frac{c|x-y|^2\tilde g(x)}{2s}}\ee^{-s} \lesssim \ee^{-c |x-y|\sqrt{\tilde g(x)}}\,.
	\]

Next, we consider the term (c) in \eqref{eq:G1}. We argue analogously to \eqref{eq:b0}, \eqref{eq:intb}. Using $|x-y| \leq 2 |x|$ we get
	\[\begin{split}
	&\int_{\frac 1N}^\infty \d{t}\, \frac{1}{t}  \ee^{-\frac{|x-y|^2}{2t}}\ee^{-c|x|\sqrt{\tilde g(x)}} \ee^{-c\frac t2}\\ &\quad\lesssim \Bigl(\Bigl( \bigl[\log|x-y|^{-1}\bigr]_+ \wedge \log N \Bigr) \vee 1 \Bigr) \ee^{-c |x|\sqrt{\tilde g(x)}}\\ 
    &\quad = \Bigl(\Bigl( \Bigl[\log(|x-y|\sqrt{\tilde g(x)})^{-1} + \log(\sqrt{\tilde g(x)})\Bigr]_+ \wedge \log N \Bigr) \vee 1 \Bigr) \ee^{-c |x|\sqrt{\tilde g(x)}}\\
    &\quad \lesssim \Bigl(\Bigl( \Bigl[\log(|x-y|\sqrt{\tilde g(x)})^{-1} \Bigr]_+ \wedge \log N \Bigr) \vee 1 \Bigr) \ee^{-c |x|\sqrt{\tilde g(x)}}\,,
    \end{split}
	\]
	where in the last step we used that $\log(\sqrt{\tilde g(x)})\ee^{-c/2\sqrt{\tilde g(x)}}$ is of order one.

Finally, we consider the term (d) in \eqref{eq:G1}. This time we perform the change of variables $s = t (\tilde g(x)|x|)^2$ to get 
	\[\begin{split}
		\int_{\frac 1N}^\infty \d{t}\, \frac{1}{t}  \ee^{-\frac{|x-y|^2}{2t}}\ee^{-ct^{1/3}(|x|\tilde g(x))^{2/3}} & = \int_{\frac{ (\tilde g(x)|x|)^2}N}^\infty \d s \, \frac{1}{s} \ee^{-\frac{c|x-y|^2(\tilde g(x)|x|)^2}{s}} \ee^{-s^{1/3}}\\
		& \lesssim \int_{\frac{ (\tilde g(x)|x|)^2}N}^1 \d s\, \frac1s \ee^{-\frac{c\lambda}{s}} + \int_1^\infty \d s\, \frac1s \ee^{-\frac{c\lambda}{s}}\ee^{-s^{1/3}}\,,\\
		& = (\r d_1) + (\r d_2)\,,
	\end{split}\]
	where we set $\lambda  = |x-y|^2(\tilde g(x)|x|)^2 \geq |x-y|^4\tilde g(x)^2$ and assuming (without loss of generality, similarly as in \eqref{eq:b0}) that $ \frac{\tilde g(x)}N \leq 1$. For the term $(\r d_1)$, proceeding along the same lines as before, if $|x-y|^2 > \frac{2}N $ then
	\begin{equation}\label{eq:d1}\begin{split}
			(\r d_1) & = \int_{\lambda}^{\frac{N|x-y|^2}{2}} \d u\, \frac1u \ee^{-u}=  \int_{\lambda}^{1} \d u\, \frac1u \ee^{-u} + \int_{1}^{\frac{N|x-y|^2}{2}} \d u\, \frac1u \ee^{-u} \lesssim \bigl([-\log(\lambda)]_+ +1 \bigr)\ee^{-\lambda}\,.
	\end{split}\end{equation}
	On the other hand, if $|x-y|^2 \leq \frac{2}N$ then
	\begin{equation}\label{eq:d12}(\r d_1) = \int_{\lambda}^{\frac{N|x-y|^2}{2}} \d u\, \frac1u \ee^{-u} \lesssim (\log N)\ee^{-\lambda}\,,
	\end{equation}
	which satisfies the bound \eqref{eq:boundG}. For the term $(\r d_2)$ instead we have 
	\begin{equation}\label{eq:d2}
		\begin{split}
			(\r d_2) \lesssim \int_1^{\lambda^{3/4}} \d s\,  \ee^{-\frac{c\lambda}{s}}\mathbbm{1}_{\lambda\geq 1} + \int_{\lambda^{3/4}\vee 1}^\infty \d s\,  \ee^{-s^{1/3}}\lesssim \ee^{c\lambda^{-1/4}}\,
		\end{split}
	\end{equation}
	since $\lambda \geq |x-y|^4\tilde g(x)^2$. Putting \eqref{eq:d1}, \eqref{eq:d12}, and \eqref{eq:d2} together, we conclude the proof of \eqref{eq:boundG}.

We now prove  \eqref{eq:GN-G}. As above,
\[\begin{split}
	G(x,y) -G_N(x,y) & = \int_0^\infty \d t\,  \ee^{-t h}(x,y) - \int_{\frac1N}^\infty \d t\,\ee^{-t h}(x,y) \\
	& = \int_0^{\frac1N} \d t\,  \ee^{-t \k}\int \mathbb{W}_{x,y}^{t,0}(\d \omega) \, \ee^{-\int_0^t\d s\, U(\omega(s))}\,.
\end{split}\]
Using Lemma \ref{lem:boundonexp2} and assuming $|x| \geq |\tilde x|$, and $|x| \geq 1$,  we obtain
\[
\begin{split}
	\big|G(x,y) -G_N(x,y)\big| &\leq  \int_0^{\frac 1N} \d t\, \frac{1}{t}  \ee^{-\frac{|x-y|^2}{2t}}\ee^{-\frac{c}{2}t}\Bigl(\mathbbm{1}_{t \leq \tilde g(x)^{-2}|x|^{-2}} + \ee^{-c\tilde g(x) t}\\
	&\hspace{5cm}+ \ee^{-c|x|\sqrt{\tilde g(x)}} +\ee^{-ct^{1/3}|x|^{2/3}\tilde g(x)^{2/3}}\Bigr)\,\\
	& \eqd (\r a) +(\r b) +(\r c) +(\r d)\,.
\end{split}
\]
We begin with (a), and consider the cases $\tilde g(x)^2|x|^2 > N$ and $\tilde g(x)^2|x|^2 \leq N$ separately. For  $\tilde g(x)^2|x|^2> N$ we have  a similar bound as in  \eqref{eq:a2},
\[
(\r a) \lesssim \int_{\frac{|x-y|^2}2\tilde g(x)^2|x|^2}^\infty\, \d s \,\frac{1}{s} \ee^{-s}\,.
\]
If $|x-y|^2\tilde g(x) \leq 2$ then we split the integral as 
\[\begin{split}
(a) \lesssim &\int_{\frac{|x-y|^2}2\tilde g(x)^2|x|^2}^1\, \d s\, \frac{1}{s} \ee^{-s}  + \int_1^\infty\, \d s\, \frac{1}{s} \ee^{-s} \\\lesssim & \Bigl[\log(\tilde g(x)^2|x|^2|x-y|^2)^{-1}\Bigr]_+ + 1 \lesssim\Bigl[\log(N|x-y|^2)^{-1}\Bigr]_+ + 1 \,,
\end{split}\]
 using that $\tilde g(x)^2|x|^2 >N$. On the other hand, if $|x-y|^2\tilde g(x) > 2$, then $|x-y|^2\tilde g(x) > N |x-y|^2$. Thus we have
 \[
 \int_{\frac{|x-y|^2}2\tilde g(x)^2|x|^2}^\infty\, \d s \,\frac{1}{s} \ee^{-s} \lesssim \frac{\ee^{-c|x-y|^2\tilde g(x)^2|x|^2}}{|x-y|^2\tilde g(x)^2|x|^2} \lesssim \frac{\ee^{-cN|x-y|^2}}{N|x-y|^2}\,.
 \]
Let us now consider when $ \tilde g(x)^2|x|^2\leq N$, we perform a change of variable to get 
\begin{equation}\begin{split} \label{eq:terma}
		(a) & \lesssim  \int_0^{\frac1N} \d t\, \frac{1}{t}  \ee^{-\frac{|x-y|^2}{2t}} = \int_{\frac{N|x-y|^2}{2}}^\infty \d s\, \frac{1}{s} \ee^{-s} \\& \lesssim \mathbbm{1}_{N|x-y|^2<2}\Bigl(\int_{\frac{N|x-y|^2}{2}}^1 \d s\, \frac{1}{s} \ee^{-s} + \int_1^\infty \d s\, \frac{1}{s} \ee^{-s}\Bigr)  + \mathbbm{1}_{N|x-y|^2>2}\int_{\frac{N|x-y|^2}{2}}^\infty \d s\, \frac{1}{s} \ee^{-s} \\ & \lesssim \Bigl( 1+ \Bigl[\log(N|x-y|^2)^{-1}\Bigr]_+\Bigr) \wedge \, \frac{\ee^{-cN|x-y|^2}}{N|x-y|^2}\,.
\end{split}\end{equation} 

Next, we consider the term (b). We proceed similarly as in \eqref{eq:intb} to obtain
\[
\int_0^{\frac1N} \d{t}\, \frac{1}{t} \ee^{-\frac{|x-y|^2}{2t}}\ee^{-\frac{c}{2}t -c\tilde g(x)t} \lesssim \int_0^{\frac{\tilde g(x)}N}\d s\, \frac{1}{s}\ee^{-\frac{|x-y|^2\tilde g(x)}{2s}}\ee^{-s} \, .
\]
If $\tilde g(x)\geq N$ we split the term as   
\[
\begin{split}
      \int_0^{1}\d s\, \frac{1}{s}\ee^{-\frac{|x-y|^2\tilde g(x)}{2s}}  +  \int_1^{\frac{\tilde g(x)}N}\d s\, \frac{1}{s}\ee^{-\frac{|x-y|^2\tilde g(x)}{2s}}\ee^{-s}   &\lesssim \int_{\frac{|x-y|^2\tilde g(x)}{2}}^\infty \d u\, \frac{\ee^{-u}}{u} + \ee^{-|x-y|^2N}\\
      & \lesssim  1+ \Bigl[\log(N|x-y|^2)^{-1}\Bigr]_+ + \ee^{-|x-y|^2N}\,.
\end{split}
\]
Otherwise, if $\tilde g(x)<N$ ,
\[
\begin{split}
    \int_0^{\frac{\tilde g(x)}N}\d s\, \frac{1}{s}\ee^{-\frac{|x-y|^2\tilde g(x)}{2s}}\ee^{-s} &\lesssim \int_{\frac{|x-y|^2N}{2}}^\infty \d u\, \frac{1}{u}\ee^{-u} \\
    & \lesssim 1+ \Bigl[\log(N|x-y|^2)^{-1}\Bigr]_+ \wedge \frac{\ee^{-|x-y|^2N}}{N|x-y|^2}\,,
\end{split}
\]
where the last inequality is obtained by considering the cases $|x-y|^2 <N^{-1}$ and $|x-y|^2> N^{-1}$ separately.

The terms (c) and (d) are treated similarly. This concludes the proof.
\end{proof}

To conclude this subsection, we estimate the function $\tau^\eps$ from \eqref{def:taueps_Eeps}. 
\begin{lemma}\label{lemma:bounds on taueps}
    Let \(d  = 2,3\) and recall \eqref{eq:def:chieps}. Consider \(U\geq0\) and \(\tau^\eps \equiv \tau^{\eps,U}\) defined as in \eqref{def:taueps_Eeps}. Then for any $\epsilon > 0$ we have \[\|\tau^\eps \|_{L^\infty} \lesssim_{\kappa,v} \chi(\varepsilon)\,.\] 
\end{lemma}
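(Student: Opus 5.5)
Since $v$ is smooth, compactly supported (say in the unit ball) and $\int v = 1$, the plan is to bound $\tau^\eps$ pointwise by controlling $G(x,\tilde x)$ uniformly in $x$ for $|x-\tilde x|\le \eps$, and then integrate against $v^\eps$. Note that $v$ need not be nonnegative, so I work with $|v^\eps|$ throughout.

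First I reduce to the free Green function. By the Feynman--Kac formula (Lemma \ref{Feynman-Kac formula}) and $U\ge 0$,
\[
0\le G(x,\tilde x)=\int_0^\infty \d t\,\ee^{-\kappa t}\int \mathbb W^{t,0}_{x,\tilde x}(\d\omega)\,\ee^{-\int_0^t \d s\,U(\omega(s))}\le \int_0^\infty \d t\,\ee^{-\kappa t}\psi^t(x-\tilde x)=(\kappa-\Delta/2)^{-1}(x,\tilde x).
\]
This pointwise domination is uniform in $x$ and in $U\ge 0$, so it yields a bound on $\|\tau^\eps\|_{L^\infty}$ that depends only on $\kappa$ and $v$. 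Denote the right-hand side by $G_\kappa(x-\tilde x)$.

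Next I estimate the free kernel directly from the heat kernel \eqref{eq:heatkernel}. Writing $r=|x-\tilde x|$, I split the time integral at $t=r^2$ and at $t=1$. This gives, for $r\le 1$,
\[
G_\kappa(r)\lesssim_\kappa \chi(r),
\]
which reads $1+\log r^{-1}$ in $d=2$ and $r^{-1}$ in $d=3$. For $r>1$, $G_\kappa(r)$ is bounded by a constant depending on $\kappa$, with exponential decay. Equivalently, in $d=2$ one can read off the bound from \eqref{eq:boundG} with $N=\infty$ and $\tilde g\ge 1$; in $d=3$ one uses \eqref{eq:Gdim3}.

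Finally I insert this into the definition of $\tau^\eps$:
\[
|\tau^\eps(x)|\le \int \d z\,|v^\eps(z)|\,G_\kappa(z)=\int \d h\,|v(h)|\,G_\kappa(\eps h).
\]
For $d=2$ and $\eps\le 1/2$, $\log(\eps|h|)^{-1}=\log\eps^{-1}+\log|h|^{-1}$, and $\int |v(h)|\,\bigl|\log|h|\bigr|\,\d h<\infty$ because $v$ is bounded with compact support and the logarithm is locally integrable. For $d=3$, $\int |v(h)|\,(\eps|h|)^{-1}\,\d h=\eps^{-1}\int |v|\,|h|^{-1}\,\d h<\infty$. In both cases this gives $|\tau^\eps(x)|\lesssim_{\kappa,v}\chi(\eps)$ uniformly in $x$. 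For $\eps$ bounded below, the same computation gives a constant bound, which is absorbed by interpreting $\chi(\eps)\vee 1$ for large $\eps$.

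There is no real obstacle. The only care needed is to use domination by the free resolvent, rather than the $U$-dependent bound, so that the constant is independent of $U$. The rest is the elementary splitting of the heat-kernel time integral.
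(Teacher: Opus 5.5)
Your argument is correct and follows the paper's proof. Both bound $G(x,\tilde x)$ near the diagonal by a constant times $\chi(|x-\tilde x|)$ and then integrate against $|v^\eps|$ after rescaling. The only difference is that you get the near-diagonal bound from the Feynman--Kac domination $G^U\le(\kappa-\Delta/2)^{-1}$, whereas the paper cites Proposition \ref{prop:GN} and its $d=3$ analogue \eqref{eq:Gdim3}. Your route is slightly cleaner, since it uses only $U\ge 0$ as in the statement and makes the constant manifestly independent of $U$.
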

\begin{proof}
Using Proposition \ref{prop:GN}, we get 
    \begin{align*}
        \abs{\tau^\eps(x)}& \leq  \int \d y\, \abs{v^\eps(x-y)}G(x,y)\\&\lesssim_{v}\,\frac{1}{\eps^d} \int_{\abs{z}<\eps}\d z \,\Bigl(\log\abs{z}^{-1} \wedge1 \cdot \mathbbold{1}_{d=2} +\abs{z}^{-1}\mathbbold{1}_{d=3} \Bigr) \\
        &\lesssim\, \chi(\eps)\,. \qedhere
    \end{align*}
\end{proof}

\subsection{Proof of Proposition \ref{lm:gradG}}

The key estimate behind the proof of Proposition \ref{lm:gradG} is the following bound on the gradient of the expectation of the Brownian bridge.

\begin{lemma} \label{lm:gradE} Let \(U\) satisfy Assumption \ref{assumption:P and D for U} and \(d\geq 2\). Then,  for almost every \(x,y \in \mathbb{R}^d\),
\[\Big|\nabla_{x} \mathbb{E}_{y,x}^{t,0}\Bigl[\ee^{- \int_0^t \d{s}\, U(X_s)}\Bigr]\Big| \lesssim \frac{1}{\sqrt{t}} \,,\]
where \(\mathbb{E}_{y,x}^{t,0}\) denotes the expectation with respect to the law of the Brownian bridge measure $\P_{y,x}^{t,0}$ from Section \ref{sec:brownian}, between \(x\) and \(y\) over time \(t\) .
\end{lemma}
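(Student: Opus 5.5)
We want a bound on $\nabla_x F(x)$, where $F(x)\deq \mathbb{E}_{y,x}^{t,0}[\ee^{-\int_0^t U(X_s)\,\d s}]$, that is uniform in $y$ and in the growth of $U$. The plan is to realise the Brownian bridge from $x$ to $y$ as an explicit affine function of $x$ and differentiate along the path. Concretely, we write
\[
X_s = B_s - \tfrac{s}{t}B_t + \bigl(1-\tfrac{s}{t}\bigr)x + \tfrac{s}{t}y ,
\]
with $B$ a standard Brownian motion started at $0$. The law of $B_s - \frac{s}{t}B_t$ does not depend on $x$, so $x$ enters only through the deterministic shift $(1-s/t)x$.

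\textbf{Case \assD{}.} Here $U\in W^{1,\infty}_{\mathrm{loc}}$, and we may differentiate under the expectation after a localisation/truncation argument. This gives
\[
\nabla_x F = -\mathbb{E}\Bigl[\int_0^t \bigl(1-\tfrac{s}{t}\bigr)\nabla U(X_s)\,\d s\; \ee^{-\int_0^t U(X_s)\,\d s}\Bigr].
\]
Differentiating naively would yield a factor $\int_0^t|\nabla U|$, which is not $\lesssim t^{-1/2}$. The key is to use the assumption $|\nabla U(z)|\lesssim g(\gamma z)^{3/2}$, whereas $U(z)\gtrsim g(z)$. Since the path spends time near points where $U$ is large, $\ee^{-\int U}$ kills the large values of $\nabla U$. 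For $s$ in a window where the path stays in a region on which $g(\gamma\,\cdot)\le g(\cdot)$ is comparable, we use the elementary bound
\[
a^{3/2}\ee^{-\tau a}\lesssim \tau^{-3/2}.
\]
This is not directly enough, so the route I would take is a Bismut–Elworthy–Li / integration-by-parts alternative that avoids $\nabla U$ altogether.

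\textbf{Main route (covers \assP{} as well).} We Girsanov-shift the bridge. Write $x$-dependence via the density of the bridge with respect to a reference path, or equivalently use the Markov property at time $t/2$:
\[
F(x)\,\psi^t(y-x) = \int \d z\, \ee^{-\frac{t}{2}h_0}(\cdot)\ldots ,
\]
so that the $x$-dependence sits only in a heat kernel on $[0,t/2]$ whose path integral is Feynman–Kac. Differentiating the Gaussian shift via the Cameron–Martin formula gives
\[
\nabla_x F = \mathbb{E}_{y,x}^{t,0}\Bigl[\Phi\; \ee^{-\int_0^t U(X_s)\,\d s}\Bigr] - \text{(normalisation term)},
\qquad \Phi = \frac{c}{t}\int_0^{t} \kappa(s)\,\d B_s ,
\]
where $\kappa$ is a bounded deterministic weight (for instance $\kappa(s)=\mathbbm{1}_{s\le t/2}$). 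The normalisation term arises from differentiating the factor $\psi^t$, and it is of order $|x-y|/t$. Then, using $0\le \ee^{-\int U}\le 1$ and Cauchy–Schwarz,
\[
|\nabla_x F| \le \mathbb{E}|\Phi| \lesssim \frac{1}{t}\sqrt{t} = \frac{1}{\sqrt t},
\]
provided the bridge drift contributions combine correctly with the $|x-y|/t$ term.

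\textbf{Main obstacle.} This step, which I expect to be the hardest, is checking exactly that cancellation: the bridge SDE drift $(y-X_s)/(t-s)$ must be handled by working with the free Brownian motion on $[0,t/2]$, conditioned on $X_{t/2}$, so that no $|x-y|$ term survives. Concretely, I would define $\tilde F(x)=\mathbb{E}_x\bigl[\ee^{-\int_0^{t/2}U}\,k(X_{t/2})\bigr]$ with $k$ the Feynman–Kac kernel from time $t/2$ to $y$. I would then apply Bismut's formula to $\tilde F$, which yields $|\nabla \tilde F|\lesssim t^{-1/2}\,\mathbb{E}_x[k(X_{t/2})^2]^{1/2}$, and divide by $\psi^t(x-y)$. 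This ratio is not immediately $O(1)$, so I would instead bound
\[
\mathbb{E}_x\Bigl[|B_{t/2}-\ldots|\,\ee^{-\int U}\,k(X_{t/2})\Bigr] \le \mathbb{E}_x\bigl[|\Phi|\,\psi^{t/2}(X_{t/2}-y)\bigr]
\]
and compute this Gaussian integral explicitly, obtaining $\lesssim t^{-1/2}\psi^t(x-y)$. Positivity of $U$ is used only through $\ee^{-\int U}\le 1$, and the almost-everywhere qualifier accounts for the local regularity of $U$ needed to justify differentiating under the expectation.
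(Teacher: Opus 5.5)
\textbf{There is a genuine gap.} Your main route needs a representation
$\nabla_x F=\mathbb{E}^{t,0}_{y,x}\bigl[\Phi\,\ee^{-\int_0^t U(X_s)\,\dd s}\bigr]-(\text{normalisation})$
with a weight $\Phi$ that does not see $U$, after which you use only $0\le \ee^{-\int U}\le 1$. Such a representation is not available for a Feynman--Kac functional.

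The bridge from $x+\epsilon v$ is $X_s+\epsilon(1-\frac{s}{t})v$, and bridges from different starting points have mutually singular laws. A Girsanov density can therefore only absorb the part of this shift that vanishes at $s=0$. The remainder stays inside $\ee^{-\int_0^t U}$ near time $0$ and differentiates to $-\int_0^t\langle k(s),\nabla U(X_s)\rangle\,\dd s\;\ee^{-\int U}$ for some $k$ with $k(0)=v$ (or, via Duhamel, to terms containing $U$ itself). Bismut's formula is free of such terms only when the functional ignores the path near time $0$, e.g.\ depends only on the endpoint. Your $\tilde F$ does not, since it still carries $\ee^{-\int_0^{t/2}U}$ exactly where the $x$-dependence lives.

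Indeed the bound cannot follow from positivity alone. Take $U_\lambda=\lambda\mathbbm{1}_{\{x_1>0\}}$, $x_1=-\delta$, $y_1=-L<0$. Then $F_\lambda(x)\to 1-\ee^{-2\delta L/t}$ as $\lambda\to\infty$ (the probability that the bridge avoids the half-space), while $F_\lambda\to 0$ on $\{x_1=0\}$. With $\delta=t/L$, the Lipschitz constant of $F_\lambda$ in $x_1$ is $\gtrsim L/t$ for large $\lambda$, which is unbounded in $y$ at fixed $t$.

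\textbf{What is missing} is the bridge heat-kernel bound of Lemma~\ref{lem:boundonexp2}, applied after conditioning on $X_s=w$; this is where the structure of $U$ enters.

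Under \assD{}, the route you abandoned is exactly the paper's, and it closes. By the Markov property and Lemma~\ref{lem:boundonexp2} on $[s,t]$,
\[
\Bigl(1-\frac{s}{t}\Bigr)\abs{\nabla U(w)}\,\mathbb{E}^{t,s}_{y,w}\Bigl[\ee^{-\int_s^t U(X_r)\,\dd r}\Bigr]\lesssim \frac{t-s}{t}\,g(\gamma w)^{3/2}\,\ee^{-c g(\gamma w)(t-s)}+\dots\lesssim \frac{1}{t\sqrt{t-s}}\,,
\]
with the decay in $g(\gamma w)$ matching the growth of $\nabla U$ allowed by \assD{}. Bounding the first-segment expectation by $1$ and integrating the Gaussian marginal of $X_s$ gives $\int_0^t\frac{\dd s}{t\sqrt{t-s}}=2t^{-1/2}$. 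So your bound $a^{3/2}\ee^{-\tau a}\lesssim\tau^{-3/2}$ was enough once the factor $1-\frac{s}{t}$ is kept.

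Under \assP{}, the paper mollifies, $V_\delta=U*\varphi_\delta$, and integrates by parts in $w$, moving the derivative off $V_\delta$ onto the Gaussian bridge density and onto the first-segment expectation.
\begin{itemize}
\item The Gaussian part produces the \emph{centred} weight $\frac{x-y}{t}-\frac{x-w}{s}$, whose first absolute moment is $\lesssim\sqrt{(t-s)/(ts)}$. This is how the $|x-y|/t$ cancellation you flagged is realised; an uncentred weight such as $(B_{t/2}-x)/(t/2)$ leaves a term of order $|x-y|/t$.
\item Combined with $\sup_w V_\delta(w)\,\mathbb{E}^{t,s}_{y,w}\bigl[\ee^{-\int_s^t V_\delta(X_r)\,\dd r}\bigr]\lesssim (t-s)^{-1}$, again from Lemma~\ref{lem:boundonexp2}, this yields a Gr\"onwall inequality uniform in $\delta$.
\item The limit $\delta\to 0$ is handled by the resulting uniform Lipschitz bound and Rademacher's theorem, which is where ``almost every'' comes from.
\end{itemize}

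Your idea of confining the $x$-dependence to a free kernel on $[0,t/2]$ can also be made to work under \assP{}. Use Duhamel's formula instead of the Markov property:
$\ee^{-t h_U}=\ee^{t\Delta/4}\,\ee^{-\frac{t}{2}h_U}-\int_0^{t/2}\ee^{s\Delta/2}\,U\,\ee^{-(t-s)h_U}\,\dd s$, with $h_U\deq-\Delta/2+U$.
After normalising by $\psi^t(x-y)$ this gives $t^{-1/2}+\int_0^{t/2}\sqrt{\frac{t-s}{ts}}\,\frac{\dd s}{t-s}\lesssim t^{-1/2}$. The correction term still needs that same heat-kernel input.
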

\begin{proof}
Suppose first that $U$ is replaced with a smooth potential \(V\col \mathbb{R}^d \to [0, \infty)\). The Brownian bridge $(X_s)_{s\in [0,t]}$ going from $x$ to $y$ in time $t$ can be written in terms of standard Brownian motion $(B_s)_{s\geq 0}$ as
\begin{equation}
\label{eq:bbcoupling}
X_s=B_s - \frac{s}{t}B_t + x + \frac{s}{t}(y-x)\,.
\end{equation}
\noindent
We denote the expectation with respect to Brownian motion $(B_s)_{s\geq 0}$ by $\bE[\,\cdot \,]$. Thus,
\begin{equation*}
\begin{split}
&\nabla_x \mathbb{E}_{y,x}^{t,0} \Bigl[\ee^{-\int_0^t\d s \, V(X_s)}\Bigr] \\
&\; = \nabla_x\bE\Bigl[\ee^{-\int_0^t \d s\, V(B_s-\frac{s}{t}B_t + x +\frac{s}{t}(y-x))}\Bigr]\\
&\; = - \bE\biggl[\ee^{-\int_0^t \d r\, V(B_r-\frac{r}{t}B_t + x +\frac{r}{t}(y-x))}\int_0^t \d s\, \biggl(1-\frac{s}{t}\biggr)\nabla V\pbb{B_s - \frac{s}{t}B_t + x +\frac{s}{t}(y-x)}\biggr]\\
&\; =-\int_0^t \d s\, \biggl(1-\frac{s}{t}\biggr) \bE\biggl[\ee^{-\int_0^t \d r\, V(B_r-\frac{r}{t}B_t + x +\frac{r}{t}(y-x))}\nabla V\biggl(B_s -\frac{s}{t} B_t + x +\frac{s}{t}(y-x)\biggr)\biggr]\,.
\end{split}
\end{equation*}
\noindent
Therefore, 
\begin{equation}\label{eq:gradEandgradV}
\nabla_x \mathbb{E}_{y,x}^{t,0} \Bigl[\ee^{-\int_0^t \d s\, V(X_s)}\Bigr]=-\int_0^t \d s\, \biggl(1-\frac{s}{t}\biggr)\mathbb{E}_{y,x}^{t,0}\Bigl[\ee^{-\int_0^t\d r\, V(X_r)}\nabla V(X_s)\Bigr]\,.
\end{equation}
We focus on the expectation inside the integral, conditioning on the value of $X_s$,
\begin{equation}\label{gt}
\begin{split}
&\mathbb{E}_{y,x}^{t,0}\Bigl[\ee^{-\int_0^t \d r\, V(X_r)}\nabla V(X_s)\Bigr]\\
&\quad =\int \d w\, \mathbb{E}_{y,x}^{t,0}\Bigl[\ee^{-\int_0^t \d r \, V(X_r)}\delta(X_s-w)\Bigr]\nabla V(w)\\
&\quad =\int \d w\, \mathbb{E}_{y,x}^{t,0}\Bigl[\ee^{-\int_0^s \d r \, V(X_r)}\delta(X_s-w)\ee^{-\int_s^t \d r \, V(X_r)}\Bigr]\nabla V(w)\\
&\quad =\int \d w\,\frac{\psi^s(x-w)\psi^{t-s}(w-y)}{\psi^t(x-y)} \bE_{w,x}^{s,0}\Bigl[\ee^{-\int_0^s \d r\, V(X_r)}\Bigr]\bE_{y,w}^{t,s}\Bigl[\ee^{-\int_s^t\d r\, V(X_r)}\Bigr]\nabla V(w)\,,
\end{split}
\end{equation}
where in the last equality we used the Markov property of Brownian motion to rewrite the probability measure of the Brownian bridge.

Let us first consider \(U\) satisfying \assD{} in Assumption \ref{assumption:P and D for U}. Let \(g\) be the lower bound on \(U\) and \(0 < \gamma<1\) given by Assumption \ref{assumption:P and D for U}. Then \eqref{gt} holds for \(V=U\). Using the heat kernel bounds Lemma \ref{lem:boundonexp2}, we find 
\begin{align}
   &\absbb{ \Bigl(1-\frac{s}{t}\Bigr)\bE_{y,w}^{t,s}\Bigl[\ee^{-\int_s^t\d r\, U(X_r)}\Bigr]\nabla U(w)}\nonumber \\ &\quad \lesssim\frac{t-s}{t}\ee^{-ct} \Bigl(\mathbbm{1}_{t-s\leq (g(\gamma y)\vee g(\gamma w))^{-2}(|y| +|w|)^{-2}}  +\ee^{-c(g(\gamma y)\vee g(\gamma w))(t-s)} + \ee^{-c(|y|+|w|)\sqrt{g(\gamma y)\vee g(\gamma w)}}\nonumber \\& \qquad \qquad  \qquad +\ee^{-c (t-s)^{1/3}(|y|+|w|)^{2/3}(g(\gamma y)\vee g(\gamma w))^{2/3}}\Bigr) \abs{\nabla U(w)} \label{eq:estimate_grdE}\\ &
   \quad \lesssim\frac{\ee^{-ct}}{t} \abs{\nabla U(w)} \Biggl( \frac{1}{g(\gamma w)^2 \abs{w}^2} + \frac{(t-s)}{(t-s)^{3/2}g(\gamma w)^{3/2}} + \frac{(t-s)}{g(\gamma w)^2 \abs{w}} + \frac{(t-s)}{(t-s)g(\gamma w)^2 \abs{w}^2} \Biggr)\nonumber\\
   & \quad  \lesssim \frac{1}{t\sqrt{t-s}}\nonumber
\end{align}
for some positive constant \(c\equiv c_U>0\) and for all \(w \in \mathbb{R}^d\). 
Thus,
\begin{equation*}
\absB{\nabla_x \mathbb{E}_{y,x}^{t,0} \Bigr[\ee^{-\int_0^t \d s\,U(X_s)}\Bigr]}\lesssim\int_0^t \d{s}\,\frac{1}{t\sqrt{t-s}} \lesssim \frac{1}{\sqrt{t}}\,,
\end{equation*}
as desired.

Next, suppose now that \(U\) satisfies \assP{} in Assumption \ref{assumption:P and D for U} for some \(\theta\geq0\). Consider \(V\equiv V_\delta=U *\varphi_\delta \) where \(\{\varphi_\delta\}_{\delta>0}\) is a family of mollifiers such that  \(\varphi_\delta\) has support inside the ball of radius \(\delta>0\). Notice that \(V_\delta\) is smooth, and for any \(\delta<1\), \(V_{\delta}\) satisfies \assP{} for the same power \(\theta\). Moreover, 
\[\abs{\nabla V_\delta(x) }\leq\int \d{y}\, U(y) \abs{\nabla \varphi_{\delta}(x-y)} \lesssim \frac{1}{\delta^{d+1}}(1+\abs{x}^\theta). \]
Thus, using Lemma \ref{lem:boundonexp2} with \(g\col x \mapsto1+\abs{x}^\theta\), \eqref{eq:gradEandgradV}, \eqref{eq:estimate_grdE} and the fact that \(\abs{\nabla V_\delta(x) }/g(\gamma x) \lesssim_\gamma \delta^{-d-1}\) for any \(\gamma>0\), 
\begin{equation}\label{eq:firstGrownwall}
    \absB{\nabla_x \mathbb{E}_{y,x}^{t,0} \Bigl[\ee^{-\int_0^t\d{s}\, V_{\delta}(X_s)}\Bigr]}\lesssim_U \frac{1}{\delta^{d+1}} \int_0^t \d{s} \,\Bigl(1-\frac{s}{t}\Bigr)\frac{1}{(t-s)} \lesssim \frac{1}{\delta^{d+1}} 
\end{equation}
is bounded uniformly in $x,y \in \R^d$ and $t > 0$. We integrate by parts in \eqref{gt}, which yields
\begin{multline*}
    \mathbb{E}_{y,x}^{t,0}\Bigl[\ee^{-\int_0^t\d{r}\, V_\delta(X_r)}\nabla V_\delta(X_s)\Bigr]\\=
-\int \d w\, V_\delta(w)\biggl(\Psi(w)\, \bE_{w,x}^{s,0}\Bigl[\ee^{-\int_0^s \d r \, V_\delta(X_r)}\Bigr]\bE_{y,w}^{t,s}\Bigl[\ee^{-\int_s^t \d r\, V_\delta(X_r)}\Bigr]\qquad\\
+\frac{\psi^s(x-w)\psi^{t-s}(w-y)}{\psi^t(x-y)} \,\bE_{y,w}^{t,s}\Bigl[\ee^{-\int_s^t\d r\, V_\delta(X_r)}\Bigr]\nabla_x\bE_{w,x}^{s,0}\Bigl[\ee^{-\int_0^s\d r\, V_\delta(X_r)}\Bigr]\biggr)\,,
\end{multline*}
where we defined
\begin{equation*}
    \Psi(w) \deq \nabla_x \pbb{\frac{\psi^s(x-w)\psi^{t-s}(w-y)}{\psi^t(x-y)}}
= \biggl(\frac{x-y}{t}-\frac{x-w}{s}\biggr)\frac{\psi^s(x-w)\psi^{t-s}(w-y)}{\psi^t(x-y)}\,.
\end{equation*}
Introducing the notations
\begin{align*}
    f_\delta(t,x,y)&\deq \nabla_x \mathbb{E}_{y,x}^{t,0} \bigl[\ee^{-\int_0^t \d s\, V_\delta(X_s)}\bigr]\,,\\
f_\delta(t)&\deq\sup\limits_{x,y \in \mathbb{R}^d}  | f_\delta(t,x,y)|\,,
\end{align*}
we have
\begin{multline}\label{eq:g2}
f_\delta(t,x,y) =
\int_0^t \d s\,\Bigl(1-\frac{s}{t}\Bigr)\int \d w \,V_\delta(w)\Psi(w) \, \bE_{w,x}^{s,0}\Bigl[\ee^{-\int_0^s \d r\, V_\delta(X_r)}\Bigr]\bE_{y,w}^{t,s}\Bigl[\ee^{-\int_s^t \d r\, V_\delta(X_r)}\Bigr]\\   
     +\int_0^t \d s\,\Bigl(1-\frac{s}{t}\Bigr)\int \d w \,V_\delta(w)\frac{\psi^s(x-w)\psi^{t-s}(w-y)}{\psi^t(x-y)} \bE_{y,w}^{t,s}\Bigl[\ee^{-\int_s^t\d r\, V_\delta(X_r)}\Bigr]f_{\delta}(s,x,w)\,.
\end{multline}
Define 
\begin{align*}
\alpha(s)&\deq \Bigl(1-\frac{s}{t}\Bigr)\sup_{x,y \in \mathbb{R}^d} \bigg|\int \d w \,V_\delta(w)\Psi(w) \, \bE_{w,x}^{s,0}\Bigl[\ee^{-\int_0^s \d r \,V_\delta(X_r)}\Bigr]\bE_{y,w}^{t,s}\Bigl[\ee^{-\int_s^t \d r\, V_\delta(X_r)}\Bigr]\bigg|\,;\\
\beta (s)& \deq \Bigl(1-\frac{s}{t}\Bigr)\sup\limits_{x,y\in \mathbb{R}^d}\bigg| \int \d w\, V_\delta(w)\frac{\psi^s(x-w)\psi^{t-s}(w-y)}{\psi^t(x-y)} \bE_{y,w}^{t,s}\Bigl[\ee^{-\int_s^t \d r\, V_\delta(X_r)}\Bigr]\bigg|\,.
\end{align*}
To estimate $\alpha(s)$ and $\beta(s)$, we use Hölder's inequality, the fact that  \[\Big\|V_{\delta}(\,\cdot\,)\bE_{y,\,\cdot\,}^{t, s}\Bigl[\ee^{-\int_s^t \d r\, V_\delta(X_r)}\Bigr]\Big\|_\infty \lesssim \frac{1}{t-s}\]
and 

\begin{equation}\label{eq:Egaussianxyw}
    \int \d w\, \bigg|\frac{x-y}{t} - \frac{x-w}{s}\bigg|\frac{\psi^s(x-w)\psi^{t-s}(y-w)}{\psi^t(x-y)}=\frac{1}{s}\mathbb{E}_{\mathcal{W}}\bigl[|\mathcal{W}-\mu|\bigr]\,\lesssim \frac{\sigma}{s}\,,
\end{equation}
where $\mathcal{W}$ is a Gaussian random variable with law $\mathcal{N}(\mu, \sigma^2 I_d)$, with $\mu=x+\frac{s}{t}(y-x)$ and $\sigma^2=\frac{s(t-s)}{t}$. 
Thus, we can bound
\[\a(s) \lesssim\frac{\sqrt{t-s}}{t^{3/2}}\frac{1}{\sqrt{s}} , \quad\beta (s) \lesssim \frac{1}{t}.\]
Using \eqref{eq:firstGrownwall} and \eqref{eq:g2}, we find 
\begin{equation*}
f_\delta(t)\leq\int_0^t \d s\, \alpha(s)+\int_0^t \d s \,\beta (s) f_\delta(s)\, < \infty\,. 
\end{equation*}
We can thus use Grönwall's inequality to bound
\begin{equation} \label{f_d_estimate}
f_\delta(t)  \leq \int_0^t \d{s}\,\alpha(s)+\int_0^t \d{s}\, \Bigl(\int_0^s \d r\,\alpha(r) \Bigr)\beta(s) \exp \biggl(\int_s^t  \d r\, \beta(r)\biggr) \lesssim \frac{1}{\sqrt{t}}\,.
\end{equation}
This estimate holds for any fixed \(t>0\) uniformly in \(\delta>0\). The rest of the proof consists in controlling the limit $\delta \to 0$.

 We claim that for any $\epsilon > 0$
\begin{equation}\label{eq:Pu-vdelta}
\lim_{\delta \to 0} \mathbb{P}_{y,x}^{t,0}\biggl(\absbb{ \int_0^t \d s\,\bigl(U(X_s)- V_\delta(X_s)\bigr)}> \eps\biggr)= 0
\end{equation}
To prove \eqref{eq:Pu-vdelta}, we choose \(\eta>0\) and then \(R_\eta >0\) such that \[\P_{y,x}^{t,0}(E_{R_\eta}) \geq 1 - \eta\,, \qquad E_{R} \deq \hB{\max_{0 \leq s \leq t} \abs{X_s}\leq R}\,.\] We can thus bound
\[\begin{aligned}
& \P_{y,x}^{t,0}\biggl(\bigg|\int_0^t \mathrm{~d} s\, \bigl(U\bigl(X_s\bigr)-V_\delta\bigl(X_s\bigr)\bigr)\bigg|>\varepsilon\biggr) \\
& \qquad \leqslant \P_{y,x}^{t,0}\biggl(\int_0^t \mathrm{~d} s\,\big|U\bigl(X_s\bigr)-V_\delta\bigl(X_s\bigr)\big|>\varepsilon\biggr) \\
& \qquad \leqslant \P_{y,x}^{t,0}\biggl(\biggl\{\int_0^t \mathrm{~d} s\,\big|U\bigl(X_s\bigr)-V_\delta\bigl(X_s\bigr)\big|>\varepsilon\biggr\} \cap E_{R_\eta}\biggr)+\P_{y,x}^{t,0}\bigl(E_{R_\eta}^c\bigr) \\
& \qquad \leqslant \frac{1}{\varepsilon} \mathbb{E}_{y,x}^{t,0}\biggl[\int_0^t \mathrm{~d} s\,\big|U\bigl(X_s\bigr)-V_\delta\bigl(X_s\bigr)\big| \mathbbold{1}_{|X_s| \leqslant R_\eta}\biggr]+\eta\,,
\end{aligned}\]
where we used the Markov inequality and \(E_{R_\eta}\subset \{\abs{X_s}\leq R_\eta\}\) for any \(0\leq s\leq t \). 
We can now estimate
\begin{equation}\label{eq:Etxy_integral}
     \mathbb{E}_{y,x}^{t,0}\biggl[ \int_0^t \d s\, \big|U(X_s)- V_\delta(X_s)\big|\mathbbold{1}_{\abs{X_s}\leq R_\eta}\biggr] \leq \int_\nu^{t-\nu} \d s\, \mathbb{E}_{y,x}^{t,0}\Bigl[ \big|U(X_s)- V_\delta(X_s)\big|\mathbbold{1}_{\abs{X_s}\leq R_\eta}\Bigr] + C_{R_\eta}\nu
\end{equation}
for any \(0<\nu<t/2\) using the fact that \(U, V_\delta \) are locally bounded. To bound the first term on the right-hand side of \eqref{eq:Etxy_integral}, we use the fact that \(\nu\leq s\leq t-\nu\) and thus 

\begin{equation}\label{eq:first term of Exyt}
    \begin{aligned}
        &\int_\nu^{t-\nu} \d{s}\, \mathbb{E}^{t,0}_{y,x}\Bigl[ \big|U(X_s)- V_\delta(X_s)\big|\mathbbold{1}_{\abs{X_s}\leq R_\eta}\Bigr]\\
        &\qquad\qquad=\int_\nu ^{t-\nu} \d s\int_{B_{R_\eta}} \d{w}\,\frac{\psi^s(x-w)\psi^{t-s}(y-w)}{\psi^t(x-y)}\abs{V_\delta(w) - U(w)} \\&\qquad\qquad\leq  C_{\nu,x,y,t} \int_{B_{R_\eta}} \d{w}\,\abs{V_\delta(w)-U(w)}\,.
    \end{aligned}
\end{equation}
Since \(V_\delta\) converges to \(U\) in \(L^1_{\operatorname{loc}}\), \eqref{eq:first term of Exyt} tends to zero as $\delta \to 0$. Taking first $\delta \to 0$, then $\nu \to 0$, and then $\eta \to 0$, we therefore conclude \eqref{eq:Pu-vdelta}. 

From \eqref{eq:Pu-vdelta} and the Vitali convergence theorem, it follows that
\[\mathbb{E}_{y,x}^{t,0}\Bigl[\ee^{- \int_0^t \d s\, U(X_s)}\Bigr] = \lim_{\delta \to 0}\mathbb{E}_{y,x}^{t,0}\Bigl[\ee^{- \int_0^t \d s\, V_{\delta}(X_s)}\Bigr]\,.\]
Hence, for any \(x,y,z \in \mathbb{R}^d\),
\begin{align*}
&\mspace{-10mu} \Bigr|\mathbb{E}_{y,x}^{t,0}\Bigl[\ee^{- \int_0^t \d s\, U(X_s)}\Bigr]-\mathbb{E}_{y,z}^{t,0} \Bigl[\ee^{- \int_0^t \d s\, U(X_s)}\Bigr]  \Bigl| \\ & = \lim_{\delta \to 0}\absB{\mathbb{E}_{y,x}^{t,0}\Bigl[\ee^{- \int_0^t \d s\, V_{\delta}(X_s)}\Bigr]-\mathbb{E}_{y,z}^{t,0}\Bigl[\ee^{- \int_0^t \d s\, V_{\delta}(X_s)}\Bigr] } \\
    & = \lim_{\delta \to 0} \absbb{\int_0^1 \d{\lambda}\, {\nabla_{2} \mathbb{E}_{y,\lambda x+(1-\lambda)z}^{t,0}\Bigl[\ee^{- \int_0^t \d s\, V_{\delta}(X_s)}\Bigr]\cdot (x-z)}} \\
    &\leq  \limsup_{\delta \to 0} f_{\delta}(t) \cdot \abs{x-z} \lesssim \frac{1}{\sqrt{t}}\cdot \abs{x-z}\,,
\end{align*}
where we used \eqref{f_d_estimate}. 
Since \(x \mapsto \mathbb{E}_{y,x}^{t,0}\bigl[\ee^{- \int_0^t \d s\, U(X_s)}\bigr]\) is \(\frac{C}{\sqrt{t}}-\)Lipschitz for some \(C>0\), by Rademacher's theorem it is almost everywhere differentiable and
\[\bigg|{\nabla_x \mathbb{E}_{y,x }^{t,0}\Bigl[\ee^{- \int_0^t \d s\, U(X_s)}\Bigr]}\bigg|\lesssim \frac{1}{\sqrt{t
}}\]
for almost every \(x \in \mathbb{R}^d.\)
\end{proof}
We are now ready to prove Proposition \ref{lm:gradG}.
\begin{proof}[Proof of Proposition \ref{lm:gradG}] 
Using Lemma \ref{Feynman-Kac formula}, we write the kernel of the Green function \(G\) as
\[
G(x,y) = \int_0^\infty \d t\, \ee^{-\kappa t} \psi^t(x-y) \mathbb E_{y,x}^{t,0}\Bigl[\ee^{-\int_0^t \d s\, U(X_s)}\Bigr]\,.
\]
Using the definition of $\psi^t(x)$ as in \eqref{eq:heatkernel}, then we have
\[
\begin{split}
    \nabla_x G(x,y) & =\int_0^\infty \d t\, \ee^{-\kappa t} \frac{x-y}{t}\psi^t(x-y) \mathbb  E_{y,x}^{t,0}\Bigr[\ee^{-\int_0^t \d s\, U(X_s)}\Bigl]\\
    & + \int_0^\infty \d t\, \ee^{-\kappa t} \psi^t(x-y) \nabla_x\mathbb  E_{y,x}^{t,0}\Bigl[\ee^{-\int_0^t \d s\, U(X_s)}\Bigr] \eqd  G_1 +G_2\,.
\end{split}
\]
  
   Following the proof of Proposition \ref{prop:GN}, we now have 
\[\abs{G_1} \lesssim |x-y|^{-d+1}\ee^{-c|x-y|\sqrt{\tilde g(x) \vee \tilde g(y)}}\]
for some \(c>0\). 

We can estimate \(G_2\) similarly. Indeed, from \eqref{eq:Egaussianxyw}, Lemmas \ref{lm:gradE} and \ref{lem:boundonexp2}, we get
\begin{align*}
    \bigg|\nabla_x \mathbb  E_{y,x}^{t,0}\Bigr[& \ee^{-\int_0^t \d s\, U(X_s)}\Bigl]\bigg|\\ 
    & =\bigg|\int \d w \, \nabla_x \bigg(\frac{\psi^{t/2}(x-w)\psi^{t/2}(w-y)}{\psi^{t}(x-y)}\bigg)\bE_{y,w}^{t,t/2}\Bigl[\ee^{-\int_{t/2}^t \d s\,U(X_s)}\Bigr]\, \bE_{w,x}^{t/2,0}\Bigl[\ee^{-\int_0^{t/2} \d s \, U(X_s)}\Bigl] \bigg|\\&\quad+\bigg|\int \d w \, \frac{\psi^{t/2}(x-w)\psi^{t/2}(w-y)}{\psi^{t}(x-y)}\bE_{y,w}^{t,t/2}\Bigl[\ee^{-\int_{t/2}^t \d s\,U(X_s)}\Bigr]\, \nabla_x \bE_{w,x}^{t/2,0}\Bigl[\ee^{-\int_0^{t/2} \d s \, U(X_s)}\Bigl] \bigg| \\&\lesssim \frac{1}{\sqrt{t}} \sup_{w \in \mathbb{R}^d}\bE_{y,w}^{t,t/2}\Bigl[\ee^{-\int_{t/2}^t \d s\, U(X_s)}\Bigr]\bigg(\sup_{w \in \mathbb{R}^d}\bE_{w,x}^{t/2,0}\Bigl[\ee^{-\int_0^{t/2} \d s\, U(X_s)}\Bigr]+1\bigg) \\
    & \lesssim \frac{1}{\sqrt{t}}\ee^{-ct} \Bigl(\mathbbm{1}_{t-s\leq g(\gamma y)^{-2}|y|^{-2}}  +\ee^{-cg(\gamma y)(t-s)} + \ee^{-c|y|\sqrt{g(\gamma y)}} +\ee^{-c (t-s)^{1/3}|y|^{2/3}g(\gamma y)^{2/3}}\Bigr) \,.
\end{align*}
We can thus estimate
\[\abs{G_2} \lesssim \abs{x-y}^{-d+1}\ee^{-c|x-y|\sqrt{ \tilde g( y)}}\,. \]

By combining the estimates on \(G_1\) and \(G_2\), we find \[ \abs{\nabla_x G(x,y)}\lesssim \abs{x-y}^{-d+1}\ee^{-c|x-y|\sqrt{ \tilde g( y)}}\,. \qedhere\]

\end{proof}

\section{The counterterm problem}\label{Section:Counterterm problem}

This section is devoted to the proof of Theorem \ref{thm:Solution to the counterterm problem}. In particular, we extend the result of \cite[Section 5]{frohlich2017gibbs} by solving the counterterm problem, defined in \eqref{counterterm problem}, for a wider class of potentials.
An important requirement for the solutions of the counterterm problem constructed in this section is that they give rise to a well-defined \(\phi^4_2\)
field theory. We prove that this holds for all bare potentials 
\(\mathcal{U}\) satisfying Assumption \ref{assumption:P and D for U}, assuming the growth condition \eqref{eq:boundOnMaxg} on the lower bound $g$.

The main difference between the argument presented below and previous works lies in the use of heat kernel estimates (see Proposition \ref{prop:GN}), which allow us to treat faster-growing potentials without relying solely on an increase of the chemical potential \(\kappa\). In addition, our approach requires control of the gradient of solutions to \eqref{counterterm problem}, a property that is essential for constructing the associated field theory via the methods of Section \ref{sec:field} whenever \(\mathcal{U}\) grows radially faster than a polynomial.

The proof of Theorem \ref{thm:Solution to the counterterm problem} relies on showing that the function
\begin{equation}\label{fixed point function for counterterm problem}
    \Phi(U)\deq {\mathcal{U}}+(\tau^\eps-\tau^{\eps,{0}})  +v^\eps *(\varrho_{\nu}^{U} -\varrho_\nu^0) 
\end{equation}
admits a fixed point in an appropriate Banach space which we shall define below.
To this end, for any \(\vartheta\col\mathbb{R}^d\to(0,\infty)\), we define the norm
\[
f\in L^\infty_{\operatorname{loc}}(\mathbb{R}^d) \longmapsto \sup\limits_{x\in \mathbb{R}^d}\bigg|\frac{f(x)}{\vartheta(x)}\bigg|\eqd\|f\|_\vartheta \in (0,\infty]\,.
\]
In the following, we shall consider two different norms and their associated Banach spaces. In particular, for \(\alpha\in \{0,1\}\), we define the Banach space
\[
\mathcal{B}_\alpha\deq\{\,f \in W^{\alpha,\infty}_{\operatorname{loc}}(\mathbb{R}^d) \col \|f\|_{\mathcal{U},\alpha}<\infty\,\}\,,
\] 
with the norm \[\|f\|_{\mathcal{U},\alpha} \deq \|f\|_{\mathcal{U}}+\alpha \|\nabla f\|_{ {\tilde g}^{3/2}}\,,\]
and \(\tilde{g}\) is defined in \eqref{eq:gtilde}. This choice allows us, when \(\mathcal{U}\) is differentiable, to control appropriately the gradient of the solution of \eqref{counterterm problem}. 
Consider \[B_{\vartheta,r}({\mathcal{\vartheta}})\deq\Bigl\{f \in L^\infty_{\operatorname{loc}}(\mathbb{R}^d)\col \|f-{\mathcal{\vartheta}}\|_{\mathcal{\vartheta}} \leq r\Bigr\}\]
the ball of radius \(r>0\) around \(\vartheta\) with respect to the norm \(\| \cdot\|_\vartheta\).  We set
\[
B_{r}^0(\mathcal{U)}  \deq  B_{\mathcal{U}, r}(\mathcal{U}),
\]
\[
B_{r}^1(\mathcal{U}) \deq 
\Bigl\{f \in W^{1,\infty}_{\text{loc}}\col f \in B_{\mathcal U,r}(\mathcal{U}),  \nabla f \in B_{ {\tilde g}^{3/2},\tilde{r}}\bigl({{\tilde g}^{3/2}}\bigr)  \Bigr\}\,,
\]
where \[\tilde{r}= \|\nabla \mathcal{U}\|_{ \tilde{g}^{3/2}} + 2\,. \]

\begin{remark}\label{Remark on the radius of the ball}
    Consider \(r\in (0,1)\). If \(u \in B_{\mathcal{U},r}({\mathcal{U}})\), then
    \[
    (1-r) {\mathcal{U}}(x) \leq u(x)\leq  (1+r) {\mathcal{U}}(x) \quad \text{a.s.}
    \]
     Hence, we get that \(u\) is a positive function and 
    \[
    \tr{(h^u) ^{-2}} \leq (1-r)^{-1}\tr{(h^{\mathcal{U}}) ^{-2}} < \infty \,.
    \] 
Finally, we observe that if \(g \lesssim \mathcal{U}\), then \(g \lesssim_r u\).
\end{remark}

For \(\alpha \in \{0,1\}\), \(B^{\alpha}_r(\mathcal{U})\) are complete metric spaces, where the metric is the one induced by the norm on \(\mathcal{B}_\alpha\). The existence and uniqueness of solutions to the counterterm problem (for fixed \(\eps,\nu >0\)) follows by showing that the map \(\Phi\) satisfies, for any \(r\in (0,1)\), \(\Phi(B_{r}^\alpha({\mathcal{U}})) \subset B_{r}^\alpha({\mathcal{U}})\), and it is a contraction on \(B^{\alpha}_r(\mathcal{U})\), see \cite[Theorem 5.2]{frohlich2017gibbs}.
 
In the following arguments, \eqref{eq:boundOnMaxg} will allow us to control the behaviour of \(\mathcal{U}\) in a neighbourhood of a given point \(x \in \mathbb{R}^2\) using \eqref{cp:boundU}, as we can then estimate    
    \begin{equation}
        \label{rmk:cp_bound_U}\esssup_{\abs{x-y}\leq 1} \mathcal{U}(x) \lesssim_c g^{3/2}\bigl((\gamma+c) x\bigr)
    \end{equation} 
    for any \(c>0\).  

\begin{lemma}\label{lemma:bounds_countertermproblem_tau}
    Let \(r \in (0,1)\),  and \(u, u_1, u_2 \in B_{r}^\alpha({\mathcal{U}})\) be positive functions. Then, uniformly in \(\eps>0\) small enough and for  \(\kappa>1\),  there exists \(C>0\) such that, 
    \begin{enumerate}[label=(\roman*)]
        \item\label{tauU_minus_tau0} \[\abs{(\tau^{\eps,u}-\tau^{\eps,0})(x)  }\leq     \frac{C}{\sqrt{\kappa}}\|u\|_{\mathcal{U}} \, g(x)\, ,\]
        \item\label{tauU_minus_tauU}  \[\abs{(\tau^{\eps,u_1}-\tau^{\eps,u_2})(x)  }\leq   \frac{C}{\sqrt{\kappa} }\|u_1-u_2\|_{\mathcal{U}} \,g(x) \, ,\]
        \item \label{gradtau} if  \(\mathcal{U} \) satisfies \assD{}, \[\abs{\nabla\tau^{\eps,u}(x)  }\leq \frac{C}{{\kappa }^{1/4}}\|u\|_\mathcal{U}\, {\tilde g}^{3/2}(x) \]
        and 
        \[\abs{\nabla (\tau^{\eps,u_1} - \tau^{\eps,u_2})(x) } \leq \frac{C}{{\kappa }^{1/4}}\|u_1-u_2\|_\mathcal{U} \, {\tilde g}^{3/2}(x)\,. \]
    \end{enumerate}
    
\end{lemma}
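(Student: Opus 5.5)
We work throughout with the representation
\[
(\tau^{\eps,u}-\tau^{\eps,0})(x)=\int \d y\, v^\eps(x-y)\,\bigl(G^u-G^0\bigr)(x,y),
\qquad
G^u-G^0=-\frac{1}{h^u}\,u\,\frac{1}{h^0},
\]
where the second identity is the resolvent formula. The argument then splits into three steps.

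\textbf{Step 1: statement (i).} Since $v^\eps$ is supported in $B_\eps(x)$ and $\int|v^\eps|\lesssim 1$, it suffices to bound $\sup_{|y-x|\le\eps}|(G^u-G^0)(x,y)|$ by $C\kappa^{-1/2}\|u\|_{\mathcal U}\,g(x)$. Writing out the kernel,
\[
|(G^u-G^0)(x,y)|\le\int \d w\, G^u(x,w)\,u(w)\,G^0(w,y).
\]
We insert $u(w)\le \|u\|_{\mathcal U}\,\mathcal U(w)$ and split the integral into $|w-x|\le 1$ and $|w-x|>1$.
\begin{itemize}
\item On $|w-x|\le 1$, we bound $\mathcal U(w)$ using \eqref{rmk:cp_bound_U}. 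We use $G^u\le G^0$ (Feynman--Kac, $u>0$) and the free Green function bound $G^0(x,y)\lesssim(\log|x-y|^{-1}\vee1)\,e^{-c\sqrt\kappa|x-y|}$. The integral $\int \d w\,G^0(x,w)G^0(w,y)=(h^0)^{-2}(x,y)$ is bounded by $\lesssim\kappa^{-1}$, uniformly near the diagonal in $d=2$. This gives $\kappa^{-1}$ times the local size of $\mathcal U$.
\item The difficulty is that the local size of $\mathcal U$ is only controlled by $g^{3/2}(\tilde\gamma x)$, not by $g(x)$. To fix this we use Proposition \ref{prop:GN} for $G^u$ rather than $G^0$. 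Since $u\gtrsim g$ by Remark \ref{Remark on the radius of the ball}, $G^u(x,w)$ carries the decay $e^{-c|x-w|\sqrt{\tilde g(x)}}$. Hence $\int \d w\,G^u(x,w)\,\mathcal U(w)$ is effectively an average over a ball of radius $\tilde g(x)^{-1/2}$, giving a contribution $\lesssim \mathcal U_{\rm loc}/\tilde g(x)$.
\item Alternatively, one can use the cleaner operator bound $\|(h^u)^{-1}u\|\le 1$, which follows from $u\le h^u$. Pointwise, this reads $\int G^u(x,w)u(w)\,\d w\le 1$. Combined with $G^0(w,y)$ bounded off the diagonal, this gives the bound.
\item The factor $\kappa^{-1/2}$ comes from interpolating between the two bounds: $\int G^u u\le1$ on one side and the $\kappa$-decay of $G^0$ on the other. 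Taking a geometric mean of the factors, using $u\le h^u$ and $\kappa\le h^0$, gives $\kappa^{-1/2}$.
\item The surplus is absorbed into $g(x)$ using $g\ge1$ and the growth assumption \eqref{eq:boundOnMaxg}.
\end{itemize}

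\textbf{Step 2: statement (ii).} The second resolvent identity gives
\[
G^{u_1}-G^{u_2}=-\frac{1}{h^{u_1}}\,(u_1-u_2)\,\frac{1}{h^{u_2}}.
\]
Here $|u_1-u_2|\le\|u_1-u_2\|_{\mathcal U}\,\mathcal U$, so the argument of Step 1 applies verbatim.

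\textbf{Step 3: statement (iii).} We differentiate under the integral:
\[
\nabla\tau^{\eps,u}(x)=\int v^\eps(x-y)\,\bigl(\nabla_1+\nabla_2\bigr)(G^u-G^0)(x,y)\,\d y.
\]
To avoid $\eps^{-1}$ factors we write the difference as $\int G^u(x,w)\,u(w)\,G^0(w,y)\,\d w$. The gradient in $x$ falls on $G^u$; there we use Proposition \ref{lm:gradG}, $|\nabla_xG^u(x,w)|\lesssim|x-w|^{-1}e^{-c|x-w|\sqrt{\tilde g(w)}}$. The gradient in $y$ falls on $G^0$, where the explicit bound $|\nabla G^0(w,y)|\lesssim|w-y|^{-1}e^{-c\sqrt\kappa|w-y|}$ holds. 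Both singularities $|x-w|^{-1}$ and $|w-y|^{-1}$ are integrable in $d=2$. Bounding $u\lesssim\|u\|_{\mathcal U}\,g^{3/2}(\gamma\,\cdot)$ via \eqref{cp:boundU} and \eqref{rmk:cp_bound_U}, we integrate $|x-w|^{-1}$ against the decay of $G^0$. This yields $\kappa^{-1/2}$, and the worse $\kappa^{-1/4}$ appears after interpolating with the $\sqrt{\tilde g}$ decay. The difference estimate follows identically with $u_1-u_2$ in place of $u$.

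\textbf{Main obstacle.} The delicate point is the uniformity in $\eps$ near the diagonal. Both $\tau^{\eps,u}$ and $\tau^{\eps,0}$ diverge like $\log\eps^{-1}$, so the cancellation must be captured by the resolvent difference. We must check that the kernel of $(h^u)^{-1}u(h^0)^{-1}$ is bounded on the diagonal, which holds in $d=2$ since $\int\log^2$ is integrable. We must also check that its gradient is integrable against $v^\eps$ uniformly in $\eps$.
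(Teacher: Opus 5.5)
Your architecture is the paper's. You use the resolvent expansion, the bound $u\le\|u\|_{\mathcal U}\,\mathcal U$, a split at $|w-x|=1$, the local bound \eqref{rmk:cp_bound_U}, and the decay of $G^u$ from Proposition \ref{prop:GN}. For (iii) you use Proposition \ref{lm:gradG} with $\sqrt{\kappa+\tilde g}\ge\kappa^{1/4}\tilde g^{1/4}$, and Steps 2 and 3 follow the paper.

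The weak point is the one step of (i) that matters: bringing the local size $g^{3/2}$ of $\mathcal U$ down to $g$ while producing $\kappa^{-1/2}$. The mechanism you finally invoke in your third and fourth bullets does not work.
\begin{itemize}
\item The bound $\int G^u(x,w)u(w)\,\dd w\le 1$ is true, but not as a pointwise reading of an $L^2$ operator bound. It is an $L^\infty$ statement that holds by Feynman--Kac/maximum principle: it equals $1-\kappa\,((h^u)^{-1}\mathbf 1)(x)$.
\item Pairing it with ``$G^0(w,y)$ bounded off the diagonal'' fails. For $|x-y|\le\eps$ the variable $w$ runs through $y$, where $G^0(\cdot,y)$ is logarithmically singular. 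Making the part outside a ball $B_\rho(y)$ small in $\kappa$ forces $\rho\gg\kappa^{-1/2}$, and the part inside can then only be controlled by the local size of $u$ and the decay of $G^u$.
\item The ``geometric mean using $u\le h^u$ and $\kappa\le h^0$'' is not an argument. Operator inequalities cannot bound a kernel at a point, and factoring $(h^u)^{-1}u(h^0)^{-1}$ through $u^{1/2}$ leaves the unbounded operator $u^{1/2}(h^0)^{-1/2}$.
\end{itemize}

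The fix is already in your first two bullets, which are two bounds on the \emph{same} near-diagonal integral. Apply Proposition \ref{prop:GN} to $u$ (legitimate, since $u\gtrsim g$) with a parameter in $[\gamma+c,1)$. Then $G^u(x,\cdot)$ decays on the scale $a^{-1/2}$ with $a=g((\gamma+c)x)$, the same point at which \eqref{rmk:cp_bound_U} gives $\esssup_{B_1(x)}\mathcal U\lesssim a^{3/2}$. This yields $\int_{|w-x|\le 1}G^u(x,w)G^0(w,y)\,\dd w\lesssim\min\{\kappa^{-1},L/a\}$. Here $L$ is the logarithm from $G^0(w,y)$ that you dropped in bullet 2; it is harmless if you keep its $\kappa$-dependence, $L\lesssim 1+\log(1+a/\kappa)$. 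Then $\min\{\kappa^{-1},L/a\}\,a^{3/2}\lesssim a/\sqrt\kappa\le g(x)/\sqrt\kappa$, which is exactly the paper's $a^{3/2}/(\kappa+a)\le a/(2\sqrt\kappa)$. Your bullet-1 route to the $\kappa^{-1}$ branch, via $G^u\le G^0$ and $(h^0)^{-2}(x,y)\le(2\pi\kappa)^{-1}$, is a clean way to get it.

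You also never treat the region $|w-x|>1$. There the exponential decay of $G^u(x,w)$ at rate $\sqrt{\kappa+\tilde g(w)}$ beats the growth \eqref{cp:boundU} of $\mathcal U(w)$, and the decay of $G^0$ supplies the factor $\kappa^{-1/2}$, as in the paper. With these two repairs your proof coincides with the paper's. In Step 3 you already interpolate the correct pair ($\kappa$-decay versus $\tilde g$-decay); Step 1 needs the same logic.
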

\begin{proof}
    The strategy is analogous to \cite[Theorem 5.2]{frohlich2017gibbs}. Let \(0<r<1\) and \(u \in B_{\mathcal{U},r}({\mathcal{U}})\).
    By applying a resolvent expansion and using the bound \(u(y) \le \|u\|_{\mathcal{U}} {\mathcal{U}}(y)\), we find
    \begin{align}
        \abs{(\tau^{\eps,u}- \tau^{\eps,0})(x)}
        & = \int \d {\tilde x} \,v^{\eps}(x-\tilde{x})\int \d y\, \frac{1}{h^u}(x,y)u(y)\frac{1}{h^0}(y,\tilde{x}) \nonumber \\
        &\leq \|u\|_{\mathcal{U}}\int \d {\tilde x} \,v^{\eps}(x-\tilde{x})\int \d y\, \frac{1}{h^u}(x,y){\mathcal{U}}(y)\frac{1}{h^0}(y,\tilde{x})\label{eq:taueps diff}\\
        & = \|u\|_{\mathcal{U}}\int \d {\tilde x}\, v^{\eps}(x-\tilde{x})\int_{\abs{x-y}\leq 1} \d y \,\frac{1}{h^u}(x,y){\mathcal{U}}(y)\frac{1}{h^0}(y,\tilde{x}) \label{bounded domain tauesp}\\
        & \quad + \|u\|_{\mathcal{U}}\int \d {\tilde x}\, v^{\eps}(x-\tilde{x})\int_{\abs{x-y} > 1} \d y \,\frac{1}{h^u}(x,y){\mathcal{U}}(y)\frac{1}{h^0}(y,\tilde{x})\label{unbounded domain tauesp}\,.
    \end{align}
    Let us first bound \eqref{bounded domain tauesp}. We can estimate
    \begin{align}
        \eqref{bounded domain tauesp}&\leq C\|u\|_{\mathcal{U}} \esssup_{\abs{z-x} \leq 1}{\mathcal{U}}(z)\int \d {\tilde x}\, v^{\eps}(x-\tilde{x})\int_{\abs{x-y}\leq 1} \d y \,\frac{1}{h^u}(x,y)\frac{1}{h^0}(y,\tilde{x}) \nonumber\\
        \begin{split}
            &\leq C\|u\|_{\mathcal{U}}  \esssup_{\abs{z-x} \leq 1}{\mathcal{U}}(z)\\&\qquad\qquad\times\int \d {\tilde x}\, v^{\eps}(x-\tilde{x})\int_{\abs{x-y}\leq 1}  \d y\, \Bigl(\frac{1}{h^u}(x,y)\frac{1}{h^0}(y,x)+\frac{1}{h^u}(\tilde{x},y)\frac{1}{h^0}(y,\tilde x)  \Bigr)\nonumber
        \end{split} \\
        &\leq  C_{\tilde \gamma} \|u\|_{\mathcal{U}} \sup_{\abs{\tilde x-x}<\eps}\frac{\log(g(x))}{(\kappa+g(\tilde{\gamma} \tilde x))}  \esssup_{\abs{z-x} \leq 1}{\mathcal{U}}(z)\,, 
         \label{Bound of first part taueps0}
    \end{align}
    for any fixed \(0<\tilde \gamma<1\), where we used \(\widehat{v^\eps}(0) = 1\) and Proposition \ref{prop:GN} in the last inequality. From Assumption \ref{assumption:mathcalU_cp} and \eqref{rmk:cp_bound_U}, by choosing \(\tilde \gamma = \sqrt{\gamma + c} \), and using the fact that, since \(g\) is radially non-decreasing, \[g\bigl((1-\varepsilon)x\bigr)\lesssim\inf\limits_{\abs{x-\tilde{x}}<\eps} g(\tilde{x})\,,\]  we find 
    \begin{align}
        \eqref{Bound of first part taueps0} \leq   C \|u\|_{\mathcal{U}} \frac{g\bigl((\gamma+c)x \bigr)^{3/2}}{\bigl(\kappa + g(\tilde{\gamma}^2 x )\bigr)}  \leq \frac{C}{\sqrt{\kappa}}\|u\|_{\mathcal{U} } g\bigl((\gamma+c)x\bigr)\,, \label{Bound of first part taueps}
    \end{align}
    for all \(0<c<1-\gamma\) and \(\eps>0\) small enough. 
    We now consider \eqref{unbounded domain tauesp}. Using Proposition \ref{prop:GN} with the assumption that \(\kappa > 1\), we find  
    \begin{align}
        \eqref{unbounded domain tauesp} &\leq C\|u\|_{\mathcal{U}}\int \d {\tilde x}\, v^{\eps}(x-\tilde{x})\int_{|x-y|> 1} \d y\, \frac{1}{h^u}(x,y){\mathcal{U}}(y)\frac{1}{h^0}(y,\tilde{x}) \nonumber\\
        \begin{split}
        {}&\leq C\|u\|_{\mathcal{U}} \int \d {\tilde x}\, v^{\eps}(x-\tilde{x})\int_{|x-y|> 1} \d y\, \,\ee^{-c\sqrt{\kappa + g(\tilde \gamma x)\vee g( \tilde \gamma y)}|x-y|}{\mathcal{U}}(y)\ee^{-\sqrt{\kappa}\abs{\tilde x-y}} \nonumber 
         \end{split}   
         \\ &\leq  \frac{C}{\sqrt{\kappa}}\|u\|_\mathcal{U}  \label{Bound of second part taueps}\,,
    \end{align}
    where the last step follows from \eqref{cp:boundU}. 
    From Assumption \ref{assumption:mathcalU_cp}, \eqref{Bound of first part taueps} and \eqref{Bound of second part taueps}, we have
    \begin{equation*}
        \eqref{eq:taueps diff} \leq \frac{C}{\sqrt{\kappa}} \|u\|_{\mathcal{U}} g((\gamma+c)x)\,. 
    \end{equation*}
    The bound in \ref{tauU_minus_tau0} is then a consequence of the fact that \(0<\gamma<1 \) and thus \(g\bigl((\gamma + c)x\bigr) \leq g(x)\) for any \(c>0\) small enough. Starting with a resolvent expansion and following the same steps as above, \ref{tauU_minus_tauU} follows.

    Let us now consider \ref{gradtau}.  Writing \(\nabla_x \tau^{\eps,u} = \nabla_x ({\tau^{\eps,u}-\tau^{\eps,0}})\), using a resolvent expansion and integration by parts, we find 
    \begin{multline}
        \abs{\nabla_x \tau^{\eps,u}(x)} \leq \abs{\int \d {\tilde x}\, v^\eps(x- \tilde x)\int \d y \,  \nabla_x\frac{1}{h^u}(x,y) U(y)\frac{1}{h^0}(y, \tilde{x})}  \\ + \abs{\int \d {\tilde x}\, v^\eps(x- \tilde x)\int \d y \, \frac{1}{h^u}(x,y) U(y) \nabla_{\tilde x}\frac{1}{h^0}(y, \tilde{x})}\label{eq:tauepsgrad} \, .
     \end{multline}
     We show the bound of the first term \eqref{eq:tauepsgrad}, as the second one can be estimated in a similar way.

     Using Proposition \ref{lm:gradG}, and splitting cases as in step (i), by using
     \begin{align*}
         \abs{\nabla_x \frac{1}{h^u}(x,y)} \leq  \frac{C\ee^{-c\abs{x-y}\sqrt{\kappa + g(\tilde \gamma y) }}}{\abs{x-y}} \, , 
     \end{align*}
     for any \(\gamma<\tilde \gamma<1\), we find
     \[\abs{\nabla_x \tau^{\eps,u}(x)} \leq \frac{C\log(g(\tilde \gamma x))}{\sqrt{\kappa + g(\tilde \gamma ^2 x)}} g\bigl((\gamma+c)x \bigr)^{3/2}\|u\|_\mathcal{U} \leq \frac{C}{\kappa^{1/4}}\|u\|_\mathcal{U} \tilde{g}^{3/2}(x)\,, \]
     for \(c>0\) small enough. 
    Notice that, to show the last inequality, we used \eqref{eq:boundOnMaxg}.
     We can show similarly that
     \[\abs{\nabla (\tau^{\eps,u_1} -  \tau^{\eps,u_2})(x) } \leq \frac{C}{\kappa^{1/4}}\|u_1-u_2\|_\mathcal{U} \,{\tilde g}^{3/2}(x)\,.\qedhere\]
\end{proof}

For $\nu>0$, we define the quantum Green function
\begin{equation}\label{eq:quantumGreen}\mathcal{G}_\nu \deq \frac{\nu}{\ee^{\nu h}-1}=\nu \sum_{n=1}^\infty \ee^{-\nu n h}.\end{equation}
Note that the mean particle density \eqref{def quantum green function diagonal} satisfies 
\begin{equation}\label{eq:meanparticlequantumgreen}
\varrho_\nu (x)=\mathcal{G}_\nu(x,x). 
\end{equation}
In order to show the corresponding result of Lemma \ref{lemma:bounds_countertermproblem_tau} for the difference of mean particle densities, we establish the following bounds for the quantum Green function, equivalent to Proposition \ref{prop:GN}. 
\begin{prop}\label{prop:bounds_Gnu}
Let \(d\geq 2\) and \(\mathcal{G}_\nu\) be the quantum Green function as in \eqref{eq:quantumGreen}.  
\begin{enumerate}[label = (\roman*)]
    \item \label{prop:bound_Gnu} Under the same assumptions as Proposition \ref{prop:GN}, we have the following bound on its kernel
 \[\mathcal{G}_\nu (x,y) \lesssim \Bigl[\Bigl( \bigl[\log \bigl(|x-y|\sqrt{\tilde g(x)\vee \tilde g(y)}\bigr)^{-1}\bigr]_+ \wedge \log \nu^{-1} \Bigr)\vee 1\Bigr] \ee^{-c|x-y|\sqrt{\tilde g(x)\vee \tilde g(y)}}\]
 in dimension \(d=2\), and 
\[\mathcal{G}_\nu (x,y) \lesssim \frac{1}{\nu ^{d-2} \vee \abs{x-y}^{d-2}} \ee^{- c\sqrt{\tilde g(x) \vee \tilde g(y)}\abs{x-y}}\]
when \(d\geq 3\). 
\item \label{prop:boundgradGnu} Furthermore, under the same assumptions as Proposition \ref{lm:gradG}, we bound the gradient of the kernel of the quantum Green function by 
\[\abs{\nabla_x \mathcal{G}_\nu(x,y)} \lesssim \frac{1}{\nu ^{d+1} \vee \abs{x-y}^{d+1}} \ee^{- c\sqrt{\tilde g(y)}\abs{x-y}}\,.\]
\end{enumerate}
\end{prop}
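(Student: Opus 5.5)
We will reduce both bounds to the corresponding statements for the heat kernel that underlie Proposition \ref{prop:GN} and Proposition \ref{lm:gradG}, with the time integral replaced by a Riemann sum. By \eqref{eq:quantumGreen} and the Feynman-Kac formula (Lemma \ref{Feynman-Kac formula}),
\[
\mathcal{G}_\nu(x,y)=\nu\sum_{t\in\nu\mathbb{N}^*}\ee^{-\kappa t}\,\psi^t(x-y)\,\mathbb{E}^{t,0}_{y,x}\Bigl[\ee^{-\int_0^t\d s\,U(X_s)}\Bigr],
\]
which differs from the representation of $G_N$ in the proof of Proposition \ref{prop:GN} only in that $\int_{1/N}^\infty\d t$ is replaced by $\nu\sum_{t\in\nu\mathbb N^*}$. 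The sum starts at $t=\nu$, so the lower cutoff plays the role of $1/N$ with $N=\nu^{-1}$; this is why $\log N$ becomes $\log\nu^{-1}$ in \ref{prop:bound_Gnu}.

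For \ref{prop:bound_Gnu} we first insert the pointwise bound of Lemma \ref{lem:boundonexp2}, which splits each summand into the four terms (a)–(d) of \eqref{eq:G1}, each of the form $t^{-d/2}\ee^{-|x-y|^2/2t}\ee^{-ct/2}F(t)$ with $F$ monotone or unimodal in $t$. The key step is to compare the Riemann sum with the integral. For $t\ge\nu$, the function $t\mapsto t^{-d/2}\ee^{-a/t}$ varies by at most a bounded factor on intervals $[t,t+\nu]$, since $\nu\le t$; the same holds for the factors $\ee^{-cgt}$ and $\ee^{-ct^{1/3}(\cdot)}$ after shrinking the constant $c$. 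Therefore
\[
\nu\sum_{t\in\nu\mathbb N^*}f(t)\lesssim \nu f(\nu)+\int_\nu^\infty f(t)\,\d t .
\]
The integral is exactly what was estimated in the proof of Proposition \ref{prop:GN} with $N=\nu^{-1}$. The boundary term $\nu f(\nu)\lesssim \nu^{1-d/2}\ee^{-|x-y|^2/2\nu}$ is $O(1)$ for $d=2$, which is absorbed in the ``$\vee 1$'' since the exponential factor is also dominated. For $d\ge3$ this boundary term is $\lesssim \nu^{2-d}\wedge|x-y|^{2-d}$, using $\sup_\nu \nu^{1-d/2}\ee^{-r^2/2\nu}\lesssim r^{2-d}$ up to the time weight. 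Running the case analysis of Proposition \ref{prop:GN} in $d\ge3$ (as in \eqref{eq:Gdim3}), with the $t$-integral truncated at $\nu$, gives $(\nu^{d-2}\vee|x-y|^{d-2})^{-1}\ee^{-c\sqrt{\tilde g}|x-y|}$. Note that $\int_\nu^\infty t^{-d/2}\ee^{-r^2/2t}\,\d t\lesssim \min(\nu^{2-d},r^{2-d})$.

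For \ref{prop:boundgradGnu} we differentiate term by term, exactly as in the proof of Proposition \ref{lm:gradG}:
\[
\nabla_x\bigl[\psi^t(x-y)\mathbb E^{t,0}_{y,x}[\cdots]\bigr]=-\frac{x-y}{t}\psi^t\,\mathbb E[\cdots]+\psi^t\,\nabla_x\mathbb E[\cdots].
\]
Lemma \ref{lm:gradE} bounds the second term by $t^{-1/2}\psi^t$. Splitting the bridge at time $t/2$ as in that proof and using Lemma \ref{lem:boundonexp2} yields the decay $\ee^{-c\sqrt{\tilde g(y)}|x-y|}$. Each summand is thus $\lesssim t^{-(d+1)/2}\ee^{-c|x-y|^2/t}\times(\text{decay factor})$. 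Summing over $t\in\nu\mathbb N^*$ with the same Riemann sum comparison gives the stated $(\nu^{d+1}\vee|x-y|^{d+1})^{-1}$ bound. This bound is crude but sufficient, because the sum is dominated by $t\ge\nu$ and $\sup_{t\ge\nu}t^{-(d+1)/2}\ee^{-r^2/t}\lesssim(\nu\vee r^2)^{-(d+1)/2}\le(\nu^{d+1}\vee r^{d+1})^{-1}$ when $\nu\le1$; the sum over $t$ then adds at most a factor controlled by $\ee^{-\kappa t}$ and $\nu^{-1}\cdot\nu$.

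The main obstacle will be keeping the exponential spatial decay uniform through the discretization. In the terms (c) and (d) of Lemma \ref{lem:boundonexp2} the decay is carried by $|x|$ rather than $|x-y|$, so we must reuse the reductions $|x|\ge|y|$ and $|x-y|\le2|x|$ from Proposition \ref{prop:GN}. In the gradient case, the heat kernel bound for the half-bridge starting at $x$ must be combined with the $y$-dependent decay without losing the factor $\sqrt{\tilde g(y)}$. We expect to handle this by bounding $\sup_w$ of the bridge expectations exactly as in the final display of the proof of Proposition \ref{lm:gradG}.
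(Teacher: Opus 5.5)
Your route is the paper's: write $\mathcal G_\nu$ via \eqref{eq:quantumGreen} and Lemma~\ref{Feynman-Kac formula} as a Riemann sum over $t\in\nu\mathbb N^*$, insert Lemma~\ref{lem:boundonexp2}, and compare with the integrals of Proposition~\ref{prop:GN} at $N=\nu^{-1}$. For (ii) you reuse the decomposition from the proof of Proposition~\ref{lm:gradG} together with Lemma~\ref{lm:gradE}.

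\textbf{The sum-versus-integral comparison is justified incorrectly.} This is the only step that is new relative to Proposition~\ref{prop:GN}. The factor $\ee^{-a/t}$, with $a=|x-y|^2/2$, does not vary by a bounded factor on $[t,t+\nu]$: the ratio is $\ee^{a\nu/(t(t+\nu))}$, e.g.\ $\ee^{a/(2\nu)}$ at $t=\nu$. Consequently $\nu\sum_{t\in\nu\mathbb N^*}f(t)\lesssim\nu f(\nu)+\int_\nu^\infty f(t)\,\dd t$ is false for the unmodified summand. Take $f(t)=t^{-1}\ee^{-a/t-bt}$ with $ab\gg1$ (this is term (b), with $b\approx c\tilde g$). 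Then $f$ peaks at $t_0\approx\sqrt{a/b}$ with width of order $t_0(ab)^{-1/4}$. If $\nu=t_0/2$, the sum is at least $\nu f(t_0)$, while the right-hand side is $O\bigl((ab)^{-1/4}\nu f(t_0)\bigr)$.

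\textbf{How to repair it.} What is true is a one-sided domination. Write $f(t)=\ee^{-a/t}F(t)$, where $F$ is the product of the remaining factors, all non-increasing in $t$. For $n\geq1$ and $s\in[n\nu,(n+1)\nu]$ you have $s/2\leq n\nu\leq s$, so $f(n\nu)\leq\ee^{-a/s}F(s/2)$. Hence $\nu\sum_{n\geq1}f(n\nu)\leq\int_\nu^\infty\ee^{-a/s}F(s/2)\,\dd s$, with no boundary term. Moreover $F(s/2)$ has the same form as $F$ with modified constants: $2^{d/2}s^{-d/2}$, $\ee^{-bs/2}$, $\ee^{-2^{-1/3}cs^{1/3}K}$ and $\mathbbm{1}_{s\leq 2T}$ replace $t^{-d/2}$, $\ee^{-bt}$, $\ee^{-ct^{1/3}K}$ and $\mathbbm{1}_{t\leq T}$. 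So Proposition~\ref{prop:GN} applies verbatim.

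The paper argues differently. It uses that each summand is unimodal in $t$ and bounds the sum by the integral plus $\nu f(x_0)$ at the maximiser $x_0$. That extra term then has to be checked by hand; it is $\lesssim\ee^{-2\sqrt{ab}}$ in the example. The one-sided domination avoids this check.

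The same domination applies in (ii). There, do not take $\sup_{t\geq\nu}$ of the whole factor $t^{-(d+1)/2}\ee^{-c|x-y|^2/t}$. The spatial decay comes from combining the Gaussian with the $t$-dependent factors of Lemma~\ref{lem:boundonexp2}, e.g.\ $\ee^{-|x-y|^2/(4t)-c\tilde g t}\leq\ee^{-c'|x-y|\sqrt{\tilde g}}$, so keep half of the Gaussian for that purpose.

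\textbf{On the obstacle you flag for (ii).} Taking $\sup_w$ as in the last display of the proof of Proposition~\ref{lm:gradG}, which is also what the paper's proof of (ii) does, does not settle it in every regime.
\begin{itemize}
\item After the supremum, the undifferentiated half-bridge contributes only decay measured from $y$, including the $t$-independent factor $\ee^{-c|y|\sqrt{\tilde g(y)}}$ of Lemma~\ref{lem:boundonexp2}.
\item The differentiated half is controlled only by the decay-free bound $t^{-1/2}$ of Lemma~\ref{lm:gradE}.
\item Summing that contribution against $\ee^{-\kappa t}\psi^t(x-y)t^{-1/2}$ gives at best $\ee^{-c\sqrt{\kappa}|x-y|-c|y|\sqrt{\tilde g(y)}}$.
\item This is not $\lesssim\ee^{-c'|x-y|\sqrt{\tilde g(y)}}$ when $|x-y|\gg|y|$ and $\tilde g(y)\gg\kappa$.
\end{itemize}
When $|x-y|\lesssim|y|$, or when $\tilde g(y)\lesssim\kappa$, your argument goes through. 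In the remaining regime you need a gradient bound for the half-bridge at $x$ that retains the decay of the potential. Lemma~\ref{lm:gradE} does not provide this, and the paper's sketch does not address it either.
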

\begin{proof}
    Point \ref{prop:bound_Gnu} can be shown using the heat kernel estimates in Lemma \ref{lem:boundonexp2},
    \begin{multline*}\label{eq:G1}
			\mathcal{G}_\nu(x,y) \lesssim 
			\nu\sum_{n=1}^\infty \, \frac{1}{(n\nu)^{d/2}} \ee^{-\frac{|x-y|^2}{2n\nu}} \ee^{-c \frac{n\nu }2} \Bigl(\mathbbm{1}_{n \nu\leq (\tilde{g}(x)\vee \tilde{g}(y))^{-2}(|x| +|\tilde x|)^{-2}} + \ee^{-c(\tilde{g}(x)\vee \tilde{g}(y))n \nu} \\\hspace{3cm}+ \ee^{-c(|x|+|\tilde x|)\sqrt{\tilde{g}(x)\vee \tilde{g}(y)}} +\ee^{-c {(n \nu)}^{1/3}(|x|+|\tilde x|)^{2/3}(\tilde{g}(x)\vee \tilde{g}(y))^{2/3}}\Bigr)\,,
	\end{multline*}
    and noticing that every term of the sum is of the form 
    \(\nu\sum_{n=1}^\infty  f(\nu n)\),
    where the function \(f\) has a global maximum at \(x_0 \in (0,\infty)\), is increasing in \([0, x_0]\), and decreasing in \([x_0,\infty)\). We can thus bound 
    \[\nu\sum_{n=1}^{\lfloor x_0/\nu\rfloor}  f(\nu n) \lesssim \int_\nu^{x_0}\d{t}\,f(t)\]
    and 
    \[\nu\sum_{{\lceil x_0/\nu\rceil}}^\infty f(\nu n) \lesssim \nu f(x_0)+ \int_{\lceil x_0/\nu\rceil\nu}^{\infty}\d{t}\,f(t)\,.\]
    By proceeding as in the proof of Proposition \ref{prop:GN} with the change of variables \(\nu=1/N\) and considering the cases \(\nu\leq x_0\) and \(\nu \geq x_0\) separately, we conclude the proof.

    The proof of \ref{prop:boundgradGnu} is done in a similar way. We start by noting that, following the proof of Proposition \ref{lm:gradG}, 
    \begin{align*}
           \nabla_x (\ee^{-t h})_ {x,y} & = \ \ee^{-\kappa t} \frac{x-y}{t}\psi^t(x-y) \mathbb E_{x,y}^t\Bigr[\ee^{-\int_0^t \d s\, U(X_s)}\Bigl]\\
     &\hspace{3cm}+ \ee^{-\kappa t} \psi^t(x-y) \nabla_x\mathbb E_{x,y}^t\Bigl[\ee^{-\int_0^t \d s\, U(X_s)}\Bigr]\, \nonumber\\
     & \lesssim \ee^{-\kappa t} \frac{x-y}{t}\psi^t(x-y) \mathbb E_{x,y}^t\Bigr[\ee^{-\int_0^t \d s\, U(X_s)}\Bigl] \\
     &\hspace{3cm}+   \ee^{-\kappa t} \psi^t(x-y) \frac{1}{\sqrt{t}} \sup_{w \in \mathbb{R}^d}\bE_{w,y}^{t/2}\Bigl[\ee^{-\int_0^t \d s\, U(X_s)}\Bigr]\,. \label{eq: grad e} 
    \end{align*}
    We can now conclude using heat kernel estimates, and arguing that the Riemann sums are bounded as above by the corresponding integral. 
\end{proof}

\begin{remark}\label{rmk:qGf bound}
    From a similar argument to Proposition \ref{prop:bounds_Gnu}, it follows that, uniformly in \(0\leq t<1 \),
    \[\Bigl(\frac{\nu\ee^{t\nu h}}{\ee^{\nu h}-1} \Bigr)(x,y) \]
    is bounded by the left-hand side of \eqref{eq:boundG}, and a similar result holds for its gradient. 
\end{remark}

\begin{lemma}\label{lemma:bounds_countertermproblem_varrho}
    Let \(r \in (0,1)\),  and \(u, u_1, u_2 \in B_{r}^\alpha({\mathcal{U}})\) be positive functions. Then, uniformly for \(\eps>0\) small enough and for  \(\kappa>1\),  there exists \(C>0\) such that, 
    \begin{enumerate}[label=(\roman*)]
        \item \[ \big|v^\eps *(\varrho_{\nu}^{u} -\varrho_\nu^0)(x)\big| \leq \frac{C}{\sqrt{\kappa}}\|u\|_{\mathcal{U}}\, g(x)  \]
        \item  \begin{equation*}
            \abs{v^\eps *\bigl(\varrho_{\nu}^{u_1} -\varrho_{\nu}^{u_2}\bigr)(x)} \leq \frac{C}{\sqrt{\kappa}}\|u_1-u_2\|_{\mathcal{U}} \,g(x)  
        \end{equation*} 
        \item if \(\mathcal{U} \) satisfies \assD{} then \[\big|\nabla \bigl( v^\eps *\varrho_{\nu}^{u} \bigr)(x)  \big|\leq \frac{C}{\kappa^{1/4}}\|u\|_\mathcal{U}\, {\tilde g}^{3/2}(x) \]
        and 
        \[\big|\nabla \bigl(v^\eps *\bigl(\varrho_{\nu}^{u_1} -\varrho_{\nu}^{u_2}\bigr)\bigr)(x) \big| \leq \frac{C}{\kappa^{1/4}}\|u_1-u_2\|_\mathcal{U} \, {\tilde g}^{3/2}(x)\,. \]
    \end{enumerate}
\end{lemma}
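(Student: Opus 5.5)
The plan is to mirror the proof of Lemma \ref{lemma:bounds_countertermproblem_tau}, with the classical Green function $1/h$ replaced by the quantum Green function $\mathcal G_\nu$ from \eqref{eq:quantumGreen}. Proposition \ref{prop:bounds_Gnu} and Remark \ref{rmk:qGf bound} supply the kernel bounds needed for that replacement.

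The first step is a resolvent-type expansion for the difference of quantum Green functions. Write $\mathcal G_\nu^u = F_\nu(h^u)$ with $F_\nu(\lambda)=\nu/(\ee^{\nu\lambda}-1)$. We use the operator identity
\[
\frac{\nu}{\ee^{\nu h^u}-1}-\frac{\nu}{\ee^{\nu h^0}-1} = -\frac{\nu}{\ee^{\nu h^u}-1}\,\bigl(\ee^{\nu h^u}-\ee^{\nu h^0}\bigr)\,\frac{1}{\ee^{\nu h^0}-1}.
\]
Applying Duhamel to the middle factor gives $\ee^{\nu h^u}-\ee^{\nu h^0}=\nu\int_0^1\d t\,\ee^{t\nu h^u}\,u\,\ee^{(1-t)\nu h^0}$. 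The difference therefore becomes
\[
-\int_0^1 \d t\,\Bigl(\frac{\nu\ee^{t\nu h^u}}{\ee^{\nu h^u}-1}\Bigr)\,u\,\Bigl(\frac{\nu\ee^{(1-t)\nu h^0}}{\ee^{\nu h^0}-1}\Bigr).
\]
By Remark \ref{rmk:qGf bound}, the two kernels in this product satisfy the same bound \eqref{eq:boundG} as $1/h^u$ and $1/h^0$, uniformly in $t\in[0,1)$. These kernels are positive, for example by Feynman–Kac applied term by term to the series. Hence
\[
|v^\eps*(\varrho^u_\nu-\varrho^0_\nu)(x)| \le \|u\|_{\mathcal U}\int_0^1\d t\int\d\tilde x\, v^\eps(x-\tilde x)\int\d y\, K_t^u(\tilde x,y)\,\mathcal U(y)\,K_{1-t}^0(y,\tilde x).
\]
This has the same form as \eqref{eq:taueps diff}, except that the diagonal is at $\tilde x$ rather than off-diagonal. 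If $v$ is not nonnegative, we take $|v^\eps|$ instead.

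We now bound this integral exactly as before. We split into the regions $|x-y|\le1$ and $|x-y|>1$. On the near region we use \eqref{rmk:cp_bound_U}, then integrate the product of log-singular kernels with the decay $\ee^{-c|\tilde x-y|\sqrt{\kappa+\tilde g}}$. This gives a factor $\log(g)/(\kappa+g(\tilde\gamma^2x))$ and hence $C\kappa^{-1/2}g(x)$. On the far region the exponential decay absorbs $\mathcal U(y)\lesssim g^{3/2}(\gamma y)$, using \eqref{eq:boundOnMaxg}. The Lipschitz statement (ii) follows in the same way from the analogous expansion between $h^{u_1}$ and $h^{u_2}$, which produces the factor $u_1-u_2$.

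For (iii), we differentiate the convolution. The gradient can be moved either onto the kernel via $\nabla_x v^\eps(x-\tilde x)=-\nabla_{\tilde x}v^\eps(x-\tilde x)$ followed by integration by parts, or directly onto the kernels in the expansion above. Either way the gradient falls on one of the two quantum kernels, and we bound it using Proposition \ref{prop:bounds_Gnu}\ref{prop:boundgradGnu} together with the gradient version in Remark \ref{rmk:qGf bound}.

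The main obstacle is that the stated gradient bound for $\mathcal G_\nu$ has a $\nu^{-(d+1)}$ short-distance cutoff rather than the integrable $|x-y|^{-1}$ singularity in $d=2$. Used naively, this is not uniform in $\nu$. I would therefore instead derive directly from the Riemann-sum comparison a bound $|\nabla_x K_t(x,y)|\lesssim |x-y|^{1-d}\ee^{-c|x-y|\sqrt{\kappa+\tilde g(y)}}$, uniformly in $\nu$ and $t$. This is the quantum analogue of \eqref{eq:gradG}, obtained by applying Lemma \ref{lm:gradE} term by term in the series. With that bound, the near-region integral gives a factor $\log(g)\,(\kappa+g)^{-1/2}g^{3/2}\lesssim\kappa^{-1/4}\tilde g^{3/2}$, exactly as in Lemma \ref{lemma:bounds_countertermproblem_tau}\ref{gradtau}, which completes (iii).
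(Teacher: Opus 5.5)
Your proposal follows the paper's proof essentially step for step: the same expansion \eqref{eq:quantumgreenfunction_resolventexpansion} of $\varrho^u_\nu-\varrho^0_\nu$, the same near/far splitting that reduces everything to the estimates \eqref{bounded domain tauesp}--\eqref{unbounded domain tauesp} of Lemma \ref{lemma:bounds_countertermproblem_tau} via Remark \ref{rmk:qGf bound}, and Proposition \ref{prop:bounds_Gnu} for (ii)--(iii). Your extra care in (iii) is justified: the gradient bound of Proposition \ref{prop:bounds_Gnu}\ref{prop:boundgradGnu} as printed gives near-diagonal integrals of order $\nu^{-1}$ in $d=2$, while the same Riemann-sum comparison does give the $\nu$-uniform $|x-y|^{1-d}$ singularity you use (as with Remark \ref{rmk:qGf bound} itself, such pointwise bounds cannot be uniform as $t\to1$, since $\nu\ee^{t\nu h}/(\ee^{\nu h}-1)\to\nu+\mathcal G_\nu$; however, in the product the singular factor is always paired with a regular one, so it is harmless after the $t$-integration).
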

\begin{proof}
The proof strategy is the same as Lemma \ref{lemma:bounds_countertermproblem_tau}.
Recalling \eqref{eq:meanparticlequantumgreen} and applying a resolvent expansion, we get
 \begin{align}
 \nonumber
 -(\varrho_\nu^u-\varrho_\nu^0)(y)& =\mathcal{G}_\nu^0(y,y)-\mathcal{G}_\nu^u(y,y)\\
\label{eq:quantumgreenfunction_resolventexpansion}
 & =\int_0^1\d t\,\int \d z\, \Bigl(\frac{\nu \ee^{t\nu h^u}}{\ee^{\nu h^u}-1}\Bigr)(y,z)u(z)\Bigl(\frac{\nu \ee^{(1-t)\nu h^0}}{\ee^{\nu h^0}-1}\Bigr)(z,y)\,.
 \end{align}

To bound \eqref{eq:quantumgreenfunction_resolventexpansion}, we follow the same steps as in the proof of \cite[Theorem~5.2]{frohlich2017gibbs}, {\em cf.}\ (5.16), obtaining

\begin{align}
\label{eq:eval1}
\eqref{eq:quantumgreenfunction_resolventexpansion}\leq \, &C \|u\|_\mathcal{U}\esssup_{|z-y|<1}\mathcal{U}(z) \int_0^1\d t\,\int_{|z-y|<1} \d z\, \Bigl(\frac{\nu \ee^{t\nu h^u}}{\ee^{\nu h^u}-1}\Bigr)(y,z)\Bigl(\frac{\nu \ee^{(1-t)\nu h^0}}{\ee^{\nu h^0}-1}\Bigr)(z,y)\\
\label{eq:eval2}
&+C \|u\|_\mathcal{U}\int_0^1\d t\,\int_{|z-y|\geq 1} \d z\, \Bigl(\frac{\nu \ee^{t\nu h^u}}{\ee^{\nu h^u}-1}\Bigr)(y,z)\mathcal{U}(z)\Bigl(\frac{\nu \ee^{(1-t)\nu h^0}}{\ee^{\nu h^0}-1}\Bigr)(z,y)\Bigr)\,.
\end{align}
We can now refer to Lemma \ref{lemma:bounds_countertermproblem_tau} for the bound of \( 
\big|v^\eps *(\varrho_{\nu}^{u} -\varrho_\nu^0)(x)\big|
\). In particular, by Remark \ref{rmk:qGf bound}, the contribution from  \eqref{eq:eval1}, resp. \eqref{eq:eval2} can be estimated as in \eqref{bounded domain tauesp}, resp. \eqref{unbounded domain tauesp}. This proves (i). A similar argument, together with an application of Proposition \ref{prop:bounds_Gnu}, yields (ii) and (iii).
\end{proof}

\begin{prop}\label{solution to the counterterm problem}
    Let \(r \in (0,1)\), \({\mathcal{U}}\) satisfying Assumption \ref{assumption:mathcalU_cp} and let \(\alpha=0\). Then, there exists a positive chemical potential \(\kappa_0(r)>0 \), such that the function \(\Phi\) defined in \eqref{fixed point function for counterterm problem}
    satisfies
    \[\Phi(B_{r}^\alpha({\mathcal{U}})) \subset B_{r}^\alpha({\mathcal{U}})\]
    and is a contraction, for all \(\kappa > \kappa_0(r) \).

    The same result holds for \(\alpha=1\) if \(\mathcal{U}\) satisfies \assD{}. 
\end{prop}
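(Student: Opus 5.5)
The plan is to obtain both claims directly from Lemmas \ref{lemma:bounds_countertermproblem_tau} and \ref{lemma:bounds_countertermproblem_varrho}, which already supply the linear-in-$\norm{u}_{\mathcal U}$ bounds with prefactors $C\kappa^{-1/2}$ (and $C\kappa^{-1/4}$ for gradients). Recall $\Phi(u)=\mathcal U+(\tau^{\eps,u}-\tau^{\eps,0})+v^\eps*(\varrho^u_\nu-\varrho^0_\nu)$.

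\emph{Case $\alpha=0$.} For the invariance $\Phi(B^0_r(\mathcal U))\subset B^0_r(\mathcal U)$, take $u\in B^0_r(\mathcal U)$. By Remark \ref{Remark on the radius of the ball}, $\norm{u}_{\mathcal U}\le 1+r\le 2$, so parts (i) of both lemmas give
\[
\abs{\Phi(u)(x)-\mathcal U(x)}\le \frac{2C}{\sqrt\kappa}\,\norm{u}_{\mathcal U}\,g(x)\le \frac{4C}{\sqrt\kappa}\,\frac{\mathcal U(x)}{c},
\]
using $g\le \mathcal U/c$ from \eqref{cp:boundU}. Thus $\norm{\Phi(u)-\mathcal U}_{\mathcal U}\le 4C/(c\sqrt\kappa)\le r$ once $\kappa\ge\kappa_0(r)$. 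For the contraction, parts (ii) give in the same way
\[
\norm{\Phi(u_1)-\Phi(u_2)}_{\mathcal U}\le \frac{2C}{c\sqrt\kappa}\,\norm{u_1-u_2}_{\mathcal U}\le \tfrac12\norm{u_1-u_2}_{\mathcal U}
\]
for $\kappa$ large. One has to check that $C$ is uniform in $\kappa>1$, $\eps$ small and $\nu$; the lemmas assert exactly this. Positivity of $\Phi(u)$ then follows from Remark \ref{Remark on the radius of the ball}.

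\emph{Case $\alpha=1$.} Assuming \assD{}, the function-level estimates carry over unchanged. For the gradient we have $\nabla\Phi(u)=\nabla\mathcal U+\nabla\tau^{\eps,u}+\nabla(v^\eps*\varrho^u_\nu)$, since $\tau^{\eps,0}$ and $\varrho^0_\nu$ are constants. Parts (iii) give
\[
\norm{\nabla\Phi(u)}_{\tilde g^{3/2}}\le \norm{\nabla\mathcal U}_{\tilde g^{3/2}}+\frac{4C}{\kappa^{1/4}}.
\]
Hence $\norm{\nabla\Phi(u)-\tilde g^{3/2}}_{\tilde g^{3/2}}\le \norm{\nabla\mathcal U}_{\tilde g^{3/2}}+1+4C\kappa^{-1/4}\le\tilde r$ for large $\kappa$; the choice $\tilde r=\norm{\nabla \mathcal U}_{\tilde g^{3/2}}+2$ was made for exactly this purpose. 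The contraction in the $\norm{\cdot}_{\mathcal U,1}$ norm follows from the difference estimates in (ii) and (iii), which bound the combined norm by $C'\kappa^{-1/4}\norm{u_1-u_2}_{\mathcal U}\le C'\kappa^{-1/4}\norm{u_1-u_2}_{\mathcal U,1}$.

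The step I expect to need the most care is verifying that $B^1_r(\mathcal U)$ is closed, hence a complete metric space, in $\mathcal B_1$. I would handle it via local uniform convergence together with weak-$*$ limits of gradients in $L^\infty_{\mathrm{loc}}$, which preserve pointwise bounds. I would also check that $\Phi(u)\in W^{1,\infty}_{\mathrm{loc}}$; this follows from Proposition \ref{lm:gradG}, Proposition \ref{prop:bounds_Gnu}\ref{prop:boundgradGnu} and Remark \ref{rmk:gradG_continuous}. The Banach fixed point theorem then concludes the argument.
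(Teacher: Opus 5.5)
Your proposal is correct and follows the paper's proof. Parts (i)--(ii) of Lemmas \ref{lemma:bounds_countertermproblem_tau} and \ref{lemma:bounds_countertermproblem_varrho}, combined with $g\lesssim\mathcal U$, give invariance and an $O(\kappa^{-1/2})$ contraction in $\|\cdot\|_{\mathcal U}$; parts (iii) then give the gradient bound with $\tilde r=\|\nabla\mathcal U\|_{\tilde g^{3/2}}+2$ and an $O(\kappa^{-1/4})$ contraction in $\|\cdot\|_{\mathcal U,1}$. The differences are only cosmetic: you bound $\|u\|_{\mathcal U}\le 1+r$ directly instead of using the triangle inequality, and you spell out the completeness of $B^1_r(\mathcal U)$ and the fact that $\Phi(u)\in W^{1,\infty}_{\mathrm{loc}}$, both of which the paper simply asserts.
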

\begin{proof}
    Let \(u \in B_{r}^\alpha({\mathcal{U}})\) and \(\alpha=0\).  From parts (i) of Lemmas \ref{lemma:bounds_countertermproblem_tau} and \ref{lemma:bounds_countertermproblem_varrho}, we have, for all \(\kappa>1\), 
    \begin{align*}
        \abs{(\tau^{\eps,u}-\tau^{\eps,{0}})(x) -v^\eps*(\varrho_{\nu}^{u} -\varrho_\nu^0) (x) } &\leq C \kappa^{-1/2}\|u\|_{\mathcal{U}} {g}(x)\,\leq C \kappa^{-1/2}\|u\|_{\mathcal{U}}\, {\mathcal{U}}(x)\,. 
    \end{align*}
    We thus find, using the triangle inequality and \(\|{\mathcal{U}}\|_{{\mathcal{U}}}=1\)
    \begin{equation}\label{eq1}
        \|\Phi(u)- {\mathcal{U}}\|_{\mathcal{U}} \leq C\kappa^{-1/2}\|u-{\mathcal{U}}\|_{\mathcal{U}} + C \kappa^{-1/2}  < r
    \end{equation}
    where the second inequality is verified for \(\kappa\) big enough. 
    From parts (ii) of Lemmas \ref{lemma:bounds_countertermproblem_tau} and \ref{lemma:bounds_countertermproblem_varrho}, we find similarly  
    \begin{equation}
        \|\Phi(u_1)- \Phi(u_2)\|_{\mathcal{U}} \leq  C \kappa^{-1/2}\|u_1-u_2\|_{\mathcal{U}} < \frac{1}{4}\|u_1-u_2\|_{\mathcal{U}}  \label{eq2}
    \end{equation}
    for \(\kappa \) big enough. 
    We can now choose \(\kappa_0(r)>0\) big enough such that for all \(\kappa >\kappa_0(r)\), \eqref{eq1} and \eqref{eq2} hold. 

    If \(\mathcal{U}\) satisfies \assD{}, we can further show that, for \(\kappa >0\) big enough, 
    \[\big\|\nabla \Phi(u)- {{\tilde g}^{3/2}}\big\|_{{\tilde g}^{3/2}} \leq \|\nabla \mathcal{U}\|_{ \tilde{g}^{3/2}} + 1 + \frac{C}{\kappa^{1/4}}\|u\|_{\mathcal{U}}\leq  \|\nabla \mathcal{U}\|_{ \tilde{g}^{3/2}} + 2\,.\]
    In particular, \(\Phi(u)\in B_{r}^1({\mathcal{U}})\). Moreover, parts (iii) of Lemmas \ref{lemma:bounds_countertermproblem_tau} and \ref{lemma:bounds_countertermproblem_varrho} yield
    \[  \big\|\nabla\Phi(u_1)- \nabla \Phi(u_2)\big\|_{{\tilde g}^{3/2}} <  \frac{1}{4}\| u_1-u_2\|_{\mathcal{U}}\,.  \] We then bound 
    \[\|\Phi(u_1)-\Phi(u_2)\|_{\mathcal{U},1}< \frac{1}{2} \| u_1-u_2\|_{\mathcal{U},0} <  \frac{1}{2}\| u_1-u_2\|_{\mathcal{U},1} \,.
    \qedhere
    \] 
\end{proof}

An application of Banach fixed point theorem shows point (i) of Theorem \ref{thm:Solution to the counterterm problem}. 

\begin{cor}\label{cor:solution_counterterm_fixed_eps_nu}
     In the hypotheses of Proposition \ref{solution to the counterterm problem}, for any fixed \(\eps, \nu>0\), there exists a unique positive solution \(u^\eps_\nu \in B^\alpha_r(\mathcal{U})\) to \eqref{counterterm problem}. 
 \end{cor}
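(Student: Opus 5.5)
This is an application of the Banach fixed point theorem to $\Phi$ on the complete metric space $B^\alpha_r(\mathcal U)$, using Proposition \ref{solution to the counterterm problem}. First I will check that $B^\alpha_r(\mathcal U)$, with the metric induced by $\|\cdot\|_{\mathcal U,\alpha}$, is nonempty and complete. It contains $\mathcal U$ itself; for $\alpha=1$ this uses \assD{}, which gives $\|\nabla\mathcal U\|_{\tilde g^{3/2}}<\infty$, so that $\|\nabla \mathcal U-\tilde g^{3/2}\|_{\tilde g^{3/2}}\le \tilde r$. The set is a closed subset of the Banach space $\mathcal B_\alpha$, because its two defining conditions are closed under convergence in the weighted sup norms. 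For $\alpha=1$, gradients converge locally uniformly almost everywhere, so a limit of $W^{1,\infty}_{\mathrm{loc}}$ functions with convergent gradients stays in $W^{1,\infty}_{\mathrm{loc}}$ and the gradient bound passes to the limit.

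Next, fix $\eps,\nu>0$ and $\kappa>\kappa_0(r)$. By Proposition \ref{solution to the counterterm problem}, $\Phi$ maps $B^\alpha_r(\mathcal U)$ into itself and is a strict contraction, with constant $1/2$ in $\|\cdot\|_{\mathcal U,\alpha}$. Banach's theorem then gives a unique $u^\eps_\nu\in B^\alpha_r(\mathcal U)$ with $\Phi(u^\eps_\nu)=u^\eps_\nu$. Unwinding \eqref{fixed point function for counterterm problem}, the identity $u=\mathcal U+(\tau^{\eps,u}-\tau^{\eps,0})+v^\eps*(\varrho^u_\nu-\varrho^0_\nu)$ is exactly \eqref{counterterm problem} with $U=u$. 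Positivity follows from Remark \ref{Remark on the radius of the ball}: $u^\eps_\nu\ge(1-r)\mathcal U\ge(1-r)c\,g>0$. The same remark shows $\tr (h^{u})^{-2}<\infty$, so $\tau^{\eps,u}$ and $\varrho^u_\nu$ are well defined along the iteration.

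The main subtlety is the meaning of uniqueness. The contraction argument gives uniqueness only within the ball $B^\alpha_r(\mathcal U)$, and I will state the corollary that way. If uniqueness among all positive solutions is wanted, I will try the following. Any positive solution $U$ satisfies $U\ge \mathcal U-|\tau^{\eps,U}-\tau^{\eps,0}|-|v^\eps*(\varrho^U_\nu-\varrho^0_\nu)|$. The bounds in part (i) of Lemmas \ref{lemma:bounds_countertermproblem_tau} and \ref{lemma:bounds_countertermproblem_varrho} (the resolvent expansion gives at most $C\kappa^{-1/2}\|U\|_{\mathcal U}\,g$) then yield $\|U-\mathcal U\|_{\mathcal U}\le C\kappa^{-1/2}\|U\|_{\mathcal U}$. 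This forces $\|U\|_{\mathcal U}\le 2$ and hence $U\in B^0_r(\mathcal U)$ for $\kappa$ large. The hardest point is knowing a priori that $\|U\|_{\mathcal U}<\infty$; I would assume this as part of the solution class, matching \cite[Theorem 5.2]{frohlich2017gibbs}. Finally, since $\kappa_0(r)$ does not depend on $\eps,\nu$, the solution exists and is unique for all $\eps,\nu>0$ at once.
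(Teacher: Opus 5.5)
Your argument is correct and is the paper's own: the corollary is the Banach fixed point theorem applied to $\Phi$ on the complete set $B^\alpha_r(\mathcal U)$, using the self-map and contraction properties from Proposition~\ref{solution to the counterterm problem}, with uniqueness understood within that ball. Your checks of nonemptiness, completeness and positivity fill in details the paper leaves implicit.

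One caveat concerns your optional extension to uniqueness among all positive solutions. Parts (i) of Lemmas~\ref{lemma:bounds_countertermproblem_tau} and~\ref{lemma:bounds_countertermproblem_varrho} are proved only for $u$ in the ball. Their proofs use the lower bound $u\ge(1-r)\mathcal U\gtrsim g$, through the decay of $(h^u)^{-1}$ from Proposition~\ref{prop:GN}. That decay is what absorbs the factor $\esssup_{|z-x|\le 1}\mathcal U(z)\lesssim g^{3/2}$ and the growth of $\mathcal U$ at large distances. So an a priori bound $\|U\|_{\mathcal U}<\infty$ alone would not justify $\|U-\mathcal U\|_{\mathcal U}\le C\kappa^{-1/2}\|U\|_{\mathcal U}$.
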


In the next two lemmas, we show that the family of solutions \(\{u_{\nu} ^\eps\}_{\eps, \nu > 0}\) has a limit for \(\eps , \nu \to 0\) and \(\eps \) satisfies \eqref{eq:rate of convergence of eps counterterm problem}. Notice that, by completeness, the limit is in \(B^\alpha_r(\mathcal{U})\).

\begin{lemma}\label{eq:lemma:Cauchysequence counterterm problem solution for fixed nu}
    Let \(r \in (0,1)\) and consider, for every \(\eps, \nu>0\), the solution \(u_{\nu}^{\eps} \in B_{r}^\alpha({\mathcal{U}})\) to  \eqref{counterterm problem}, as per Corollary \ref{cor:solution_counterterm_fixed_eps_nu}. Then, uniformly in \(\nu>0\), and for all \(\kappa >0\) big enough, the family \(\{u_{\nu}^{\eps}\}_{\eps>0} \) is Cauchy in \((B_{r}^\alpha({\mathcal{U}}), \|\, \cdot \,\|_{\mathcal{U},\alpha})\) for \(\eps \to 0\).  
\end{lemma}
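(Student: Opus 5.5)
We use the fixed-point structure directly. For $\eps_1,\eps_2>0$ and fixed $\nu$, write $u_i\deq u^{\eps_i}_\nu$, so that $u_i=\Phi_{\eps_i}(u_i)$, where $\Phi_\eps$ is the map \eqref{fixed point function for counterterm problem}. Since $\Phi_{\eps_1}$ is a contraction on $B^\alpha_r(\mathcal U)$ with constant at most $1/2$ (Proposition \ref{solution to the counterterm problem}, uniformly in $\eps,\nu$ for $\kappa$ large), we get
\[
\|u_1-u_2\|_{\mathcal U,\alpha}\le \|\Phi_{\eps_1}(u_1)-\Phi_{\eps_1}(u_2)\|_{\mathcal U,\alpha}+\|\Phi_{\eps_1}(u_2)-\Phi_{\eps_2}(u_2)\|_{\mathcal U,\alpha}\,,
\]
and therefore
\[
\|u_1-u_2\|_{\mathcal U,\alpha}\le 2\,\|\Phi_{\eps_1}(u_2)-\Phi_{\eps_2}(u_2)\|_{\mathcal U,\alpha}\,.
\]
It thus suffices to show that $\sup_{u\in B^\alpha_r(\mathcal U)}\|\Phi_{\eps_1}(u)-\Phi_{\eps_2}(u)\|_{\mathcal U,\alpha}\to0$ as $\eps_1,\eps_2\to0$, uniformly in $\nu$.

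The difference splits into two terms,
\[
\Phi_{\eps_1}(u)-\Phi_{\eps_2}(u)=\bigl[(\tau^{\eps_1,u}-\tau^{\eps_1,0})-(\tau^{\eps_2,u}-\tau^{\eps_2,0})\bigr]+(v^{\eps_1}-v^{\eps_2})*(\varrho^u_\nu-\varrho^0_\nu)\,.
\]
For the first term, write $\tau^{\eps,u}-\tau^{\eps,0}=\int\d\tilde x\,v^\eps(x-\tilde x)F^u(x,\tilde x)$ with $F^u\deq G^u-G^0=-\int\d y\,G^u(\cdot,y)u(y)G^0(y,\cdot)$. The key point is that $F^u$ is much more regular than $G$: the logarithmic singularities cancel in the difference, and the resolvent expression gives, as in the proof of Lemma \ref{lemma:bounds_countertermproblem_tau}, $|F^u(x,\tilde x)|\lesssim\kappa^{-1/2}\|u\|_{\mathcal U}g(x)$. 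Using Proposition \ref{lm:gradG} on $\nabla_{\tilde x}G^0$ and splitting $|y-x|\le1$ versus $|y-x|>1$ exactly as in Lemma \ref{lemma:bounds_countertermproblem_tau}, we then aim for a Hölder bound
\[
|F^u(x,\tilde x)-F^u(x,x)|\lesssim|x-\tilde x|^{a}\,\mathcal U(x)\,,\qquad a\in(0,1)\,.
\]
Since $\int v^\eps=1$ and $v^\eps$ is supported in a ball of radius $C\eps$, this gives $\bigl|(\tau^{\eps,u}-\tau^{\eps,0})(x)-F^u(x,x)\bigr|\lesssim\eps^{a}\,\mathcal U(x)$. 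In particular the first term is $O(\eps_1^a+\eps_2^a)$ in $\|\cdot\|_{\mathcal U}$.

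For the second term we use the same argument, with $\mathcal G_\nu$ in place of $G$: by Remark \ref{rmk:qGf bound} and Proposition \ref{prop:bounds_Gnu}, $\varrho^u_\nu-\varrho^0_\nu$ is Hölder continuous with constant $\lesssim\mathcal U$, uniformly in $\nu$. We expect this uniformity in $\nu$ to hold because the relevant gradient estimates enter only integrated against $u$, where the $\nu$-dependent small-scale cutoff is harmless.

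For $\alpha=1$ we must additionally control $\nabla$ of both terms in the $\tilde g^{3/2}$-weighted norm. For the convolution term we move the gradient onto $\varrho^u_\nu-\varrho^0_\nu$, whose gradient is bounded by Lemma \ref{lemma:bounds_countertermproblem_varrho}(iii), and then pass to the limit by equicontinuity. The main obstacle is the gradient of the $\tau$-term. There $\nabla_x$ hits both arguments of $F^u$, and a uniform modulus of continuity of $\nabla F^u$ on the diagonal seems to require second-derivative information on $G^0$ together with the gradient bound \eqref{eq:gradG} on $G^u$. We will first attempt a dominated-convergence argument on the resolvent formula, using the integrable singularity $|x-y|^{1-d}$ from Proposition \ref{lm:gradG}. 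If a quantitative rate is out of reach, we will instead settle for uniform convergence on compact sets, and handle large $|x|$ by the decay of the weight $\mathcal U/\tilde g^{3/2}$ in the relevant bounds, via \eqref{eq:boundOnMaxg}.
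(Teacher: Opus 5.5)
Your architecture is the paper's. The $\kappa$-small Lipschitz constant of $\Phi$ absorbs $u^{\eps}_\nu-u^{\eps'}_\nu$ (the paper's $I_1$), and everything reduces to the convergence of $\Phi_\eps(u)$ as $v^\eps\to\delta$ at fixed $u$ (the paper's $I_2$, handled there by dominated convergence). For the $\tau$-term with $\alpha=0$ your version is sharper than the paper's. You put the whole modulus of continuity on the free factor $G^0(y,\tilde x)$ and take the supremum over the ball, which gives a rate $\eps^a$ in $\|\cdot\|_{\mathcal U}$, uniform in $x$ and in $u$. Pointwise dominated convergence does not give that by itself. (For $G^0$, use the explicit free Green function rather than Proposition \ref{lm:gradG}, whose hypotheses exclude $U=0$.)

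Two steps are not justified as written.

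\textbf{The $\varrho$-term.} You need $\nu$-uniform equicontinuity of $\varrho^u_\nu-\varrho^0_\nu$. On the diagonal, $x$ enters both factors of \eqref{eq:quantumgreenfunction_resolventexpansion}, including the $u$-dependent one, so you cannot push the regularity onto a free factor as you did for $\tau$.
\begin{itemize}
\item The bound you cite, Proposition \ref{prop:bounds_Gnu}\ref{prop:boundgradGnu}, gives $|\nabla_x\mathcal G_\nu(x,y)|\lesssim(\nu\vee|x-y|)^{-(d+1)}$. In $d=2$, $\int_{|y-x|\le 1}\dd y\,(\nu\vee|x-y|)^{-3}\asymp\nu^{-1}$. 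So integrating against $u$ does not make the cutoff harmless, and your heuristic fails with this input.
\item Remark \ref{rmk:qGf bound} also cannot hold uniformly as $t\uparrow1$. The kernel $\nu\ee^{t\nu h}/(\ee^{\nu h}-1)$ contains the near-identity term $\nu\ee^{-(1-t)\nu h}$, which is harmless only after the $t$-integration.
\item What you need is a genuinely $\nu$-uniform modulus of continuity, for instance $|\nabla_x\mathcal G^u_\nu(x,y)|\lesssim\min(\nu^{-1/2},|x-y|^{-1})\,\ee^{-c|x-y|\sqrt{\tilde g(y)}}$, which is integrable uniformly in $\nu$. You can get it by summing the per-time bound $|\nabla_x\ee^{-th^u}(x,y)|\lesssim\ee^{-\kappa t}\bigl(|x-y|/t+t^{-1/2}\bigr)\psi^t(x-y)$ from the proof of Proposition \ref{prop:bounds_Gnu}\ref{prop:boundgradGnu} over $t\in\nu\N^*$. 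Alternatively, do a further Duhamel step so that the $x$-dependence sits only in free kernels.
\end{itemize}

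\textbf{The case $\alpha=1$.} You leave this open, but the obstacle you anticipate does not arise, and your fallback rests on a false premise.
\begin{itemize}
\item With $\tilde x=x+z$ and $v$ even, $\nabla(\tau^{\eps,u}-\tau^{\eps,0})(x)=\int\dd z\,v^\eps(z)\,\bigl[\nabla_1F^u+\nabla_2F^u\bigr](x,x+z)$. So, exactly as for $\alpha=0$, only the second argument of $F^u$ varies.
\item The modulus of continuity of $\nabla_1F^u(x,\cdot)=-\int\dd y\,\nabla_xG^u(x,y)\,u(y)\,G^0(y,\cdot)$ needs only \eqref{eq:gradG} for $G^u$ and first derivatives of $G^0$. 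That of $\nabla_2F^u(x,\cdot)$ needs only second derivatives of the explicit free $G^0$.
\item No second derivatives of $G^u$ enter, and you again get a rate.
\item The fallback's ``decay of $\mathcal U/\tilde g^{3/2}$'' is not available, because Assumption \ref{assumption:mathcalU_cp} allows $\mathcal U\asymp\tilde g^{3/2}$. The weighted control relative to $\tilde g^{3/2}$, whether for a rate or for tails, must come as in Lemma \ref{lemma:bounds_countertermproblem_tau}\ref{gradtau}: from the localisation of $G^u$ on the scale $\tilde g^{-1/2}$, combined with \eqref{eq:boundOnMaxg}.
\end{itemize}

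With these two points supplied, your argument proves the lemma along the same lines as the paper, and with explicit rates.
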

\begin{proof} Let us first consider the case \(\alpha =0\). 
    Let \(\nu, \eps', \eps >0\), \(r \in (0,1)\) and \(u^\eps,u^{\eps'} \in B_{r}^\alpha({\mathcal{U}})\) solutions to \eqref{counterterm problem} for fixed inverse temperature \(\nu\) and interaction range of respectively of \(\eps, \eps'\).  We have
    \begin{equation}\label{diff_btw_us_eps}
        u^{\eps}(x) - u^{\eps'}(x) = (\tau^{\eps,u^\eps}(x)- \tau^{\eps,{0}})-(\tau^{\eps',u^{\eps'}}(x)-\tau^{\eps'}_0)  +v^\eps *(\varrho_{\nu}^{u^{\eps}}(x)-\varrho_\nu^0 ) -v^{\eps'} *(\varrho_{\nu}^{u^{\eps'}}(x) - \varrho_{\nu}^0).
    \end{equation}
     We can show the result by treating the terms in pairs and bounding them by the difference between \(u^\eps\) and \(u^{\eps'}\) and a vanishing \(\eps\)-dependent part. We start by bounding 
     \begin{multline}
     \big|(\tau^{\eps,u^\eps}(x)- \tau^{\eps,{0}})-(\tau^{\eps',u^{\eps'}}(x)-\tau^{\eps',0}) \big| \\ \leq \big|\tau^{\eps,u^\eps}(x)- \tau^{\eps,u^{\eps'}}(x)\big| + \big|\tau^{\eps,u^{\eps'}}(x)- \tau^{\eps,0}-(\tau^{\eps',u^{\eps'}}(x)- \tau^{\eps',0} )\big|\eqd I_1 + I_2 . 
     \end{multline}
     The term \(I_1\) is bounded using Lemma \ref{lemma:bounds_countertermproblem_tau} \ref{tauU_minus_tauU} 
     \[I_1 \leq \frac{C}{\sqrt{\kappa}}\|u^{\eps} - u^{\eps'}\|_{\mathcal{U}}\]
     and \(I_2\) vanishes with \(\eps, \eps' \to 0\) by dominated convergence, since
     \begin{align*}
        I_2 & = \bigg|\int \d y\, \bigl(v^{\eps}(x-y)-v^{\eps'}(x-y) \bigr) (G^{u^{\eps'}}(x,y)-G^0(x,y)) \bigg|\\
        &\leq  \frac{C}{\sqrt{\kappa}}\|u^{\eps'}\|_{\mathcal{U}} {\mathcal{U}}(x)\leq \frac{C}{\sqrt{\kappa}} (1+r) {\mathcal{U}}(x) 
     \end{align*}
    uniformly in \(\eps, \eps'>0\) as stated in Lemma \ref{lemma:bounds_countertermproblem_tau} \ref{tauU_minus_tau0}, and \(v^\eps \rightarrow \delta\) as $\eps \to 0$. 

    The two other pairs of terms can be treated in the same way, since they are bounded uniformly in \(\nu>0\) for any fixed \(\eps>0\).

    We can argue similarly if \(\alpha=1\), whenever \(\mathcal{U}\) satisfies \assD{}, by bounding in a similar way the quantity 
    \begin{equation*}
        \nabla u^{\eps}(x) - \nabla u^{\eps'}(x) = \nabla \tau^{\eps,u^\eps}(x)-{\nabla \tau^{\eps',u^{\eps'}}} (x) +\nabla \bigl( v^\eps *\varrho_{\nu}^{u^{\eps}}\bigr)(x) -\nabla \bigl(v^{\eps'} *\varrho_{\nu}^{u^{\eps'}} \bigr)(x)\,.
        \qedhere
    \end{equation*}
\end{proof}

\begin{lemma}\label{eq:lemma:Cauchysequence counterterm problem solution}
    Let \(r \in (0,1)\), \(\alpha\in\{0,1\}\), \(u_{\nu}^{\eps} \in B_{\alpha, r}({\mathcal{U}})\) the solution to  \eqref{counterterm problem}, for fixed \(\eps, \nu>0\) and \(\theta\) as in Remark \ref{rmk:thetha_s2}. For \(\eps>0\) satisfying \eqref{eq:rate of convergence of eps counterterm problem}, the family \(\{u_{\nu}^{\eps}\}_{\nu,\eps>0} \) is Cauchy in \((B_{r}^\alpha({\mathcal{U}}), \|\, \cdot \,\|_{\mathcal{U},\alpha})\) for \(\nu,\eps\to 0\). As a consequence, there exists \(U \in B_{r}^\alpha({\mathcal{U}}) \) such that
    \[\lim_{\nu, \eps \to 0} u^{\eps}_{ \nu} = U\,.\]
\end{lemma}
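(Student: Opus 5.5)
We would argue as in the proof of Lemma \ref{eq:lemma:Cauchysequence counterterm problem solution for fixed nu}, now letting both parameters vary. Take two pairs $(\eps,\nu)$ and $(\eps',\nu')$ satisfying \eqref{eq:rate of convergence of eps counterterm problem}, and write $u = u^\eps_\nu$, $u' = u^{\eps'}_{\nu'}$ for the corresponding solutions from Corollary \ref{cor:solution_counterterm_fixed_eps_nu}. Subtracting the two fixed point equations $u=\Phi^{\eps}_\nu(u)$ and $u'=\Phi^{\eps'}_{\nu'}(u')$ gives
\[
u-u' = \bigl(\Phi^\eps_\nu(u)-\Phi^\eps_\nu(u')\bigr) + \bigl(\Phi^\eps_\nu(u')-\Phi^{\eps'}_{\nu'}(u')\bigr).
\]
The first bracket is a contraction term. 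By Proposition \ref{solution to the counterterm problem} its $\|\cdot\|_{\mathcal U,\alpha}$-norm is at most $\tfrac12\|u-u'\|_{\mathcal U,\alpha}$, and this bound is uniform in $\eps,\nu$. Absorbing it into the left-hand side, it suffices to show that $\sup_{u'\in B^\alpha_r(\mathcal U)}\|\Phi^\eps_\nu(u')-\Phi^{\eps'}_{\nu'}(u')\|_{\mathcal U,\alpha}\to0$. The limit $U$ then exists by completeness of $B^\alpha_r(\mathcal U)$, and it solves \eqref{eq:limiting counterterm equation} by passing to the limit in the equation.

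Fix $u'$ in the ball and compare each piece of $\Phi^\eps_\nu(u')$ with its formal limit. The $\tau$-part is $(\tau^{\eps,u'}-\tau^{\eps,0})(x)=\int v^\eps(x-y)(G^{u'}-G^0)(x,y)\,\d y$. The kernel $G^{u'}-G^0$ is continuous, since it is given by a resolvent expansion as in Lemma \ref{lemma:convergence_of_EWick}, and it is Lipschitz in $y$ away from the singular part by Proposition \ref{lm:gradG}. Hence this term converges to $(G^{u'}-G^0)(x,x)$ at rate $\eps^{c}$. The weight $\mathcal U(x)$ in the norm is handled exactly as in Lemma \ref{lemma:bounds_countertermproblem_tau}, by splitting $|x-y|\le1$ and $|x-y|>1$.

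For the density part, split
\[
v^\eps*(\varrho^{u'}_\nu-\varrho^0_\nu) = v^\eps*\bigl[(\varrho^{u'}_\nu-\varrho^0_\nu)-(G^{u'}-G^0)(\cdot,\cdot)\bigr] + v^\eps*(G^{u'}-G^0)(\cdot,\cdot).
\]
The second term converges to $(G^{u'}-G^0)(x,x)$ as $\eps\to0$ by continuity of the diagonal, which gives the factor $2$ in \eqref{eq:limiting counterterm equation}. The first term is a Riemann sum error, to be estimated as in Lemma \ref{lemma:convergence_of_Ewick_quantitative}. Write the difference of diagonals via Duhamel, as in \eqref{eq:quantumgreenfunction_resolventexpansion}, split into small times $t\le\delta$ and large times, and use the bound $\tr e^{-th}\lesssim t^{-s}$. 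This should give a pointwise error of order $\nu^{(2-s)/2}$ up to factors $\mathcal U(x)$. Pointwise control is needed here, not only $L^1$ control, so we would either use Proposition \ref{prop:bounds_Gnu} together with Remark \ref{rmk:qGf bound} locally, or accept a loss $\eps^{-d}$ coming from the convolution with $v^\eps$, converting $L^1$ bounds into $L^\infty$ bounds.

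The main obstacle is exactly this quantitative uniformity. Because of the $\eps$-dependent losses ($\eps^{-d}$ from smoothing by $v^\eps$, and further powers from the gradient when $\alpha=1$), the Riemann sum error gets multiplied by $\eps^{-\text{power}}$. Condition \eqref{eq:rate of convergence of eps counterterm problem}, $\eps>\nu^\beta$ with $\beta<(2-s)/6$, is what makes this product vanish. We would track the exponents to verify the error is at most $\eps^{-3}\nu^{(2-s)/2}\to0$. For $\alpha=1$ the same scheme applies to the gradients, using Proposition \ref{lm:gradG}, Proposition \ref{prop:bounds_Gnu}\ref{prop:boundgradGnu}, and part (iii) of Lemmas \ref{lemma:bounds_countertermproblem_tau} and \ref{lemma:bounds_countertermproblem_varrho}, with weight $\tilde g^{3/2}$. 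In all cases the contributions from large $|x|$ are controlled uniformly by the bounds $\kappa^{-1/2}g$, resp. $\kappa^{-1/4}\tilde g^{3/2}$, and dominated convergence gives the vanishing locally.
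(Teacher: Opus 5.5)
Your architecture is sound, and in one respect cleaner than the paper's. You absorb the contraction and compare $\Phi^\eps_\nu$ directly with the limiting map, which isolates all $\nu$-dependence in a quantum--classical error. The paper instead uses the triangle inequality. It handles the $\eps$-variation (uniformly in $\nu$) by Lemma~\ref{eq:lemma:Cauchysequence counterterm problem solution for fixed nu}, and then, at fixed $\eps$, bounds $\|v^\eps*A\|_{\mathcal U}$, where $A$ compares the densities at $\nu$ and $\nu'$ with the same potential $u_\nu$.

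The gap is in the one quantitative step everything hinges on: the bound on
\[
X(y)\deq\bigl(\varrho^{u'}_\nu-\varrho^0_\nu\bigr)(y)-\bigl(G^{u'}-G^0\bigr)(y,y)\,.
\]
The method you invoke does not transfer. The argument of Lemma~\ref{lemma:convergence_of_Ewick_quantitative} uses two facts: the perturbation $\alpha^\eps_\eta-\tau^\eps$ is bounded, and both semigroups are dominated by $\ee^{-th^{U/2}}$, which has finite trace. Neither holds here.
\begin{itemize}
\item The reference operator is $h^0=\kappa-\Delta/2$ on $\R^2$, and $\ee^{-th^0}$ is not trace class (its diagonal is the constant $\ee^{-\kappa t}/(2\pi t)$). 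So $\tr\ee^{-th}\lesssim t^{-s}$ is unavailable.
\item The perturbation $u'$ grows like $\mathcal U$, so there is no factor $\|u'\|_{L^\infty}$ to pull out.
\end{itemize}
In fact $X\notin L^1$. Heuristically, where $\nu u'(y)\gg1$ one has $\varrho^{u'}_\nu(y)\approx0$, $\varrho^0_\nu\approx(2\pi)^{-1}\log\nu^{-1}$ and $(G^{u'}-G^0)(y,y)\approx-(2\pi)^{-1}\log u'(y)$. Hence $X(y)\approx(2\pi)^{-1}\log(\nu u'(y))\to\infty$. There is therefore no $L^1$ bound to convert into an $L^\infty$ bound at the cost of $\eps^{-d}$; the estimate must be done directly in the weighted sup norm. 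Your fallback does not help either. Proposition~\ref{prop:bounds_Gnu} and Remark~\ref{rmk:qGf bound} only bound the quantum kernels from above; they say nothing about smallness of quantum minus classical kernels, so they cannot produce the factor $\nu^{(2-s)/2}$.

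The missing ingredient is the device the paper takes from \cite{frohlich2017gibbs}. Set $Q^u_t\deq\nu\ee^{t\nu h^u}/(\ee^{\nu h^u}-1)$. Write the Duhamel integrand of \eqref{eq:quantumgreenfunction_resolventexpansion} minus its classical counterpart as $(Q^{u'}_t-G^{u'})\,u'\,Q^0_{1-t}+G^{u'}\,u'\,(Q^0_{1-t}-G^0)$.
\begin{itemize}
\item For the first term, use the Hilbert--Schmidt bound $\|Q^{u'}_t-(h^{u'})^{-1}\|_{\mathfrak S^2}^2\lesssim\nu^{2-s}$. This is where $\tr(h^{u'})^{-s}<\infty$ enters, since $h^{u'}$ is confining.
\item For the second term, use the translation-invariant bound $[Q^0_t-(h^0)^{-1}]^2(0,0)\lesssim\nu$, which replaces the missing trace for the free operator.
\item Then apply Cauchy--Schwarz inside the convolution. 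This loses $\|v^\eps\|_{L^\infty}\lesssim\eps^{-2}$, or $\|\nabla v^\eps\|_{L^\infty}\lesssim\eps^{-3}$ for $\alpha=1$.
\item Localise the weight $u'(z)$ by splitting $|z-x|\le1$ and $|z-x|>1$, as in Lemma~\ref{lemma:bounds_countertermproblem_tau}.
\end{itemize}

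A smaller point concerns large $|x|$. A bound by $C\kappa^{-1/2}g$ there, combined with local dominated convergence, only makes the weighted sup norm bounded, not small. You need the bounds to be $o(\mathcal U(x))$ as $|x|\to\infty$. The proofs of the lemmas do give this (roughly $\log g(x)\,g((\gamma+c)x)^{1/2}$ for the $\tau$-term), but it must be used explicitly.
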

\begin{proof} Let us first consider the case where \(\alpha=0\). 
    Using a triangle equality and \eqref{eq:lemma:Cauchysequence counterterm problem solution for fixed nu}, it is enough to bound appropriately the difference 
    \begin{equation}\label{diff_btw_us}
        u_{\nu}(x) - u_{\nu'}(x) = (\tau^{\eps,u_\nu}(x)-\tau^{\eps,u_{\nu'}}(x))  +v^\eps *\bigl[(\varrho_{\nu}^{u_{\nu}}(x)-\varrho_\nu^0 ) -(\varrho_{{\nu'}}^{u_{\nu'}}(x) - \varrho_{\nu'}^0)\bigr]
    \end{equation}
    where  \(\nu, \nu', \eps >0\), \(0<r<1\) and \(u_\nu,u_{\nu'} \in B_{\mathcal{U},r}({\mathcal{U}})\) solutions to \eqref{counterterm problem} for fixed \(\eps\) and inverse temperature of respectively \(\nu, \nu'\).
    Using Lemma \ref{lemma:bounds_countertermproblem_tau} (ii),  Lemma \ref{lemma:bounds_countertermproblem_varrho} (ii) and \eqref{diff_btw_us}, we find 
    \begin{equation*}
        (1-C\kappa^{-1/2})\|u_{\nu}-u_{\nu'}\|_{\mathcal{U}} \leq \|v^\eps * A\|_{\mathcal{U}}
    \end{equation*}
    where
    \begin{multline}
        A(x)\deq\biggl[ \frac{\nu}{\ee^{\nu (-\Delta/2 + \kappa + u_{\nu})}-1}- \frac{\nu}{\ee^{\nu(-\Delta/2 + \kappa )}-1}\biggr](x,x)\\-\biggl[\frac{\nu'}{\ee^{\nu'(-\Delta/2 + \kappa + u_{\nu})}-1}- \frac{\nu'}{\ee^{\nu'(-\Delta/2+\kappa) }-1}\biggr](x,x)\,.
    \end{multline}
    Using the arguments of \cite[Proof of Theorem 5.2, pp.\ 957-959]{frohlich2017gibbs}, we can show
     \[ \|v^\eps * A\|_{\mathcal{U}} \lesssim_{v}\frac{(\nu\vee \nu')^{1-s/2}}{\eps^2}\] by noting the following quantitative estimate for \(u_\nu \in B_r(\mathcal{U})\)
     \begin{equation}\nonumber
         \Big\| \frac{ \nu\ee^{\nu t h_\nu }}{\ee^{\nu h_\nu}-1}-\frac{1}{h_\nu}\Big\|^2_{\mathfrak{S}^2}\lesssim\nu^{2-s}
     \end{equation}
     and 
     \begin{equation}\nonumber
         \biggl[ \frac{\nu\ee^{\nu t(-\Delta+\kappa) }}{\ee^{\nu(-\Delta+\kappa) }-1}-\frac{1}{(-\Delta+\kappa)}\biggr]^2(0 , 0)\lesssim\nu
     \end{equation}
     uniform in \(t\in [0,1)\), which we can obtain using the methods described in Section \ref{subsec:Riemann sum estimates}. 

     The case \(\alpha=1\) follows in a similar way, with the estimate 
     \[\Biggl(1-\frac{C}{\sqrt\kappa}\Biggr)
     \|u_\nu-u_{\nu'}\|_{\mathcal{U},1} \leq  \|v^\eps * A\|_{\mathcal{U}} +  \|(\nabla v^\eps) * A\|_{\tilde{g}^{3/2}}\lesssim_{v}\frac{(\nu'\vee\nu)^{1-s/2}}{\eps^{3}}\,.
     \qedhere
     \]
\end{proof}

\appendix

\section{Non-solvability of \eqref{eq:eqAmirali} (with Amirali Hannani)}
\label{appendix:counter example}

In this section, we discuss \eqref{eq:eqAmirali}. As stated previously, finding a bounded, integrable solution \(\alpha^\eps\) to  \eqref{eq:eqAmirali} is enough to show the convergence of the renormalised many-body system to the \(\phi^4_2\) field theory, using the methods described in \cite{FKSS_24}. We prove in this section that, for any potential \(v\) chosen as in Assumption \ref{assumptions: non local interaction}, such a solution cannot exist, and thus the methods presented in Section \ref{sec:quantumtoclassic} are a necessary step to our proof. 

The argument relies on the fact that the left hand-side of \eqref{eq:eqAmirali} is necessarily as regular as \(v\), since it is a convolution, whilst the regularity of the right-hand side, \(\tau^\eps\), is a function of the smoothness of the potential \(U\) considered. 

Let us now show that, for any positive potential \(v\) satisfying Assumption \ref{assumptions: non local interaction}, there is no solution \(\alpha^\eps\) to \eqref{eq:eqAmirali}. Notice that the same proof can be extended to general potentials \(v\), by choosing an external potential \(U\) which is discontinuous at some point \(x\) where \(\tau^\eps(x)\neq 0\).

\begin{prop} Let \(v\) be a positive interaction potential satisfying  Assumption \ref{assumptions: non local interaction}, and let \(U\) be a positive locally bounded external potential that is not continuous. Then  
    \[u(x) \deq \int \d y \,G^U(x,y)v^\eps(x-y)\]  
is not in $C^2$.
\end{prop}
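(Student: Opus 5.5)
We argue by contradiction: assume $u\in C^2$ and derive that $U$ must be continuous. The idea is that $u$ is a local average of $G$ near the diagonal, and we can relate $u$ to $U$ through the PDE $h G(\cdot,y)=\delta_y$.

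\textbf{Step 1: isolate the singular part.} Set $G^0=(\kappa-\Delta/2)^{-1}$ and write, via the resolvent identity,
\[
G^U=G^0-G^0UG^U,\qquad u(x)=\int \d y\, v^\eps(x-y)G^0(x,y)-\int \d y\,v^\eps(x-y)\,(G^0UG^U)(x,y).
\]
The first term is a constant $c_\eps$ by translation invariance, so it is smooth. Hence $u\in C^2$ if and only if
\[
w(x)\deq \int \d z\, v^\eps(z)\,F(x,x+z),\qquad F\deq G^0UG^U,
\]
is in $C^2$. The kernel $F$ solves $(\kappa-\Delta_x/2)F(x,y)=U(x)G^U(x,y)$ in the first variable. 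A symmetric identity $G^U=G^0-G^UUG^0$ gives the analogous equation in the second variable.

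\textbf{Step 2: apply the operator to $w$.} Apply $\kappa-\Delta/2$ to $w$. The Laplacian acting on the diagonal-type function $x\mapsto F(x,x+z)$ equals $\tfrac12(\nabla_1+\nabla_2)^2F$, which splits into $\Delta_1$, $\Delta_2$ and a mixed term $\nabla_1\!\cdot\!\nabla_2F$. Using both PDEs, the pure terms give
\[
\int \d z\, v^\eps(z)\bigl(U(x)G^U(x,x+z)+U(x+z)G^U(x,x+z)\bigr)
\]
plus smooth remainders. The first piece equals $U(x)\,u(x)$, and the second is a $v^\eps$-average of $U$, hence continuous.

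The mixed term $\nabla_1\!\cdot\!\nabla_2F$ is the main obstacle. Proposition \ref{lm:gradG} (applied to both $G^0$ and $G^U$), together with the splitting into near and far regions used in Lemma \ref{lemma:bounds_countertermproblem_tau}, should show that it is locally bounded and continuous in $x$ after the $v^\eps$-smoothing. The reason is that only one singular factor sits at the point $x$, while the averaging in $z$ regularises the rest. If this direct estimate fails, I would instead test against the positive mollifier and use that $v>0$: write $U(x)u(x)=(\kappa-\Delta/2)w(x)-R(x)$ with $R$ continuous in the distributional sense.

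\textbf{Step 3: conclude.} If $u\in C^2$, then $w\in C^2$, so $U(x)u(x)=(\kappa-\Delta/2)w-R$ is continuous. Since $v>0$ and $G^U>0$, we have $u(x)>0$ everywhere, and $u$ is continuous. Therefore $U=(Uu)/u$ is continuous almost everywhere, contradicting the hypothesis that $U$ is not continuous.

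This also explains the role of positivity of $v$: it guarantees $u\neq0$ at a discontinuity point of $U$. For general $v$ one needs $\tau^\eps(x)\neq0$ there, as noted before the statement.
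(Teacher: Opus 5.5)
Your overall scheme is the paper's: assuming $u\in C^2$, show that the elliptic operator applied to $u$ (equivalently to $w=c_\eps-u$) equals $Uu$ up to sign plus a continuous function, then divide by $u>0$. The gap is the term you flag yourself. Continuity of $x\mapsto\int\mathrm{d}z\,v^\eps(z)\,(\nabla_1\cdot\nabla_2F)(x,x+z)$ is only asserted. A direct bound via $\nabla_1\cdot\nabla_2F(x,y)=\int\mathrm{d}w\,\nabla_x G^0(x,w)\,U(w)\cdot\nabla_y G^U(w,y)$ is feasible, but that is exactly the work left undone. The fallback does not help: positivity of $v$ has no bearing on the regularity of $R$, and to conclude anything about $Uu$ you need $R$ to coincide a.e.\ with a continuous function, not a vague distributional continuity.

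The mixed term is an artefact of differentiating with $z=y-x$ held fixed. The missing step is an integration by parts in $z$. Since $\nabla_z[\nabla_1F(x,x+z)]=(\nabla_2\nabla_1F)(x,x+z)$,
\[
\int\mathrm{d}z\,v^\eps(z)\,(\nabla_1\cdot\nabla_2F)(x,x+z)=-\int\mathrm{d}z\,\nabla v^\eps(z)\cdot(\nabla_1F)(x,x+z)\,.
\]
By Proposition~\ref{lm:gradG} (and explicitly for $G^0$), $|\nabla_1F(x,y)|\le|\nabla_x G^0(x,y)|+|\nabla_x G^U(x,y)|\lesssim|x-y|^{-1}$. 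This is continuous off the diagonal by Remark~\ref{rmk:gradG_continuous} and uniformly integrable near it in $d=2$. Writing the integral in the variable $y=x+z$ and letting $x$ vary with $y$ fixed gives continuity.

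Likewise $\int\mathrm{d}z\,v^\eps(z)\,\Delta_2F(x,x+z)=\int\mathrm{d}z\,\Delta v^\eps(z)\,F(x,x+z)$. So the second-variable equation is unnecessary. So is the term $\int\mathrm{d}z\,v^\eps(z)\,U(x+z)\,G^U(x,x+z)$, which is not a plain $v^\eps$-average of $U$, because of the singular $x$-dependent weight $G^U(x,x+z)$.

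After these integrations by parts you are at the paper's computation. The paper holds $y$ fixed from the start, lets the derivatives fall on $v^\eps(x-y)$, and uses $h_xG(x,y)=\delta(x-y)$:
\[
(hu)(x)=v^\eps(0)-\int\mathrm{d}y\,\nabla_x G(x,y)\cdot\nabla_x v^\eps(x-y)-\frac12\int\mathrm{d}y\,G(x,y)\,\Delta_x v^\eps(x-y)\,.
\]
No subtraction of $G^0$ is needed.

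In Step~3, ``$U$ is continuous almost everywhere'' is neither what you obtain nor a contradiction: the step potential $1+\floor{\abs{x}}^\theta$ is continuous off a null set. What you obtain is that $U$ coincides a.e.\ with the continuous function $\bigl((\kappa-\Delta/2)w-R\bigr)/u$. Since $G^U$, and hence $u$, depends only on the a.e.\ class of $U$, the hypothesis must be read as ``$U$ has no continuous representative'', and then this is the desired contradiction. Phrased this way, your ending is cleaner than the paper's evaluation of the identity along the line $x_2=b$, which is a null set.
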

\begin{proof}
     Notice that
    \[x \mapsto\int \d{y}\,  G(x,y) g(x-y)\]
    and 
    \[x \mapsto\int \d{y}\,  \frac{\partial}{\partial x_i}G(x,y) g(x-y)\]
    are continuous for any smooth, bounded function \(g\) by Remark \ref{rmk:gradG_continuous} and dominated convergence. Indeed, for any \(y\in \mathbb{R}^2\), the maps \(x\mapsto G(x,y)\) and \(x\mapsto  \frac{\partial}{\partial x_i}G(x,y)\) are continuous on the set \(\mathbb{R}^2\setminus\{y\}\). 

    Let us now assume by contradiction that \[u(x) =\int \d x \,G(x,y)v^\eps(x-y) \]
    is \(C^2\) for a given positive \(v\) satisfying Assumption \ref{assumptions: non local interaction}.
    
    In the following we will write \(\ph\) to mean a continuous function, that can change from one line to the next. We can now compute
    \begin{align*}
        (hu)(x) & = \int \d y \,(h G) (x,y)v^\eps(x-y)  - \int \d y\, \nabla_xG(x,y) \cdot \nabla_x v^\eps(x-y) \\
        &\quad- \frac{1}{2} \int \d y\, G(x-y) \Delta_x v^\eps (x-y)\,\\
        & = v^\eps(0) + \ph(x) = \ph(x)
    \end{align*}
    where we used the fact that \((hG)(x,y) = \delta(x-y) \). 
    We have thus showed that \(x \mapsto (hu)(x)\) is a continuous function. We have
    \begin{multline*}
        -(\Delta u)(x_1,x_2) + (\Delta u)(\tilde x_1,\tilde x_2) = 2U(\tilde x_1,\tilde x_2)u(\tilde x_1,\tilde x_2)-2U(x_1,x_2)u(x_1,x_2)\\+ \ph(x_1,x_2)-\ph(\tilde x_1,\tilde x_2)\,.
    \end{multline*}
    Let \((a,b)\in \mathbb{R}^2\) be a point of discontinuity of \(U\). Then
    \[ \limsup_{x_1 \to a}\abs{ -(\Delta u)(x_1, b) + (\Delta u)(a, b)  }=  2 u(a,b) {\Bigl(\limsup_{x_1\to a}\abs{U(x_1,b)-U(a,b )}\Bigr)}\neq  0\,\]
    where we use the fact that \(u\) is a strictly positive function.
    This contradicts the assumption that \(u \in C^2\). 
\end{proof}

\bibliography{bibliography}

\vspace{1em}
\noindent
Cristina Caraci (\href{mailto:Cristina.Caraci@unige.ch}{Cristina.Caraci@unige.ch})
\\
Antti Knowles (\href{mailto:Antti.Knowles@unige.ch}{Antti.Knowles@unige.ch})
\\
Alessio Ranallo (\href{mailto:Alessio.Ranallo@unige.ch}{Alessio.Ranallo@unige.ch})
\\
Pedro Torres Giesteira (\href{mailto:Pedro.TorresGiesteira@unige.ch}{Pedro.TorresGiesteira@unige.ch})
\\
University of Geneva, Section of Mathematics.

\end{document}